%% file: main.tex
\documentclass[a4paper,11pt]{article}
\pdfoutput=1

\usepackage{tcs}
\begin{document}

\title{Rapid mixing for high-temperature Gibbs states \\with arbitrary external fields}

\author{
Ainesh Bakshi \\
\texttt{ainesh@nyu.edu} \\
NYU
\and
Xinyu Tan \\
\texttt{norahtan@mit.edu} \\
MIT
}
\date{}

\maketitle

\begin{abstract}
Gibbs states are a natural model of quantum matter at thermal equilibrium. We investigate the role of external fields in shaping the entanglement structure and computational complexity of high-temperature Gibbs states. 
External fields can induce entanglement in states that are otherwise provably separable, and the crossover scale is $h\asymp \beta^{-1} \log(1/\beta)$, where $h$ is an upper bound on any on-site potential and $\beta$ is the inverse temperature. 
We introduce a quasi-local Lindbladian that satisfies detailed balance and rapidly mixes to the Gibbs state in $\mathcal{O}(\log(n/\epsilon))$ time, even in the presence of an \emph{arbitrary} on-site external field. 
Additionally, we prove that for any $\beta<1$, there exist local Hamiltonians for which sampling from the computational-basis distribution of the corresponding Gibbs state with a sufficiently large external field is classically hard, under standard complexity-theoretic assumptions. 
Therefore, high-temperature Gibbs states with external fields are natural physical models that can exhibit entanglement and classical hardness while also admitting efficient quantum Gibbs samplers, making them suitable candidates for quantum advantage via state preparation.
\end{abstract}

\thispagestyle{empty}
\clearpage
\newpage

\microtypesetup{protrusion=false}
\tableofcontents{}
\thispagestyle{empty}
\microtypesetup{protrusion=true}
\clearpage
\setcounter{page}{1}

\input{intro}

\input{tech-overview}

\input{related-work}

\input{prelim}

\input{LR}

\input{lindblad}

\input{update_matrix}

\input{contraction}

\input{dobrushin}

\input{separable}

\input{classical-hardness}

\section*{Acknowledgments}
The authors would like to thank Soonwon Choi, Aram Harrow, and Ankur Moitra for helpful discussions. 

X.~Tan is supported by the U.S. Department of Energy, Office of Science, National Quantum Information Science Research Centers, Co-design Center for Quantum Advantage (C2QA) under contract number DE-SC0012704.

\section*{AI Disclosure}
We used ChatGPT Thinking 5.4 to assist with proving two technical lemmas on bounding the moments of an explicit kernel function $b_{1,j}$. The tool provided assistance with the tedious computations in the proofs of \Cref{lem:b_1_zero_moment,lem:b_1_higher_moments}.
The authors verified the correctness and originality of all the content and take full responsibility for the contents of this work.

\printbibliography

\appendix
\input{appendix}

\end{document}

%% file: intro.tex
\section{Introduction}

Gibbs states are the canonical description of quantum matter at finite temperature. They model quantum many-body systems in thermal equilibrium and underlie the study of finite-temperature phases, thermodynamic observables, and emergent collective behavior. 
Understanding Gibbs states, therefore, lets us probe the fundamental physics of quantum matter and clarify how temperature competes with interaction and locality. More broadly, it helps us delineate the boundary between genuinely quantum and effectively classical behavior.

Gibbs states are already a central computational target. Since Feynman’s original proposal of quantum computation, the simulation of quantum-mechanical behavior has been a flagship application of quantum computers, and thermal states are among the most natural objects such a simulator would produce~\cite{feynman1982,riera2012thermalization}. Many of the core tasks of quantum simulation, such as predicting equilibrium properties of materials, estimating thermodynamic observables, and mapping finite-temperature phase diagrams, ultimately amount to preparing Gibbs states. 
Its importance is reflected in a broad literature, including tensor network based methods~\cite{verstraete2004mpdo, zwolak2004mixed, molnar2015peps}, Metroplis filters~\cite{temme2011quantum, chowdhury2017gibbs, gilyen2024glauber}, and quantum Markov chain algorithms~\cite{ckg23, chen2023thermal, ding2024kms}, with rigorous preparation guarantees in several settings~\cite{rfa24a,blmt24b, blmt2025dobrushin, rajakumar2026gibbs, smid2025fermi, smid2025weakly}.

In this work, we study how external fields shape both the structure of high-temperature Gibbs states and the complexity of preparing them. Concretely, we consider local Hamiltonians of the form $H = W + V$, where $W = \sum_{a\in [m]}W_a$ is the interacting part, normalized to satisfy $\Norm{W_a}\leq 1$, and $V = \sum_{i\in[n]}V_i$ is the external field, such that $\Norm{V_i} \leq h$, where $h$ is the field strength. 
External fields are among the simplest local terms one can add to a quantum many-body Hamiltonian. They leave the interaction geometry unchanged, yet they can substantially reshape the equilibrium state by shifting local energy levels and biasing the system toward particular configurations.

At sufficiently high, but still constant temperature, Gibbs states of local Hamiltonians without external fields are provably separable, reflecting the tendency of thermal fluctuations to erase entanglement~\cite{blmt24b}. Strikingly, external fields can sharply reverse this picture: Kuwahara and Hatano~\cite{kh11} showed that there are fields of strength $h \asymp \beta^{-1}\log(1/\beta)$  
for which the Gibbs state at inverse temperature $\beta$ becomes entangled again, even when $\beta$ lies in the separable regime of~\cite{blmt24b}. While their construction is for two qubits, it can be readily extended to an $n$-qubit chain. Yet the scope of this phenomenon remains poorly understood: how much entanglement external fields can induce, which field configurations generate it, and how the resulting correlation length scales with temperature.

Efficient Gibbs-state preparation would provide a direct way to probe this finite-temperature physics in regimes that are difficult to access classically. However,
every known algorithm for preparing high-temperature Gibbs states with rigorous guarantees breaks down at essentially the same scale where entanglement re-emerges. The counting-to-sampling algorithm of Bakshi, Liu, Moitra and Tang~\cite{blmt24b} breaks down once separability is lost, and the quantum Markov chains analyzed by Rouzé, França, and Alhambra~\cite{rfa24a} and Bakshi, Liu, Moitra and Tang~\cite{blmt2025dobrushin} stop simultaneously satisfying detailed balance and being quasi-local (see \cref{sec:tech-overview} for a detailed discussion). This brings us to the following central question: 

\begin{quote}
\begin{center}
\emph{Can external fields obstruct efficient quantum algorithms for preparing \\high-temperature Gibbs states?}
\end{center}
\end{quote}

\subsection{Our Results}

Our main algorithmic contribution is a quantum Gibbs sampler for
high-temperature local Hamiltonians that mixes rapidly even in the presence of
\emph{arbitrary} on-site fields. Throughout, we assume that the interaction term \(W\) is bounded, local, and of bounded degree in the following standard sense:

\begin{definition}[Bounded $L$-local degree-$D$ Hamiltonians]
\label{def:hamiltonian_locality_degree}
    We say that a Hamiltonian $W = \sum_{a\in [m]} W_a$ is a bounded $L$-local degree-$D$ Hamiltonian if the following is true. For each $a\in [m]$, $\norm{W_a}\leq 1$ and $\abs{\supp{W_a}}\leq L$. Further, each site $i\in [n]$ is acted upon by at most $D$ terms, i.e. $\max_{i\in [n]} \abs{\{a\in [m]: i \in \supp(W_a)\}} \leq D$. 
\end{definition}

To prepare the Gibbs state, $e^{-\beta H}/\tr\Paren{e^{-\beta H}}$, we introduce a \emph{field-resonant
Lindbladian}, whose jump operators and filter functions are tuned to the local
spectral scale of the on-site potential (see \cref{subsec:field-resonant-lindbladian} for a formal definition). Formally, we show that the corresponding  evolution mixes rapidly to the target Gibbs state:

\begin{restatable}[Rapid mixing with arbitrary external field]{theorem}{main}
\label{thm:rapid-mixing-with-external-field}
    Let $H = V + W$, where $W$ is a bounded $L$-local degree-$D$ Hamiltonian and $V = \sum_{i\in [n]} V_i$ is an external field with $V_i$ supported on site $i$ only. 
    For any inverse temperature $\beta \in (0, (DL)^{-3}/28800 ]$, let $\sigma$ be the Gibbs state $e^{-\beta H}/\tr(e^{-\beta H})$. 
    Then from any initial state $\rho$, the field-resonant Lindbladian $\calL^*$ defined in \Cref{def:field_lindblad} outputs a state $\widehat{\rho}$ such that $\norm{\widehat{\rho} - \sigma}_1 \leq \epsilon$ after running for $\bigO{\log(2n/\epsilon)}$ time. 
\end{restatable}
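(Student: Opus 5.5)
The plan follows the Dobrushin-type strategy of \cite{blmt2025dobrushin}, adapted to the field-resonant Lindbladian of \Cref{def:field_lindblad}. The skeleton has three parts. First, $\calL^*$ satisfies KMS/detailed balance with respect to $\sigma$, so $\sigma$ is a fixed point. Second, in the Heisenberg picture $\calL$ contracts a local oscillation norm at a rate bounded below by a constant. Third, this contraction converts into trace-distance convergence with only a $\log n$ overhead.

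\textbf{Detailed balance and quasi-locality.} I would first check that the jump operators and filter functions (tuned to the spectrum of each $V_i$) give a Lindbladian of the Chen--Kastoryano--Gily\'en type. Such a Lindbladian is $\sigma$-KMS-detailed-balanced by construction. The key is to handle the field exactly: work in the interaction picture with respect to $V$. Since $V=\sum_i V_i$ is a sum of commuting on-site terms, $e^{iVt}$ factorizes, so conjugating a local jump by it costs nothing in locality, however large $h$ is. Only the $\beta W$ part then needs the Lieb--Robinson and cluster-expansion bounds from \Cref{sec:LR}. The upshot is that each local generator $\calL_i$ is quasi-local around site $i$, with tails decaying exponentially in distance and coefficients controlled by $\beta DL$, uniformly in $h$.

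\textbf{Local contraction.} Next I would set up the update-matrix framework. For an observable $O$, define the oscillation $\delta_j(O)=\min_{O'} \norm{O-I_j\otimes O'}$. I would then aim to show
\[
\frac{d}{dt}\delta(e^{t\calL}O)\le -(I-A)\,\delta(e^{t\calL}O)
\]
entrywise, where the matrix $A$ satisfies $\sum_j A_{ij}\le 1/2$. The diagonal part comes from the fact that the single-site dynamics at $\beta W=0$ (pure field) is an exactly solvable qubit/qudit Lindbladian with constant gap for every $h$; this is where the field-resonant filter matters. The off-diagonal entries $A_{ij}$ come from the $\beta W$ perturbation, which moves the true local dynamics away from the decoupled one. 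These entries should be bounded by $O(\beta (DL)^{c})$ times distance-decaying factors. Taking $\beta\le (DL)^{-3}/28800$ makes the row sums of $A$ at most $1/2$.

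\textbf{Conversion and main obstacle.} Given the contraction, $\delta(e^{t\calL}O)$ decays as $e^{-t/2}\sum_j\delta_j(O)\le 2n\norm{O}e^{-t/2}$. I would then use
\[
\abs{\tr(O(\rho_t-\sigma))}\le \sum_j\delta_j(e^{t\calL}O),
\]
which holds because observables with zero oscillation are multiples of the identity. Taking the supremum over $\norm{O}\le1$ gives $\norm{\rho_t-\sigma}_1\le 2ne^{-t/2}$, which yields the claimed $\bigO{\log(2n/\epsilon)}$ time. I expect the main obstacle to be the $h$-independence of the off-diagonal bound. Large fields generate Bohr frequencies of size $h$, and the filter functions evaluated at these shifted frequencies (the kernel moments of \Cref{lem:b_1_zero_moment,lem:b_1_higher_moments}) must stay uniformly integrable. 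My first attempt would be to expand the perturbed jump operator $e^{\beta H/4}A e^{-\beta H/4}$ around its field-only version. The $V$-dressing can be absorbed exactly, leaving a series in $\beta W$ whose terms are bounded via these moment lemmas.
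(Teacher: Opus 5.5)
\textbf{Verdict: genuine gap.} Your overall skeleton matches the paper's. It has: stationarity of $\sigma$ from KMS detailed balance; an $h$-uniform on-site contraction from the field-only single-site dissipator with the field-resonant parameters; a quasi-local off-diagonal tail; a Dobrushin-type summability condition; and a factor-$n$ loss when passing to trace distance. Working with oscillations $\delta_j$ in the Heisenberg picture is just the dual of the paper's transport plans and $W_1$ norm (the update matrix gets transposed), so that part is fine. The paper avoids differentiating a minimum by iterating $\calI+\frac{\delta}{n}\calL^*$ and sending $\delta\to 0$.

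The gap is in the step you yourself single out as the crux: the $h$-uniform bound on the off-diagonal entries. Your plan is to expand $e^{\beta H/4}Ae^{-\beta H/4}$ around its field-only version, absorb the $V$-dressing exactly, and control the rest with cluster-expansion bounds. This fails for exactly the fields the theorem is about. In imaginary time the interaction picture is a non-unitary similarity. Already $e^{\beta V_i/4}\sigma_X e^{-\beta V_i/4}$ has norm $e^{\beta h/4}$, with $h$ the spectral gap of $V_i$. The dressed interactions $e^{-uV}W_a e^{uV}$ have norms up to $e^{O(uhL)}$. This is the same blow-up that produces the Araki coefficients $\frac{1}{k!}\bigl(\frac{e^{4\beta hL}-1}{hL}\bigr)^k$ in \Cref{lem:separability_combo}, which are controlled only for $h\lesssim\beta^{-1}\log(1/\beta)$ (\Cref{cor:araki-expansionals-quasi-local}). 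Moreover, the Lindbladian of \Cref{def:field_lindblad} has no imaginary-time jumps at all: $A^P(\omega)$ is a Gaussian-filtered \emph{real-time} Heisenberg evolution.

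What actually makes the tail $h$-independent is the real-time interaction picture. There $e^{\ii tV}$ is unitary and preserves the supports and norms of the $W_a$. A time-dependent Lieb--Robinson bound then gives the shell decomposition of \Cref{lem:LR_radius_r}, $\|F_r^{P,t}\|\le (2\zeta|t|)^r/r!$, uniformly in $h$.
\begin{itemize}
\item \emph{Jump operators.} Integrating the shell bound against $f_i$, whose time width is $\sigma_i^{-1}\le\beta$, gives $\|G_r^{P,\omega}\|\le(2\pi)^{-1/4}\sigma_i^{-1/2}(2\zeta\beta)^r$ (\Cref{prop:jump-operators-quasi-local}).
\item \emph{Normalization.} The choice $\eta_i=\sigma_i$ makes $\int\gamma_i(\omega)\|G_{r_1}^{P,\omega}\|\,\|G_{r_2}^{P,\omega}\|\,d\omega\le(2\zeta\beta)^{r_1+r_2}$ free of $h$, even though $\|\gamma_i\|_1=\sqrt{2\pi}\,\eta_i$ grows with the field.
\item \emph{Coherent term.} Your plan never accounts for $-\ii[C^P,\cdot]$. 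Its tail $K_r^P$, $r\ge 1$, is where \Cref{lem:b_1_zero_moment,lem:b_1_higher_moments,lem:b_2_moments} enter. These are time-domain moment bounds for the kernels $b_1,b_2$, not statements about filters at shifted Bohr frequencies. Combined with the same shell decomposition they give $\|K_r^P\|\le 6(6\zeta\beta)^r$ (\Cref{prop:coherent-quasi-local}).
\item \emph{On-site coherent piece.} It is harmless only because $\tr_i[K_0^P,X]=0$, so it enters at second order in the step size. In your picture, the corresponding fact is that an on-site unitary generator leaves every $\delta_j$ invariant.
\item \emph{Threshold.} With these in place, the tail weights $(6\beta\zeta)^r$, together with $\zeta\le DL$ and the ball count $(DL)^{2r}$, produce the $(DL)^{-3}$ threshold that you asserted without derivation.
\end{itemize}
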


Crucially, the mixing bound is uniform in the field strength $h$, i.e.\ 
$h$ does not appear in the convergence rate at all. 
At a high level, our proof shows that the field-resonant Lindbladian continues to satisfy the quantum Dobrushin condition introduced in~\cite{blmt2025dobrushin}. This mirrors the classical intuition that on-site fields do not create additional inter-site influence: they can reshape local marginals, but they do not open new channels through which disagreements propagate. Our current high-temperature requirement scales as $\beta \lesssim (DL)^{-3}$, which may be an artifact of our analysis, and likely improvable with more refined techniques.

\begin{remark}[Efficient implementation]
Although the main focus of this work is rapid mixing rather than compilation,  the field-resonant Lindbladian evolution $e^{\calL^* t}$ should also be efficiently implementable on a quantum computer for constant-dimensional lattice geometries. In \Cref{sec:simulation_efficiency}, we explain why one expects an explicit gate count of $\widetilde{\mathcal{O}}(n)$ when $\beta=\bigO{1}$ and $t=\bigO{\log(n/\epsilon)}$. The key point is that the jump operators and coherent terms admit quasi-local expansions with exponentially decaying tails, and therefore, the generator can be truncated to radius $\bigO{\log(nt/\epsilon)}$. We do not pursue a complete compilation theorem here, since doing so would require a lengthy combination of existing Lindbladian-simulation machinery with several kernel-specific discretization and normalization estimates. We emphasize that the circuit complexity acquires only a logarithmic dependence on the field strength. In particular, the implementation remains polynomial-time even for exponentially large external fields.
\end{remark}

Next, we show that the Gibbs state remains a convex combination of product states throughout a logarithmic external-field window, namely as long as $h \lesssim \beta^{-1}\log(1/\beta)$. Combined with the entanglement construction of Kuwahara and Hatano~\cite{kh11}, this shows that, up to constant factors, $h \asymp \beta^{-1} \log(1/\beta)$ is the correct scale at which external-field-induced entanglement can first re-emerge. 

\begin{restatable}[Separability]{theorem}{sep}
\label{thm:separability-with-external-field}
    Let $H = V + W$, where $W$ is a bounded $L$-local degree-$D$ Hamiltonian and $V = \sum_{i\in [n]} V_i$ is an external field with $V_i$ supported on site $i$ only. Let $h = \max_i \norm{V_i}$. 
    If
    \begin{equation*}
        0<\beta \leq \frac{1}{8DL (56)^{2L}} \quad \textrm{ and } \quad  h \leq \frac{1}{8\beta L}\log\Paren{\frac{1}{4DL\beta}} \,,
    \end{equation*}
    then the Gibbs state $e^{-\beta H}/\tr(e^{-\beta H})$ is a convex combination of product states.
\end{restatable}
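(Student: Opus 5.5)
The plan is to follow the cluster-expansion separability argument of~\cite{blmt24b}, but to absorb the on-site field exactly into a product reference state rather than expanding it. Write $e^{-\beta H}$ through the interaction picture with respect to $V$:
\begin{equation*}
e^{-\beta(V+W)} = e^{-\beta V/2}\, \Big( e^{\beta V/2} e^{-\beta (V+W)} e^{\beta V/2}\Big)\, e^{-\beta V/2}.
\end{equation*}
Conjugation by the product operator $e^{-\beta V/2}=\bigotimes_i e^{-\beta V_i/2}$ is a positive local map, so it preserves the set of (unnormalized) separable operators. It therefore suffices to show that the middle operator $M$ is separable.

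To do this, expand $M$ as a Dyson or Duhamel series in $W$. A typical term is an ordered imaginary-time integral of products $\widetilde W_{a_1}(s_1)\cdots \widetilde W_{a_k}(s_k)$, where $\widetilde W_a(s) = e^{sV} W_a e^{-sV}$ with $|s|\le \beta$. The key point is that $\widetilde W_a(s)$ is still supported on $\supp(W_a)$, since $V$ is on-site. Its norm, however, can grow by a factor of up to $e^{2|s| L h}\le e^{2\beta L h}$. The hypothesis $h\le \frac{1}{8\beta L}\log\frac{1}{4DL\beta}$ gives $e^{2\beta L h}\le (4DL\beta)^{-1/4}$. Hence each interaction vertex carries a weight of about $\beta\cdot(4DL\beta)^{-1/4}\lesssim \beta^{3/4}$, which is still small. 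Grouping terms by connected clusters (polymers) of hyperedges, I would write
\begin{equation*}
M = \sum_{\text{polymer families}} \prod_{\gamma} K_\gamma,
\end{equation*}
following the polymer representation used in~\cite{blmt24b}, with $\|K_\gamma\|\le (c\,\beta^{3/4})^{|\gamma|}$ up to the $56^{L}$-type combinatorial factors. The condition $\beta\le (8DL\,56^{2L})^{-1}$ then makes the Kotecký--Preiss-type sum over clusters containing a given site converge.

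Separability follows the strategy of~\cite{blmt24b}. Write $M = \mathrm{Id} + (\text{small perturbation})$ and decompose each local piece using the fact that $I + X$ is separable whenever $\|X\|$ is small relative to the local dimension; this is where $56^{2L}$ enters. Concretely, I would group the expansion into the form $\sum_{S} \bigotimes(\cdots)$ of products of terms $I_S + X_S$. Each such term is separable once $\|X_S\|\le (\text{dimension factor})^{-1}$, and the remainder is controlled by the convergent cluster sum. Then normalize.

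The main obstacle is the non-Hermiticity and norm blow-up of $\widetilde W_a(s)$. The terms of the expansion are no longer individually Hermitian, so pairing them into positive separable pieces needs care. To handle this, I would symmetrize the expansion: use $e^{-\beta V/2}$ on both sides, and pair each time-ordered product with its adjoint. I would then check that the factor $e^{2\beta Lh}$ is paid only once per vertex and not compounded along the cluster. That is exactly why the threshold carries the $\frac{1}{8\beta L}$ constant, since it leaves a spare power of $\beta$ to beat the degree and combinatorial factors.
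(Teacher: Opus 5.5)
There is a genuine gap: the step that is supposed to produce separability is never actually carried out, and as described it would fail.

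Your preliminary steps are fine. The reduction to $M=e^{\beta V/2}e^{-\beta H}e^{\beta V/2}$ is valid; it is even an equivalence, since conjugation by the invertible product operators $e^{\pm\beta V/2}$ maps the separable cone onto itself. The interaction-picture estimate $\|e^{sV}W_a e^{-sV}\|\le e^{2|s|Lh}\|W_a\|$, paid once per vertex, is a legitimate and arguably cleaner substitute for the paper's nested-commutator bound $\frac{1}{k!}\bigl(\frac{e^{4\beta hL}-1}{hL}\bigr)^k$.

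The problem is the separability step. The claim ``$M=\mathrm{Id}+(\text{small perturbation})$'' is false. $M-I$ is a global operator whose norm grows with $n$; already at $V=0$ you have $M=e^{-\beta W}$. So the corrections cannot be absorbed into the single identity term. Your polymer representation does not rescue this, for several reasons:
\begin{itemize}
\item $M=\sum_\Gamma\prod_{\gamma\in\Gamma}K_\gamma$ is a sum over compatible families, not a product.
\item The individual terms are non-Hermitian. Pairing each with its adjoint makes it Hermitian but still, in general, indefinite.
\item A product of near-identity local factors with overlapping supports is in general neither positive nor separable.
\item The fact that $I+Y$ is separable when $|\mathrm{supp}\,Y|\le\ell$ and $\|Y\|\le 4^{-\ell}$ handles one localized perturbation at a time, not a global one.
\end{itemize}
So the proposed grouping into $\sum_S\bigotimes(I_S+X_S)$ with nonnegative weights is essentially the whole content of the theorem, and you give no mechanism for it.

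The missing idea is a multiplicative, site-by-site decoupling. For $i\in S$, let $H_S^{(i)}$ be $H_S$ with the interaction terms touching $i$ removed. Then $e^{-\beta H_S^{(i)}/2}=e^{-\beta V_i/2}\otimes e^{-\beta H_{S\setminus\{i\}}/2}$, and you can write $e^{-\beta H_S}=e^{-\beta H_S^{(i)}/2}X_{S,i}^\dagger X_{S,i}\,e^{-\beta H_S^{(i)}/2}$ with $X_{S,i}=e^{-\beta H_S/2}e^{\beta H_S^{(i)}/2}$. Here ``identity plus small'' is genuinely true. The Araki expansional $X_{S,i}$ contains only clusters rooted at $i$, with total norm at most $56^{-kL}$ at cluster length $k$.

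The paper then proceeds as follows:
\begin{itemize}
\item It writes $X_{S,i}$ as an average of $I+zP$ over single local Paulis.
\item It symmetrizes with two independent copies, $\frac12\bigl(Z^\dagger(I+cA)Z'+Z'^\dagger(I+cA)Z\bigr)$, to obtain a convex combination of positive terms $I+7c_\nu Q_\nu$.
\item It pins site $i$ into stabilizer states.
\item It carries the residual $I+c'B$, with $|c'|\le 7^{-|\mathrm{supp}\,B|}$, into the next site.
\end{itemize}
The constant $56^L=4^L\cdot 2^L\cdot 7^L$ comes from this bookkeeping, not from a dimension threshold on a global perturbation.

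Your $M$ does not escape this recursion. One has $M=Y^\dagger Y$ with $Y=e^{-\beta H/2}e^{\beta V/2}=X_{[n],i}\,(I_i\otimes Y_{\mathrm{rest}})$, where $Y_{\mathrm{rest}}=e^{-\beta H_{[n]\setminus\{i\}}/2}e^{\beta\sum_{j\neq i}V_j/2}$. So the interaction-picture reformulation repackages the norm bounds but still needs the pinning argument to reach separability.
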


Our proof follows the same high-level blueprint as that of \cite{blmt24b}. The new ingredient is a refined combinatorial analysis of the Araki expansional in the presence of the external field. Unlike in~\cite{blmt24b}, the nested-commutator expansion now contains arbitrarily many on-site terms interleaved with the genuine interaction terms. These insertions do not enlarge support, but they increase both the number of admissible clusters and their magnitudes. We control this proliferation with a new counting argument, which leads to a bound of $\frac{1}{k!}\cdot(\frac{e^{4\beta hL}-1}{hL})^k$
on the coefficient for each cluster of length $k$. Together with an upper bound of $k!(LD)^k$ on the number of clusters of length $k$, it is easy to check that this remains small up to the desired threshold of $\beta^{-1} \log(1/\beta)$. 

\begin{remark}[Inverse-temperature dependence]
We note that the requirement on $\beta$ is exponentially worse in the locality parameter than that obtained by~\cite{blmt24b}. We expect that this dependence can be improved substantially. Our goal here is instead to show that the external-field threshold $\beta^{-1} \log(1/\beta)$ is asymptotically correct.
\end{remark}

Finally, we show that external fields can make even the high-temperature regime classically hard, i.e.\ for any inverse temperature $\beta< 1$, high-temperature Gibbs states with a sufficiently large field strength can encode low-temperature Gibbs states whose computational-basis distributions are classically hard to sample from.

\begin{theorem}[Classical hardness with external field (informal)]
Given any $\beta \in (0, 1)$, let $t = \lceil 1.87 / \beta \rceil + 1$. 
There exists a family of $6$-local degree-$5t$ Hamiltonians $H = W + V$ on $n$ data qubits and $M = \bigO{nt}$ ancilla qubits such that when $h\geq \log(7.48/\beta)/\beta$, there is no classical randomized polynomial time algorithm that can output samples from a distribution $Q_n$ such that 
\begin{equation*}
    \Norm{ Q_n - P_{H} }_1 \leq 2^{-7 n}\, \quad \textrm{ where } \quad P_H(x) = \frac{\bra{x} e^{-\beta H} \ket{x} }{\tr\Paren{e^{-\beta H}}} \textrm{ for } x\in\{0,1\}^{n+M}\,,
\end{equation*}
unless the polynomial hierarchy collapses to the third level. 
\end{theorem}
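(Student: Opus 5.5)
The plan is to reduce to a known hardness result for sampling from low-temperature, or effectively zero-temperature, classical-quantum distributions. Concretely, I would embed an IQP-type or constant-depth circuit sampling problem whose output distribution is classically hard to sample to within $2^{-\Omega(n)}$ in total variation (unless PH collapses, by the standard Bremner--Jozsa--Shepherd / anticoncentration argument). The construction is a Feynman--Kitaev style, or more simply a \emph{parent-Hamiltonian}, encoding. The data qubits carry the circuit's input and output registers. The $M=\bigO{nt}$ ancilla qubits implement $t$ layers, with $6$-local gate-check terms coupling consecutive layers; this accounts for the degree bound $5t$, since each data qubit touches a constant number of terms per layer.

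The key idea is that the interaction $W$ alone at inverse temperature $\beta<1$ is too ``hot''. The on-site field $V$ is therefore used to amplify the effective inverse temperature: writing $V=h\sum_i Z_i$-type projectors, the factor $e^{-\beta h}$ suppresses every computational-basis configuration that violates the field preferences. I would choose the field so that the ancilla configurations encoding a valid computation history are favored, and use $t\approx 1.87/\beta$ copies, i.e.\ repetition, so that the penalty for any invalid configuration accumulates to a factor like $(\beta/7.48)^{\Omega(1)}$ per site. The requirement $h\geq \log(7.48/\beta)/\beta$ ensures $e^{-\beta h}\leq \beta/7.48$, which is small enough to beat the entropy of the roughly $2^{O(1)}$ bad local configurations per site.

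The steps are as follows. First, choose commuting (diagonal or locally diagonalizable) terms so that $P_H(x)$ factorizes into a weight on valid histories times an error term. Second, show that the total weight of invalid configurations is at most $2^{-8n}$, via a union/cluster bound: each defect costs $e^{-\beta h}\cdot e^{O(\beta)}$, and there are at most $(\text{const})^{k}$ defect patterns of size $k$ per site. Third, show that, conditioned on validity, the marginal on the data qubits equals (or is $2^{-8n}$-close to) the hard circuit distribution. Fourth, conclude by the triangle inequality: a sampler $Q_n$ that is $2^{-7n}$-close to $P_H$ yields, after discarding ancillas and postprocessing, a $2^{-7n}+2^{-8n}$ approximate sampler for the hard distribution. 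This contradicts the hardness theorem via Stockmeyer approximate counting.

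The main obstacle is the third step together with the constants. The interaction must remain $\norm{W_a}\leq 1$, so the low-temperature target distribution arises only from $\beta$ times bounded terms. The repetition count $t$ is therefore what boosts the relative weights of $W$-energy differences to order $1$, and the field is what freezes the ancillas. I would first try taking $W$ to be a sum of $t$ identical copies of each IQP phase term, spread over ancilla copies tied to data qubits by the field, so that the effective weight becomes $e^{-\beta t\,(\cdot)}$ with $\beta t\ge 1.87$. I would then verify that this reproduces the needed hard distribution, using a hardness result that tolerates an $O(1)$ effective temperature, such as hardness of sampling from IQP-like Gibbs distributions at $\beta_{\rm eff}=O(1)$.
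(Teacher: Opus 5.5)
There is a genuine gap, and it sits in your Steps 2--3. You treat ancilla configurations disfavoured by the field as defects to be union-bounded away, with total weight at most $2^{-8n}$. That cannot hold under the stated hypotheses. The theorem must hold at the threshold $h=\log(7.48/\beta)/\beta$, where $e^{-\beta h}=\beta/7.48$ is a constant. By your own accounting (each defect costs about $e^{-\beta h}$), the expected number of defects among $M=\Theta(nt)=\Theta(n/\beta)$ ancillas is then $\Theta(n)$. Defect-free configurations therefore carry weight $e^{-\Theta(n)}$, not $1-2^{-8n}$; a cluster bound controls the density of defects, not their total weight.

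The Feynman--Kitaev/history-state framing has separate problems. Its terms do not commute, so $e^{-\beta H}$ does not factorize. Its gap is only inverse-polynomial, so it would need $\beta_{\mathrm{eff}}=\mathrm{poly}(n)$ rather than $1.87$. And the computational-basis statistics of a history state are not the circuit's output distribution. You also do not need diagonal terms: operator-level factorization from pairwise commutation is enough. The right base instance is the Rajakumar--Watson family $H_C=\sum_a P_a$ of non-diagonal, pairwise commuting, $5$-local, degree-$5$ projectors, which is hard at every $\beta\ge 1.87$.

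The missing idea is that the ``defects'' need not be suppressed at all, because the ancillas can be traced out exactly. For each $a$ and $\ell\in[t]$, add an ancilla $r_{a,\ell}$ acting as a switch, $H_{a,\ell}=h\,I\otimes n_{a,\ell}+P_a\otimes(I-n_{a,\ell})$ with $n_{a,\ell}=\ketbra{1}{1}_{r_{a,\ell}}$. In state $\ket{0}$ the ancilla turns on one copy of $P_a$; in state $\ket{1}$ it turns that copy off at energy cost $h$. All $H_{a,\ell}$ commute, so $e^{-\beta H}=\prod_{a,\ell}e^{-\beta H_{a,\ell}}$. Moreover $\tr_{r_{a,\ell}}e^{-\beta H_{a,\ell}}=e^{-\beta P_a}+e^{-\beta h}I=(1+e^{-\beta h})\bigl(I-P_a+qP_a\bigr)$ with $q=\frac{e^{-\beta}+e^{-\beta h}}{1+e^{-\beta h}}$. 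Hence $\tr_{\mathcal R}\rho_H$ is exactly the Gibbs state of $H_C$ at $\beta_{\mathrm{eff}}=t\log(1/q)$, with no error term: a switched-off copy is already accounted for in $\beta_{\mathrm{eff}}$. Discarding computational-basis ancilla outcomes is the same as measuring $\tr_{\mathcal R}\rho_H$, so the data marginal of $P_H$ is exactly the hard distribution. Locality $6$ and degree $5t$ are immediate.

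The constants are then pure arithmetic. With $x=\beta_{\mathrm{eff}}/t$ one has $e^{\beta h}=(e^x-1)/(1-e^{x-\beta})$. If $\beta_{\mathrm{eff}}<c=1.87$, then $t=\lceil c/\beta\rceil+1$ forces $x<s:=c\beta/(c+\beta)$. Hence $e^{\beta h}<(e^s-1)/(1-e^{s-\beta})\le 4c/\beta=7.48/\beta$, contradicting $h\ge\log(7.48/\beta)/\beta$.

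Your final paragraph, replicating each term $t$ times so the weight becomes $e^{-\beta t(\cdot)}$, points in this direction. However, an on-site field cannot tie ancilla copies to data qubits. Without the switch gadget and the exact partial-trace identity, your plan has no mechanism that produces the target distribution.
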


We achieve this using a \emph{field-refrigeration reduction}: by coupling in ancilla qubits and applying a large external field, one can realize an effective inverse temperature that is much larger than the physical one (see \cref{sec:classical-hardness} for details). We note that our reduction requires $\beta = \Theta(D^{-1})$, which is still far from the regime where we can prove rapid mixing ($\beta< \Theta(D^{-3})$). Pinning down the exact thresholds for separability, rapid mixing, and classical hardness is therefore an important direction for future work. We also note that the specific hard instance we reduce to admits an efficient quantum Gibbs sampler, as it is the same IQP circuit that appears in Bergamaschi, Chen and Liu~\cite{bergamaschi2024quantum} and Rajakumar and Watson~\cite{rajakumar2026gibbs}. 

Taken altogether, our results suggest that a sufficiently large external field can place high-temperature Gibbs states in a \emph{Goldilocks zone}: genuinely quantum behavior survives, and efficient quantum state preparation remains possible. This makes them promising candidates for quantum advantage via state preparation. At present, however, the classical-hardness and quantum-easiness regimes we establish do not overlap. Determining whether natural Hamiltonian families can realize such an overlap is an important open problem, and would yield an airtight quantum-advantage statement.

%% file: tech-overview.tex
\section{Technical overview}
\label{sec:tech-overview}

External fields force us to separate two questions that, without fields, are easy to conflate. The first is structural: what do external fields do to the correlation structure in the Gibbs state itself? At sufficiently high temperature, Gibbs states of local Hamiltonians without an external field are known to be separable~\cite{blmt24b}, and we prove that separability survives as long as the magnitude of the field $h \lesssim \beta^{-1}\log(1/\beta)$ (see \cref{thm:separability-with-external-field}). In this regime the Gibbs state is still a convex combination of product states and can be prepared with a depth-$2$ quantum circuit. 
At the same time, external fields can push in the opposite direction and make a thermal state more genuinely quantum. Kuwahara and Hatano showed that when $h \gtrsim \beta^{-1}\log(1/\beta)$, the resulting Gibbs state is entangled, via an \emph{entanglement monotone} certificate~\cite{kh11}. It is also not too hard to see that when $h\to \infty$, the state is again close to being separable. While we lack tools to quantify the nature and scale of the entanglement that external fields can induce, taken together these results suggest an intermediate regime where the field is strong enough to generate genuinely quantum thermal structure.

The second question is algorithmic: does this same external field also obstruct  preparation? A priori there is good reason to think so. A large on-site term dramatically reshapes the local energy landscape, and the noncommuting rapid-mixing arguments closest to ours are perturbative in precisely this local energy scale (discussed further below). Our main theorem shows that this intuition is wrong and we construct a Lindbladian that continues to mix in $\log(n/\epsilon)$ time. The rest of the overview explains why these two facts are compatible. The key point is that on-site fields can strongly modify local energies, but they can engineered in a way that do not transport influence across the system.

\subsection{Background and challenges}

\paragraph{Dobrushin condition.} A natural route to rapid mixing for classical spin systems is via the so called \emph{Dobrushin condition}. Consider a classical spin system on a state space $\Omega$ and let $\Phi = \sum_{i \in [n]} \Phi_i$ denote the heat-bath Glauber dynamics, where $\Phi_i$ only updates site $i$ by resampling it from its conditional distribution given all other sites. The Dobrushin influence matrix is then defined as follows:
\begin{equation*}
    D_{i,j} = \max_{\sigma,\; \rho \in X_j} d_{\tv}\Paren{ \mu_i(\sigma, \cdot) , \mu_i(\rho, \cdot)} \,,
\end{equation*}
where $X_j$ is the set of all configurations that only disagree at site $j$ and $\mu_i$ is the conditional distribution used to update site $i$. Intuitively, $D_{i,j}$ measures how much a disagreement at site $j$ can change the update rule at site $i$. The path coupling framework of Bubley and Dyer~\cite{bd97} packages this local information into a related linear evolution for disagreement vectors, i.e.\ given two copies of the chain, initiated with two distinct starting configurations, let $(\Phi_1^{t}, \Phi_2^{t})$ be a coupling, define the disagreement vector at step $t$ to be
\begin{equation*}
    x_t(j) = \Pr\left[ \Phi_1^{t}(j) \neq \Phi_2^{t}(j)  \right]\,.
\end{equation*}
Observe, the Glauber update at site $i$ erases any existing disagreement under the maximal coupling. In contrast, any update at site $j\neq i$ can induce a new disagreement at site $i$, and this probability is at most $D_{i,j}$. Then, the $j$-th coordinate of the disagreement vector updated as follows: 
\begin{equation*}
   e_j \mapsto e_j - \Iver{i=j}\cdot e_j + D_{i,j} e_i \quad \textrm{and thus } \quad x_{t+1} \leq \Paren{ \Paren{1 - \frac{1}{n}} I + \frac{1}{n} D } x_t\,,
\end{equation*}
where the second inequality is entry-wise. We can encode this using an update matrix $Q_i$ such that 
\begin{equation*}
    M_i = I + Q_i \, \quad \textrm{ where } \quad Q_i = - E_{\{i\}} + \sum_{j\neq i} D_{i,j} \; e_i e_j^\top\,
\end{equation*}
where $E_{\{i\}} = e_i e_i^\top$. Then, averaging over the randomly chosen site $i$ yields
\begin{equation*}
    M = \frac{1}{n} \sum_{i \in [n]} M_i = \Paren{1 - \frac{1}{n}} I + \frac{1}{n} D  \,.
\end{equation*}
The \emph{Dobrushin condition} states that the maximum column sum of $M$ is strictly less than $1$, i.e.\  $\Norm{M}_{1\to 1} < 1$, which is precisely when the coupled path contracts and the chain mixes rapidly. 

\paragraph{Quantum Markov chains.} In the quantum setting, there are several analogs of Glauber dynamics that capture heat-bath dynamics. The canonical choice for such dynamics is the Davies generator~\cite{davies1979generators}, which arises from a weak coupling to a thermal bath, though may not even be quasi-local. More recent, efficiently implementable, heat-bath dynamics were introduced in the seminal work of Chen, Kastoryano and Gilyén~\cite{ckg23}. 
These Lindbladian dynamics $\calL$ satisfy detailed balance and have the Gibbs state as the unique fixed point of the overall evolution $e^{\calL t}$. 
Recent works show that suitable Lindbladian Gibbs samplers converge to high-temperature Gibbs states in $\bigO{\log(n/\epsilon)}$ time~\cite{rfa24a,rfa24,blmt2025dobrushin}, although the analyses differ significantly, as we discuss below.

\paragraph{A quantum Dobrushin condition.}
Bakshi, Liu, Moitra and Tang~\cite{blmt2025dobrushin} introduced a quantum lift of the aforementioned framework for rapid mixing. For a quantum channel admitting the following decomposition, $\Phi = \sum_{i = 1}^n \Phi_i$, they define the following influence matrix:
\begin{equation*}
    D_{i,j} = \max_{\substack{\rho, \sigma \textrm{ s.t. }   \tr_j(\rho-\sigma)=0,\\ \norm{\rho-\sigma}_{W_1}=1 }} \; \Norm{ \Phi_i\Paren{\rho - \sigma} }_{W_1}\, ,
\end{equation*}
where $\Norm{\cdot}_{W_1}$ is the quantum Wasserstein norm introduced in~\cite{dmtl21}. This is the exact same object as in the classical path-coupling argument, now lifted to density matrices: one starts from a disagreement localized at site $j$, asks how much a single site-update at $i$ can change it, and then sums these one-step influences columnwise. Similar to the classical setting, $\Norm{D}_{1\to 1}<1$ implies rapid mixing for the corresponding Lindbladian. ~\cite{blmt2025dobrushin} introduced a Lindbladian where the jump operators are \emph{imaginary time evolutions} of single-site Pauli matrices, i.e.\ $A^P = e^{-\beta H}  P e^{\beta H}$. They show that the corresponding channel satisfies their quantum Dobrushin condition. This is a promising avenue for handing external fields, since at least for classical spin systems, the Dobrushin condition continues to hold for arbitrary external fields. In contrast, other routes to rapid mixing in the quantum setting, including the breakthrough work of Rouzé, França, and Alhambra~\cite{ rfa24a, rfa24}, are perturbative in the full local energy scale, and therefore the temperature threshold scales inversely in the magnitude of the external field. In particular, these approaches do not distinguish the large but non-propagating on-site field from the genuinely propagating interaction, which is precisely the distinction an effective Gibbs sampler in this regime needs to exploit.  

\paragraph{Obstacles to rapid mixing.} At this point, we have two avenues to achieve rapid mixing in the presence of an external field, and each fails for a different reason. The conceptually attractive approach is to proceed with the BLMT Lindbladian and hope that the quantum Dobrushin condition continues to hold. However, their jump operators are defined to be imaginary-time evolution of single site Pauli's, and the quasi-locality of such operators is controlled by the same kind of Araki expansionals that appear in our separability analysis (see~\cref{sec:separability}). Similar to the separability threshold, the jump operators are quasi-local only as long as $h\lesssim \beta^{-1}\log(1/\beta)$, which is right around when we expect entanglement to re-emerge. This coincidence is not accidental; both arguments rely on the same \emph{nested commutator} expansion, where the $k$-th coefficient scales as 
\begin{equation*}
    \frac{1}{k!}\Paren{ \frac{e^{4\beta h L} -1}{hL}}^k\,, \quad \textrm{ summed over connected $k$-clusters.}
\end{equation*}
It is easy to see that as long as $h\lesssim \beta^{-1}\log(1/\beta)$, the coefficient scales as $\beta^k/k!$, which is the desired scaling.

The second avenue is to analyze the CKG Lindbladian and show that it satisfies the quantum Dobrushin condition. Here, the jump operators are \emph{real-time evolutions} of single-site Pauli's, i.e.\ $A^P = e^{-\ii Ht} P e^{\ii Ht}$ and quasi-locality is not the bottleneck anymore due to the field-independent Lieb-Robinson bound (discussed later). For the canonical choice of parameters, i.e. $\Delta=\eta=\sigma=1/\beta$, which appears in all prior work analyzing the CKG Lindbladian~\cite{ckg23, rfa24a,rfa24, tong2025fast}, there is no longer a contraction at the right scale in the Dobrushin argument. In other words, the update matrix must have the form ``negative diagonal plus quasi-local off-diagonal'', where the diagonal term comes from on-site dissipative contraction, while the off-diagonal term measures how much influence can leak from the updated site to nearby sites. If one plugs the canonical parameters into the same Gaussian overlap that appears in the dissipative contraction estimate, then the diagonal contraction is only of order
\begin{equation*}
    c_{\mathrm{diag}}(h)\asymp \frac{1}{\sqrt{2}}
\exp\!\left(-\frac{(1-\beta h)^2}{4}\right) \,.
\end{equation*}
Thus, once $h\gg 1/\beta$, the diagonal contraction becomes exponentially small, and any Dobrushin proof would require the off-diagonal influence to be suppressed by a comparable $e^{-\Omega((\beta h)^2)}$ factor. The following explicit construction shows that such a suppression cannot hold in general: consider the Hamiltonian $H = h\Paren{Z_1 + Z_2} + X_1X_2 + Y_1Y_2$. Observe that the onsite terms and the interaction terms commute. Now consider the jump operator with $P = Z_1$: $e^{\ii Ht} Z_1 e^{-\ii Ht} = e^{\ii \Paren{X_1X_2 + Y_1Y_2} t} Z_1 e^{-\ii \Paren{X_1X_2 + Y_1Y_2} t}$, and the dependence on $h$ has disappeared, which precludes the $e^{-(\beta h)^2}$-factor suppression.

\subsection{The field-resonant Lindbladian}

The obstacles above guide us in our quest to design efficient Gibbs samplers in the presence of an external field.
Similar to \cite{ckg23}, we work with real-time evolution, and show that the jumps remains quasi-local even for arbitrarily large fields. We then design the filter functions in the frequency domain so as to track the local field scale. This is the sense in which the Lindbladian is \emph{field-resonant}. We explain these ingredients next.

\paragraph{Field-independent Lieb-Robinson bound.}
We begin with a Lieb-Robinson bound with velocity that is agnostic to the external field strength. This is essential because if one applies a standard Lieb-Robinson bound directly to $H = W+V$, the resulting velocity scales with the largest local energy scale in the system, namely the field strength $h$. But this is not the right notion of propagation for our problem since the on-site field can induce arbitrarily rapid local energy jumps, yet it cannot by itself move information across the interaction graph.
To formalize this intuition, we pass to the interaction picture and factor out the on-site evolution, following the approach of Nachtergaele, Raz, Schlein, and Sims~\cite{NRSS09}. Concretely, let
\begin{equation*}
    U(t) := e^{\ii tV}e^{-\ii tH},
\qquad \textrm{ and } \qquad
W(t) := e^{\ii tV}We^{-\ii tV} = \sum_{a \in [m]} e^{\ii tV}\; W_a \;e^{-\ii tV}.
\end{equation*}
Then $U(t)$ is generated by the \emph{time-dependent} Hamiltonian $W(t)$. Since $V$ acts independently on each site, conjugation by $e^{\ii tV}$ preserves both the support and the norm of every interaction term in $W$. Thus $W(t)$ has exactly the same interaction geometry and local strength as $W$, with the on-site oscillations removed. 
Now we can appeal to a Lieb-Robinson bound of \emph{time-dependent} local Hamiltonians via a generalization of the time-independent case by Haah, Hastings, Kothari, and Low~\cite{hhkl21}. This bound depends only on the interaction strength $\zeta\leq DL$ and is completely uniform in the field strength $h$. The specific consequence we require is a shell decomposition for the Heisenberg evolution of a single-site operator $A$:
\begin{equation*}
    e^{\ii Ht} A e^{-\ii Ht} = \sum_{r \geq 0} F^r_{A,t}\, \quad \textrm{ where } \quad \norm{F^r_{A,t}} \leq \norm{A}\,\frac{(2\zeta |t|)^r}{r!}\,,
\end{equation*}
and each $F^r_{A,t}$ is supported on a ball of radius $r$ around the support of $A$. We can then immediately conclude that real-time evolution of single-site Pauli jumps remains quasi-local, even in the presence of arbitrarily large external fields.

\paragraph{General Lindbladian from filter and transition-weight functions.}
Chen, Kastoryano, and Gilyén \cite{ckg23} introduced a family of detailed-balanced Lindbladians built from two ingredients: a filter function $f(t)$ in time and a transition-weight function $\gamma(\omega)$ in frequency. 
In our setting, we allow these functions to be site-dependent. 
For each site $j\in[n]$, these are Gaussian functions determined by three parameters $(\Delta_j,\sigma_j,\eta_j)$ satisfying $\beta(\eta_j^2+\sigma_j^2)=2\Delta_j$ for detailed balance.
Specifically, 
\begin{itemize}
    \item $\gamma_j(\omega)$ is a Gaussian centered at $-\Delta_j$ with standard deviation $\eta_j$, and
    \item the Fourier transform $\widehat{f}_j(\omega)$ is a Gaussian centered at $0$ with standard deviation $\sqrt{2}\cdot \sigma_j$. 
\end{itemize}

To understand the dissipative dynamics, consider the Heisenberg evolution of a single-site Pauli $P$ at site $j$, i.e.\ $e^{\ii Ht}Pe^{-\ii Ht}$.
In the ideal Davies picture, this evolution decomposes into components labeled by \emph{Bohr frequencies}, namely energy differences between eigenstates of $H$. The corresponding jump operator
\begin{equation*}
A^P(\omega)\coloneqq \frac{1}{\sqrt{2\pi}} \int_{-\infty}^{\infty} e^{\ii Ht}Pe^{-\ii H t} e^{-\ii \omega t}  f_j(t) \diff t
\end{equation*}
is a Gaussian-windowed Fourier transform of this Heisenberg evolution. Thus $A^P(\omega)$ should be viewed as a smoothed version of the exact $\omega$-frequency component: it is biased toward matrix elements of $P$ connecting eigenstates whose energy difference is close to $\omega$. The parameter $\sigma_j$ controls how sharply this frequency decomposition is resolved: larger $\sigma_j$ corresponds to a broader window in frequency and hence a coarser grouping of nearby Bohr frequencies.

The transition-weight function $\gamma_j(\omega)$ then determines how strongly each frequency channel contributes to the dissipation:
\begin{equation*}
    \calL_{\mathrm{diss}}^P(\rho) \coloneqq \int_{-\infty}^{\infty} \gamma_j(\omega) \Paren{  A^{P}(\omega) \rho A^{P}(\omega)^\dagger - \frac12\braces{A^{P}(\omega)^\dagger A^{P}(\omega), \rho} } \diff \omega .
\end{equation*}
In other words, $A^P(\omega)$ identifies which energy-changing transitions are visible to the local update, while $\gamma_j(\omega)$ biases how strongly different transitions are favored. The parameter $\Delta_j$ specifies the frequency around which this weighting is centered, and $\eta_j$ determines how tightly the dissipative update is concentrated around that preferred frequency. This per-site freedom in choosing $(\Delta_j,\sigma_j,\eta_j)$ is what later allows us to tailor the Lindbladian to the local external field at each site.

Besides the dissipative term, the Lindbladian also includes a coherent correction $-\ii[C^P,\rho]$, where $C^P$ is a Hermitian operator built from two kernel functions $b_{1,j}(t)$ and $b_{2,j}(t)$ determined by the same site-dependent parameters $(\Delta_j,\sigma_j,\eta_j)$.
At a high level, this term is the Hamiltonian counterpart to the dissipative update: while $\calL_{\mathrm{diss}}^P$ implements incoherent transitions between nearby Bohr frequencies, the coherent term accounts for the accompanying energy renormalization and is needed so that the full generator satisfies detailed balance and fixes the Gibbs state.

\paragraph{Field-resonant Lindbladian.}
The difficulty with external fields is that they can shift the \emph{local} Bohr frequencies seen by a site-update by an amount much larger than the thermal scale $1/\beta$.
For the canonical choice $\Delta=\eta=\sigma=1/\beta$, the local dissipative contraction is governed by a Gaussian overlap in frequency space, and this overlap becomes exponentially small once the onsite field is much larger than $1/\beta$.
Thus, a fixed bath profile no longer matches the frequencies generated by the local onsite dynamics, and the diagonal contraction needed for the Dobrushin argument disappears. 

The field-resonant Lindbladian addresses this by tuning the dissipative update to the local energy scale at each site. 
For each site $j$, we set
\begin{equation*}
    \Delta_j=\max\!\left\{1/\beta,\ \lambda_{\max}(V_j)-\lambda_{\min}(V_j)\right\},
    \qquad
    \eta_j=\sigma_j=\sqrt{\Delta_j/\beta} \,.
\end{equation*}
Here $\Delta_j$ is the relevant local frequency scale: when the onsite potential is weak, we keep the usual thermal scale $1/\beta$, while for a strong onsite potential we instead use its spectral range.
The Gaussian $\gamma_j$ is then centered at the corresponding preferred frequency $-\Delta_j$, while $\widehat f_j$ is given a comparable frequency width through the choice $\sigma_j=\sqrt{\Delta_j/\beta}$.
So rather than using one fixed spectral profile everywhere, the dissipative update at site $j$ is matched to the Bohr frequencies created locally by $V_j$.

This is the sense in which the onsite field is absorbed into the local dissipative evolution. 
The field changes which frequencies appear in the local Heisenberg evolution of a single-site Pauli, and we choose $(\Delta_j,\sigma_j,\eta_j)$ so that the jump operators resolve those shifted frequencies and the transition-weights favor them at the correct scale.
In the strong-field regime, this restores an order-one local dissipative contraction instead of the exponentially small overlap produced by the canonical parameters.
At the same time, the field still does not contribute to the propagation of influence across the system: that is controlled only by the interaction term $W$, via the field-independent Lieb-Robinson bound in the interaction picture.

\paragraph{Quasi-locality.}
It remains to show that both the jump operators $A^P(\omega)$ and the coherent term $C^P$ of the field-resonant Lindbladian admit quasi-local expansions. In both cases, we truncate the Hamiltonian $H$ to growing balls $\ball(j,r)$ around the updated site $j$ and telescope the Heisenberg evolution into shell contributions, whose norms are controlled uniformly in the external field strength by the field-independent Lieb-Robinson bound. For the jump operators, integrating these shell bounds against the filter $f_j$ reduces the argument to bounding Gaussian moments. The coherent term follows the same strategy, but is more delicate: it involves bounding the moments of the auxiliary kernels $b_{1,j}$ and $b_{2,j}$, whose frequency-domain expressions are simple but whose time-domain forms, especially that of $b_{1,j}$, are less explicit.
In both cases, this yields an exponentially decaying quasi-local expansion.

\subsection{From quasi-locality to rapid mixing}

\paragraph{Evolution of transport plans.}
To convert quasi-locality into a Dobrushin bound, we use the transport-plan formalism of~\cite{blmt2025dobrushin}. The proof has three steps.
First, we encode a traceless Hermitian perturbation $X$ by a transport plan $X=\sum_{j=1}^n X_j$, where each $X_j$ is locally traceless at site $j$, together with its cost vector $x_j=\frac12\|X_j\|_1$.
Second, by linearity it suffices to analyze the normalized case $x=e_j$, namely an input perturbation concentrated at a single site $j$.
Third, for a one-site update $\Phi_j=I+\delta \calL_j^\ast$, we expand $\Phi_j(X)$ using the quasi-local shell decompositions of the coherent and dissipative parts, and treat each resulting term separately.
The key point is that every such term has vanishing partial trace on some region $S$, so \Cref{lem:cost_vec_for_partial_trace_zero,lem:wass_norm_commutator} allow us to charge it entirely to $S$, with cost controlled by its trace norm. In this way, the quasi-local expansions of the jump operators and coherent terms are converted into an update matrix for transport costs.

At a high level, the coherent and dissipative parts contribute differently to this update matrix. The coherent part only redistributes mass to nearby balls around the updated site, and therefore contributes a purely quasi-local off-diagonal tail. The dissipative part behaves similarly when the initial disagreement is away from the updated site, but when the disagreement is already at that site, its strictly local contribution yields genuine contraction, which we discuss next. Altogether, this leads to an update matrix of the form
\begin{equation*}
Q^{(j)} = -c_{\mathrm{diag}}\,E_{\{j\}} + \sum_{r\ge 1} c_r\,E_{\ball(j,r)} \,,
\end{equation*}
where $c_{\mathrm{diag}}>0$ is a universal constant and $c_r$ decays geometrically in $r$. This is exactly the "negative diagonal plus quasi-local tail structure" needed for the Dobrushin condition.

\paragraph{Contraction in the dissipative evolution.}
The key contraction comes from the case where the initial disagreement is already at the updated site $j$. In this case, one isolates the strictly local dissipative evolution, so only the onsite potential $V_j$ matters.

After rotating $V_j$ to a Pauli $Z$-basis, we may write $V_j=\frac h2 \sigma_Z^{(j)}$, where $h$ is the spectral gap of $V_j$. The local Heisenberg evolution then sends $\sigma_X^{(j)}$ and $\sigma_Y^{(j)}$ into raising and lowering pieces oscillating at frequencies $\pm h$, while $\sigma_Z^{(j)}$ stays at frequency $0$. The Fourier transform therefore produces three local frequency packets
\begin{equation*}
    a=\widehat{f}_j(\omega-h),\qquad
    b=\widehat{f}_j(\omega+h),\qquad
    c=\widehat{f}_j(\omega).
\end{equation*}
So the on-site field appears purely as a shift of local frequencies. 
For an input $X=\sigma_X\otimes B_X+\sigma_Y\otimes B_Y+\sigma_Z\otimes B_Z$, the strictly local dissipative $\delta$-step contracts the $\sigma_X/\sigma_Y$ sector by $1 - \delta \int_{-\infty}^{\infty}\gamma_j(\omega)\Paren{a^2+b^2+2c^2}\diff\omega$ and contracts the $\sigma_Z$ sector by $1 - \delta \int_{-\infty}^{\infty}\gamma_j(\omega)\Paren{2a^2+2b^2}\diff\omega$. Thus the contraction rate can be obtained by lower bounding $\int_{-\infty}^{\infty}\gamma_j(\omega)\,b^2\,\diff\omega$. Note that $c^2$ cannot by itself give the needed uniform lower bound since it is absent from the $\sigma_Z$ coefficient. 

Recall that the transition-weight function $\gamma_j$ is a Gaussian centered at $-\Delta_j$.
There are two cases. 
\begin{itemize}
    \item If $h\le 1/\beta$, then $\Delta_j=1/\beta$, and $\int_{-\infty}^{\infty}\gamma_j(\omega)\, b^2\,\diff \omega = \frac{1}{\sqrt2}\exp\!\left(-\frac{(1-\beta h)^2}{4}\right) \;\ge\; \frac{e^{-1/4}}{\sqrt2}$.
    \item If $h> 1/\beta$, then $\Delta_j=h$, and $\int_{-\infty}^{\infty}\gamma_j(\omega)\, b^2\,\diff \omega=\frac{1}{\sqrt2}.$
\end{itemize}
Thus in both regimes the overlap is bounded below by the same universal constant $c_{\diag}:=\frac{e^{-1/4}}{\sqrt2}$,
and this is precisely the negative diagonal coefficient in the update matrix. By contrast, for the canonical choice $\Delta=\eta=\sigma=1/\beta$, this overlap remains $\frac{1}{\sqrt2}\exp(-(1-\beta h)^2/4)$ for all $h$, and therefore becomes exponentially small when $h\gg 1/\beta$.

\paragraph{Rapid mixing from quantum Dobrushin.}
Having obtained the update matrix, the rest of the argument is to plug it into the quantum Dobrushin framework recalled earlier. For a linear approximation to the evolution,
\begin{equation*}
\Phi = I+\frac{\delta}{n}\calL^\ast = \frac1n \sum_{j=1}^n \Phi_j,
\qquad \Phi_j = I+\delta \calL_j^\ast,
\end{equation*}
the update-matrix bound controls the corresponding influence matrix: the negative diagonal term $-c_{\diag}E_{\{j\}}$ competes directly with the quasi-local tail $\sum_{r\ge 1} c_r E_{\ball(j,r)}$.
In the high-temperature regime, the geometric decay of $c_r$ ensures that the total off-diagonal contribution is smaller than the on-site contraction, and hence the Dobrushin matrix has column sums strictly below $1$.
The resulting $W_1$-contraction yields exponential decay under repeated steps. Passing to continuous time, and using that the Gibbs state is a fixed point of $\calL^\ast$, we can conclude that $\|\rho_t-\sigma\|_1\le \epsilon$ after  $\bigO{\log(n/\epsilon)}$ time.

\subsection{Classical hardness.}

Finally, we prove classical hardness by encoding a genuinely low-temperature Hamiltonian ($\beta>1$) into a high-temperature Hamiltonian with an external field. Our reduction is generic and can inherit any low-temperature hardness, but incurs a blow-up in the degree of the high-temperature Hamiltonian. We make the parameters concrete by starting with the commuting-projector Hamiltonian of Rajakumar and Watson~\cite{rajakumar2026gibbs}, for which sampling the computational-basis distribution is already classically hard at any constant inverse temperature $\beta_{\eff} \geq 1.87$. The main new ingredient is a \emph{field-refrigeration} gadget that embeds any commuting Hamiltonian into a larger Hamiltonian with ancilla qubits and a large on-site field. 

Concretely, take any commuting Hamiltonian with projector terms, $H_C= \sum_{a \in [m]} P_a$ and for each $a$, add $t$ ancilla qubits, denoted by $r_{a,\ell}$, for $\ell \in [t]$. Then, the new Hamiltonian has terms $H_{a,\ell} = h\cdot I \otimes n_{a,\ell} + P_a \otimes \Paren{I - n_{a,\ell}}$, where $n_{a,\ell}$ is the projector on to $\ketbra{1}{1}_{r_{a,\ell}}$. In the simultaneous eigenbasis of the commuting projectors, this gadget has a simple interpretation: if the ancilla qubit is $0$, it contributes the original projector eigenvalue $p_a \in \{0,1\}$ and if it is $1$, it contributes the field strength $h$. Then, tracing out all the ancilla qubits, we show that the marginal on the remaining system is exactly the Gibbs state $e^{-\beta_{\eff} H_C}/ \tr\Paren{e^{-\beta_{\eff} H_C}}$, where $\beta_\eff$ satisfies the following:
\begin{equation*}
    \beta_{\eff} = t \cdot \log\Paren{ \frac{1 + e^{-\beta h}}{e^{-\beta} + e^{-\beta h}}  }\,.
\end{equation*}
It is easy to check that $\log\Paren{ \frac{1 + e^{-\beta h}}{e^{-\beta} + e^{-\beta h}}  } \leq \log(e^{\beta}) = \beta $, with an equality achieved in the limit. Therefore, the minimal choice of $t$ is $\lceil \frac{\beta_\eff}{\beta} \rceil\leq \frac{\beta_\eff}{\beta}+1$. Solving for $h$ yields that $h = \Theta\Paren{ \beta^{-1} \log(1/\beta) }$, which surprisingly corresponds to the entanglement threshold. Setting $\beta_\eff = 1.87$ results in a Hamiltonian with locality $6$ and degree $5t$ as desired.

%% file: related-work.tex
\section{Related work}

\paragraph{Quantum Markov chains.} On the algorithmic side, a substantial line of work studies Gibbs sampling via Lindbladian or more general quantum-Markov dynamics. Chen, Kastoryano, and Gilyén introduced the first efficiently implementable, detailed-balanced Lindbladian for arbitrary noncommuting Hamiltonians~\cite{ckg23}. Ding, Li, and Lin generalized this framework to KMS-detailed-balanced samplers with finitely many jump operators, potentially as few as one~\cite{ding2024kms}. For rigorous mixing time guarantees, Rouzé, França, and Alhambra first proved high-temperature polynomial-time thermalization for this sampler, and later improved this to logarithmic mixing time~\cite{rfa24,rfa24a}. A different route was introduced by Bakshi, Liu, Moitra, and Tang, who developed a quantum Dobrushin and path-coupling framework for rapid mixing~\cite{blmt2025dobrushin}. In parallel, the more physical Davies-generator has seen progress on spectral gaps and entropy decay, including Temme’s spectral-gap bounds~\cite{temme2013lower}, entropy decay for one-dimensional lattices~\cite{bardet2024entropy}, and a recent comparison between Davies gaps and the embedded classical chain~\cite{BGSS+25}. Recent extensions also treat structured fermionic settings, including weakly interacting fermionic systems~\cite{tong2025fast, smid2025weakly}. Another line of work focuses on complexity-theoretic advantage: Bergamaschi, Chen, and Liu showed that constant-temperature Gibbs sampling under continuous-time quantum Markov dynamics can already exhibit quantum computational advantage~\cite{bergamaschi2024quantum}. 

\paragraph{Classical methods.}
Crosson and Slezak study the special case of transverse field Ising models that are stoqaustic after one local basis change~\cite{crosson2025classical} (i.e.\ all of the off-diagonal matrix elements of the Hamiltonian are real and non-positive), and allow for a large external field. They show that in the high-temperature regime, a classical path-integral Monte-Carlo algorithm mixes rapidly. This allows them to both estimate the partition function and sample from the computational basis distribution.
Another avenue to obtain classical algorithms that sample from the Gibbs state is via Barvinok's zero-freeness method. 
The classical Barvinok's algorithm states that if an $n$-qubit Hamiltonian's
partition function $Z_n(\beta) = \tr(e^{-\beta H_n})$ is zero-free in 
a suitable complex neighborhood of the interval $[0, \beta]$, then one can approximate $Z_n(\beta)$ from low-order derivatives (equivalently, a truncated Taylor expansion of $\log Z_n$) and get a quasi-polynomial-time algorithm. 
However, this type of conclusion cannot hold uniformly for spin Hamiltonians if the single-site external field strength is allowed to grow with the system size. 
More precisely, in that setting there does not exist a field-independent threshold $\beta_* >0$ and a constant $b_0 > 0$, such that $Z_n(\beta) \neq 0$ for all sufficiently large $n$ and all $\beta \in \C$ with $0\leq \mathrm{Re}(\beta) \leq \beta_*$ and $\abs{\mathrm{Im}(\beta)}<b_0$.
This is already visible in the simplest example with no interactions: 
take $H_n = h_n \sum_{i\in [n]} \sigma_Z^{(i)}$ where the field strength $h_n$ grows with $n$. 
In this case, the partition function $Z_n(\beta) = (2\cosh(\beta h_n))^n$ has Fisher zeros (complex roots) at $\beta = \frac{1}{h_n} \ii \pi (\ell + \frac12)$ for $\ell \in \Z$. So the nearest zeros sit at distance $\pi/(2h_n)$ from the real axis. 
As $h_n \to \infty$ with $n$, the zero-free strip in the $\beta$-plane shrinks like $1/h_n$. A similar, yet not entirely rigorous argument was recently obtained by Zlokapa and Kiani \cite{ZK26} for SYK and related models but does not directly extend to our setting.

%% file: prelim.tex
\section{Preliminaries}

\paragraph{Notations}
We write $[n] = \{1, 2,\ldots, n\}$ and $\ii = \sqrt{-1}$ as the imaginary unit. 
For $i\in [n]$, we use $e_i\in \R^n$ to denote the standard basis vector where the $i$-th entry is a $1$ and zeros elsewhere. 
For $S \subseteq [n]$, we write $e_S = \sum_{i\in S}e_i$ and $E_S = e_S e_S^T$. 
For a vector $x$, we write $\norm{x}$ for the Euclidean norm and $\norm{x}_1$ for the $\ell_1$ norm. 
For vectors with real entries, we use $x \leq_v y$ to refer to entrywise comparison, i.e.\ $x_i \leq y_i$ for all entries $i$. 
For a matrix $A$, we write $\norm{A}$ for the operator norm and $\norm{A}_1$ for the Schatten-$1$ norm, and the $1\to 1$ norm $\norm{A}_{1\to 1} = \max_{x: \norm{x}_1=1} \norm{Ax}_1 = \max_{j} \sum_{i} \abs{A_{ij}}$. 

We refer to a function as a linear map, or a
map for short, if it is linear and takes Hermitian matrices to Hermitian matrices. 
We use $\calI$ denote the identity map (as opposed to $I$, which denotes the identity matrix).

\begin{definition}[Support of an operator]
    For an $n$-qubit operator $P$, its \emph{support}, $\supp(P) \subseteq [\qubits]$ is the subset of qubits that $P$ acts non-trivially on.
    That is, $\supp(P)$ is the minimal set of qubits such that $P$ can be written as $P = O_{\supp(P)} \otimes \id_{[n] \setminus \supp(P)}$ for some operator $O$.
\end{definition}

Next, recall the definition of the partial trace:

\begin{definition}[Partial trace]
\label{def:partial-trace}
Let $\mathcal H=(\mathbb C^2)^{\otimes n}$ and let $S\subseteq[n]$ be a subset of qubits, with complement $S^c=[n]\setminus S$.
We write $\Tr_{S^c}$ for the partial trace over the qubits in $S^c$, i.e.\ the unique linear map
$\Tr_{S^c}:\mathcal L(\mathcal H)\to \mathcal L\!\big((\mathbb C^2)^{\otimes |S|}\big)$
satisfying
\[
\Tr\!\big(M\,\Tr_{S^c}(X)\big) \;=\; \Tr\!\big((M\otimes I_{S^c})\,X\big)
\quad\text{for all }X\in\mathcal L(\mathcal H),\; M\in \mathcal L\!\big((\mathbb C^2)^{\otimes |S|}\big),
\]
where $I_{S^c}$ denotes the identity on the qubits in $S^c$.
\end{definition}

Next, recall the Pauli matrices, which form a basis for the real vector space of $2\times 2$ Hermitian matrices.
\begin{definition}[Pauli matrices] \label{def:paulis}
    The Pauli matrices are the following $2 \times 2$ Hermitian matrices.
    \begin{equation*}
    \sigma_X = \begin{pmatrix}
        0 & 1 \\
        1 & 0
    \end{pmatrix}, \qquad \sigma_Y = \begin{pmatrix}
        0 & -\ii \\
        \ii & 0
    \end{pmatrix}, \qquad \sigma_Z = \begin{pmatrix}
        1 & 0\\
        0& -1
    \end{pmatrix}.
    \end{equation*}
    For $j\in [n]$ and $P\in \{X, Y, Z\}$, we denote by $\sigma_P^{(j)}$ the $n$-qubit operator that acts as $\sigma_P$ on qubit $j$ and as the identity on all other qubits. 
\end{definition}

Next, we define the interaction hypergraph associated with a Hamiltonian, and the metric it induces.

\begin{definition}[Graphs and distances induced by a Hamiltonian]
    For an $\qubits$-qubit Hamiltonian $H = \sum_{a=1}^\terms H_a$, we define its underlying \emph{interaction hypergraph} $\graf$ to have vertices labeled by sites $[n]$ and hyperedges corresponding to $\supp(H_a)$ for every $a \in [m]$.

    This interaction graph induces a distance metric on sites, which we denote by $\dist$.
    For $i, j \in [\qubits]$,
    \begin{align*}
        \dist(i, j) = \min \braces{k : \text{some } \cluster{a} = (a_1,\dots,a_k) \in [\terms]^k \text{ is a path on } \graf \text{ connecting } i \text{ and } j}.
    \end{align*}
    We can similarly extend this metric to sets, where $\dist(S, T)$ is the length of the smallest path connecting $S$ and $T$.
    We will consider balls in this metric:
    \begin{align*}
        \ball(S, r) = \braces[\big]{i \in [\qubits] : \dist(i, S) \leq r}.
    \end{align*}
\end{definition}

\begin{definition}[Local Hamiltonians with arbitrary external field]
\label{def:local-hamiltonian}
Let $H = V + W$ be a Hamiltonian where $V= \sum_{i\in[n]} V_i$ are on-site potentials and $W = \sum_{a \in [m]} H_a$ are the bounded interaction terms. 
Without loss of generality, let us assume that $\abs{\supp(H_a)}\geq 2$. 
We define the \emph{local interaction strength} of $H$ as
\begin{equation*}
    \zeta \coloneqq \max_{i\in [n]} \sum_{a\in [m]: \ i\in \supp(H_a)} \abs{\supp(H_a)} \cdot \norm{H_a}.
\end{equation*}
In particular, if $W$ is a bounded $L$-local degree-$D$ Hamiltonian, then $\zeta\leq DL$. 
\end{definition}

We will frequently use the following interaction-truncated Hamiltonians which centers at site $i$ with raduis $r$ throughout the paper. 
\begin{definition}[Interaction-truncated Hamiltonians]
\label{def:interaction-truncated-Ham}
    Let $H = V + W$ be a local Hamiltonian with external field as defined in \Cref{def:local-hamiltonian}. 
    For any site $i\in [n]$ and any integer $r\geq 0$, the \emph{interaction-truncated Hamiltonian at site $i$ with radius $r$} is given by
    \begin{equation*}
        H_r\coloneqq V + \sum_{a\in [m]:\ \supp(H_a)\subseteq \ball(\{i\}, r)} H_a. 
    \end{equation*}
\end{definition}

%% file: LR.tex
\section{Field-independent Lieb-Robinson bound}

In this section, we prove a Lieb-Robinson bound for local Hamiltonians with an unbounded external field. Lieb-Robinson bounds are a basic tool for formalizing the following physics principle: under time evolution by a local Hamiltonian, any initial perturbation cannot instantly propagate across the system. Concretely, if we start with a local operator $O_X$ supported on $X$, and time evolve it with respect to a Hamiltonian $H$, then for short enough time, it can be well-approximated by the time evolution under the Hamiltonian supported near $X$. Further, it's influence on a distant region $Y$ decays exponentially in the distance between $X$ and $Y$. The constant governing the rate of decay is known as the Lieb-Robinson velocity. 

Existing off-the-shelf Lieb-Robinson bounds typically assume that every term in the Hamiltonian is uniformly bounded, so the resulting Lieb-Robinson velocity is governed by the largest local energy scale. 
In our setting this would incorrectly scale with the strength of the on-site field. 
A standard way around this, going back to Lieb-Robinson bounds for Hamiltonians with unbounded on-site terms, is to factor out the on-site dynamics and work with the interaction picture \cite{NRSS09}. We follow this strategy here, combined with a time-dependent Lieb-Robinson bound generalized from \cite{hhkl21}. 
Formally, we prove the following:

\begin{theorem}[Field-independent Lieb-Robinson bound]
\label{thm:lr-potential}
Let $H = V + W$ be an $n$-qubit local Hamiltonian with external field as defined in \Cref{def:local-hamiltonian} with local interaction strength $\zeta$. 
Given a region $\Omega \subseteq [n]$, define the associated interaction-truncated Hamiltonian
\begin{equation*}
    H_{\Omega} \coloneqq V + \sum_{a\in [m]: \ \supp(H_a)\subseteq \Omega} H_a. 
\end{equation*}
Let $X, \Omega, \Lambda$ be three regions where $X\subseteq \Omega \subseteq \Lambda \subseteq [n]$.
Suppose $\ell = \dist(X, \Lambda \setminus \Omega) \geq 2$. 
Let $O_X$ be any operator supported on $X$. 
Then for any evolution time $T\in \R$, 
\begin{align*}
    \norm*{ e^{\ii H_\Lambda T}\cdot O_X \cdot e^{-\ii H_\Lambda T} - e^{\ii H_\Omega T} \cdot O_X \cdot e^{-\ii H_\Omega T}} \leq \norm{O_X} \cdot \abs{X} \cdot \frac{(2 \zeta \abs{T})^\ell}{\ell !}. 
\end{align*}
\end{theorem}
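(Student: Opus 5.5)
We follow the interaction-picture strategy of \cite{NRSS09}: remove the on-site dynamics, then apply a Lieb--Robinson bound for time-dependent Hamiltonians whose terms have the same supports and norms as those of $W$. Set $W_\Omega=\sum_{a:\supp(H_a)\subseteq\Omega}H_a$ and $W_\Lambda$ analogously. Define
\[
U_\Lambda(t)=e^{\ii tV}e^{-\ii tH_\Lambda},\qquad U_\Omega(t)=e^{\ii tV}e^{-\ii tH_\Omega}.
\]
Differentiating, $U_\Lambda$ solves $\ii\,\partial_t U_\Lambda(t)=W_\Lambda(t)U_\Lambda(t)$ with $W_\Lambda(t)=e^{\ii tV}W_\Lambda e^{-\ii tV}=\sum_a e^{\ii tV}H_ae^{-\ii tV}$, and similarly for $\Omega$. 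Because $V=\sum_iV_i$ is a sum of commuting single-site terms, each conjugated term $H_a(t)=e^{\ii tV}H_ae^{-\ii tV}$ is supported on $\supp(H_a)$ and has norm $\norm{H_a}$. So the time-dependent family $\{H_a(t)\}$ has the same interaction hypergraph and the same local interaction strength $\zeta$ for every $t$.

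Next we rewrite the quantity to be bounded. Note $e^{\ii H_\Lambda T}O_Xe^{-\ii H_\Lambda T}=U_\Lambda(T)^\dagger\,\tilde O_X(T)\,U_\Lambda(T)$ with $\tilde O_X(T)=e^{\ii TV}O_Xe^{-\ii TV}$, and the same holds with $\Omega$ in place of $\Lambda$. The operator $\tilde O_X(T)$ is still supported on $X$ and has norm $\norm{O_X}$. Since the norm is taken after both conjugations, the outer $e^{\pm\ii TV}$ factors cancel consistently. The claim therefore reduces to bounding
\[
\norm{U_\Lambda(T)^\dagger \tilde O U_\Lambda(T)-U_\Omega(T)^\dagger \tilde O U_\Omega(T)}.
\]
This is a comparison between Heisenberg evolutions under the time-dependent Hamiltonians $W_\Lambda(\cdot)$ and $W_\Omega(\cdot)$, with the field gone.

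For this comparison we adapt the argument of \cite{hhkl21}. Interpolate via $\frac{d}{ds}\big[U_\Omega(s)U_\Lambda(s)^\dagger\cdots\big]$, i.e.\ a Duhamel formula. The difference is written as an integral over $s\in[0,T]$ of commutators $[H_a(s),\,\text{(evolved }\tilde O)]$, where $H_a$ ranges over the terms in $\Lambda$ but not in $\Omega$; these terms have supports meeting $\Lambda\setminus\Omega$. We then iterate Duhamel along the interaction graph: the commutator with the evolved operator is expanded recursively, and it vanishes unless a chain of overlapping terms $a_1,\dots,a_\ell$ connects $X$ to $\Lambda\setminus\Omega$. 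Each step of the chain contributes a factor $2\norm{H_a}$ and a time-ordered integral. The time-ordered integrals give $|T|^\ell/\ell!$. Summing over chains starting at a site of $X$, with at most $\zeta$ weight per step (the $\abs{\supp(H_a)}$ factor in $\zeta$ accounts for branching to the next site), gives $\abs{X}\zeta^\ell$. Chains of length greater than $\ell$ are absorbed by taking the series tail, or by using the cruder standard bound in which the first $\ell$ steps already yield the factor. Time dependence is harmless because every estimate uses only supports and norms uniformly in $s$.

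The main obstacle is the last step: deriving exactly $\norm{O_X}\abs{X}(2\zeta|T|)^\ell/\ell!$ for time-dependent generators, with no extra $e^{2\zeta|T|}$ factor and no factor coming from the length-$\ge\ell$ tail. I would first try the \cite{hhkl21} trick of bounding by the leading term of a path expansion truncated at depth $\ell$. At that depth, the remainder is controlled by unitarity, using the norm bound $2\norm{\cdot}$ on the commutator with the evolved operator rather than further expansion. The hypothesis $\ell\geq2$ ensures that a term touching $\Lambda\setminus\Omega$ cannot already overlap $X$, so the recursion starts cleanly.
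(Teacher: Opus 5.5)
Your proposal is correct and takes the same route as the paper. You pass to the interaction picture $U_\Lambda(t)=e^{\ii tV}e^{-\ii tH_\Lambda}$, $U_\Omega(t)=e^{\ii tV}e^{-\ii tH_\Omega}$, which gives an exact identity with $O_X(T)=e^{\ii TV}O_Xe^{-\ii TV}$, and then use a time-dependent HHKL argument. That argument is a Duhamel step over the boundary terms, followed by $\ell-1$ iterations of the commutator differential inequality. The lower-order terms vanish because $Z_k\cap Z_\ell=\varnothing$, and the depth-$\ell$ remainder is bounded by $2\norm{B}$.

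The one step you assert rather than prove is that HHKL's differential inequality survives time dependence. The paper proves this explicitly: it writes $\partial_t[A(t),B]=\ii[K(t),[A(t),B]]-\ii[A(t),[K(t),B]]$ with $K(t)=U(t,s)^\dagger I_X(t)U(t,s)$, then removes the norm-preserving first term by conjugating with an auxiliary unitary.
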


We note that we only require the following corollary that enables us to bound the quasi-locality of the jump operators and coherent term of our field-resonant Lindbladian:
\begin{corollary}[Shell decomposition for Heisenberg evolution with unbounded potentials]\label{lem:LR_radius_r}
    Let $H = V + W$ be a local Hamiltonian with external field as defined in \Cref{def:local-hamiltonian} with local interaction strength $\zeta$. Suppose $A$ is an operator supported at site $i \in [n]$. For any integer $r\geq 1$, let $H_r$ be the interaction-truncated Hamiltonian at site $i$ with radius $r$ as defined in \Cref{def:interaction-truncated-Ham}. 
    Write
    \begin{equation*}
        F_r^{A, t} \coloneqq e^{\ii H_r t} A e^{-\ii H_r t} - e^{\ii H_{r-1} t} A e^{-\ii H_{r-1} t}.
    \end{equation*}
    Then $F_r^{A,t}$ is supported on $\ball_r$ and $\norm{F_r^{A,t}} \leq \norm{A} \cdot \frac{(2 \zeta \abs{t})^r}{r!}$. 
\end{corollary}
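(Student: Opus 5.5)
The corollary follows from \Cref{thm:lr-potential} applied to the two nested regions $\Omega = \ball(\{i\}, r-1)$ and $\Lambda = \ball(\{i\}, r)$, with $X = \{i\}$.

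First we match the truncations. By \Cref{def:interaction-truncated-Ham}, $H_r$ keeps exactly the interaction terms with $\supp(H_a)\subseteq \ball(\{i\},r)$, together with the full field $V$. This is precisely $H_\Lambda$ for $\Lambda=\ball(\{i\},r)$, and likewise $H_{r-1}=H_\Omega$ for $\Omega=\ball(\{i\},r-1)$. The containments $X\subseteq\Omega\subseteq\Lambda$ hold, and $|X|=1$.

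Next we compute the distance. Every site of $\Lambda\setminus\Omega$ lies at distance exactly $r$ from $i$, so $\ell=\dist(X,\Lambda\setminus\Omega)=r$. If $\Lambda\setminus\Omega$ is empty, then no interaction term can have support contained in $\Lambda$ but not in $\Omega$. Indeed, any term whose support lies in $\Lambda$ and meets the shell would have a site at distance $r$, so $H_r=H_{r-1}$ and $F_r^{A,t}=0$. For $r\ge 2$, \Cref{thm:lr-potential} then gives directly
\[
\norm{F_r^{A,t}}\le \norm{A}\,\frac{(2\zeta|t|)^r}{r!}.
\]

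The main obstacle is the edge case $r=1$, because the theorem requires $\ell\ge 2$. We would handle it directly. Let $U_s$ be the evolution generated by $H_1$ and $U_s'$ the one generated by $H_0$. Then
\[
F_1^{A,t}=\int_0^t \frac{d}{ds}\Bigl(e^{\ii H_1 s}e^{\ii H_0(t-s)}Ae^{-\ii H_0(t-s)}e^{-\ii H_1 s}\Bigr)\,ds .
\]
The derivative equals $e^{\ii H_1 s}\,\ii[H_1-H_0,\,A(t-s)]\,e^{-\ii H_1 s}$, where $A(t-s)$ denotes the $H_0$-evolution of $A$. Since $H_0=V$ (no term has support $\{i\}$, because every support has size at least $2$), $A(t-s)$ stays supported on $\{i\}$. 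Only terms touching site $i$ therefore contribute to the commutator. Bounding the commutator by $2\sum_{a\ni i}\norm{H_a}\,\norm{A}\le 2\zeta\norm{A}$ yields $\norm{F_1^{A,t}}\le 2\zeta|t|\,\norm{A}$, which is the claimed bound at $r=1$.

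Finally, the support claim. Each of $e^{\ii H_r t}Ae^{-\ii H_r t}$ and $e^{\ii H_{r-1} t}Ae^{-\ii H_{r-1} t}$ is supported in $\ball_r$. The reason is that $H_r$ decomposes as a part supported in $\ball_r$ plus on-site terms $V_k$ outside the ball. Those outside terms commute with everything inside the ball and act trivially on $A$, so conjugation by them leaves the evolved operator unchanged.
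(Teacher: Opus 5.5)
Your proposal is correct and follows essentially the same route as the paper. For $r\ge 2$ you apply \Cref{thm:lr-potential} with $X=\{i\}$, $\Omega=\ball(\{i\},r-1)$ and $\Lambda=\ball(\{i\},r)$, so that $\ell=r$; for $r=1$ you run a direct Duhamel argument that uses $H_0=V$, so only the terms touching site $i$ contribute, giving at most $2\zeta\norm{A}$ per unit time. Your interpolation $s\mapsto e^{\ii H_1 s}e^{\ii H_0(t-s)}Ae^{-\ii H_0(t-s)}e^{-\ii H_1 s}$ is an equivalent form of the paper's interaction-picture unitary $e^{\ii H_0 s}e^{-\ii H_1 s}$, and your explicit support argument and empty-shell remark only spell out details the paper leaves implicit.
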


Although the proof of \Cref{thm:rapid-mixing-with-external-field} only concerns \Cref{lem:LR_radius_r}, we nevertheless include a proof for the general case of \Cref{thm:lr-potential}. 
A key ingredient of this proof is to generalize the Lieb-Robinson bound by Haah, Hastings, Kothari, and Low~\cite[Lemma 5]{hhkl21} to the case of time-dependent local Hamiltonians.

\begin{restatable}[Lieb-Robinson bound for time-dependent local Hamiltonians]{lemma}{lrhhkltimedependent}\label{lem:LR_HHKL_time_dep}
    Given sites $\Lambda\subseteq [n]$, 
    let $H(t) = \sum_{Z\subseteq \Lambda} h_Z(t)$ be a time-dependent local Hamiltonian where $h_Z(t)$ is supported on $Z$ for all $t$. 
    Suppose that the local interaction strength
    \begin{equation*}
        \zeta \coloneqq \sup_t \max_{i\in \Lambda} \sum_{Z\ni i} \abs{Z} \cdot \norm{h_Z(t)}
    \end{equation*}
    is finite. 
    Let $\Omega \subseteq \Lambda$ and $O_X$ be any operator supported on $X\subseteq \Lambda$. 
    Suppose $\ell = \dist(X, \Lambda \setminus \Omega) \geq 2$. 
    Restricting $H(t)$ to $\Omega$ gives $H_{\Omega}(t) = \sum_{Z \subseteq \Omega} h_Z(t)$. 
    Let $U(t)$ and $U_{\Omega}(t)$ be the unitary evolutions of $H(t)$ and $H_{\Omega}(t)$ from time $0$ to time $t$. 
    Then for any $T\in \R$, 
    \begin{equation*}
        \norm{U_{\Omega}(T)^\dagger \cdot  O_X \cdot U_{\Omega}(T) - U(T)^\dagger \cdot O_X \cdot  U(T)}\leq \norm{O_X} \cdot \abs{X} \cdot \frac{(2\zeta \abs{T})^\ell}{\ell !}. 
    \end{equation*}
\end{restatable}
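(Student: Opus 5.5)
We will follow the standard Duhamel–iteration proof of \cite[Lemma 5]{hhkl21}, checking that no step uses time-independence. Throughout we work in the Heisenberg picture. Write $U(t)$ for the solution of $\ii\,\partial_t U(t) = H(t)U(t)$ with $U(0)=I$, and similarly $U_\Omega(t)$ for $H_\Omega(t)$.

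\emph{Step 1: Duhamel reduction.} Consider $\Delta(t) = U_\Omega(t)^\dagger U(t)$. Differentiating gives
\begin{equation*}
\partial_t\bigl(U(t)^\dagger U_\Omega(t)\, O\, U_\Omega(t)^\dagger U(t)\bigr) = \ii\, U(t)^\dagger \bigl[H(t)-H_\Omega(t),\, U_\Omega(t) O U_\Omega(t)^\dagger\bigr] U(t).
\end{equation*}
To keep the time orderings consistent, one should actually use the Heisenberg maps $\tau_{s,T}(O)$ that propagate from time $s$ to $T$. These satisfy the cocycle property $\tau_{0,T} = \tau_{0,s}\circ\tau_{s,T}$, which holds for time-dependent generators.

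Integrating the derivative over $[0,T]$ and using unitary invariance of the norm gives
\begin{equation*}
\norm{U_{\Omega}(T)^\dagger O_X U_{\Omega}(T) - U(T)^\dagger O_X U(T)} \le \int_0^{|T|} \sum_{Z\not\subseteq\Omega,\, Z\subseteq\Lambda} \norm{[h_Z(s), \tau^{\Omega}_{s}(O_X)]}\,\diff s,
\end{equation*}
where $\tau^\Omega_s$ is the appropriate partial Heisenberg evolution under $H_\Omega$. Only terms $Z$ touching $\Lambda\setminus\Omega$ contribute.

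\emph{Step 2: commutator bound by iteration.} Define
\begin{equation*}
C_B(t) = \sup_{\norm{O_X}=1} \norm{[B, \tau^\Omega_t(O_X)]}
\end{equation*}
for a local operator $B$. Apply the same Duhamel trick again, this time splitting $H_\Omega(s)$ into terms that do and do not intersect $\supp(B)$. This yields the integral inequality
\begin{equation*}
C_{h_Z}(t) \le C_{h_Z}(0) + 2\norm{h_Z}\sum_{Z'\cap Z\neq\emptyset}\int_0^t C_{h_{Z'}}(s)\,\diff s,
\end{equation*}
with the supremum over times of $\norm{h_{Z'}(s)}$ used throughout. The key observation is that $\zeta$ is defined as a supremum over $t$, so all bounds are uniform in time.

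Iterating this inequality $\ell$ times produces a sum over chains $Z_1,\dots,Z_\ell$ of overlapping supports running from $\Lambda\setminus\Omega$ back to $X$. The initial term vanishes unless the chain actually reaches $X$, and the nested time integrals contribute a factor $t^k/k!$. Chains shorter than $\ell$ cannot connect $X$ to $\Lambda\setminus\Omega$, because $\dist(X,\Lambda\setminus\Omega)=\ell$.

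\emph{Step 3: counting.} Follow the HHKL bookkeeping, counting chains starting from sites of $X$ rather than from $\Lambda\setminus\Omega$. Each step chooses a term $Z\ni i$ and then a next site in $Z$, contributing a weight of at most $\sum_{Z\ni i}|Z|\norm{h_Z}\le\zeta$. Together with the factor of $2$ from each commutator, this gives $|X|\,(2\zeta|T|)^\ell/\ell!$ for chains of length exactly $\ell$.

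\emph{Main obstacle.} The main obstacle is handling the tail of longer chains, of length $k>\ell$, so that the clean bound $(2\zeta|T|)^\ell/\ell!$ results without an extra $e^{2\zeta|T|}$ factor. The plan is to follow HHKL exactly: truncate the iteration at depth $\ell$ and bound the remainder commutator trivially by $2\norm{O_X}$. The assumption $\ell\ge2$ is used precisely to absorb constants in this boundary step, as in \cite{hhkl21}. Care with time orderings when the generator depends on time is the only genuinely new ingredient, and it is handled by the cocycle property noted in Step 1.
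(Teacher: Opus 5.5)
Your Step 1 is fine and is the paper's Duhamel reduction. The gap is in how you close the iteration, i.e.\ Step 3 together with your ``main obstacle'' paragraph.

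\textbf{The truncation loses the anchor at $X$.} Your Step 2 recursion is anchored at the boundary term: it bounds $C_{h_Z}$ by $C_{h_{Z'}}$ with $Z'\cap Z\neq\varnothing$, moving inward from $\Lambda\setminus\Omega$. The only thing tying such a chain to $X$ is the terminal term $C_{h_{Z_k}}(0)=\sup\|[h_{Z_k},O_X]\|$, which vanishes unless $Z_k\cap X\neq\varnothing$. If you truncate after $\ell-1$ (or $\ell$) steps and bound the remainder trivially by $2\|O_X\|$, as you propose, that constraint is discarded and the innermost element of the chain is free. Summing over boundary terms then gives a bound of order $\|O_X\|\,\frac{(2|T|)^\ell}{\ell!}\,\zeta^{\ell-1}\sum_{Z}|Z|\sup_t\|h_Z(t)\|$, with the sum over all terms crossing the boundary of $\Omega$. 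This scales with the size of the boundary, not with $|X|$. So ``counting chains starting from sites of $X$'' is not available in your scheme.

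Two smaller points. The hypothesis $\ell\geq 2$ is not there to absorb constants; it guarantees that no boundary term meets $X$, so the zeroth-order terms vanish. Also, Step 2 asserts the time-dependent differential inequality without proof. That inequality is exactly the new ingredient the paper supplies in \Cref{lem:generalize_eq_42}: after splitting the generator, the rotation term $\ii[K(t),f(t)]$ must be removed by a time-ordered interaction-picture unitary $V(t)$, and the cocycle property alone does not give this.

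\textbf{How the paper does it, and the same issue there.} The paper runs the recursion the other way. It fixes $B=h_{Z_\ell}(t)$ and recurses on the support of the evolved operator, starting from $X$. The first sum $\sum_{Z_1\sim X}$ then produces the factor $|X|$, and the terms $C_B(Z_k,t,t)$ for $k\leq \ell-2$ vanish by the distance hypothesis.

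However, the paper then restricts the remainder to chains with $Z_{\ell-1}\sim Z_\ell$. The trivial bound $C_B(Z_{\ell-1},u_{\ell-1},t)\leq 2\|B\|$ does not justify this: for $u_{\ell-1}>t$ that commutator need not vanish when $Z_{\ell-1}\cap Z_\ell=\varnothing$. So the obstacle you identify is real in both directions. A one-sided truncation with a trivial remainder anchors only the end where the recursion starts.

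What either recursion gives rigorously comes from iterating to all orders and keeping the terminal terms, which anchor both ends. The remainder after $N$ steps is at most $2\|B\|\,|X|\,(2\zeta|T|)^N/N!\to 0$. This yields $\|O_X\|\,|X|\sum_{m\geq \ell}(2\zeta|T|)^m/m!\leq \|O_X\|\,|X|\,e^{2\zeta|T|}(2\zeta|T|)^\ell/\ell!$. That is the stated bound up to the factor $e^{2\zeta|T|}$.
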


The proof of \cite[Lemma 5]{hhkl21} (in their Appendix C.4) can be adapted line by line to the case of time-dependent local Hamiltonians except for their Eq. (42), or equivalently Eq. (24). 
For completeness, we include a self-contained proof of \Cref{lem:LR_HHKL_time_dep} in \Cref{sec:LR_time_dependent_proof} and also prove the version of Eq. (42) for time-dependent Hamiltonians there. 

We are now ready to complete the proofs of \Cref{thm:lr-potential,lem:LR_radius_r}. 
\begin{proof}[Proof of \Cref{thm:lr-potential}]
    We first remove the effect of the unbounded onsite potentials $V$ by considering the unitary evolution of a time-dependent Hamiltonian instead. 
    Notice that $U(t) \coloneqq e^{\ii t V} e^{-\ii t H}$ is the unitary time evolution of the time-dependent local Hamiltonian 
    \begin{equation*}
        W(t) \coloneqq e^{\ii t V}\cdot W\cdot e^{-\ii t V} = \sum_{a=1}^m \underbrace{e^{\ii t V} \cdot H_a\cdot e^{-\ii t V}}_{\coloneqq H_a(t)}.
    \end{equation*} 
    This is because $U(0) = I$ and $\ii \partial_t U(t) = W(t) U(t)$, which can be verified by direct computation: 
    \begin{align*}
        \partial_t U(t) &= \ii V e^{\ii t V} e^{-\ii t H} + e^{\ii t V} (-\ii H) e^{-\ii t H} \\
        &= -\ii e^{\ii t V} W e^{-\ii t H} \\
        &= -\ii W(t) \cdot U(t). 
    \end{align*}
    Since $e^{\ii tV}$ is unitary, we have $\norm{H_a(t)} = \norm{H_a}$ for all $t$. Since $V$ is a sum of onsite terms, we have $\supp(H_a(t)) = \supp(H_a)$ for all $t$. 

    Similarly, let us define $U_{\Omega}(t) \coloneqq e^{\ii t V} e^{-\ii t H_{\Omega}}$ and $U_{\Lambda}(t) \coloneqq e^{\ii t V} e^{-\ii t H_{\Lambda}}$, which are, respectively, the unitary evolutions of the time-dependent local Hamiltonians 
    \begin{equation*}
        W_\Omega(t) \coloneqq \sum_{a\in [m]: \ \supp(H_a)\subseteq \Omega} H_a(t), \quad \text{and} \quad W_\Lambda(t) \coloneqq \sum_{a\in [m]: \ \supp(H_a)\subseteq \Lambda} H_a(t). 
    \end{equation*}
    Let $O_X(t) \coloneqq e^{\ii t V}\cdot O_X \cdot e^{-\ii t V}$. Then $\norm{O_X(t)} = \norm{O_X}$, $\supp(O_X(t)) = X$, and 
\begin{equation*}
    e^{\ii H_\Lambda T}\cdot O_X \cdot e^{-\ii H_\Lambda T} - e^{\ii H_\Omega T} \cdot O_X \cdot e^{-\ii H_\Omega T} = U_{\Lambda}(T)^\dagger \cdot O_X(T) \cdot U_{\Lambda}(T) - U_{\Omega}(T)^\dagger \cdot O_X(T) \cdot U_{\Omega}(T),
\end{equation*}
which is supported on $\Lambda$, and whose operator norm is upper bounded by $\norm{O_X} \cdot \abs{X} \cdot \frac{(2\zeta \abs{T})^\ell}{\ell !}$ using the Lieb-Robinson bound for time-dependent Hamiltonians in  \Cref{lem:LR_HHKL_time_dep}.
\end{proof}

\begin{proof}[Proof of \Cref{lem:LR_radius_r}]
    We first consider the case of $r=1$. Write $B \coloneqq e^{\ii H_0 t} A e^{-\ii H_0 t} = e^{\ii V t} A e^{-\ii V t}$.
    Since $V$ is a sum of on-site terms, we have $\norm{B} = \norm{A}$ and $\supp(B) = \{i\}$. 
    Also, the operator $e^{\ii H_1 t} A e^{-\ii H_1 t}$ is supported on $\ball(\{i\}, 1)$. Hence $F_1^{A,t}$ is supported on $\ball(\{i\}, 1)$. 

    Assume first that $t\geq 0$. As in the interaction-picture reduction in the proof of \Cref{thm:lr-potential}, we have that $U_1(s) \coloneqq e^{\ii H_0 s} e^{-\ii H_1 s}$ for $s\in [0,t]$ is the unitary evolution of the time-dependent Hamiltonian
    \begin{equation*}
        W_1(s) \coloneqq e^{\ii H_0 s}(H_1-H_0)e^{-\ii H_0 s} \,,
    \end{equation*}
    and
    \begin{equation*}
        F_1^{A,t} = U_1(t)^\dagger B U_1(t) - B \,.
    \end{equation*}
    Differentiating gives
    \begin{equation*}
        \partial_s\left(U_1(s)^\dagger B U_1(s)\right) = \ii U_1(s)^\dagger [W_1(s), B] U_1(s),
    \end{equation*}
    and therefore
    \begin{equation*}
        \norm{F_1^{A,t}} \leq \int_0^t \norm{[W_1(s), B]}\diff s.
    \end{equation*}
    Now $W_1(s)$ is a sum of the conjugates of those interaction terms $H_a$ with $\supp(H_a)\subseteq \ball(\{i\}, 1)$.
    Only the terms with $i\in \supp(H_a)$ fail to commute with $B$, and conjugation by $H_0$ preserves both support and operator norm. Hence
    \begin{equation*}
        \norm{[W_1(s), B]} \leq 2\norm{B}\sum_{a:\, i\in \supp(H_a)} \norm{H_a} \leq 2\zeta \norm{A}.
    \end{equation*}
    This proves $\norm{F_1^{A,t}} \leq 2\zeta t \norm{A}$ when $t\geq 0$. The case $t<0$ follows identically with $t$ replaced by $\abs{t}$.
    This is exactly the desired bound for $r=1$, so it remains to consider $r\geq 2$.

    Let us apply \Cref{thm:lr-potential} with
    \begin{itemize}
    \item $\Lambda = \ball(\{i\}, r)$, $\Omega = \ball(\{i\}, r-1)$, $X = \{i \}$ for the regions; 
    \item Correspondingly, $H_\Lambda = H_r$ and $H_\Omega = H_{r-1}$ for the Hamiltonians; 
    \item $O_X = A$. 
\end{itemize}
We conclude the proof with $\dist(X, \Lambda \setminus \Omega) = \dist \left(\{i\}, \ball(\{i\}, r) \setminus \ball(\{i\}, r-1)\right) = r$. 
\end{proof}

%% file: lindblad.tex
\section{Lindbladian evolution under external field}

In this section, we revisit the Lindbladian construction of Chen, Kastoryano and Gilyén~\cite{ckg23} for preparing the Gibbs state $\sigma \propto e^{-\beta H}$. Although~\cite{ckg23} states all of its formal results for the special case of $\sigma = \eta = \Delta = 1/\beta$ (also used in~\cite{rfa24a, TZ25}), the underlying construction in fact admits a family of parameters $(\sigma, \eta, \Delta)$ satisfying $\beta(\eta^2 + \sigma^2) = 2\Delta$. We state this general family explicitly, and then further generalize it by allowing the parameters to be \emph{site-dependent}. This extra flexibility lets us tailor the construction to Hamiltonians that include an arbitrary on-site external field, yielding what we call the \emph{field-resonant Lindbladian}. We show that the field-resonant Lindbladian continues to satisfy detailed balance and that its building blocks remain quasi-local even when the on-site field is exponentially large.

\subsection{Field-resonant Lindbladians}
\label{subsec:field-resonant-lindbladian}
We begin with the general construction, and then specialize to the field-resonant choice of parameters.

\begin{definition}[General Lindbladians from filter and transition-weight functions; adapted from~\cite{ckg23}]
\label{def:general_lindblad}
    Let $H$ be an $n$-qubit Hamiltonian and $\beta > 0$ be an inverse temperature. 
    For each site $j\in [n]$, given $\sigma_j > 0$, 
    define the \emph{filter function} at site $j$ as
    \begin{equation*}
        f_j(t) \coloneqq (2/\pi)^{1/4}\cdot \sigma_j^{1/2} \cdot \exp(-\sigma_j^2 t^2).
    \end{equation*}
    Let $P$ be a Pauli operator at site $j$. The associated \emph{jump operator} is given by
    \begin{equation*}
        A^P(\omega) \coloneqq \frac{1}{\sqrt{2\pi}} \int_{-\infty}^{\infty} e^{\ii Ht} {P} e^{-\ii H t} e^{-\ii \omega t}  f_j(t) \diff t. 
    \end{equation*}
    Given $\Delta_j, \eta_j>0$ such that $\beta(\eta_j^2 + \sigma_j^2) = 2\Delta_j$, 
    define the \emph{transition-weight function} at site $j$ as
    \begin{equation*}
        \gamma_j(\omega) \coloneqq \exp\left( - \frac{(\omega + \Delta_j)^2}{ 2\eta_j^2} \right). 
    \end{equation*}
    The associated \emph{dissipative evolution} is given by
    \begin{equation*}
        \calL_{\mathrm{diss}}^P(\rho) \coloneqq \int_{-\infty}^{\infty} \gamma_j(\omega) \cdot \Paren{  A^{P}(\omega) \rho A^{P}(\omega)^\dagger - \frac12\braces{A^{P}(\omega)^\dagger A^{P}(\omega), \rho} } \diff \omega . 
    \end{equation*}
    The \emph{coherent term} associated to $P$ is given by a Hermitian matrix
    \begin{align*}
        C^{P} \coloneqq \int_{-\infty}^\infty b_{1, j}(t)\cdot e^{-\ii H t} \left(\int_{-\infty}^\infty b_{2, j}(t') \cdot e^{\ii H t'} {P}^\dagger e^{-2\ii H t'}  {P} e^{\ii H t'}  \diff t' \right) e^{\ii H t} \diff t, 
    \end{align*}
    where
    \begin{align*}
        b_{1,j}(t) &= \frac{\sigma_j}{\pi\beta} e^{\beta^2\sigma_j^2 / 8} \Paren{\sech\left(\frac{2\pi t}{\beta}\right) *_t \sin(-\beta \sigma_j^2 t) e^{-2\sigma_j^2 t^2}},\\
        b_{2,j}(t) &= 2 \eta_j \cdot  \exp\Paren{-\frac{4 \Delta_j t^2}{\beta} - 2\ii \Delta_j t}. 
    \end{align*}
    Here $*_t$ denotes the convolution of two functions over variable $t$. 
    The Lindbladian at site $j$ is given by
    \begin{align*}
        \calL_j(\rho) = \sum_{P \in \jumps{j}} \calL^P(\rho), \quad \text{where } \calL^P(\rho) \coloneqq - \ii [C^P, \rho] + \calL_{\mathrm{diss}}^P(\rho) . 
    \end{align*}
    The full Lindbladian determined by $\{(\sigma_j, \eta_j, \Delta_j)\}_{j\in [n]}$ is given by
    $\calL(\rho) = \sum_{j=1}^n \calL_j(\rho)$. 
\end{definition}

The family in \Cref{def:general_lindblad} differs from the Lindbladian of~\cite{ckg23} in two ways. First, we allow the parameters $(\sigma_j, \eta_j, \Delta_j)$ to vary across sites, whereas~\cite{ckg23} uses a single global choice. Second, while~\cite{ckg23} introduces the filter and transition-weight functions in this general parameterized form, all of its formal statements and proofs are carried out for the special case of $\sigma = \eta = \Delta = 1/\beta$, and the coherent term for the general case is only written down implicitly inside the proof of the special case. 
So for completeness we give a self-contained proof for the detailed balance of the general site-dependent family in \Cref{sec:kms-detailed-balance}; the argument closely follows the one in~\cite{ckg23}. We record the resulting property formally:

\begin{restatable}[Detailed balance]{proposition}{kms}\label{prop:kms}
    For any Hamiltonian $H$ and any inverse temperature $\beta >0$, the general Lindbladian with parameters $\{(\sigma_j, \eta_j, \Delta_j)\}_{j\in [n]}$ where $\beta(\eta_j^2 + \sigma_j^2) = 2\Delta_j$, as defined in \Cref{def:general_lindblad}, satisfies KMS detailed balance, and the Gibbs state $e^{-\beta H}/\tr(e^{-\beta H})$ is a stationary state.
\end{restatable}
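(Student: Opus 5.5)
Since KMS detailed balance is a linear condition and $\calL = \sum_j \sum_{P\in\jumps{j}} \calL^P$, it suffices to prove that each single-jump generator $\calL^P$, with the site-$j$ parameters $(\sigma_j,\eta_j,\Delta_j)$, is KMS-detailed-balanced with respect to $\sigma_\beta = e^{-\beta H}/\tr(e^{-\beta H})$. Concretely, I will show that the Heisenberg-picture generator $\calL^{P\dagger}$ is self-adjoint for the KMS inner product $\langle X,Y\rangle_{\sigma_\beta} = \tr(X^\dagger \sigma_\beta^{1/2} Y \sigma_\beta^{1/2})$. Stationarity then follows from the standard argument: $\calL^{P\dagger}(I)=0$ because $\calL^P$ is trace preserving, so $\langle X, \calL^{P\dagger}(I)\rangle_{\sigma_\beta}=0$ for all $X$, and self-adjointness turns this into $\tr(\calL^P(\sigma_\beta)\,\cdot)=0$. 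Hence $\calL^P(\sigma_\beta)=0$. The site-dependence causes no extra work once the single-$P$ statement holds for an arbitrary parameter triple obeying $\beta(\eta^2+\sigma^2)=2\Delta$. So I drop the index $j$ and follow the argument of \cite{ckg23}, keeping the parameters general throughout.

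I will work in the energy eigenbasis and write $P=\sum_\nu P_\nu$ over Bohr frequencies. Then $A^P(\omega) = \sum_\nu \widehat f(\omega-\nu) P_\nu$, where $\widehat f(x)\propto e^{-x^2/(4\sigma^2)}$. The dissipative part, minus its anticommutator, then has transition part
\[
\sum_{\nu_1,\nu_2}\alpha_{\nu_1,\nu_2}\, P_{\nu_1}\rho P_{\nu_2}^\dagger,
\qquad
\alpha_{\nu_1,\nu_2}=\int\gamma(\omega)\widehat f(\omega-\nu_1)\widehat f(\omega-\nu_2)\,\diff\omega .
\]
Completing the square in $\omega$ gives
\[
\alpha_{\nu_1,\nu_2}\propto \exp\!\Big(-\tfrac{(\nu_1-\nu_2)^2}{8\sigma^2}\Big)\exp\!\Big(-\tfrac{(\bar\nu+\Delta)^2}{2(\eta^2+\sigma^2)}\Big),\qquad \bar\nu=\tfrac{\nu_1+\nu_2}{2}.
\]
The KMS condition for the transition part asks that $\alpha_{\nu_1,\nu_2}$ equal $\alpha_{-\nu_2,-\nu_1}\,e^{-\beta\bar\nu}$. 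The two Gaussian exponents in $\bar\nu$ differ by $\tfrac{2\Delta\bar\nu}{\eta^2+\sigma^2}$, and this equals $\beta\bar\nu$ exactly when $\beta(\eta^2+\sigma^2)=2\Delta$. So this step is just a check that the constraint is the right one.

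The main obstacle is the coherent term. The transition part alone is not KMS-symmetric, because the anticommutator $\frac12\{A^\dagger A,\rho\}$ fails the symmetry. One needs a Hermitian $C^P$ such that $-\ii[C^P,\cdot]-\frac12\{A^\dagger A,\cdot\}$ has the form $\sigma_\beta^{1/4}(\,\cdot\,)\sigma_\beta^{-1/4}$-symmetric. Following \cite{ckg23}, I will write the target in the eigenbasis as $C^P_{\nu}$ with
\[
(C^P)_{\nu} = \tfrac{\ii}{2}\tanh\!\big(-\tfrac{\beta\nu}{4}\big)\,(A^\dagger A)_\nu,
\]
component-wise in Bohr frequency $\nu$ of $\sum_{\nu_1-\nu_2=\nu}\alpha_{\nu_1,\nu_2}P_{\nu_2}^\dagger P_{\nu_1}$. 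This is the general formula from \cite{ckg23}, and it does not depend on the parameter choice. I will then verify that the stated integral expression reproduces these coefficients. Its inner integral with $b_2$ produces $\sum \widehat b_2$-weighted products $P^\dagger_{\nu_2}P_{\nu_1}$, and $\widehat b_2$ is a Gaussian in $\bar\nu$ centered at $-\Delta$ with variance tied to $\Delta/\beta=(\eta^2+\sigma^2)/2$, which reproduces the $\bar\nu$-factor of $\alpha$. The outer conjugation with $b_1$ multiplies the $\nu$-component by $\widehat b_1(\nu)$. The Fourier transform of the $\sech(2\pi t/\beta)$ convolved with the Gaussian-times-sine must therefore equal $\tanh(-\beta\nu/4)$ times the remaining factor $e^{-\nu^2/(8\sigma^2)}$ divided by the $\nu$-dependence introduced by $b_2$. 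Using $\widehat{\sech}\propto\sech$ and $\tanh$-type identities, as in \cite{ckg23}, this step carries most of the bookkeeping. There I will track where $\sigma$, $\eta$ and $\Delta$ enter separately rather than all being set to $1/\beta$, and check the prefactor $\frac{\sigma}{\pi\beta}e^{\beta^2\sigma^2/8}$. Hermiticity of $C^P$ follows from the oddness and reality properties of $\widehat b_1$ together with $P$ being Hermitian.
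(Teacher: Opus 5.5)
Your route is the same as the paper's. You reduce to a single $\calL^P$, expand $P=\sum_\nu P_\nu$ in Bohr frequencies, and compute $\alpha_{\nu_1,\nu_2}$ by a Gaussian integral. You use $\beta(\eta^2+\sigma^2)=2\Delta$ to get $\alpha_{\nu_1,\nu_2}e^{\beta(\nu_1+\nu_2)/2}=\alpha_{-\nu_2,-\nu_1}$. You then match the integral form of $C^P$ to the required coherent term through $\widehat b_1(\nu_1-\nu_2)\,\widehat b_2(\nu_1+\nu_2)$. The reduction, the stationarity argument and the transition-part check are all fine. The gap is the target you set for the coherent term. Write $N=\sum_{\nu_1,\nu_2}\alpha_{\nu_1,\nu_2}P_{\nu_2}^\dagger P_{\nu_1}$, with $N_\nu$ its frequency-$\nu$ part and $\nu=\nu_1-\nu_2$. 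Your target $(C^P)_\nu=\tfrac{\ii}{2}\tanh(-\beta\nu/4)\,N_\nu$ has the wrong sign, so the verification step you describe would fail.

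Here is the check in your own conventions. You have $e^{\ii Ht}P_\nu e^{-\ii Ht}=e^{\ii\nu t}P_\nu$, hence $\sigma_\beta^{-1/2}X_\nu\sigma_\beta^{1/2}=e^{\beta\nu/2}X_\nu$. Write the non-jump part of $\calL^P$ as $\rho\mapsto K\rho+\rho K^\dagger$ with $K=-\ii C^P-\tfrac12 N$. Its KMS condition is $\sigma_\beta^{-1/2}K\sigma_\beta^{1/2}=K^\dagger$. Componentwise this reads $e^{\beta\nu/2}(-\ii C_\nu-\tfrac12N_\nu)=\ii C_\nu-\tfrac12 N_\nu$, i.e.\ $C_\nu=\tfrac{\ii}{2}\tanh(\beta\nu/4)\,N_\nu$. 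This is the form in \Cref{thm:ckg_detailed_balance}. The frequently quoted form $\frac{1}{2\ii}\tanh(-\beta\nu/4)$ is the same expression; replacing $\frac{1}{2\ii}$ by $\frac{\ii}{2}$ flips the sign, which is likely where yours came from.

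The actual kernel has $\widehat b_1(\omega)=\frac{\ii}{2\sqrt{2\pi}}e^{-\omega^2/(8\sigma^2)}\tanh(\beta\omega/4)$ (\Cref{lem:b_1_fourier}). So the stated $C^P$ is the correct coherent term and is exactly $-1$ times yours. Requiring $\widehat b_1$ to produce $\tanh(-\beta\nu/4)$ would lead you to the false conclusion that the stated $C^P$ is not the KMS coherent term. Conversely, a generator built with your $C^P$ would not be detailed balanced.

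Once you derive the sign from the componentwise equation rather than quoting it, the rest goes through as in the paper. There is also no $\nu$-dependence from $b_2$ to divide out, since $\widehat b_2$ depends only on $\nu_1+\nu_2$. The identity $\widehat b_1(\nu_1-\nu_2)\widehat b_2(\nu_1+\nu_2)=\tfrac{\ii}{2}\tanh(\tfrac{\beta}{4}(\nu_1-\nu_2))\,\alpha_{\nu_1,\nu_2}$ holds exactly. It uses $\beta/(2\Delta)=1/(\eta^2+\sigma^2)$ both in the exponent and in the prefactor $\eta/\sqrt{\eta^2+\sigma^2}$.
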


With the general construction in hand, we now introduce the particular choice of parameters that we use to handle on-site potentials.

\begin{definition}[Field-resonant Lindbladians]\label{def:field_lindblad}
    Let $\beta > 0$ be an inverse temperature and $H = \sum_{j=1}^n V_j + W$ be a local Hamiltonian with unbounded external field as defined in \Cref{def:local-hamiltonian}. 
    For each $j\in [n]$, 
    let $\Delta_j$ be the maximum between $1/\beta$ and $\lambda_{\max}(V_j) - \lambda_{\min}(V_j)$, i.e. 
    the spectral gap of $V_j$. 
    Set $\eta_j = \sigma_j = \sqrt{\Delta_j / \beta}$. 
    The Lindbladian with respect to such $\{\Delta_j, \eta_j, \sigma_j\}_{j\in [n]}$ as defined in \Cref{def:general_lindblad} is called a \emph{field-resonant Lindbladian} of $H$ at inverse temperature $\beta$. 
\end{definition}

\subsection{Moments of the kernel functions}\label{subsec:kernels_moments}
In this section, we will bound the moments of the kernel functions $b_{1,j}(t)$ and $b_{2,j}(t)$. 
For simplicity, we will drop the site index $j$ and write them as
\begin{align*}
    b_1(t) &= \frac{\sigma}{\pi\beta} e^{\beta^2\sigma^2 / 8} \Paren{\sech\left(\frac{2\pi t}{\beta}\right) *_t \sin(-\beta \sigma^2 t) e^{-2\sigma^2 t^2}} \,, \\
    b_{2}(t) &= 2\eta \cdot  \exp\Paren{-4 \Delta t^2 /\beta - 2\ii \Delta t} \,.
\end{align*}
These moments are crucial for proving quasi-locality of the jump operators and the coherent terms. 
We first bound the moments of the second kernel $b_2(t)$, which simply follow from the Gaussian moments. 
\begin{lemma}[Moments of the second kernel]\label{lem:b_2_moments}
For any integer $r\geq0$, 
\begin{equation*}
    \int_{-\infty}^{\infty} \abs{b_2(t)} \cdot \abs{t}^r \; \diff t = 2^{-r}\cdot \eta \cdot \Paren{\frac{\beta}{\Delta}}^{\frac{r+1}{2}} \cdot \Gamma\Paren{\frac{r+1}{2}}  \,.
\end{equation*}
\end{lemma}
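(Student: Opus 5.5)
The computation is a direct Gaussian moment evaluation. First, the oscillatory phase $e^{-2\ii\Delta t}$ has modulus one. Hence $\abs{b_2(t)} = 2\eta\exp(-4\Delta t^2/\beta)$, and the integral reduces to
\begin{equation*}
    \int_{-\infty}^{\infty} \abs{b_2(t)}\,\abs{t}^r\,\diff t = 4\eta\int_0^\infty t^r e^{-4\Delta t^2/\beta}\,\diff t,
\end{equation*}
using that the integrand is even in $t$.

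Next, I would substitute $u = 4\Delta t^2/\beta$, so that $t = \tfrac12(\beta/\Delta)^{1/2}u^{1/2}$ and $\diff t = \tfrac14(\beta/\Delta)^{1/2}u^{-1/2}\,\diff u$. This turns the half-line integral into
\begin{equation*}
    \int_0^\infty t^r e^{-4\Delta t^2/\beta}\,\diff t = 2^{-r}\cdot\tfrac14\cdot\Paren{\frac{\beta}{\Delta}}^{\frac{r+1}{2}}\int_0^\infty u^{\frac{r-1}{2}}e^{-u}\,\diff u = 2^{-(r+2)}\Paren{\frac{\beta}{\Delta}}^{\frac{r+1}{2}}\Gamma\Paren{\frac{r+1}{2}}.
\end{equation*}

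Multiplying by $4\eta$ gives exactly $2^{-r}\eta(\beta/\Delta)^{(r+1)/2}\Gamma((r+1)/2)$, as claimed.

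There is no real obstacle. The only care needed is bookkeeping of the constants in the change of variables, namely the factor $2$ from evenness and the $2^{-r}\cdot\tfrac14$ from the substitution. As a sanity check, the case $r=0$ gives $\eta\sqrt{\pi\beta/\Delta}$, matching the standard Gaussian integral $2\eta\sqrt{\pi\beta/(4\Delta)}$.
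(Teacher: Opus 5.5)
Your proof is correct and follows essentially the same route as the paper: drop the unit-modulus phase, use evenness to get $4\eta\int_0^\infty t^r e^{-4\Delta t^2/\beta}\,\diff t$, and evaluate the Gaussian moment via the Gamma function. The only difference is that you carry out the substitution $u=4\Delta t^2/\beta$ explicitly, whereas the paper simply quotes the standard formula $\int_0^\infty t^r e^{-at^2}\,\diff t=\tfrac12 a^{-(r+1)/2}\Gamma\bigl(\tfrac{r+1}{2}\bigr)$ with $a=4\Delta/\beta$.
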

\begin{proof}
Observe $\abs{e^{x+\ii y}} = e^x$, and thus the expression reduces to evaluating a Gaussian moment:
\begin{equation*}
\begin{split}
    \int_{-\infty}^{\infty} \abs{b_2(t)} \cdot \abs{t}^{r} \; \diff t & = 2\eta \int_{-\infty}^{\infty} \abs{t}^{r} \cdot \exp(-4 \Delta t^2/\beta)  \diff t \\
    & = 4\eta \int_{0}^{\infty} t^{r} \cdot \exp(-4 \Delta t^2/\beta)  \diff t \\
    & = 4\eta \left(\frac12 \Paren{\frac{4\Delta}{\beta}}^{-\frac{r+1}{2}}\cdot \Gamma\Paren{\frac{r+1}{2}}\right)\\
    &=  2^{-r}\cdot \eta \cdot \Paren{\frac{\beta}{\Delta}}^{\frac{r+1}{2}} \cdot \Gamma\Paren{\frac{r+1}{2}} \,. \qedhere
\end{split}
\end{equation*}
\end{proof}

The moments for the first kernel $b_1(t)$ is more complicated. We set up the following lemmas to make the bound easier. 
\begin{lemma}[Fourier transform of the first kernel]\label{lem:b_1_fourier}
Let $\widehat{b}_1(\omega) = \int_{-\infty}^{\infty} b_1(t) e^{-\ii \omega t}\diff t$ denote the Fourier transform of $b_1(t)$. 
Then
\begin{equation*}
    \widehat{b}_1(\omega) = \frac{\ii}{2\sqrt{2\pi}} e^{-\omega^2/ (8\sigma^2)} \tanh\Paren{\frac{\beta \omega}{4}} \,.
\end{equation*}
\end{lemma}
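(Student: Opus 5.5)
The plan is to apply the convolution theorem. Under the convention $\widehat{g}(\omega)=\int g(t)e^{-\ii\omega t}\diff t$, the Fourier transform of a convolution is the product of the individual transforms, with no extra constants. Writing
\begin{equation*}
b_1 = c\cdot(s * g), \qquad c=\tfrac{\sigma}{\pi\beta}e^{\beta^2\sigma^2/8}, \qquad s(t)=\sech(2\pi t/\beta), \qquad g(t)=\sin(-\beta\sigma^2 t)e^{-2\sigma^2t^2},
\end{equation*}
we get $\widehat{b}_1=c\,\widehat{s}\,\widehat{g}$. Both $s$ and $g$ are in $L^1$, so the convolution theorem applies. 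It then remains to compute $\widehat{s}$ and $\widehat{g}$ in closed form and collect the constants.

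First, for $\widehat{s}$ we use the classical self-duality of the hyperbolic secant, $\int_{-\infty}^{\infty}\sech(at)e^{-\ii\omega t}\diff t=\frac{\pi}{a}\sech\!\big(\frac{\pi\omega}{2a}\big)$. This can be proved by a rectangular contour argument or quoted from a standard table. With $a=2\pi/\beta$ it gives
\begin{equation*}
\widehat{s}(\omega)=\frac{\beta}{2}\sech\Paren{\frac{\beta\omega}{4}}.
\end{equation*}

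Second, for $\widehat{g}$ we write $\sin(-bt)=-\frac{1}{2\ii}(e^{\ii bt}-e^{-\ii bt})$ with $b=\beta\sigma^2$. We use the Gaussian transform $\int e^{-2\sigma^2t^2}e^{-\ii\omega t}\diff t=\frac{\sqrt{\pi}}{\sqrt2\,\sigma}e^{-\omega^2/(8\sigma^2)}$ together with the modulation shift $\omega\mapsto\omega\mp b$. This gives
\begin{equation*}
\widehat{g}(\omega)=-\frac{1}{2\ii}\cdot\frac{\sqrt{\pi}}{\sqrt2\,\sigma}\Big(e^{-(\omega-b)^2/(8\sigma^2)}-e^{-(\omega+b)^2/(8\sigma^2)}\Big).
\end{equation*}
Expanding the squares, the common factor is $e^{-\omega^2/(8\sigma^2)}e^{-b^2/(8\sigma^2)}=e^{-\omega^2/(8\sigma^2)}e^{-\beta^2\sigma^2/8}$. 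The remaining difference is $e^{\omega b/(4\sigma^2)}-e^{-\omega b/(4\sigma^2)}=2\sinh(\beta\omega/4)$. Hence
\begin{equation*}
\widehat{g}(\omega)=\ii\sqrt{\pi/2}\,\sigma^{-1}e^{-\beta^2\sigma^2/8}e^{-\omega^2/(8\sigma^2)}\sinh(\beta\omega/4).
\end{equation*}

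Finally, we multiply the three factors. The factor $e^{\beta^2\sigma^2/8}$ in $c$ cancels the $e^{-\beta^2\sigma^2/8}$ in $\widehat{g}$. The product $\sech(\beta\omega/4)\sinh(\beta\omega/4)$ becomes $\tanh(\beta\omega/4)$. The remaining constants combine as
\begin{equation*}
\frac{\sigma}{\pi\beta}\cdot\frac{\beta}{2}\cdot\frac{\sqrt{\pi}}{\sqrt2\,\sigma}=\frac{1}{2\sqrt{2\pi}}.
\end{equation*}
This gives the claimed formula
\begin{equation*}
\widehat{b}_1(\omega)=\frac{\ii}{2\sqrt{2\pi}}e^{-\omega^2/(8\sigma^2)}\tanh(\beta\omega/4).
\end{equation*}

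The only non-routine ingredient is the sech transform; if it is not quoted, I would prove it with the contour shift $t\mapsto t+\ii\pi/a$, which picks up the single pole at $t=\ii\pi/(2a)$. Everything else is careful bookkeeping of signs and of the $2\pi$ normalization.
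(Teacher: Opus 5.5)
Your proposal is correct and follows the paper's route. Like the paper, you apply the convolution theorem, rescale the sech self-duality to get $\frac{\beta}{2}\sech(\beta\omega/4)$, and transform the sine-modulated Gaussian. The factor $e^{\beta^2\sigma^2/8}$ then cancels and $\sech\cdot\sinh$ becomes $\tanh$, and your signs and constants (including $\frac{\sigma}{\pi\beta}\cdot\frac{\beta}{2}\cdot\frac{\sqrt{\pi}}{\sqrt{2}\,\sigma}=\frac{1}{2\sqrt{2\pi}}$) all check out.
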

\begin{proof}
    Recall, the Fourier transform of a function $f$ is defined as follows:
\begin{equation*}
    f(t)= \frac{1}{2\pi} \int_{-\infty}^{\infty} \widehat{f}(\omega) e^{\ii \omega t} \diff \omega \, \quad \textrm{where } \; \; \widehat{f}(\omega) = \int_{-\infty}^{\infty} f(t) e^{-\ii \omega t} \diff t\, .
\end{equation*}
For the convolution of two function $g_1(t), g_2(t)$, we have that $\widehat{g_1 * g_2} = \widehat{g}_1 \cdot \widehat{g}_2$. Setting $g_1(t) = \text{sech}(2\pi t/\beta)$ and $g_2(t) = - \sin(\beta\sigma^2 t) \; e^{-2\sigma^2 t^2}$, it suffices to compute the individual Fourier coefficients of these two functions. Using the substitution $u= 2\pi t/ \beta$, 
\begin{equation*}
\begin{split}
    \widehat{g}_1(\omega) & = \int_{-\infty}^{\infty} \textrm{sech}\Paren{ \frac{2\pi t}{ \beta}} \;  e^{-\ii \omega t} \diff t
     = \Paren{\frac{\beta}{2\pi}} \int_{-\infty}^{\infty} \textrm{sech}(u) \; e^{-\ii \omega \beta u / (2\pi)} \; \diff u \\
     & = \Paren{\frac{\beta}{2\pi}} \cdot \pi \cdot \textrm{sech}\Paren{ \frac{\pi}{2} \cdot \frac{\beta \omega}{2\pi} } = \Paren{\frac{\beta}{2}} \cdot \textrm{sech}\Paren{\frac{\beta \omega}{4} } \,,
\end{split}
\end{equation*}
where the third equality uses that $\int_{-\infty}^{\infty} \textrm{sech}(u) e^{-\ii \zeta u} \diff u = \pi \textrm{sech}\Paren{ \frac{\pi \zeta}{2} }$. Next, observe 
\begin{equation*}
    g_2(t) = - \Paren{ \frac{e^{\ii \beta \sigma^2 t } - e^{-\ii \beta \sigma^2 t}}{2\ii} } \Paren{ e^{-2\sigma^2 t^2} } = \Paren{ \frac{1}{2\ii} } \Paren{ e^{-\ii \beta \sigma^2 t} \cdot e^{-2\sigma^2 t^2}  -  e^{\ii \beta \sigma^2 t} \cdot e^{-2\sigma^2 t^2} }\,.
\end{equation*}
Therefore, 
\begin{equation}
\label{eqn:fourier-coeff-g2}
    \begin{split}
        \widehat{g}_2(\omega) & = \Paren{\frac{1}{2\ii}}\Bigg( \underbrace{ \int_{-\infty}^{\infty } e^{-\ii(\omega + \beta \sigma^2) t} \cdot e^{-2\sigma^2 t^2} \diff t }_{\calI(\omega + \beta \sigma^2)}   - \underbrace{ \int_{-\infty}^{\infty }  e^{\ii (\omega-\beta \sigma^2) t} \cdot e^{-2\sigma^2 t^2} \diff t}_{\calI(\omega - \beta\sigma^2)} \Bigg)
    \end{split}
\end{equation}
We can evaluate $\calI(\omega + \beta\sigma^2)$ as follows:
\begin{equation*}
\begin{split}
     \calI(\omega + \beta\sigma^2) & = \int_{-\infty}^{\infty }  \exp\Paren{ -2\sigma^2 t^2 - \ii (\omega+\beta\sigma^2) t} \diff t \\
     & = \int_{-\infty}^{\infty } \exp\Paren{ -2\sigma^2 \Paren{ t + \frac{\ii (\omega + \beta \sigma^2)}{4\sigma^2}}^2 - \frac{(\omega + \beta \sigma^2)^2}{8\sigma^2} } \\
     & = \exp\Paren{- \frac{(\omega + \beta \sigma^2)^2}{8\sigma^2}} \int_{-\infty}^{\infty} \exp\Paren{ -2\sigma^2 \Paren{ t + \frac{\ii (\omega + \beta \sigma^2)}{4\sigma^2}}^2} \\
     & =  \exp\Paren{- \frac{(\omega + \beta \sigma^2)^2}{8\sigma^2}} \cdot \sqrt{\frac{\pi}{2\sigma^2}} \,.
\end{split}
\end{equation*}
Performing a similar caluclation and substitution back into \Cref{eqn:fourier-coeff-g2}, we have
\begin{equation*}
    \widehat{g}_2(\omega) = \Paren{\frac{1}{2\ii}} \Paren{\sqrt{\frac{\pi}{2\sigma^2}} } \Paren{ \exp\Paren{- \frac{(\omega + \beta \sigma^2)^2}{8\sigma^2}} - \exp\Paren{- \frac{(\omega - \beta \sigma^2)^2}{8\sigma^2}} }
\end{equation*}
Next observe $\frac{\beta^2 \sigma^2 }{8} - \frac{(\omega + \beta \sigma^2)^2}{8\sigma^2} = -\frac{\omega^2}{8} -\frac{\beta \omega}{4}$ and $\frac{\beta^2 \sigma^2 }{8} - \frac{(\omega - \beta \sigma^2)^2}{8\sigma^2} = -\frac{\omega^2}{8} +\frac{\beta \omega}{4}$, and thus
\begin{equation*}
\begin{split}
    \exp\Paren{\frac{\beta^2 \sigma^2}{8}} \widehat{g}_2(\omega) & =  \Paren{\frac{1}{2\ii}} \Paren{\sqrt{\frac{\pi}{2\sigma^2}} }  \Paren{ \exp\Paren{-\frac{\omega^2}{8\sigma^2}} } \Paren{ \exp\Paren{-\frac{\beta\omega}{4} } -  \exp\Paren{\frac{\beta\omega}{4} } } \\
    & = \ii \cdot \Paren{\sqrt{\frac{\pi}{2\sigma^2}} } \Paren{ \exp\Paren{-\frac{\omega^2}{8\sigma^2}} } \sinh\Paren{ \frac{\beta \omega}{4}} \, .
\end{split}
\end{equation*}
Therefore, we have 
\begin{equation*}
\begin{split}
    \widehat{b}_1(\omega) & =  \Paren{\frac{\sigma \; \ii }{2\pi \beta}} \Paren{\frac{\beta}{2}} \cdot \textrm{sech}\Paren{\frac{\beta \omega}{4} }   \Paren{\sqrt{\frac{\pi}{2\sigma^2}} }   \Paren{ \exp\Paren{-\frac{\omega^2}{8\sigma^2}} } \sinh\Paren{ \frac{\beta \omega}{4}}   \\
    & = \Paren{ \frac{\ii}{2\sqrt{2\pi}} } \Paren{ \exp\Paren{-\frac{\omega^2}{8\sigma^2}} } \; \tanh\Paren{ \frac{\beta \omega }{4} } \,.
\end{split}
\end{equation*}
This completes the proof.
\end{proof}

Next, we provide an equivalent formulation of the first kernel that is more amenable to analytic moment bounds. 

\begin{lemma}[Equivalent expression of the first kernel]
    \begin{equation*}
        b_1(t) = \frac{\sigma}{\pi \beta }\int_{0}^{\infty} \frac{e^{-2\sigma^2(t+u)^2} - e^{-2\sigma^2(t-u)^2}}{\sinh(2\pi u/\beta)} \diff u. 
    \end{equation*}
\end{lemma}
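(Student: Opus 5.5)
The plan is to prove the identity on the Fourier side. I will show that the right-hand side, call it $\tilde b_1(t)$, has the same Fourier transform as $b_1$, which \Cref{lem:b_1_fourier} gives as $\widehat{b}_1(\omega) = \frac{\ii}{2\sqrt{2\pi}} e^{-\omega^2/(8\sigma^2)}\tanh(\beta\omega/4)$. I will then conclude by Fourier inversion.

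First I check that $\tilde b_1$ is well defined and integrable. Write $G(x) = e^{-2\sigma^2 x^2}$. The numerator $G(t+u)-G(t-u)$ vanishes linearly as $u\to 0$, since it equals $-8\sigma^2 t u\, G(t) + O(u^2)$, while $\sinh(2\pi u/\beta) \sim 2\pi u/\beta$. So the integrand is bounded near $u=0$. For large $u$ the factor $1/\sinh$ decays exponentially. For fixed $u$ the numerator is Gaussian in $t$, and near $u=0$ it is at most $C\,u\sup_s |G'(s)|$ with $\int_{\R} \sup_{|s-t|\le u}|G'(s)|\diff t$ finite. Hence $(t,u)\mapsto$ integrand is absolutely integrable on $\R\times(0,\infty)$. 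By Fubini, $\tilde b_1\in L^1$ and its Fourier transform may be computed by integrating in $t$ first.

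Next, for fixed $u$, the $t$-Fourier transform of $G(t+u)-G(t-u)$ is $\widehat G(\omega)(e^{\ii\omega u}-e^{-\ii\omega u}) = 2\ii\sin(\omega u)\,\widehat G(\omega)$. Here $\widehat G(\omega)=\sqrt{\pi/(2\sigma^2)}\,e^{-\omega^2/(8\sigma^2)}$, which is the same Gaussian integral already computed in the proof of \Cref{lem:b_1_fourier}. This gives
\begin{equation*}
\widehat{\tilde b}_1(\omega) = \frac{\sigma}{\pi\beta}\cdot 2\ii\,\widehat G(\omega)\int_0^\infty \frac{\sin(\omega u)}{\sinh(2\pi u/\beta)}\diff u .
\end{equation*}
The remaining $u$-integral is absolutely convergent, and I will evaluate it with the classical identity $\int_0^\infty \frac{\sin(\omega u)}{\sinh(a u)}\diff u = \frac{\pi}{2a}\tanh\!\left(\frac{\pi\omega}{2a}\right)$. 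With $a = 2\pi/\beta$ this equals $\frac{\beta}{4}\tanh(\beta\omega/4)$. Multiplying the constants gives
\[
\frac{\sigma}{\pi\beta}\cdot 2\ii\cdot\sqrt{\frac{\pi}{2}}\frac{1}{\sigma}\cdot\frac{\beta}{4} = \frac{\ii}{2\sqrt{2\pi}},
\]
which is exactly $\widehat b_1(\omega)$.

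Since $b_1$ and $\tilde b_1$ are both continuous $L^1$ functions with the same Fourier transform, and that transform is itself integrable, Fourier inversion gives $b_1=\tilde b_1$ pointwise. Continuity of $b_1$ holds because it is a convolution of $L^1$ functions with a bounded continuous factor.

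The main obstacle is the $\sin/\sinh$ integral identity. If I cannot simply cite it, I will prove it by expanding $1/\sinh(au) = 2\sum_{k\ge0} e^{-(2k+1)au}$ for $u>0$ and integrating termwise, using $\int_0^\infty \sin(\omega u)e^{-cu}\diff u = \omega/(\omega^2+c^2)$. This gives the series $\sum_k \frac{2\omega}{\omega^2+(2k+1)^2a^2}$. I then recognize it as the Mittag-Leffler partial-fraction expansion of $\frac{\pi}{2a}\tanh(\pi\omega/(2a))$. Termwise integration is justified by monotone or dominated convergence, since the partial sums are dominated by $|\sin(\omega u)|/\sinh(au)$, which is integrable.
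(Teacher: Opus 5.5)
Your proposal is correct and follows essentially the same route as the paper. Both arguments rest on the $\sin/\sinh$ integral representation of $\tanh$, the Gaussian Fourier transform, a Fubini swap, and Fourier inversion; the paper starts from $\widehat{b}_1$ and swaps the $u$ and $\omega$ integrals, while you compute the Fourier transform of the right-hand side, match it to $\widehat{b}_1$, and invoke uniqueness (your series/Mittag-Leffler derivation of the $\tanh$ identity is a self-contained substitute for the paper's appeal to contour integration).
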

\begin{proof}
    Let us use the following integral equality for any $a>0$ and $\omega \in \R$, which can be derived by countour integration:
    \begin{equation}\label{eq:tanh_integral_equality}
        \int_{0}^{\infty} \frac{\sin(\omega u)}{\sinh(a u)} \diff u = \frac{\pi}{2a} \tanh\Paren{\frac{\pi \omega}{2a}}. 
    \end{equation}
    Then 
    \begin{align*}
        b_1(t) &= \frac{1}{2\pi} \int_{-\infty}^{\infty} \widehat{b}_1(\omega) e^{\ii \omega t} \diff \omega \\
        &= \frac{1}{2\pi} \int_{-\infty}^{\infty} \frac{\ii}{2\sqrt{2\pi}} e^{-\omega^2/ (8\sigma^2)} \tanh\Paren{\frac{\beta \omega}{4}}\cdot e^{\ii \omega t} \diff \omega \tag{\Cref{lem:b_1_fourier}}\\
        &= \frac{1}{2\pi} \int_{-\infty}^{\infty} \frac{\ii}{2\sqrt{2\pi}} e^{-\omega^2/ (8\sigma^2)} \Paren{\frac{4}{\beta} \int_{0}^{\infty} \frac{\sin(\omega u)}{\sinh(2\pi u/\beta)} \diff u}\cdot e^{\ii \omega t} \diff \omega \tag{\Cref{eq:tanh_integral_equality} with $a=2\pi / \beta$}\\
        &= \frac{2\ii}{\beta (2\pi)^{3/2}} \int_{0}^{\infty} \frac{1}{\sinh(2\pi u/\beta)} \Paren{\int_{-\infty}^{\infty} e^{-\omega^2/ (8\sigma^2)} \sin(\omega u) e^{\ii \omega t} \diff \omega}\cdot  \diff u \,,
    \end{align*}
    where in the last equality, we switch the order of the two integrals over $\diff u$ and  $\diff \omega$. 
    This requires Fubini's theorem, which follows from the absolute integrability of \begin{equation*}
        g(u, \omega) \coloneqq e^{-\omega^2/(8\sigma^2)} \frac{\sin(\omega u)}{\sinh(2\pi u/\beta)} e^{\ii \omega t} \,,
    \end{equation*}
    i.e.
    \begin{align*}
        &\int_{-\infty}^{\infty} \int_{0}^{\infty} \abs*{g(u, \omega)} \diff u \diff \omega\\
        &= \int_{0}^{\infty}\int_{-\infty}^{\infty}  \abs*{g(u, \omega)}  \diff \omega \diff u \tag{Tonelli's theorem}\\
        &= \int_{0}^{\infty} \frac{1}{\sinh(2\pi u/\beta)}\int_{-\infty}^{\infty}  e^{-\omega^2/(8\sigma^2)} \abs*{\sin(\omega u) }  \diff \omega \diff u \\
        &\leq \int_{0}^{1} \frac{1}{\sinh(2\pi u/\beta)} \int_{-\infty}^{\infty}  e^{-\omega^2/(8\sigma^2)} \cdot \abs{\omega} u  \diff \omega \diff u  + \int_{1}^{\infty} \frac{1}{\sinh(2\pi u/\beta)} \int_{-\infty}^{\infty}  e^{-\omega^2/(8\sigma^2)}  \diff \omega \diff u \tag{Split the integral at $u=1$ and use $\abs{\sin(\omega u)} \leq \min \{\abs{\omega} u, 1\}$}\\
        &\leq 8\sigma^2 \int_{0}^{1} \frac{u}{\sinh(2\pi u/\beta)} \diff u + \sqrt{8\pi} \sigma\int_{1}^{\infty} \frac{1}{\sinh(2\pi u/\beta)} \diff u \tag{Gaussian moments}\\
        &\leq 8\sigma^2 \int_{0}^{1} \frac{\beta}{2\pi } \diff u + \sqrt{8\pi} \sigma\int_{1}^{\infty} 2e^{-2\pi u / \beta} \diff u \tag{$1/\sinh(x)\leq 1/x$ and $1/\sinh(x)\leq 2e^{-x}$ for $x>0$} \\
        &= 8\sigma^2 \frac{\beta}{2\pi } + \sqrt{8\pi} \sigma \frac{\beta}{\pi } e^{-2\pi/\beta} < \infty. 
    \end{align*}
    Now for the inner integral,
    \begin{align*}
        \int_{-\infty}^{\infty} e^{-\omega^2/ (8\sigma^2)} \sin(\omega u) e^{\ii \omega t} \diff \omega 
        &= \frac{1}{2\ii}\int_{-\infty}^{\infty} e^{-\omega^2/ (8\sigma^2)} (e^{\ii\omega u} - e^{-\ii \omega u}) e^{\ii \omega t} \diff \omega \tag{$\sin(\omega u) = \frac{1}{2\ii} (e^{\ii\omega u} - e^{-\ii \omega u})$} \\
        &= \frac{1}{2\ii}\sqrt{2\pi}\cdot 2\sigma \Paren{e^{-2\sigma^2 (t+u)^2} - e^{-2\sigma^2 (t-u)^2}} \,,
    \end{align*}
    where the last equality is because for any $k>0$, the Fourier transform of $e^{-k\omega^2}$ is given by $\int_{-\infty}^{\infty} e^{-k\omega^2} e^{-\ii\omega t} \diff \omega = \sqrt{\pi / k} \cdot e^{-t^2/(4k)}$ and we set $k = 1/(8\sigma^2)$. 
    Plugging this back to $b_1(t)$ completes the proof. 
\end{proof}

We are now ready to bound the moments of $b_1(t)$, starting with the $0$-th moment: 
\begin{lemma}[$L^1$-norm of the first kernel]\label{lem:b_1_zero_moment}
\begin{equation*}
    \int_{-\infty}^{\infty} \abs{b_1(t)} \diff t \leq \frac{1}{\sqrt{2} \pi^{3/2}} \Paren{\frac{2}{\sqrt{\pi}} + \log \Paren{1+\frac{\sqrt{2}}{\pi}\cdot \sigma \beta}}\, .
\end{equation*}
\end{lemma}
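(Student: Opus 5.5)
We start from the equivalent expression of $b_1$ just established,
\begin{equation*}
    b_1(t) = \frac{\sigma}{\pi \beta }\int_{0}^{\infty} \frac{e^{-2\sigma^2(t+u)^2} - e^{-2\sigma^2(t-u)^2}}{\sinh(2\pi u/\beta)} \diff u,
\end{equation*}
and move the absolute value inside the $u$-integral. Then Tonelli lets us swap the $t$- and $u$-integrals, so that
\begin{equation*}
    \int_{-\infty}^{\infty}\abs{b_1(t)}\diff t \le \frac{\sigma}{\pi\beta}\int_0^\infty \frac{J(u)}{\sinh(2\pi u/\beta)}\diff u,\qquad J(u)\coloneqq \int_{-\infty}^\infty \abs{e^{-2\sigma^2(t+u)^2}-e^{-2\sigma^2(t-u)^2}}\diff t .
\end{equation*}
The quantity $J(u)$ is the $L^1$ distance between two translates of the same Gaussian, separated by $2u$. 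We will bound it in two ways.

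\emph{Crude bound.} Since each Gaussian has mass $\sqrt{\pi/(2\sigma^2)}$, we get $J(u)\le 2\sqrt{\pi/2}/\sigma = \sqrt{2\pi}/\sigma$.

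\emph{Small-shift bound.} Write the difference as $\int_{-u}^{u} \partial_s e^{-2\sigma^2(t+s)^2}\diff s$. Since $\int\abs{\partial_t e^{-2\sigma^2 t^2}}\diff t = 2$, this gives $J(u)\le 4u$. Alternatively, one can compute $J(u)$ exactly as $2\sqrt{2\pi}\sigma^{-1}\,\mathrm{erf}(\sqrt2\sigma u)$, which interpolates between the two bounds.

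Next we split the $u$-integral at a threshold $u_0$.

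\emph{Region $u\le u_0$.} Using $J\le 4u$ and $\sinh x\ge x$, the integrand is at most $4u\cdot \beta/(2\pi u)=2\beta/\pi$. This contributes at most $\frac{\sigma}{\pi\beta}\cdot \frac{2\beta}{\pi}u_0 = \frac{2\sigma u_0}{\pi^2}$.

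\emph{Region $u\ge u_0$.} Using $J\le \sqrt{2\pi}/\sigma$, the contribution is $\frac{\sqrt{2\pi}}{\pi\beta}\int_{u_0}^\infty \frac{\diff u}{\sinh(2\pi u/\beta)}$. The integral is explicit: $\int_{x_0}^\infty \frac{\diff x}{\sinh x} = \log\coth(x_0/2)=\log\frac{1+e^{-x_0}}{1-e^{-x_0}}$. We will bound this by something like $\log(1+c/x_0)$ with $x_0 = 2\pi u_0/\beta$, for instance via $\coth(y)\le 1+1/y$.

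Substituting $u=\beta x/(2\pi)$ gives the prefactor $\frac{\sqrt{2\pi}}{\pi\beta}\cdot\frac{\beta}{2\pi}= \frac{1}{\sqrt2\,\pi^{3/2}}$, which matches the prefactor in the claimed bound. This confirms that the logarithmic term comes from the tail region. We then choose $u_0$ on the order of $1/\sigma$, so that the small-$u$ piece is a constant times $1/\pi^{3/2}$ (producing the $2/\sqrt\pi$ term). With this choice $x_0 \asymp 1/(\sigma\beta)$, and $\log(1+2/x_0)$ becomes $\log(1+\frac{\sqrt2}{\pi}\sigma\beta)$ after matching constants. Concretely, we aim for $u_0 = 1/(\sqrt{2}\,\sigma)$. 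We then check that $\frac{2\sigma u_0}{\pi^2} = \frac{\sqrt2}{\pi^2}$ is at most $\frac{1}{\sqrt2\pi^{3/2}}\cdot\frac{2}{\sqrt\pi}=\frac{\sqrt2}{\pi^2}$; it is exactly equal. The tail term becomes $\log\coth(\pi u_0/\beta)\le \log(1+\beta/(\pi u_0)) = \log(1+\frac{\sqrt2}{\pi}\sigma\beta)$.

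The main obstacle is this constant bookkeeping, in particular finding an elementary inequality for $\log\coth(y)$ that reproduces exactly $\log(1+1/y)$. We plan to use $\coth y = 1+\frac{2}{e^{2y}-1}\le 1+\frac1y$, which holds because $e^{2y}-1\ge 2y$. If that fails to close, we will fall back on the exact erf expression for $J$ and optimize $u_0$.
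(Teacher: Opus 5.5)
Your argument is correct and is essentially the paper's proof. Your bounds $J(u)\le 4u$ and $J(u)\le\sqrt{2\pi}/\sigma$ are exactly the paper's $\mathrm{erf}(z)\le 2z/\sqrt{\pi}$ and $\mathrm{erf}(z)\le 1$, applied to $J(u)=\sqrt{2\pi}\,\sigma^{-1}\mathrm{erf}(\sqrt{2}\sigma u)$; your cutoff $u_0=1/(\sqrt{2}\sigma)$ is the paper's split at $x=1/a$, and your $\coth$ estimate is the same one the paper uses. The only slip is in your aside: the exact formula for $J(u)$ has no extra factor of $2$, as your own limits $u\to 0$ and $u\to\infty$ show, but you never use it, so the proof is unaffected.
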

\begin{proof}
    Note that $b_1(t) = -b_1(-t)$. 
    Since $u\geq 0$, when $t>0$, we always have $(t+u)^2 \geq (t-u)^2$ and thus $e^{-2\sigma^2(t+u)^2} - e^{-2\sigma^2(t-u)^2} \leq 0$. So
    \begin{align*}
        \int_{-\infty}^{\infty} \abs{b_1(t)} \diff t 
        &= 2 \int_{0}^{\infty} \abs{b_1(t)} \diff t \\
        &= \frac{2\sigma}{\pi \beta }\iint_{0}^{\infty} \frac{ e^{-2\sigma^2(t-u)^2}- e^{-2\sigma^2(t+u)^2}}{\sinh(2\pi u/\beta)}  \diff u \diff t \\
        &= \frac{2\sigma}{\pi \beta }\int_{0}^{\infty} \frac{ \int_{0}^{\infty}  \Paren{e^{-2\sigma^2(t-u)^2}- e^{-2\sigma^2(t+u)^2}} \diff t}{\sinh(2\pi u/\beta)} \diff u \tag{Tonelli's theorem}. 
    \end{align*}
    By the definition of the Gauss error function $\mathrm{erf}(z) = \frac{2}{\sqrt{\pi}} \int_{0}^z e^{-t^2} \diff t$, we have
    \begin{align*}
        \int_{0}^{\infty} e^{-2\sigma^2(t-u)^2}- e^{-2\sigma^2(t+u)^2} \diff t &= 2\int_{0}^{u} e^{-2\sigma^2 t^2} \diff t \\
        &= \frac{\sqrt{2}}{ \sigma} \int_{0}^{\sqrt{2}\sigma u} e^{-s^2} \diff s \tag{Set $s = \sqrt{2}\sigma t$}\\
        &= \frac{\sqrt{\pi}}{\sqrt{2} \cdot \sigma}\mathrm{erf}(\sqrt{2}\sigma u). 
    \end{align*}
    So
    \begin{align*}
        \int_{-\infty}^{\infty} \abs{b_1(t)} \diff t 
        &= \frac{\sqrt{2}}{\sqrt{\pi}\cdot \beta }\int_{0}^{\infty} \frac{ \mathrm{erf}(\sqrt{2}\sigma u)}{\sinh(2\pi u/\beta)} \diff u \\
        &= \frac{1}{\sqrt{2\pi}\cdot \pi }\int_{0}^{\infty} \frac{ \mathrm{erf}(ax)}{\sinh(x)} \diff x \tag{Set $x = 2\pi u /\beta$ and $a=\frac{\sigma \beta}{\sqrt{2} \cdot \pi}$}. 
    \end{align*}
    Since for any $z\geq 0$, we have
    \begin{align*}
        \mathrm{erf}(z)= \frac{2}{\sqrt{\pi}} \int_{0}^z e^{-t^2} \diff t \leq \frac{2}{\sqrt{\pi}} \int_{0}^z 1 \diff t \leq \frac{2z}{\sqrt{\pi}}
    \end{align*}
    and $\mathrm{erf}(z)\leq \frac{2}{\sqrt{\pi}} \int_{0}^{\infty} e^{-t^2} \diff t = 1$. We can split the integral at $x=1/a$:
    \begin{align*}
        \int_{0}^{\infty} \frac{ \mathrm{erf}(ax)}{\sinh(x)} \diff x &= \int_{0}^{1/a} \frac{ \mathrm{erf}(ax)}{\sinh(x)} \diff x + \int_{1/a}^{\infty} \frac{ \mathrm{erf}(ax)}{\sinh(x)} \diff x \\
        &\leq \frac{2a}{\sqrt{\pi}}\int_{0}^{1/a} \frac{ x}{\sinh(x)} \diff x + \int_{1/a}^{\infty} \frac{ 1}{\sinh(x)} \diff x \\
        &\leq \frac{2a}{\sqrt{\pi}}\int_{0}^{1/a} 1 \diff x \tag{$\frac{x}{\sinh(x)}\leq 1$ for $x>0$} + \int_{1/a}^{\infty} \frac{ 1}{\sinh(x)} \diff x \\
        &= \frac{2}{\sqrt{\pi}} + \log \coth\Paren{\frac{1}{2a}}. 
    \end{align*}
    Since $\coth(y) = \frac{\cosh(y)}{\sinh(y)} = 1 + \frac{2}{e^{2y} - 1}$ and $e^{2y} - 1 \geq 2y$ for any $y>0$, we have
    \begin{align*}
        \log \coth\Paren{\frac{1}{2a}}  \leq \log (1+2a)      . 
    \end{align*}
    So overall,
    \begin{align*}
        \int_{-\infty}^{\infty} \abs{b_1(t)} \diff t \leq \frac{1}{\sqrt{2} \pi^{3/2}} \Paren{\frac{2}{\sqrt{\pi}} + \log \Paren{1+ \frac{\sqrt{2}}{\pi}\cdot \sigma \beta}}. 
    \end{align*}
    This completes the proof. 
\end{proof}

A similar argument enables us to bound higher moments of this kernel function.

\begin{lemma}[Higher moments of the first kernel]\label{lem:b_1_higher_moments}
For any integer $r\geq 1$,
\begin{equation*}
    \int_{-\infty}^{\infty} \abs{b_1(t)} \cdot \abs{t}^r\; \diff t \leq \frac{2^{(r+5)/2} \cdot r!}{ \pi^{3/2}} \Paren{ \frac{1}{\sigma}  
        +  \frac{\beta}{\pi} }^r\, .
\end{equation*}
\end{lemma}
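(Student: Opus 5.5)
We follow the route of \Cref{lem:b_1_zero_moment}: use the equivalent sinh-integral representation of $b_1$, pass to the half-line via oddness, and swap the integrals with Tonelli. Since $b_1(t)=-b_1(-t)$ and, for $t>0$, the numerator $e^{-2\sigma^2(t-u)^2}-e^{-2\sigma^2(t+u)^2}$ is nonnegative for every $u\ge 0$, we get
\begin{equation*}
\int_{-\infty}^{\infty}\abs{b_1(t)}\abs{t}^r\diff t=\frac{2\sigma}{\pi\beta}\int_0^\infty\frac{J_r(u)}{\sinh(2\pi u/\beta)}\diff u,\qquad J_r(u)\coloneqq\int_0^\infty t^r\Paren{e^{-2\sigma^2(t-u)^2}-e^{-2\sigma^2(t+u)^2}}\diff t .
\end{equation*}
The task thus reduces to bounding the Gaussian-shift difference $J_r(u)$ in a way that stays integrable against $1/\sinh(2\pi u/\beta)$ near $u=0$.

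First, we simplify $J_r$. Substituting $s=t-u$ in the first term and $s=t+u$ in the second gives
\begin{equation*}
J_r(u)=\int_{-u}^{\infty}(s+u)^r e^{-2\sigma^2 s^2}\diff s-\int_u^\infty (s-u)^r e^{-2\sigma^2 s^2}\diff s .
\end{equation*}
This is at most $\int_{-u}^{u}(s+u)^r e^{-2\sigma^2s^2}\diff s+\int_u^\infty\Paren{(s+u)^r-(s-u)^r}e^{-2\sigma^2s^2}\diff s$. Both pieces carry a factor of $u$: the first interval has length $2u$, and the mean value theorem gives $(s+u)^r-(s-u)^r\le 2ru(s+u)^{r-1}$. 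Combining these, we aim for the bound
\begin{equation*}
J_r(u)\le 2ru\int_{-\infty}^{\infty}(\abs{s}+u)^{r-1}e^{-2\sigma^2s^2}\diff s+2u(2u)^r .
\end{equation*}
We then expand $(\abs{s}+u)^{r-1}$ binomially and evaluate the Gaussian moments, $\int\abs{s}^k e^{-2\sigma^2s^2}\diff s=(2\sigma^2)^{-(k+1)/2}\Gamma\Paren{\frac{k+1}{2}}$.

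This yields a sum of terms $u^{1+m}\sigma^{-(r-m)}$ with $0\le m\le r$. Integrating each against $1/\sinh(2\pi u/\beta)$ uses
\begin{equation*}
\int_0^\infty\frac{u^{p}}{\sinh(2\pi u/\beta)}\diff u=\Paren{\frac{\beta}{2\pi}}^{p+1}\cdot 2(1-2^{-p-1})\,\Gamma(p+1)\zeta(p+1)\le 4\Paren{\frac{\beta}{2\pi}}^{p+1}p!,
\end{equation*}
which holds for $p\ge1$ since $\zeta(p+1)\le\zeta(2)<2$. The factor $u$ makes $p\ge1$ always, so this avoids the logarithmic divergence that appeared in the zeroth-moment lemma. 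The prefactor $\frac{2\sigma}{\pi\beta}\cdot\frac{\beta}{2\pi}$ cancels one power of $\beta$. The leftover factor of $\sigma$ combines with the $\sigma^{-1}$ coming from the Gaussian normalization, leaving a sum of the form $\sum_m c_{r,m}\sigma^{-(r-m)}(\beta/\pi)^m$, which is dominated by a binomial expansion of $\Paren{\frac1\sigma+\frac\beta\pi}^r$.

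The main obstacle is the constant bookkeeping, namely getting the exact $\frac{2^{(r+5)/2}r!}{\pi^{3/2}}$ prefactor. The factorials from $\Gamma\Paren{\frac{k+1}{2}}$, from $(m+1)!$, and from the binomial coefficients must combine to at most $r!\binom{r}{m}$ times a power of $\sqrt2$. I would handle this with $\Gamma\Paren{\frac{k+1}{2}}\le\sqrt{\pi}\,k!/2^{k/2}$ (or a similar crude bound) together with $\binom{r-1}{m}(r-1-m)!\,(m+1)!\cdot r\le r!\binom{r}{m}\cdot(\text{const})$, and then absorb the $2^{r/2}$ powers from $(2\sigma^2)^{-1/2}$ and the mean-value step. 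If the constants do not close exactly, I would loosen the intermediate estimate $J_r(u)\le 2^{r}u(\ldots)$ until the stated form is reached.
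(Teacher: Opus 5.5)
\textbf{There is a genuine gap.} It lies in your bound on the piece $\int_{-u}^{u}(s+u)^r e^{-2\sigma^2 s^2}\,\mathrm{d}s$. You bound it by $2u(2u)^r$, using the interval length and $e^{-2\sigma^2 s^2}\le 1$. This discards the Gaussian, so the term carries no factor $\sigma^{-1}$. Your claim that the $\sigma$ in the prefactor $\frac{2\sigma}{\pi\beta}$ cancels against a Gaussian normalization is therefore false for this term. Its contribution is exactly
\begin{equation*}
\frac{2\sigma}{\pi\beta}\int_0^\infty\frac{2^{r+1}u^{r+1}}{\sinh(2\pi u/\beta)}\,\mathrm{d}u \;=\; c_r\,\sigma\beta^{r+1},\qquad c_r>0 \text{ depending only on } r .
\end{equation*}
By contrast, once $\sigma\beta\ge 1$ the right-hand side of the lemma is at most a constant (depending on $r$) times $\beta^r$. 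The ratio grows like $\sigma\beta$. For example, with $r=1$ and $\sigma\beta=100$, this term alone is about $4.3\beta$, while the target is about $0.47\beta$. So this is not a constant-bookkeeping issue, and loosening $2^r$-type factors cannot rescue it. It also hits exactly the regime that matters. The lemma is stated for all $\sigma,\beta>0$ and is applied with $\sigma\beta=\sqrt{\Delta\beta}$, which is unbounded in the field strength. The point of the lemma is a moment bound that does not degrade with the field.

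\textbf{The repair.} Keep the Gaussian on this piece:
\begin{equation*}
\int_{-u}^{u}(s+u)^r e^{-2\sigma^2 s^2}\,\mathrm{d}s\le (2u)^r\sqrt{\pi/2}\,\sigma^{-1}.
\end{equation*}
Since $r\ge 1$, the factor $u^r$ is already integrable against $1/\sinh(2\pi u/\beta)$, so you do not need an extra factor of $u$ here. The contribution then becomes $\frac{2^{3/2}r!}{\pi^{r+3/2}}\beta^r$. For the other piece, on $s\ge u$ you have $(s+u)^{r-1}\le 2^{r-1}s^{r-1}$. This gives a contribution of at most $\frac{r\,2^{r/2}\Gamma(r/2)}{16\pi}\sigma^{-(r-1)}\beta$. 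Both terms fit under $\frac{2^{(r+5)/2}r!}{\pi^{3/2}}\bigl(\frac1\sigma+\frac\beta\pi\bigr)^r$ with room to spare, and this route also avoids the logarithm.

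\textbf{Comparison with the paper.} The paper gets the same effect by splitting the $u$-integral at $u=1/\sigma$. It uses the derivative (factor-$u$) bound only for $u\le 1/\sigma$. For $u>1/\sigma$ it uses the triangle inequality, which keeps the Gaussian normalization and gives up the factor $u$.
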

\begin{proof}
    Let us write $g(t) = \sigma \sqrt{\frac{2}{\pi}}\cdot e^{-2\sigma^2 t^2}$. Then
    \begin{align*}
        \int_{-\infty}^{\infty} \abs{b_1(t)} \cdot \abs{t}^r \diff t 
        &= \frac{\sigma}{\pi \beta }\int_{0}^{\infty} \frac{ \int_{-\infty}^{\infty} \abs{t}^r\cdot \abs*{e^{-2\sigma^2(t-u)^2}- e^{-2\sigma^2(t+u)^2}} \diff t}{\sinh(2\pi u/\beta)} \diff u \tag{Tonelli's theorem} \\
        &= \frac{1}{\sqrt{2\pi}} \frac{1}{\beta} \int_{0}^{\infty} \frac{ \int_{-\infty}^{\infty} \abs{t}^r\cdot \abs*{g(t-u)- g(t+u)} \diff t}{\sinh(2\pi u/\beta)} \diff u \\
        &= \frac{1}{\sqrt{2\pi}} \frac{1}{\beta} \int_{0}^{\infty} \frac{ D_r(u)}{\sinh(2\pi u/\beta)} \diff u \,,
    \end{align*}
    where $D_r(u) \coloneqq \int_{-\infty}^{\infty} \abs{t}^r \cdot \abs*{g(t-u)- g(t+u)} \diff t$. 
    We split the integral over $u$ at $u=1/\sigma$:
    \begin{align}\label{eq:b_1_moments_split}
        \int_{-\infty}^{\infty} \abs{b_1(t)} \cdot \abs{t}^r \diff t 
        &= \frac{1}{\sqrt{2\pi}} \frac{1}{\beta} \Bigg(\underbrace{\int_{0}^{1/\sigma} \frac{ D_r(u)}{\sinh(2\pi u/\beta)} \diff u }_{  \eqref{eq:b_1_moments_split}.(1)} + \underbrace{\int_{1/\sigma}^{\infty} \frac{ D_r(u)}{\sinh(2\pi u/\beta)} \diff u}_{  \eqref{eq:b_1_moments_split}.(2)}\Bigg). 
    \end{align}
    Let us first bound \eqref{eq:b_1_moments_split}.(2). 
    Since $g(t)>0$ for all $t\in \R$, we have
    \begin{align*}
        D_r(u) &\leq \int_{-\infty}^{\infty} \abs{t}^r \cdot g(t-u)\diff t + \int_{-\infty}^{\infty} \abs{t}^r \cdot g(t+u) \diff t \tag{triangle inequality} \\
        &= \int_{-\infty}^{\infty} \abs{t + u}^r \cdot g(t) \diff t + \int_{-\infty}^{\infty} \abs{t-u}^r \cdot g(t) \diff t \\
        &\leq 2^{r} \int_{-\infty}^{\infty} \Paren{\abs{t}^r + u^r} \cdot g(t)\diff t \tag{$\abs{t+u}^r\leq 2^{r-1} (\abs{t}^r + \abs{u}^r)$ and $u\geq 0$} \\
        &= 2^{r} (m_r + m_0 u^r) \tag{Set $m_k \coloneqq \int_{-\infty}^{\infty} \abs{t}^k \cdot g(t) \diff t$}. 
    \end{align*}
    Therefore,
    \begin{align*}
        \eqref{eq:b_1_moments_split}.(2) &\leq \int_{1/\sigma}^{\infty} \frac{ 2^{r} (m_r + m_0 u^r)}{\sinh(2\pi u/\beta)} \diff u \\
        &\leq 2^{r} \Paren{ \frac{m_r \beta}{2\pi}\int_{\frac{2\pi}{\sigma \beta}}^{\infty}\frac{ 1}{\sinh(x)} \diff x + m_0\Paren{\frac{\beta}{2\pi}}^{r+1}\int_{\frac{2\pi}{\sigma \beta}}^{\infty}\frac{ x^r}{\sinh(x)} \diff x } \tag{Set $x = 2\pi u/ \beta$}
    \end{align*}
    Standard Gaussian moment calculations show that $m_0 =1$ and $m_r = \pi^{-1/2}2^{-r/2} \sigma^{-r}\cdot \Gamma\Paren{\frac{r+1}{2}}$. 
    Since for any $a>0$, 
    \begin{align*}
        \int_{a}^{\infty} \frac{1}{\sinh(x)} \diff x = \log \coth(a/2)\leq \log(1+2/a) \,,
    \end{align*}
    and for any integer $r\geq 1$, 
    \begin{align*}
        \int_{0}^{\infty} \frac{x^r}{\sinh(x)} \diff x = 2r! \sum_{k=0}^{\infty} \frac{1}{(2k+1)^{r+1}} < 4 r! \,,
    \end{align*}
    then we have
    \begin{align*}
        \eqref{eq:b_1_moments_split}.(2) &\leq 2^{r} \Paren{ \frac{2^{-r/2} \sigma^{-r}\cdot r! \cdot \beta}{2\pi} \log\Paren{1+\frac{\sigma\beta}{\pi}}  + \Paren{\frac{\beta}{2\pi}}^{r+1} 4r!} \tag{$\Gamma(\frac{r+1}{2})\leq \sqrt{\pi} \cdot r!$} \\
        &= \frac{\beta}{\pi} r! \Paren{ 2^{r/2-1} \sigma^{-r} \log\Paren{1+\frac{\sigma\beta}{\pi}}  + 2\Paren{\frac{\beta}{\pi}}^{r} } \,.
    \end{align*}
    
    We now bound \eqref{eq:b_1_moments_split}.(1). 
    Since $g(t+u)- g(t-u) = \int_{-u}^{u} g'(t+s) \diff s$, 
    then
    \begin{align*}
        D_r(u) &\leq \int_{-\infty}^{\infty} \abs{t}^r \Paren{\int_{-u}^{u} \abs*{g'(t+s)} \diff s} \diff t \\
        &= \int_{-u}^{u} \int_{-\infty}^{\infty} \abs{t}^r \cdot \abs*{g'(t+s)}  \diff t \diff s \\
        &= \int_{-u}^{u} \int_{-\infty}^{\infty} \abs{w-s}^r \cdot \abs*{g'(w)}  \diff w \diff s \tag{Set $w=t+s$} \\
        &\leq \int_{-u}^{u} \int_{-\infty}^{\infty} 2^{r-1} (\abs{w}^r + \abs{s}^r) \cdot \abs*{g'(w)}  \diff w \diff s \\
        &= 2^{r-1} \Paren{2u \int_{-\infty}^{\infty} \abs{w}^r  \cdot \abs*{g'(w)}  \diff w  + \int_{-u}^{u} \abs{s}^r \cdot \int_{-\infty}^{\infty} \abs*{g'(w)}  \diff w \diff s} \\
        &= 2^{r-1} \Paren{2u \mu_r  + 2\mu_0 \int_{0}^{u} s^r \diff s } \tag{Set $\mu_k = \int_{-\infty}^{\infty} \abs{w}^k \cdot \abs*{g'(w)} \diff w$} \\
        &= u 2^r \left(\mu_r + \mu_0 \frac{u^r}{r+1}\right) \\
        &= u 2^{r+2} \sigma^2 \left(m_{r+1} + m_1 \frac{u^r}{r+1}\right),
    \end{align*}
    where the last equality is because $g'(t) = (-4\sigma^2 t)g(t)$ and thus
    $\mu_k = 4\sigma^2 m_{k+1}$ for any integer $k\geq 0$. 
    Then similarly, we can bound
    \begin{align*}
        \eqref{eq:b_1_moments_split}.(1) &\leq  2^{r+2} \sigma^2 \left(m_{r+1} + m_1 \frac{\sigma^{-r}}{r+1}\right) \int_{0}^{1/\sigma} \frac{u}{\sinh(2\pi u / \beta)}\diff u \tag{since $u\leq 1/\sigma$}\\
        &\leq 2^{r+2} \sigma^2 \left(m_{r+1} + m_1 \frac{\sigma^{-r}}{r+1}\right) \Paren{\frac{\beta}{2\pi}}^2 \int_{0}^{\frac{2\pi}{\sigma \beta}} \frac{x}{\sinh(x)}\diff x \tag{Set $x =2\pi u/\beta$}\\
        &\leq 2^{r+2} \sigma^2 \left(m_{r+1} + m_1 \frac{\sigma^{-r}}{r+1}\right) \Paren{\frac{\beta}{2\pi}}^2 \frac{2\pi}{\sigma \beta} \tag{Since $\frac{x}{\sinh(x)} \leq 1$ for $x>0$}\\
        &\leq 2^{r+1} \frac{\sigma \beta}{\pi} \left(\pi^{-1/2}2^{-(r+1)/2} \sigma^{-(r+1)}\cdot \Gamma\Paren{\frac{r+2}{2}}  + 2^{-1/2} \sigma^{-1}  \frac{\sigma^{-r}}{r+1}\right) \\
        &\leq 2^{r+1/2} \frac{\sigma^{-r} \beta}{\pi} \left(\pi^{-1/2}2^{-r/2}\cdot r!  +   \frac{1}{r+1}\right) \tag{$\Gamma(\frac{r+2}{2})\leq r!$ for any integer $r\geq 1$}\\
        &\leq 2^{r+1/2} \frac{\sigma^{-r} \beta}{\pi} \left(4\pi^{-1/2}2^{-r/2}\cdot r! \right)
        \tag{$3\pi^{-1/2}2^{-r/2}\cdot r!  \geq   \frac{1}{r+1}$ for $r\geq 1$}\\
        &= 2^{(r+5)/2} \frac{\sigma^{-r} \beta}{\pi^{3/2}}\cdot r!
    \end{align*}
    Overall,
    \begin{align*}
        \int_{-\infty}^{\infty} \abs{b_1(t)} \cdot \abs{t}^r \diff t 
        &= \frac{1}{\sqrt{2\pi}} \frac{1}{\beta} \Paren{ \eqref{eq:b_1_moments_split}.(1) +  \eqref{eq:b_1_moments_split}.(2)} \\
        &= \frac{1}{\sqrt{2\pi}} \frac{1}{\beta} \Paren{ 2^{(r+5)/2} \frac{\sigma^{-r} \beta}{\pi^{3/2}}\cdot r! 
        + \frac{\beta}{\pi} r! \Paren{ 2^{r/2-1} \sigma^{-r} \log\Paren{1+\frac{\sigma\beta}{\pi}}  + 2\Paren{\frac{\beta}{\pi}}^{r} }} \\
        &= \frac{r!}{\sqrt{2\pi} \pi} \Paren{  \frac{ 2^{(r+5)/2}}{\pi^{1/2}} \sigma^{-r}
        +  2^{r/2-1} \sigma^{-r} \log\Paren{1+\frac{\sigma\beta}{\pi}}  + 2\Paren{\frac{\beta}{\pi}}^{r} } \\
        &\leq \frac{2^{(r+3)/2} \cdot r!}{ \pi^{3/2} \cdot \sigma^r} \Paren{ 1  
        +  \frac{\sigma\beta}{\pi}  + \Paren{\frac{\sigma\beta}{\pi}}^{r} } \,,
    \end{align*}
    where the last inequality uses $\log(1+x)\leq x$ for $x\geq 0$, $\frac{2^{1/2}}{\pi^{1/2}} \leq 1$, and $2^{(r+4)/2} \geq 2^{r/2-1}$ and $2^{(r+4)/2} \geq 2$ for $r\geq 1$. 

    Let $x = \frac{\sigma \beta}{\pi} > 0$. Since for any $r\geq 1$, we have $1+x\leq (1+x)^r$ and $x^r \leq (1+x)^r$. Then $1+x+x^r \leq 2(1+x)^r$. Hence
    \begin{align*}
        \int_{-\infty}^{\infty} \abs{b_1(t)} \cdot \abs{t}^r \diff t \leq \frac{2^{(r+5)/2} \cdot r!}{ \pi^{3/2} \cdot \sigma^r} \Paren{ 1  
        +  \frac{\sigma\beta}{\pi} }^r  \leq \frac{2^{(r+5)/2} \cdot r!}{ \pi^{3/2}} \Paren{ \frac{1}{\sigma}  
        +  \frac{\beta}{\pi} }^r\,.
    \end{align*}
    This completes the proof. 
\end{proof}

\subsection{Quasi-locality of the field-resonant Lindbladian}

\paragraph{Obstacles to quasi-locality.} A priori, the Heisenberg evolution may spread information
across the entire system, since it involves an integral over all time $t$. To account for this, the particular choice of the \emph{time kernels}, $f(t), b_1(t)$ and $b_2(t)$ exponentially suppress large $t$ and without on-site potentials can be controlled via a standard Lieb-Robinson bound. When the Hamiltonian has a large external field, i.e. $H=V+W$, where $V$ is on-site and potentially unbounded, one must additionally ensure
that the locality estimates are \emph{uniform in the field strength}. This is precisely where we use our shell decomposition for Heisenberg evolution from the previous section (see \cref{lem:LR_radius_r}). We begin by showing that the jump operators are quasi-local: at a high level, we write $A^P(\omega)$, the jump operator at site $i$, as a sum of operators, where each one is supported on a radius $r$ ball around site $i$ and show that the operator norm of each one decays exponentially in the radius, as long as $\zeta \cdot \beta < 1/4$. 

\begin{proposition}[Jump operators are quasi-local]
\label{prop:jump-operators-quasi-local}
Let $0< \beta \leq 1/(4\zeta)$ and $P$ be a Pauli operator at site $i$. 
Recall that the jump operator in the field-resonant Lindbladian defined in \Cref{def:field_lindblad}
\begin{equation*}
    A^P(\omega) = \frac{1}{\sqrt{2\pi}} \int_{-\infty}^{\infty} e^{\ii Ht} {P} e^{-\ii H t} e^{-\ii \omega t}  f_i(t) \diff t , \quad \text{where }f_i(t) = \sqrt{\sigma_i \sqrt{2/\pi}} \cdot e^{-\sigma_i^2 t^2}. 
\end{equation*}
Then we can write $ A^P(\omega) = \sum_{r\geq 0} G^{P, \omega}_r$,
where $G_r^{P, \omega}$ is supported on $\ball(\{i\}, r)$ and 
\begin{equation*}
    \norm{G_r^{P, \omega}} \leq (2\pi)^{-1/4} \cdot \sigma_i^{-1/2} \cdot (2\zeta \sigma_i^{-1})^r,
\end{equation*}
for each integer $r\geq 0$. 
In particular,
\begin{equation*}
    G_0^{P, \omega} = \frac{1}{\sqrt{2\pi}} \int_{-\infty}^{\infty}  e^{\ii t V_i} P e^{-\ii tV_i} \cdot e^{-\ii \omega t} f_i(t) \diff t.
\end{equation*}
\end{proposition}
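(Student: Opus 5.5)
We telescope the Heisenberg evolution along the interaction-truncated Hamiltonians of \Cref{def:interaction-truncated-Ham} centered at site $i$. Write $H_r$ for the truncation at radius $r$; $H_0 = V$, and $H_R = H$ once $R$ exceeds the diameter of the interaction graph. Then
\begin{equation*}
    e^{\ii Ht} P e^{-\ii Ht} = e^{\ii H_0 t} P e^{-\ii H_0 t} + \sum_{r=1}^{R} F_r^{P,t},
\end{equation*}
with $F_r^{P,t}$ as in \Cref{lem:LR_radius_r}. Because $V$ is a sum of on-site terms, $e^{\ii H_0 t}Pe^{-\ii H_0 t} = e^{\ii tV_i}Pe^{-\ii tV_i}$ is supported on $\{i\}$.

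We define
\begin{align*}
    G_0^{P,\omega} &\coloneqq \frac{1}{\sqrt{2\pi}}\int_{-\infty}^{\infty} e^{\ii tV_i}Pe^{-\ii tV_i}\, e^{-\ii\omega t}f_i(t)\diff t,\\
    G_r^{P,\omega} &\coloneqq \frac{1}{\sqrt{2\pi}}\int_{-\infty}^{\infty} F_r^{P,t}\, e^{-\ii\omega t}f_i(t)\diff t \quad (r\ge 1),
\end{align*}
and set $G_r^{P,\omega}=0$ for $r>R$. The sum is finite, so linearity of the integral gives $A^P(\omega)=\sum_r G_r^{P,\omega}$. Each integrand is supported on $\ball(\{i\},r)$, and so is the integral. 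This gives the stated formula for $G_0^{P,\omega}$.

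For the norm bounds, we move the norm inside the integral and use $\abs{e^{-\ii\omega t}}=1$ and $\norm{P}=1$. At $r=0$,
\begin{equation*}
    \norm{G_0^{P,\omega}} \le \frac{1}{\sqrt{2\pi}}\int f_i(t)\diff t = \frac{1}{\sqrt{2\pi}}(2/\pi)^{1/4}\sigma_i^{1/2}\sqrt{\pi}/\sigma_i = (2\pi)^{-1/4}\sigma_i^{-1/2},
\end{equation*}
which matches the claim. For $r\ge 1$, \Cref{lem:LR_radius_r} gives
\begin{equation*}
    \norm{G_r^{P,\omega}} \le \frac{(2\zeta)^r}{\sqrt{2\pi}\, r!}\int \abs{t}^r f_i(t)\diff t.
\end{equation*}
The Gaussian moment is
\begin{equation*}
    \int\abs{t}^r e^{-\sigma_i^2t^2}\diff t = \sigma_i^{-(r+1)}\Gamma\Paren{\frac{r+1}{2}},
\end{equation*}
so
\begin{equation*}
    \norm{G_r^{P,\omega}} \le (2\pi)^{-1/4}\sigma_i^{-1/2}(2\zeta\sigma_i^{-1})^r\cdot \frac{\Gamma(\frac{r+1}{2})}{\sqrt{\pi}\, r!}.
\end{equation*}

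It remains to check the elementary inequality $\Gamma(\frac{r+1}{2})\le\sqrt{\pi}\,r!$, which the paper already invokes in the proof of \Cref{lem:b_1_higher_moments}. This holds because $\Gamma(\frac{r+1}{2})\le \Gamma(r+1)=r!$ for $r\ge 2$, and for $r=1$ we have $\Gamma(1)=1\le\sqrt{\pi}$.

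The hypothesis $\beta\le 1/(4\zeta)$ is not needed for the bound itself. It only ensures geometric decay: since $\sigma_i=\sqrt{\Delta_i/\beta}\ge 1/\beta$, we get $2\zeta\sigma_i^{-1}\le 2\zeta\beta\le 1/2$.

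There are two technical points. The first is justifying the exchange of the norm with the integral. This is routine, since the integrands are continuous and dominated by a Gaussian times a polynomial. The second is that the case $r=1$ of \Cref{lem:LR_radius_r} is exactly the one proved separately there, so no extra condition $\ell\ge 2$ is required. Neither point presents a real obstacle.
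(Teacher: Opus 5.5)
Your proposal is correct and follows the paper's proof essentially line for line: you telescope $e^{\ii Ht}Pe^{-\ii Ht}$ over the interaction-truncated Hamiltonians $H_r$, bound each shell $F_r^{P,t}$ with \Cref{lem:LR_radius_r}, integrate against $f_i$, and finish with the Gaussian moment formula and $\Gamma(\frac{r+1}{2})\le\sqrt{\pi}\,r!$. Truncating the telescoping sum at a finite radius $R$ is a harmless tidying of the paper's infinite sum, with one caveat: if the interaction graph is disconnected, $H_R\neq H$, but the two Heisenberg evolutions of $P$ still agree because the omitted terms live on other components. You are also right that $\beta\le 1/(4\zeta)$ is only needed for geometric decay, not for the stated bound.
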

\begin{proof}
    For any integer $r\geq 0$, let $H_r$ be the interaction-truncated Hamiltonian of $H$ at site $i$ with radius $r$ as defined in \Cref{def:interaction-truncated-Ham}. Recall, this truncation only affects the interaction terms and keeps all the on-site terms around. 
    Then we can write
    \begin{equation*}
        e^{\ii Ht}\cdot P\cdot e^{-\ii H t} = \Big(\sum_{r\geq 1} \underbrace{e^{\ii H_r t}\cdot P\cdot e^{-\ii H_r t} - e^{\ii H_{r-1} t}\cdot P\cdot e^{-\ii H_{r-1} t}}_{  F_r^{P, t}} \Big) + \underbrace{e^{\ii H_0 t}\cdot P\cdot e^{-\ii H_0 t}}_{ F_0^{P, t}}. 
    \end{equation*}
    Here, $F_r^{P, t}$ is roughly the "gain in information" due to the Heisenberg evolution, when you allow interactions to distance $r$ instead of $r-1$.  
    Since $P$ is a Pauli at site $i$, then
    \begin{equation*}
        F_0^{P,t} = e^{\ii t V} P e^{-\ii tV} = e^{\ii t V_i} P e^{-\ii tV_i}
    \end{equation*}
    is supported only on $\ball(\{i\}, 0) = \{i\}$ and has norm exactly $1$. 
    For $r\geq 1$, it follows from our shell decomposition in \Cref{lem:LR_radius_r} that $F_r^{P,t}$ is supported on $\ball(\{i\}, r)$ and $\norm{F_r^{P,t}} \leq \frac{(2\zeta \abs{t})^r}{r!}$. 

    Then the jump operator can be written as
    \begin{align*}
        A^P(\omega) &= \frac{1}{\sqrt{2\pi}} \int_{-\infty}^{\infty} e^{\ii Ht} P e^{-\ii H t} e^{-\ii \omega t} f_i(t) \diff t  \\
        &= \sum_{r\geq 0} \underbrace{\frac{1}{\sqrt{2\pi}} \int_{-\infty}^{\infty}  F_r^{P,t} e^{-\ii \omega t} f_i(t) \diff t}_{  G_r^{P, \omega}} \ ,
    \end{align*}
    where $G_r^{P,\omega}$ is supported on $\ball(\{i\}, r)$ and 
    \begin{align*}
        \norm{G_r^{P, \omega}} &\leq \frac{1}{\sqrt{2\pi}} \int_{-\infty}^{\infty} \norm{F_r^{P,t}}\cdot f_i(t) \diff t \\
        &\leq \frac{1}{\sqrt{2\pi}} \int_{-\infty}^{\infty} \frac{(2\zeta \abs{t})^r}{r!} \cdot \sqrt{\sigma_i \sqrt{2/\pi}} \cdot e^{-\sigma_i^2 t^2} \diff t \\
        &= \sqrt{\frac{\sigma_i \sqrt{2/\pi}}{2\pi}} \cdot \frac{(2\zeta)^r}{r!} \int_{-\infty}^{\infty} |t|^r \cdot e^{-\sigma_i^2 t^2} \diff t \\
        &= \sqrt{\frac{\sigma_i \sqrt{2/\pi}}{2\pi}} \cdot \frac{(2\zeta)^r}{r!} \cdot \sigma_i^{-(r+1)} \cdot \Gamma\left(\frac{r+1}{2}\right) \tag{Moments of Gaussian distributions} \\
        &\leq (2\pi)^{-1/4} \cdot \sigma_i^{-1/2} \cdot (2\zeta \sigma_i^{-1})^r,
    \end{align*}
    where the last inequality is because the Gamma function $\Gamma\left(\frac{r+1}{2}\right) \leq \sqrt{\pi} (r!)$ for any integer $r\geq 0$. 
    Since $\sigma_i = \sqrt{\Delta_i/\beta}$ and $\Delta_i\geq 1/\beta$, we have $\sigma_i^{-1}\leq \beta$. 
    Since $0<\beta \leq 1/(4\zeta)$, we have the desired bound on $\norm{G}_{r}^{P, \omega}$. 
\end{proof}

Next, we use a similar argument to prove that the coherent term is quasi-local, with the new caveat that we are integrating a product of Heisenberg evolution on two Paulis. At a high level, we first replace the full dynamics by interaction-truncated dynamics inside growing metric balls around the support of $P$, and telescope the resulting expression into ``shell increments'' indexed by the radius $r$.
Each shell increment is supported on $\ball(\{i\},r)$ and, by our shell decomposition bounds from
\Cref{lem:LR_radius_r}, its norm is suppressed by a factorial factor $(2\zeta|t|)^r/r!$ (uniformly in the external
field strength).
Second, the kernels $b_1$ and $b_2$ appearing in the definition of $C^P$ have fast decaying tails, so their $r$-th
moments grow at most like $r!$, which cancels the factorial in the Lieb-Robinson estimate.
This yields a decomposition of the coherent term into ball-supported terms with controlled norms.
\begin{proposition}[Coherent terms are quasi-local]
\label{prop:coherent-quasi-local}
    Let $0<\beta \leq 1/(12 \zeta)$ and $P$ be a Pauli operator at site $i$. Then we can write the coherent term in the field-resonant Lindbladian as
    \begin{equation*}
        C^P = \sum_{r\geq 0} K_r^P, \quad \textrm{ where } \; \norm{K_0^P}\leq 3\log(2\sqrt{\Delta_i \beta })  \; \textrm{ and } \; \norm{K_r^P} \leq 6(6\zeta \beta)^r \;\; \textrm{ for all } r\geq 1\,.
    \end{equation*}
    Further, each $K_r^P$ is Hermitian and is supported on $\ball(\{i\}, r)$. 
\end{proposition}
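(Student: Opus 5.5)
We decompose the coherent term with the same telescoping strategy used for the jump operators, but applied to the composite Heisenberg evolution that defines $C^P$. First we rewrite the inner operator. Since $P^\dagger P=I$ for a Pauli, we have $e^{\ii Ht'}P^\dagger e^{-2\ii Ht'}Pe^{\ii Ht'} = P^\dagger(t')P(-t')$, where $P(s)\coloneqq e^{\ii Hs}Pe^{-\ii Hs}$. The outer conjugation then turns this into $P(t'-t)^\dagger P(-t'-t)$. Hence
\begin{equation*}
C^P=\iint b_1(t)b_2(t')\,P(t'-t)^\dagger P(-t'-t)\,\diff t'\diff t .
\end{equation*}
Applying \Cref{lem:LR_radius_r}, we expand each factor as $P(s)=\sum_{r\ge 0}F_r^{P,s}$, with $F_0^{P,s}=e^{\ii sV_i}Pe^{-\ii sV_i}$ of norm $1$ and $\norm{F_r^{P,s}}\le (2\zeta|s|)^r/r!$. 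We then define
\begin{equation*}
K_r^P\coloneqq\iint b_1(t)b_2(t')\sum_{\max(r_1,r_2)=r}(F_{r_1}^{P,t'-t})^\dagger F_{r_2}^{P,-t'-t}\,\diff t'\diff t .
\end{equation*}
This term is supported on $\ball(\{i\},r)$ because both factors are.

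Hermiticity is the first point to settle. The truncated partial sums $C_r\coloneqq\sum_{r'\le r}K_{r'}^P$ are exactly the coherent terms built from the truncated Hamiltonian $H_r$. These are Hermitian by the same argument as in the detailed-balance construction (\Cref{prop:kms}, applied to $H_r$). So $K_r^P=C_r-C_{r-1}$ is Hermitian. It remains to check that the symmetry $b_1(-t)=-b_1(t)$ together with the form of $b_2$ produces the Hermitian structure in that construction; if it does not, we instead Hermitize via $(K_r+K_r^\dagger)/2$, which still sums to $C^P$.

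For the norm bounds with $r\ge1$, we use $\norm{F_{r_1}}\norm{F_{r_2}}$, the bound $|t\pm t'|^{k}\le 2^{k-1}(|t|^k+|t'|^k)$, and $\sum_{r_2\le r}(2\zeta|s|)^{r_2}/r_2!\le e^{2\zeta|s|}$. This last sum is awkward, so it is cleaner to use the bound $(2\zeta(|t|+|t'|))^{r}$ times a factor counting at most $2(r+1)$ pairs. The result reduces to products of moments $\int|b_1||t|^k$ and $\int|b_2||t'|^{r-k}$, which are given by \Cref{lem:b_1_higher_moments,lem:b_2_moments}. With $\sigma_i^{-1}\le\beta$ and $\sqrt{\beta/\Delta_i}\le\beta$, the first moment is $\lesssim k!\,(2\beta)^k$ up to a $2^{k/2}$ factor, and the second is $\lesssim \eta_i\sqrt{\beta/\Delta_i}\,\beta^{r-k}\Gamma(\tfrac{r-k+1}{2})2^{-(r-k)}$ with $\eta_i\sqrt{\beta/\Delta_i}=1$. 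The $r!$ from the binomial weights cancels against the $1/r!$ in the Lieb--Robinson bound. Collecting geometric factors should yield $\norm{K_r^P}\le 6(6\zeta\beta)^r$, which is summable since $\zeta\beta\le 1/12$.

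The $r=0$ term is bounded crudely as $\norm{K_0^P}\le\int|b_1|\int|b_2|$. Here $\int|b_2|=\eta_i\sqrt{\pi\beta/\Delta_i}=\sqrt\pi$, and \Cref{lem:b_1_zero_moment} gives a bound of order $\log(1+\sigma_i\beta)$ with $\sigma_i\beta=\sqrt{\Delta_i\beta}\ge1$. A little arithmetic should bring this below $3\log(2\sqrt{\Delta_i\beta})$.

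The main obstacle is the constant bookkeeping for $r\ge1$. The $b_1$ moment carries $(1/\sigma+\beta/\pi)^k2^{k/2}$, so the base constant $6$ must absorb $2^{3/2}$ and the cross terms of the binomial. Uniformity in the field strength also matters here: the only place $\Delta_i$ can enter is through $\log(\sigma_i\beta)$ at $r=0$, and we must make sure the $r\ge1$ bounds do not carry such logs. They do not, because \Cref{lem:b_1_higher_moments} is log-free.
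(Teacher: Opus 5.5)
Your decomposition is the same as the paper's. With $s_1=t'-t$, $s_2=-(t'+t)$ and $P_r(s)=e^{\ii H_r s}Pe^{-\ii H_r s}$, one has $\sum_{\max(r_1,r_2)=r}(F_{r_1}^{P,s_1})^\dagger F_{r_2}^{P,s_2}=P_r(s_1)^\dagger P_r(s_2)-P_{r-1}(s_1)^\dagger P_{r-1}(s_2)$, which is exactly the paper's $R_r^{P,t,t'}$. Your Hermiticity argument is sound, and your $r=0$ estimate is fine. (For Hermiticity there is also a direct check: $b_1$ is real and $\overline{b_2(t')}=b_2(-t')$, so substituting $t'\mapsto -t'$ shows each such piece is Hermitian; the paper simply symmetrizes.) The gap is in the $r\ge 1$ norm bound, which as proposed does not give $6(6\zeta\beta)^r$.

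\textbf{First gap: the pair count loses a factor of order $r$.} You bound term by term over the $2r+1$ pairs with $\max(r_1,r_2)=r$. The lower-index factor $F_{r'}^{P,s}$ with $r'\ge 1$ is not bounded by $1$. It is only bounded by $2$, or by $(2\zeta\abs{s})^{r'}/r'!$, which grows with $\abs{s}$. Even granting a unit bound, summing over the pairs multiplies the paper's estimate by at least $r+1$. You would end with something like $\norm{K_r^P}\lesssim (2r+1)\cdot 6(6\zeta\beta)^r$: summable, but not the stated $r$-independent prefactor.

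The missing step is to resum rather than expand:
\[
\sum_{\max(r_1,r_2)=r}(F_{r_1}^{P,s_1})^\dagger F_{r_2}^{P,s_2}=(F_r^{P,s_1})^\dagger P_r(s_2)+P_{r-1}(s_1)^\dagger F_r^{P,s_2},
\]
where $\norm{P_r(s_2)}=\norm{P_{r-1}(s_1)}=1$ because these are unitary conjugates of $P$. This gives $\frac{(2\zeta)^r}{r!}(\abs{t'-t}^r+\abs{t'+t}^r)\le\frac{2(2\zeta)^r}{r!}(\abs{t}+\abs{t'})^r$ with no pair count. It is precisely the paper's step $\norm{X_1X_2-Y_1Y_2}\le\norm{X_1}\norm{X_2-Y_2}+\norm{X_1-Y_1}\norm{Y_2}$.

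\textbf{Second gap: the $k=0$ term does carry the field dependence.} You argue that the $r\ge 1$ bounds are free of $\Delta_i$ because \Cref{lem:b_1_higher_moments} is log-free. This overlooks the $k=0$ term of the binomial expansion of $(\abs{t}+\abs{t'})^r$, namely $\int\abs{b_1}\cdot\frac{1}{r!}\int\abs{b_2(t')}\abs{t'}^r\diff t'$. That term uses the zeroth moment of $b_1$. By \Cref{lem:b_1_zero_moment} it contains $\log(1+\frac{\sqrt2}{\pi}\sqrt{\Delta_i\beta})$, which is unbounded in the field strength.

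It is controlled only because, for $r\ge 1$, $\int\abs{b_2}\abs{t'}^r=2^{-r}(\beta/\Delta_i)^{r/2}\Gamma(\frac{r+1}{2})$ contains at least one factor $\sqrt{\beta/\Delta_i}$. Using $\log(1+x)\le x$, the product $\sqrt{\Delta_i\beta}\cdot\sqrt{\beta/(4\Delta_i)}=\beta/2$ is field-independent, so this term is at most $2^{-r}\beta^r$. Combined with the $k\ge 1$ terms (at most $2\cdot 3^r\beta^r$ via \Cref{lem:b_1_higher_moments,lem:b_2_moments}), this yields $\norm{K_r^P}\le 2(2\zeta)^r(2\cdot 3^r+2^{-r})\beta^r\le 6(6\zeta\beta)^r$.
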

\begin{proof}
    For any integer $r\geq 0$, let $H_r$ be the interaction-truncated Hamiltonian of $H$ at site $i$ with radius $r$ as defined in \Cref{def:interaction-truncated-Ham}. 
    Then we can write
    \begin{align*}
        P(t) &\coloneqq e^{\ii Ht}P e^{-\ii Ht}, \\
        P_r(t) &\coloneqq e^{\ii H_r t}P e^{-\ii H_r t}, \quad \text{for }r\geq 0. 
    \end{align*}
    Then the corresponding coherent term
    \begin{align*}
        C^P = \iint_{-\infty}^\infty b_1(t)b_2(t') \underbrace{e^{-\ii H t} e^{\ii H t'} P e^{-\ii H t'}  e^{\ii  H t} }_{P(t'-t)} \cdot \underbrace{e^{-\ii H t} e^{-\ii H t'} P e^{\ii H t'} e^{\ii H t}}_{P(-(t'+t))} \diff t' \diff t. 
    \end{align*}
    Note that we drop the index $i$ in $b_{1,i}(t)$ and $b_{2,i}(t')$ for simplicity. 
    Let us split $P(t'-t) \cdot P(-(t'+t))$ into pieces with increasing support:
    \begin{align*}
        P(t'-t) \cdot P(-(t'+t)) &= \Big(\sum_{r\geq 1} \underbrace{P_r(t'-t) \cdot P_r(-(t'+t)) - P_{r-1}(t'-t) \cdot P_{r-1}(-(t'+t))}_{  R_{r}^{P, t, t'}}\Big) \\
        & \qquad + \underbrace{P_0(t'-t) \cdot P_0(-(t'+t))}_{  R_0^{P, t, t'}}. 
    \end{align*}
    Because both $P_0(t'-t)$ and $P_0(-(t'+t))$ are supported on site $i$, $R_0^{P, t, t'}$ is also supported on site $i$ and 
    \begin{equation*}
        \norm{R_0^{P, t, t'}} \leq \norm{P_0(t'-t)}\cdot \norm{P_0(-(t'+t))} = 1. 
    \end{equation*}
    Similarly, it follows from \Cref{lem:LR_radius_r} that for each $r\geq 1$, $R_r^{P, t, t'}$ is supported on $\ball(\{i\}, r)$. 
    We now bound the operator norm of $R_r^{P, t, t'}$. Note that any matrices $X_1, X_2, Y_1, Y_2$, we can bound the operator norm
    \begin{align*}
        \norm{X_1 X_2 - Y_1 Y_2} &= \norm{X_1 X_2 - X_1 Y_2 + X_1 Y_2 - Y_1 Y_2} \\
        &\leq \norm{X_1 X_2 - X_1 Y_2} + \norm{X_1 Y_2 - Y_1 Y_2} \\
        &\leq \norm{X_1} \cdot \norm{X_2 - Y_2} + \norm{X_1 - Y_1} \cdot \norm{Y_2}. 
    \end{align*}
    Using this inequality and $\norm{P_r(t'-t)} = \norm{P_{r-1}(-(t'+t))} = 1$, we have that
    \begin{align*}
        \norm{R_r^{P, t, t'}} &\leq \norm{P_r(t'-t) - P_{r-1}(t'-t)} + \norm{P_r(-(t'+t)) - P_{r-1}(-(t'+t))} \\
        &\leq \frac{(2\zeta )^r }{r!}\cdot (\abs{t'-t}^r + \abs{t'+t}^r) \tag{\Cref{lem:LR_radius_r}} \,.
    \end{align*}
    So overall,
    \begin{align*}
        C^P = \sum_{r\geq 0} \underbrace{\iint_{-\infty}^\infty b_1(t)b_2(t') R_r^{P, t, t'} \diff t' \diff t}_{\coloneqq \widetilde{K}_r^P}, 
    \end{align*}
    where for each $r\geq 0$, $\widetilde{K}_r^P$ is supported on $\ball(\{i\}, r)$. 
    Set $K_r^P \coloneqq \frac12(\widetilde{K}_r^P + (\widetilde{K}_r^P)^\dagger)$ which is Hermitian and is also supported on $\ball(\{i\}, r)$. 
    Since the coherent term $C^P$ is Hermitian, we have $C^P = \sum_{r\geq 0} K_r^P$. 
    Furthermore,
    \begin{align*}
        \norm{K_r^P} \leq \norm{\widetilde{K}_r^P} &\leq \iint_{-\infty}^\infty \abs{b_1(t)b_2(t')}\cdot \norm{R_r^{P, t, t'}} \diff t' \diff t \\
        &\leq \frac{(2\zeta )^r }{r!} \iint_{-\infty}^{\infty}\abs{b_1(t)b_2(t')} \cdot (\abs{t'-t}^r + \abs{t'+t}^r) \diff t' \diff t \,.
    \end{align*}
    We now use the moment bounds for $b_1$ and $b_2$ derived in \Cref{lem:b_1_zero_moment,lem:b_1_higher_moments,lem:b_2_moments}. 
    For any integer $k\geq 0$, let us write
    \begin{equation*}
        M_{1,k} \coloneqq \int_{-\infty}^{\infty} \abs{b_1(t)}\cdot \abs{t}^k \diff t \,, \quad \text{and}\quad M_{2,k} \coloneqq \int_{-\infty}^{\infty} \abs{b_2(t)}\cdot \abs{t}^k \diff t \,.
    \end{equation*}
    By the definition of the field-resonant Lindbladian, we have $\eta = \sigma = \sqrt{\Delta / \beta}$ and $\Delta \geq 1/\beta$. (Note that here we drop the index $i$ in $\sigma_i$, $\eta_i$, and $\Delta_i$ for simplicity.)
    Therefore, 
    \begin{itemize}
        \item (\Cref{lem:b_1_zero_moment}) $M_{1,0}\leq \frac{1}{\sqrt{2} \pi^{3/2}} \Paren{\frac{2}{\sqrt{\pi}} + \log \Paren{1+ \frac{\sqrt{2}}{\pi}\cdot \sqrt{\Delta \beta}}}$
        \item (\Cref{lem:b_1_higher_moments}) $M_{1,k}\leq \frac{2^{(k+5)/2} \cdot k!}{ \pi^{3/2}} \Paren{\sqrt{\frac{\beta}{\Delta}} + \frac{\beta}{\pi}}^{k}$ for $k\geq 1$
        \item (\Cref{lem:b_2_moments}) $M_{2,k} = 2^{-k}\cdot \Paren{\frac{\beta}{\Delta}}^{k/2} \cdot \Gamma\Paren{\frac{k+1}{2}}$ for $k\geq 0$
    \end{itemize}
    
    When $r=0$, 
    \begin{align*}
        \norm{K_0^P} &\leq 2 \iint_{-\infty}^{\infty}\abs{b_1(t)b_2(t')} \diff t' \diff t \\
        &= 2 M_{1,0} \cdot M_{2,0} \\
        &\leq 2\cdot \frac{1}{\sqrt{2} \pi^{3/2}} \Paren{\frac{2}{\sqrt{\pi}} + \log \Paren{1+\frac{\sqrt{2}}{\pi}\cdot\sqrt{\Delta \beta}}} \cdot \sqrt{\pi} \tag{$\Gamma(\frac{k+1}{2})\leq \sqrt{\pi}\cdot  k!$}\\
        &\leq 1 + \log (1 + \sqrt{\Delta \beta}) \\
        &\leq 3\log(2\sqrt{\Delta \beta }) \tag{$1\leq \sqrt{\Delta \beta}$ and $1\leq 2\log(2\sqrt{\Delta \beta})$}\,.
    \end{align*}
    When $r\geq 1$,
    \begin{align*}
        \norm{K_r^P} &\leq \frac{(2\zeta )^r }{r!} \iint_{-\infty}^{\infty}\abs{b_1(t)b_2(t')} \cdot 2(\abs{t'} + \abs{t})^r \diff t' \diff t \tag{triangle inequality}\\
        &= \frac{2(2\zeta )^r }{r!} \sum_{k=0}^{r} \binom{r}{k} \cdot M_{1,k} \cdot M_{2,r-k} \\
        &= 2(2\zeta )^r  \sum_{k=0}^{r}  \frac{M_{1,k}}{k!} \cdot \frac{M_{2,r-k}}{(r-k)!} \,.
    \end{align*}
    For any $m\geq 0$, 
    \begin{align*}
        \frac{M_{2,m}}{m!} = 2^{-m}\cdot \Paren{\frac{\beta}{\Delta}}^{m/2} \cdot \Gamma\Paren{\frac{m+1}{2}} \frac{1}{m!} \leq \sqrt{\pi} \Paren{\frac{\beta}{4\Delta}}^{m/2}. 
    \end{align*}
    Then
    \begin{align*}
        \sum_{k=1}^{r}  \frac{M_{1,k}}{k!} \cdot \frac{M_{2,r-k}}{(r-k)!} &\leq \sum_{k=1}^r \frac{2^{(k+5)/2}}{ \pi^{3/2}} \Paren{\sqrt{\frac{\beta}{\Delta}} + \frac{\beta}{\pi}}^{k} \cdot \sqrt{\pi} \Paren{\frac{\beta}{4\Delta}}^{(r-k)/2} \\
        &\leq \frac{2^{5/2}}{\pi} \sum_{k=1}^r 2^{k/2} \Paren{\beta + \frac{\beta}{\pi}}^{k} \cdot \Paren{\frac{\beta}{2}}^{r-k}
        \tag{$1/\Delta \leq \beta$} \\
        &\leq \frac{2^{5/2}}{\pi} \Paren{ \sqrt{2} \Paren{\beta + \frac{\beta}{\pi}} + \frac{\beta}{2}}^{r} \tag{$\sum_{k=1}^r A^k B^{r-k} \leq (A+B)^r$ for $A,B\geq 0$} \\
        &\leq 2\cdot 3^r \beta^r
    \end{align*}
    and
    \begin{align*}
        M_{1,0} \cdot \frac{M_{2,r}}{r!} &\leq \frac{1}{\sqrt{2} \pi^{3/2}} \Paren{\frac{2}{\sqrt{\pi}} + \log \Paren{1+ \frac{\sqrt{2}}{\pi}\cdot \sqrt{\Delta \beta}}} \sqrt{\pi} \Paren{\frac{\beta}{4\Delta}}^{r/2} \\
        &\leq \frac{1}{\sqrt{2} \pi} \Paren{\frac{2}{\sqrt{\pi}} + \frac{\sqrt{2}}{\pi}\cdot \sqrt{\Delta \beta}}  \Paren{\frac{\beta}{4\Delta}}^{r/2} \tag{$\log(1+x)\leq x$ for $x\geq 0$} \\
        &= \frac{1}{\sqrt{2} \pi} \Paren{\frac{2}{\sqrt{\pi}} \sqrt{\frac{\beta}{4\Delta}} + \frac{\sqrt{2}}{\pi}\cdot \sqrt{\Delta \beta} \sqrt{\frac{\beta}{4\Delta}}}  \Paren{\frac{\beta}{4\Delta}}^{(r-1)/2} \\
        &\leq \frac{1}{\sqrt{2} \pi} \Paren{\frac{\beta}{\sqrt{\pi}} + \frac{\beta}{\sqrt{2} \pi}}  \Paren{\frac{\beta}{2}}^{r-1} \tag{$1/\Delta \leq \beta$} \\
        &\leq 2^{-r}  \beta^r \,.
    \end{align*}
    Overall, when $r\geq 1$, 
    \begin{equation*}
        \norm{K_r^P}\leq 2(2\zeta)^r \Paren{2\cdot 3^r \beta^r + 2^{-r}  \beta^r} \leq 6(6\zeta \beta)^r \,.
    \end{equation*}
    Since $0< \beta \leq 1/(12\zeta)$, we have that $\sum_{r\geq 0} \norm{K_{r}^P}$ converges. 
\end{proof}

\subsection{Efficient simulation of the field-resonant Lindbladian}\label{sec:simulation_efficiency}

In this section, we include two remarks explaining how one expects the field-resonant Lindbladian evolution $e^{\calL_* t}$ on a constant-dimensional lattice can be implemented with two-qubit gate count of $\bigOt{tn}$ on a quantum computer. 

First, as a result of the quasi-locality of the jump operators and the coherent terms, we show in \Cref{lem:truncation} that the resulting Lindbladian can be truncated to radius $R=\bigO{\log(nt/\epsilon)}$ with diamond-norm error at most $\epsilon$; thus, up to this error, the dynamics is generated by strictly local terms on radius-$R$ patches. 
Then, combining this truncation with the block-encoding approach underlying \cite[Theorem~I.2]{ckg23} and the large-field Hamiltonian simulation result of Low and Wiebe \cite[Theorem~10]{LW18} suggests that, on a constant-dimensional lattice, the field-resonant Lindbladian can be compiled using $\bigOt{n}$ gates when $\beta=\bigO{1}$ and $t=\bigO{\log(n/\epsilon)}$.

\begin{lemma}[Lindbladian trunctaion error]\label{lem:truncation}
    Suppose $\beta \leq 1/(12\zeta)$. Recall the expansions of the jump operators $A_P(\omega)$ and the coherent terms $C^P$ in \Cref{prop:jump-operators-quasi-local,prop:coherent-quasi-local}. 
    For any integer $R\geq 1$, define the truncated jump operators and coherent terms at radius $R$ respectively by 
    \begin{equation*}
        A_{R}^P(\omega) \coloneqq \sum_{r= 0}^R G_r^{P, \omega}, \quad \text{and} \quad C^P_{R} \coloneqq \sum_{r=0}^R {K_r^P} \,,
    \end{equation*}
    and define the corresponding truncated field-resonant Lindbladian $\calL^*_R$. Then
    \begin{equation*}
        \norm{e^{t \calL^*} - e^{t \calL^*_R}}_\diamond \leq 60 nt\cdot 2^{-R} \,.
    \end{equation*}
    As a result, when $R = \bigO{\log(nt/\epsilon)}$, we have $\norm{e^{t \calL^*} - e^{t \calL^*_R}}_\diamond \leq \epsilon$. 
\end{lemma}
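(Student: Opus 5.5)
The plan is to use the standard Duhamel bound $\norm{e^{t\calL^*}-e^{t\calL^*_R}}_\diamond \le t\,\norm{\calL^*-\calL^*_R}_\diamond$. This holds because both generators produce CPTP semigroups, which are diamond-norm contractions:
\[
e^{t\calL^*}-e^{t\calL^*_R}=\int_0^t e^{(t-s)\calL^*}(\calL^*-\calL^*_R)e^{s\calL^*_R}\diff s .
\]
For the truncated generator, I will note that it is still of Lindblad form, with a Hermitian $C^P_R$ (each $K_r^P$ is Hermitian) and jump operators $A^P_R(\omega)$. So $e^{s\calL^*_R}$ is CPTP. Since $\calL^*=\sum_j\sum_{P}\calL^P$ has $3n$ terms, it then suffices to show $\norm{\calL^P-\calL^P_R}_\diamond\le 20\cdot 2^{-R}$ for each Pauli $P$ at site $i$.

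For the coherent part, $\norm{-\ii[C^P-C^P_R,\cdot]}_\diamond\le 2\norm{C^P-C^P_R}$. By \Cref{prop:coherent-quasi-local},
\[
\norm{C^P-C^P_R}\le\sum_{r>R}6(6\zeta\beta)^r .
\]
With $\zeta\beta\le 1/12$ the ratio is $1/2$, so this sum is at most $6\cdot 2^{-R}$, and the coherent part contributes at most $12\cdot 2^{-R}$.

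For the dissipative part, write $A=A^P(\omega)$, $A_R=A^P_R(\omega)$ and $E=A-A_R$. I will expand
\[
A\rho A^\dagger-A_R\rho A_R^\dagger=E\rho A^\dagger+A_R\rho E^\dagger,
\]
and handle the anticommutator term in the same way. The superoperator norm is then bounded by $4\norm{E}(\norm{A}+\norm{A_R})$. From \Cref{prop:jump-operators-quasi-local} with $2\zeta\sigma_i^{-1}\le 2\zeta\beta\le1/6$ I get
\[
\norm{E}\le(2\pi)^{-1/4}\sigma_i^{-1/2}\sum_{r>R}6^{-r}\lesssim \sigma_i^{-1/2}6^{-R},
\qquad
\norm{A},\norm{A_R}\lesssim\sigma_i^{-1/2}.
\]
The $\omega$-integral is the crucial point: these bounds are uniform in $\omega$, but $\int\gamma_i=\sqrt{2\pi}\,\eta_i$ grows with the field. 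The product $\eta_i\sigma_i^{-1}=1$, since $\eta_i=\sigma_i$, so the $\omega$-integral gives an $O(6^{-R})$ field-independent contribution. This is the step I expect to need care: I must verify that the pointwise-in-$\omega$ bounds, integrated against $\gamma_i$, yield a constant independent of $\Delta_i$. The $\sigma_i^{-1/2}$ normalization of $f_i$ exactly compensates the width $\eta_i=\sigma_i$ of $\gamma_i$, which is why this works.

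Summing the two parts, the per-$P$ bound is at most $(12+O(1)\cdot 3^{-R})2^{-R}\le 20\cdot 2^{-R}$ after tracking constants. Multiplying by the $3n$ Paulis and by $t$ gives $60nt\,2^{-R}$. Choosing $R=\lceil\log_2(60nt/\epsilon)\rceil$ gives the final claim.
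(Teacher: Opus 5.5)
Your proposal is correct and follows essentially the same route as the paper. Both arguments use the Duhamel bound $t\,\|\mathcal{L}-\mathcal{L}_R\|_\diamond$, split each Pauli term into a coherent error of at most $12\cdot 2^{-R}$ and a dissipative error, rely on the cancellation $\eta_i\sigma_i^{-1}=1$ in the $\omega$-integral, and multiply by $3n$. The only differences are minor: you use the sharper tail ratio $2\zeta\beta\le 1/6$ where the paper uses $1/2$, and you justify the contractivity needed for Duhamel by noting that the truncated generator is still of Lindblad form, whereas the paper cites this step.
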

Recall from the proofs of \Cref{prop:jump-operators-quasi-local,prop:coherent-quasi-local}, these truncations of the jump operators and the coherent terms are consequences of truncating the Hamiltonian $H$ to radius $r$ centered at site $i$. 
\begin{proof}[Proof of \Cref{lem:truncation}]
We first bound the truncation error for a jump operator in operator norm:
\begin{align*}
    \norm{A^P(\omega) - A_{R}^P(\omega)} &\leq (2\pi)^{-1/4} \sigma_i^{-1/2} \sum_{r \geq R+1} (2\zeta \sigma_i^{-1})^r \tag{\Cref{prop:jump-operators-quasi-local}}\\
    &\leq (2\pi)^{-1/4} \sigma_i^{-1/2} \sum_{r \geq R+1} (2\zeta \beta )^r \tag{$\sigma_i^{-1}\leq \beta$} \\
    &\leq (2\pi)^{-1/4} \sigma_i^{-1/2} (2\zeta \beta )^{R+1} 2 \tag{$\beta \leq 1/(4\zeta)$} \,.
\end{align*}
Similarly, with \Cref{prop:coherent-quasi-local}, the truncation error for a coherent term is bounded by
\begin{align*}
    \norm{C^P - C^P_R} \leq 6\sum_{r\geq R+1}(6\zeta \beta)^r \leq 12 (6\zeta \beta)^{R+1} \tag{$\beta \leq 1/(12\zeta)$} \,.
\end{align*}
We now convert these operator truncation into Lindbladian truncation. 
Write $D_A(X) \coloneqq AXA^\dagger - \frac12 \{A^\dagger A, X\}$ and the map $X\mapsto LXR$ as $\calM_{L, R}$. Then
\begin{align}
    \norm{D_A - D_B}_\diamond &= \norm*{AXA^\dagger - AXB^\dagger + AXB^\dagger - BXB^\dagger - \frac12 \Paren{CX + XC} }_\diamond \tag{Set $C = A^\dagger A - B^\dagger B$} \nonumber\\
    &= \norm{AX(A - B)^\dagger}_\diamond + \norm{(A-B) X B^\dagger}_\diamond + \frac12\norm{CX}_\diamond + \frac12\norm{XC}_\diamond \tag{triangle inequality} \nonumber \\
    &\leq \norm{A} \cdot \norm{A-B} + \norm{A-B} \cdot \norm{B} + \norm{C} \nonumber\\
    &\leq 2 (\norm{A} + \norm{B}) \cdot \norm{A-B} \,,
    \label{eq:diamond_to_operator_norm_product}
\end{align}
where the first inequality is because the diamond of the map $X\mapsto LXR$ is upper bounded by $\norm{L} \cdot \norm{R}$ (see e.g.\ \cite[Theorem 3.62]{watrous18}). 
Therefore, the approximation error in the dissipative evoluation is given by
\begin{align*}
    &\norm*{\int_{-\infty}^{\infty} \gamma_i(\omega) \Paren{D_{A^P(\omega)} - D_{A_R^P(\omega)}} \diff \omega}_\diamond \\
    &\leq \int_{-\infty}^{\infty} \gamma_i(\omega) \norm*{\Paren{D_{A^P(\omega)} - D_{A_R^P(\omega)}}}_\diamond \diff \omega \tag{triangle inequality}\\
    &\leq 2\int_{-\infty}^{\infty} \gamma_i(\omega) \Paren{\norm{A^P(\omega)} + \norm{A_R^P(\omega)}}\cdot \norm{A^P(\omega) - A_{R}^P(\omega)} \diff \omega \\
    &\leq 2\int_{-\infty}^{\infty} \gamma_i(\omega)\cdot \Paren{4 (2\pi)^{-1/4} \sigma_i^{-1/2}} \cdot \Paren{2 (2\pi)^{-1/4} \sigma_i^{-1/2} (2\zeta \beta)^{R+1}} \diff \omega \\
    &= 16 (2\pi)^{-1/2} \sigma_i^{-1} (2\zeta \beta)^{R+1} \sqrt{2\pi} \eta_j  \tag{$\int_{-\infty}^{\infty} \gamma_i(\omega) \diff \omega = \sqrt{2\pi} \eta_j$} \\
    &= 16 (2\zeta \beta)^{R+1} \tag{$\sigma_i = \eta_i$} \\
    &\leq 8 \cdot 2^{-R} \,. \tag{$2\zeta \beta \leq 1/2$}
\end{align*}
For the approximation error in the coherent term, under the assumption $6\zeta \beta \leq 1/2$, we have
\begin{equation*}
    \norm*{-\ii [C^P, \cdot] + \ii [C^P_R, \cdot]}_\diamond \leq 2 \norm{C^P - C^P_R} \leq 12 \cdot 2^{-R} \,,
\end{equation*}
where the first inequality is analogous to \Cref{eq:diamond_to_operator_norm_product}. 
Therefore, the truncated Lindbladin $\calL^P_R$ with respect to a Pauli $P$ at site $i$ satisfies
\begin{equation*}
    \norm{\calL^P - \calL^P_R}_\diamond \leq 20 \cdot 2^{-R} \,.
\end{equation*}
Summing over the $3n$ site-Pauli terms gives
\begin{equation*}
    \norm{\calL^* - \calL^*_R}_\diamond \leq 60 n\cdot 2^{-R} \,.
\end{equation*}
Therefore, for any $t>0$, 
\begin{equation*}
    \norm{e^{t \calL^*} - e^{t \calL^*_R}}_\diamond \leq t \norm{\calL^* - \calL^*_R}_\diamond \leq 60 nt\cdot 2^{-R} \,,
\end{equation*}
where the first inequality is a special case of Duhamel's fomula (also see a proof in \cite[Lemma 6(i)]{vE24}). 
\end{proof}

\begin{remark}[Expected black-box implementation cost]
We do not prove a full compilation theorem for the field-resonant Lindbladian in this work.
However, adapting the block-encoding constructions used in the proof of \cite[Theorem~I.2]{ckg23}
suggests the following black-box cost.

After truncating the time integrals to $[-T, T]$ and discretizing both the time and frequency integrals, let $\Lambda_f$, $\Lambda_1$, and $\Lambda_2$ denote the corresponding $\ell_1$-normalization factors of the discretized kernels $f_j$, $b_{1,j}$, and $b_{2,j}$, maximized over $j\in[n]$. 
Then one expects that the Lindbladian evolution $e^{\mathcal L^\ast t}$ for any $t\ge 1$
can be implemented to error $\epsilon$ in diamond norm using
\begin{align*}
    &\bigOt*{t T \cdot \Paren{1+ \Lambda_f^2+ \Lambda_1 \Lambda_2}}  \quad && \text{total Hamiltonian simulation time, and} \\
    &\bigOt*{t \cdot \Paren{1+ \Lambda_f^2+ \Lambda_1 \Lambda_2}} \quad && \text{uses of a block-encoding for $\sum_{j\in [n]}\sum_{P\in \{X, Y, Z\}} \ket{j, P} \otimes \sigma_P^{(j)}$} \,,
\end{align*}
up to additional polylogarithmic factors from discretization, Hamiltonian simulation, and the
target precision.

For the field-resonant choice
$\Delta_j=\max\{1/\beta,\lambda_{\max}(V_j)-\lambda_{\min}(V_j)\}$ and
$\sigma_j=\eta_j=\sqrt{\Delta_j/\beta}$,
the kernel bounds from \Cref{subsec:kernels_moments} give
\begin{equation*}
    \Lambda_f=\bigO{\sqrt{\beta}},\qquad \Lambda_2=\bigO{1},\qquad
\Lambda_1= \bigO*{1+\log(1+\sqrt{\beta\Delta_{\max}})},
\end{equation*}
where $\Delta_{\max}:=\max_j \Delta_j$.\footnote{We remark that a straightforward adaptation of the proof in \cite{ckg23} would result in this logarithmic dependence on $\Delta_{\max}$. But it might be avoidable if one uses a different and more careful argument as the Fourier transform of $b_{1,j}$ exhibits a nice form. Since the algorithmic efficieny is not the main focus of this paper and a direct adaption can already handle exponentially large external field, we will leave a more careful analysis for future work. } 
Moreover, the explicit Gaussian forms of $f_j$ and $b_{2,j}$ indicate a time cutoff
$T=\bigOt{\beta}$. 
For $b_{1,j}$, \Cref{lem:b_1_fourier} shows that its Fourier transform $\widehat{b}_{1,j}(\omega)$ is analytic in the strip $\abs{\mathrm{Im}(\omega)} < 2\pi /\beta$. 
A standard contour-shift estimate for inverse Fourier transforms therefore yields $\abs{b_{1,j}(t)} \leq C \sigma_j e^{-\pi \abs{t} /\beta}$. Hence the $b_{1,j}$-integral can also be truncated to $[-T, T]$ with $T = \bigOt{\beta}$. 
\end{remark}

\newcommand{\Vol}{\mathrm{Vol}}

\begin{remark}[Expected gate complexity on a $d$-dimensional lattice]
Assume that the interaction Hamiltonian $W$ is local on a $d$-dimensional lattice where $d = \bigO{1}$.
Let $\mathcal L_R^\ast$ be the radius-$R$ truncation of the field-resonant Lindbladian from
\Cref{lem:truncation}. Then
\begin{equation*}
\|e^{t\mathcal L^\ast}-e^{t\mathcal L_R^\ast}\|_\diamond \le \epsilon
\qquad\text{for}\qquad
R=\bigO{\log(nt/\epsilon)}.
\end{equation*}
Thus it suffices to compile $e^{t\mathcal L_R^\ast}$.

Each truncated Hamiltonian term centered at site $i$ is supported on the ball
$S=\ball(i,R)$. Since the interaction geometry is a $d$-dimensional lattice, one has $|S| = \bigO{R^d}$. 
Write the corresponding patch Hamiltonian as
\begin{equation*}
H_S = V_S + W_S,
\qquad
V_S := \sum_{j\in S} V_j,
\qquad
W_S := \sum_{a:\,\supp(H_a)\subseteq S} H_a.
\end{equation*}
Assume access to the standard oracle for the patch Hamiltonian $W_S$. 
This oracle can be implemented using $\bigOt{|S|}=\bigOt{R^d}$ gates, since
$W_S$ contains only $\bigO{|S|}$ local interaction terms on the patch $S$, and each such term acts on
$\bigO{1}$ qubits and can be indexed and implemented with only polylogarithmic overhead.
Then \cite[Theorem 10]{LW18} by Low and Wiebe implies that $e^{-i\tau H_S}$ can be
implemented with query complexity $\bigOt{|S|\cdot \tau}$, independently of $\|V\|$ up to polylogarithmic factors.

Since the truncated Lindbladian consists of $\bigO{n}$ site-centered terms, combining this with the
expected black-box implementation cost from the previous remark yields
\begin{equation*}
\bigOt*{
n\,R^{2d}\,t\,T\,(1+\Lambda_f^2+\Lambda_1\Lambda_2)
}
\end{equation*}
gates for implementing $e^{t\mathcal L_R^\ast}$, and hence the same bound for
$e^{t\mathcal L^\ast}$ up to the truncation error above.
In particular, when $d=\bigO{1}$, $\beta=\bigO{1}$, and $t=\bigO{\log(n/\epsilon)}$, this gives
$\bigOt{n}$ gates.
\end{remark}

%% file: update_matrix.tex
\section{Evolution of transport plans}

In this section, we develop tools that allow us to control how ``local mass'' propagates under one step of the Lindbladian evolution. We borrow the notion of transport plans from~\cite{blmt2025dobrushin}, which decompose a traceless Hermitian operator $X$ into pieces $(X_i)_{i\in[n]}$ that are assigned to individual sites and are locally traceless in the sense that $\tr_i(X_i)=0$.  The associated cost vector $\Paren{ \tfrac12\|X_i\|_1}_{i \in [n]}$ quantifies how much of $X$ is ``charged'' to each site. 

\begin{definition}[Transport plan and cost vector]\label{def:transport_plan_cost_vec}
    For a traceless Hermitian matrix $X$, we say that it has a \emph{transport plan} of $(X_i)_{i\in [n]}$ if $X = \sum_{i=1}^n X_i$, and $\tr_i(X_i) = 0$ and $X_i$ is Hermitian and traceless for every $i\in [n]$. 
    The \emph{cost vector} of this transport plan is a length-$n$ vector $x$ where $x_i = \frac12 \norm{X_i}_1$. 
\end{definition}

This transport plan formalism reduces the evolution of a traceless Hermitian perturbation to the evolution of a non-negative cost vector. Applying a linear, trace-preserving map $\Phi$ may redistribute this mass across the entire system. Next, we use the notion of an \emph{update matrix}, which is a convenient way to capture the worst-case one-step redistribution rule: it upper bounds, entry-wise, the cost vector of a suitable transport plan for $\Phi(X)$ in terms of the original cost vector $x$.

\begin{definition}[Update matrix]
    Let $\Phi$ be a trace-preserving linear map. $Q$ is an \emph{update matrix} of $\Phi$ if for any traceless Hermitian matrix $X$ and any transport plan $(X_i)_i$ of $X$, the matrix $\Phi(X)$ has a transport plan with cost vector $y \leq_v Qx$, where $x$ is the cost vector of $(X_i)_i$.
\end{definition}

An \emph{update matrix} $Q$ formalizes the statement that after applying the Lindbladian evolution, one can choose a new transport plan whose costs are entry-wise bounded by $Qx$. 
In particular, the quasi-locality of the Lindbladian implies that $Q$ has most of its weight near the updated site, with contributions from radius-$r$ balls decaying rapidly in $r$. The main theorem we prove obtains such an update matrix for the Lindbladian evolution, with a negative diagonal contribution corresponding to local contraction and a controlled quasi-local tail capturing leakage to larger balls around $i$. 

For the following theorem, recall that with $S \subseteq [n]$, we write $e_S = \sum_{i\in S}e_i$ and $E_S = e_S e_S^T$.

\begin{theorem}[Update matrix of the Lindbladian evolution at a site]
\label{thm:wass_growth_per_site}
    Assume that the inverse temperature $\beta$ satisfies $0<\beta \leq 1/(12\zeta)$. For each $i\in [n]$ and any $\delta\in [0, 1/(6\sqrt{2})]$, let $\Phi_i \coloneqq \calI + \delta \calL_i^*$,
    where the Lindbladian evolution $\calL_i^*$ is defined in \Cref{def:field_lindblad}. Then $\Phi_i$ has an update matrix of the form 
    \begin{equation*}
        (1+c_i\delta^2)\cdot I + \delta \cdot Q^{(i)}, \quad \text{where} \quad Q^{(i)} = - \frac{e^{-1/4}}{\sqrt{2}} E_{\{i\}} + 120 \sum_{r\geq 1} (6\beta \zeta)^r \cdot E_{\ball(\{i\}, r)} \,,
    \end{equation*}
    and $c_i = 144+ 324 (\log(2\sqrt{\Delta_i \beta }) + 4)^2$. 
\end{theorem}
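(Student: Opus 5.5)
By linearity and the triangle inequality for transport plans, it suffices to treat a single input piece: given a transport plan $(X_j)_j$ of $X$, we build a transport plan for $\Phi_i(X_j)=X_j+\delta\calL_i^*(X_j)$ for each $j$ separately and then add the resulting cost vectors. The identity part contributes $X_j$ itself, charged to site $j$ at cost $x_j$. For the perturbation $\delta\calL_i^*(X_j)$ we insert the shell decompositions $A^P(\omega)=\sum_r G_r^{P,\omega}$ from \Cref{prop:jump-operators-quasi-local} and $C^P=\sum_r K_r^P$ from \Cref{prop:coherent-quasi-local}. This expands $\calL^P(X_j)$ into terms of the form $-\ii[K_r^P,X_j]$ and $G_r X_j G_{r'}^\dagger-\tfrac12\{G_{r'}^\dagger G_r,X_j\}$. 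Each such term either is a commutator with an operator supported on $\ball(\{i\},\max(r,r'))$, or is obtained by regrouping into commutator-like combinations. The point is that every resulting term has vanishing partial trace on $\ball(\{i\},\max(r,r'))\cup\{j\}$.

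We then charge each term to that region using \Cref{lem:cost_vec_for_partial_trace_zero,lem:wass_norm_commutator}. These bound the cost of a term with vanishing partial trace on $S$ by a multiple of its trace norm spread over $S$, and bound the trace norm of a commutator $[K,X_j]$ by $O(\|K\|)\,x_j$. When $j\notin\ball(\{i\},r)$, the charge lands on $\ball(\{i\},r)\cup\{j\}$; the $\{j\}$ part is absorbed into the diagonal, and the rest gives the $E_{\ball(\{i\},r)}$ tail. For the coherent part, the $r\ge1$ terms give weight $\lesssim 6(6\zeta\beta)^r$. The dissipative terms give weight $\lesssim \sum_{r+r'=\cdot}\int\gamma_i\|G_r\|\|G_{r'}\|\,\diff\omega$, which, using $\int\gamma_i=\sqrt{2\pi}\eta_i$ and $\sigma_i=\eta_i$, is $O((2\zeta\beta)^{\max(r,r')})$. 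Summing the finitely many constants should yield the coefficient $120(6\beta\zeta)^r$.

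The $r=0$ terms need separate treatment. The coherent $K_0^P$ and the strictly local dissipative part $\calL_{\mathrm{diss},0}$ (built from $G_0$ alone) act only at site $i$. If $j\ne i$, they commute with the locally-traceless structure at $j$, so $X_j$ stays a valid plan piece at $j$; however, the first-order map $I+\delta(\cdot)$ is not contractive in trace norm. To handle this I would use the fact that $e^{\delta\calL_{i,0}}$ is a channel, which costs nothing. The difference $I+\delta\calL_{i,0}-e^{\delta\calL_{i,0}}$ is $O(\delta^2\|\calL_{i,0}\|_\diamond^2)$. Since $\|K_0^P\|\le3\log(2\sqrt{\Delta_i\beta})$ and the dissipator has norm $O(1)$, this error produces the $c_i\delta^2 I$ term, whose form $(\log(2\sqrt{\Delta_i\beta})+4)^2$ is consistent with this. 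If $j=i$, the strictly local dissipative step contracts. Writing $X_i=\sigma_X\otimes B_X+\sigma_Y\otimes B_Y+\sigma_Z\otimes B_Z$ in the eigenbasis of $V_i$, the one-site Lindbladian $\sum_P\calL_{\mathrm{diss},0}^P$ acts diagonally on the Pauli sectors with rates $\int\gamma_i(a^2+b^2+2c^2)$ and $\int\gamma_i(2a^2+2b^2)$, where $a,b,c$ are shifted Gaussians $\widehat f_i$. For the coherent $K_0$ I would argue that it commutes with $V_i$, or at least only rotates within sectors, so that it preserves trace norm to first order.

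The main obstacle is the contraction computation. For the case split $\Delta_i=1/\beta$ versus $\Delta_i=h$, one computes the Gaussian overlap $\int\gamma_i(\omega)\widehat f_i(\omega+h)^2\diff\omega$ in closed form, and it is at least $e^{-1/4}/\sqrt2$. This gives $\|X_i+\delta\calL_{i,0}X_i\|_1\le(1-\delta e^{-1/4}/\sqrt2)\|X_i\|_1+O(\delta^2)$, valid since $\delta\le1/(6\sqrt2)$ keeps the rates times $\delta$ below $1$. The delicate part is converting the sector-wise contraction into a trace-norm contraction when the $B_P$ are arbitrary. I would try writing $X_i+\delta\calL_{i,0}X_i$ as a convex-type combination: $(1-\delta c_{\mathrm{diag}})X_i$ plus $\delta$ times a map which is completely positive and trace-annihilating on locally traceless inputs. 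This matches the $-c_{\mathrm{diag}}E_{\{i\}}$ coefficient.
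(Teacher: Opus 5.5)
There is a genuine gap in how you treat shells that do not reach the input site $j$. For $1\le r<\dist(i,j)$, and in the dissipative part for pairs with $\max(r,r')<\dist(i,j)$, you charge terms such as $-\ii\delta[K_r^P,X_j]$ at first order to $\ball(\{i\},r)\cup\{j\}$. You put the $\{j\}$ part ``into the diagonal'' and the rest into the $E_{\ball(\{i\},r)}$ tail. The target matrix allows neither.

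When $j\notin\ball(\{i\},r)$ we have $E_{\ball(\{i\},r)}e_j=0$. For $j\neq i$, the $j$-th entry of $((1+c_i\delta^2)I+\delta Q^{(i)})e_j$ is $1+c_i\delta^2+120\,\delta\sum_{r\ge \dist(i,j)}(6\beta\zeta)^r$. So there is no first-order room for a charge of size $\delta(6\zeta\beta)^r x_j$ with $r<\dist(i,j)$; the diagonal has only a second-order term $c_i\delta^2$ to absorb it.

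This is not cosmetic. In the Dobrushin step, each of the $\Theta(n)$ sites $i$ far from $j$ would add such a charge to column $j$. That gives a first-order excess of order $\delta\zeta\beta$, which swamps the $-\delta/(2n)$ contraction.

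The missing idea is the one you already use for $r=0$, applied to all shells that miss $j$ at once; this is exactly the paper's in/out split. Set $K^P_{\mathrm{out}}=\sum_{r:\,j\notin\ball(\{i\},r)}K_r^P$ and define $G^{P,\omega}_{\mathrm{out}}$ analogously; both include $r=0$ when $j\neq i$. Then keep
\[
X_j-\ii\delta\sum_P[K^P_{\mathrm{out}},X_j]+\delta\sum_P\int\gamma_i(\omega)\,D_{G^{P,\omega}_{\mathrm{out}}}(X_j)\,\diff\omega
\]
as a single piece, where $D_A(X)=AXA^\dagger-\frac{1}{2}\{A^\dagger A,X\}$.
\begin{itemize}
\item By the module property its $\tr_j$ vanishes, so it is charged to $j$ alone at cost half its trace norm.
\item That cost is $(1+O(\delta^2))x_j$. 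The paper gets this from \Cref{lem:one_norm_lindbladian}.
\item Your channel argument also works, since this generator is itself of GKSL form (each $K_r^P$ is Hermitian). Use the integral remainder $\int_0^\delta(\delta-s)\,\calL_{\mathrm{out}}^2e^{s\calL_{\mathrm{out}}}\,\diff s$ with $\|e^{s\calL_{\mathrm{out}}}\|_\diamond=1$. This avoids a factor $e^{\delta\|\calL_{\mathrm{out}}\|_\diamond}$ that would not be uniform in $\Delta_i$.
\item Only the remaining cross terms, each involving at least one shell through $j$, are charged at first order. They land on a ball containing $j$, exactly as $E_{\ball(\{i\},r)}e_j$ permits.
\end{itemize}

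Separately, for $j=i$ the paper does not need your sector claim about $K_0^P$. It writes $\Phi_i=\frac{1}{2}(\calI+2\delta\,\calL_{\mathrm{coh}})+\frac{1}{2}(\calI+2\delta\,\calL_{\mathrm{diss}})$. The coherent half is near-isometric, and the dissipative half contracts strictly by \Cref{lem:diss_contraction} and \Cref{lem:pauli_channel_norm}. This halving is why the theorem requires $\delta\le 1/(6\sqrt2)$ rather than $1/(3\sqrt2)$.

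Your claim does hold, trivially: $K_0^P$ is the coherent term of the one-qubit dynamics generated by $V_i$. It vanishes because $b_1$ is odd, so $\widehat b_1(0)=0$, and because $\sigma_\pm^2=0$.
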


\subsection{Properties of transport plans}

Before diving into the proofs, we recall some basic but useful properties of transport plans. First, transport plans are linear and the cost vectors satisfy triangle inequality. 

\begin{proposition}[Linearity of transport plans]\label{fact:linear_transport_plan}
    If two traceless Hermitian matrices $X$ and $Y$ have transport plans $(X_i)_{i\in [n]}$ and $(Y_i)_{i\in [n]}$ with cost vectors $x$ and $y$, then for all $\alpha, \beta \in \R$, the matrix $\alpha X + \beta Y$ has a transport plan $(\alpha X_i + \beta Y_i)_{i\in [n]}$ with a cost vector $z$ which satisfies $z \leq_v \abs{\alpha}\cdot x+ \abs{\beta}\cdot y$.
\end{proposition}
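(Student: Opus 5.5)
The plan is to check the two requirements in \Cref{def:transport_plan_cost_vec} directly for the candidate plan $(\alpha X_i + \beta Y_i)_{i\in[n]}$, and then bound its cost vector entrywise with the triangle inequality for the Schatten-$1$ norm. Only linearity of sums, partial traces and traces is needed; no earlier result beyond the definition is used.

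\textbf{Step 1: it is a decomposition of $\alpha X + \beta Y$.} Since $X = \sum_i X_i$ and $Y = \sum_i Y_i$, linearity of summation gives
\begin{equation*}
\alpha X + \beta Y = \sum_{i=1}^n \Paren{\alpha X_i + \beta Y_i}\,.
\end{equation*}
The matrix $\alpha X + \beta Y$ is itself Hermitian and traceless, because $\alpha,\beta$ are real and $X, Y$ are Hermitian and traceless.

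\textbf{Step 2: each piece is admissible.} For each $i$, the real combination $\alpha X_i + \beta Y_i$ of Hermitian matrices is Hermitian. Its trace is $\alpha \tr(X_i) + \beta\tr(Y_i) = 0$. The partial trace $\tr_i$ is linear (it is defined as the unique linear map in \Cref{def:partial-trace}), so
\begin{equation*}
\tr_i(\alpha X_i + \beta Y_i) = \alpha\tr_i(X_i) + \beta\tr_i(Y_i) = 0\,.
\end{equation*}
Hence $(\alpha X_i + \beta Y_i)_{i\in[n]}$ is a transport plan for $\alpha X + \beta Y$.

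\textbf{Step 3: the cost bound.} For each $i$, the triangle inequality and homogeneity of the trace norm give
\begin{equation*}
z_i = \tfrac12\norm{\alpha X_i + \beta Y_i}_1 \le \abs{\alpha}\cdot\tfrac12\norm{X_i}_1 + \abs{\beta}\cdot\tfrac12\norm{Y_i}_1 = \abs{\alpha} x_i + \abs{\beta} y_i\,.
\end{equation*}
This is exactly the entrywise statement $z \leq_v \abs{\alpha}\cdot x + \abs{\beta}\cdot y$.

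There is no real obstacle in this argument. The only points to be careful about are that the partial trace is linear and that the coefficients are real, so that Hermiticity of each piece is preserved.
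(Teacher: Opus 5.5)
Your proof is correct. The paper states this proposition without proof as a basic property of transport plans, and your direct check is exactly the intended argument: linearity of the sum and of $\tr_i$, realness of $\alpha,\beta$ for Hermiticity, and the triangle inequality for $\norm{\cdot}_1$ for the entrywise cost bound.
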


Next, if an operator vanishes after tracing out a subsystem $S$, then it can be charged entirely to $S$. In particular, the condition $\tr_S(X)=0$ implies that the operator $X$ is ``invisible'' when we forget all the sites in $S$, i.e $X$ has no influence on sites outside $S$. 

\begin{proposition}[{\cite[Proposition 2]{dmtl21}}]\label{lem:cost_vec_for_partial_trace_zero}
    Any traceless Hermitian matrix $X$ with $\tr_i(X) = 0$ for some $i\in [n]$ has a transport plan with cost vector $\frac12 \norm{X}_1 \cdot e_i$. 
\end{proposition}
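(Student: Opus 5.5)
The plan is to exhibit the transport plan explicitly: put all of $X$ on site $i$ and nothing anywhere else. Concretely, define $X_i \coloneqq X$ and $X_j \coloneqq 0$ for every $j \neq i$, and then check that this family satisfies each requirement of \Cref{def:transport_plan_cost_vec}.

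The checks, in order, are as follows.
\begin{enumerate}
\item The decomposition holds: $\sum_{j=1}^n X_j = X$.
\item Each $X_j$ is Hermitian and traceless. For $j = i$ this is the hypothesis on $X$; for $j \neq i$ it holds trivially, since the zero matrix is Hermitian, traceless, and satisfies $\tr_j(0) = 0$.
\item The local tracelessness at site $i$ holds: $\tr_i(X_i) = \tr_i(X) = 0$, which is exactly the assumption. This is the only place the hypothesis $\tr_i(X)=0$ is used.
\end{enumerate}

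It then remains to read off the cost vector. The entries are $x_i = \frac12\norm{X_i}_1 = \frac12\norm{X}_1$ and $x_j = \frac12\norm{0}_1 = 0$ for $j \neq i$. Hence $x = \frac12\norm{X}_1 \cdot e_i$, as claimed.

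I do not expect any real obstacle here. The one point to keep straight is the convention in \Cref{def:transport_plan_cost_vec}: each piece $X_j$ only needs to vanish under the partial trace over its own site $j$, not over any other site. So assigning the whole operator to the single site $i$ on which its partial trace vanishes is legitimate, and no finer decomposition is needed.
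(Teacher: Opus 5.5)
Your proof is correct: under the paper's definition of a transport plan, the assignment $X_i = X$ and $X_j = 0$ for $j \neq i$ meets every requirement and has cost vector $\tfrac12\|X\|_1 e_i$. The paper does not prove this statement itself but cites De Palma et al.\ (Proposition 2), and your construction is the immediate argument from the definition, so there is nothing to add.
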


\begin{proposition}[{\cite[Proposition 5]{dmtl21}}]
\label{lem:wass_norm_commutator}
    Any traceless Hermitian matrix $X$ with $\tr_S(X) = 0$ for some $S\subseteq [n]$ has a transport plan with cost vector entry-wise upper bounded by $\norm{X}_1 \cdot e_S$.
\end{proposition}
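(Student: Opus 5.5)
The plan is to build the transport plan explicitly by peeling off the sites of $S$ one at a time, using a telescoping sum of ``partially depolarized'' copies of $X$. Enumerate $S=\{s_1,\dots,s_k\}$ and, for $0\le j\le k$, set $T_j=\{s_1,\dots,s_j\}$ and
\begin{equation*}
    Y_j \coloneqq \frac{I_{T_j}}{2^j}\otimes \tr_{T_j}(X),
\end{equation*}
so that $Y_0=X$. The hypothesis $\tr_S(X)=0$ gives $Y_k=0$. Define $X_{s_j}\coloneqq Y_{j-1}-Y_j$ for $j\in[k]$, and $X_i\coloneqq 0$ for $i\notin S$. Then $\sum_i X_i=Y_0-Y_k=X$ by telescoping. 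Each $X_{s_j}$ is Hermitian, because partial traces and tensoring with the identity preserve Hermiticity.

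The key verification is the local tracelessness condition $\tr_{s_j}(X_{s_j})=0$ from \Cref{def:transport_plan_cost_vec}. Observe that $Y_j=\frac{I_{s_j}}{2}\otimes \tr_{s_j}(Y_{j-1})$, since the partial traces over disjoint sites commute and the factor $I_{T_{j-1}}/2^{j-1}$ passes through unchanged. Tracing out $s_j$ then gives $\tr_{s_j}(Y_j)=\tr_{s_j}(Y_{j-1})$, so $\tr_{s_j}(X_{s_j})=0$. In particular each $X_{s_j}$ is traceless. Hence $(X_i)_{i\in[n]}$ is a valid transport plan. (Alternatively, one could apply \Cref{lem:cost_vec_for_partial_trace_zero} to each $X_{s_j}$ separately and combine the results with \Cref{fact:linear_transport_plan}. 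The direct construction already suffices.)

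It remains to bound the costs. For $i\notin S$ the cost is $0$. For $i=s_j$ the triangle inequality gives
\begin{equation*}
    \tfrac12\norm{X_{s_j}}_1\le \tfrac12\Paren{\norm{Y_{j-1}}_1+\norm{Y_j}_1}\le \norm{X}_1 .
\end{equation*}
The last inequality uses two standard facts. First, the partial trace is contractive in trace norm, $\norm{\tr_A Z}_1\le\norm{Z}_1$; this follows by duality, since $\norm{M\otimes I}=\norm{M}$. Second, $\norm{(I_A/2^{|A|})\otimes Z}_1=\norm{Z}_1$. Together these show $\norm{Y_j}_1\le\norm{X}_1$ for every $j$, so the cost vector is entrywise at most $\norm{X}_1\cdot e_S$, as claimed.

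The only step needing any care is the identity $\tr_{s_j}(Y_j)=\tr_{s_j}(Y_{j-1})$. That identity relies on the ordering of the partial traces and on the normalization of the identity factor. Everything else is routine.
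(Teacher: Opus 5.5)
Your proof is correct and complete: the telescoping decomposition $X_{s_j}=Y_{j-1}-Y_j$ through the partially depolarized operators $Y_j=\frac{I_{T_j}}{2^j}\otimes\tr_{T_j}(X)$ gives a valid transport plan, using the identity $\tr_{s_j}(Y_j)=\tr_{s_j}(Y_{j-1})$ and the cost bound $\tfrac12\norm{Y_{j-1}-Y_j}_1\le\norm{X}_1$. The paper does not prove this statement itself; it imports it from De Palma et al.~\cite{dmtl21}, and your telescoping construction appears to be the standard argument behind that result.
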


Next, we note that a Lindbladian-like map $\Psi$ that acts on operators $K$ and $L$  has a transport plan whose cost vector is supported on $\supp(K)\cup\supp(L)$, with magnitude controlled by the operator norms of $K$ and $L$. \cite{blmt2025dobrushin} formalize this in 
\Cref{lem:wass_norm_symmetric_map} stated below, and it is proved using \Cref{lem:wass_norm_commutator}.

\begin{lemma}[Cost vector after a few-site map, {\cite[Lemma 4.16]{blmt2025dobrushin}}]\label{lem:wass_norm_symmetric_map}
    For any matrices $K$ and $L$, 
    consider a map $\Psi(\rho) = K \rho L^\dagger + L \rho K^\dagger - \frac12 \{L^\dagger K + K^\dagger L, \rho\}$. 
    If $X$ is a traceless Hermitian matrix, then $\Psi(X)$ has a transport plan whose cost vector is entry-wise upper bounded by 
    \begin{equation*}
        4 \norm{K}\cdot \norm{L}\cdot \norm{X}_1 \cdot e_{\supp(K) \cup \supp(L)}. 
    \end{equation*}
\end{lemma}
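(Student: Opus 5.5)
The plan is to reduce the statement to \Cref{lem:wass_norm_commutator} with the region $S \coloneqq \supp(K)\cup\supp(L)$. That proposition needs three inputs about $\Psi(X)$: it is traceless, it is Hermitian, and $\tr_S(\Psi(X))=0$. Once these hold, it gives a transport plan whose cost vector is entry-wise at most $\norm{\Psi(X)}_1\cdot e_S$. The lemma then follows from a trace-norm bound on $\Psi(X)$.

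\emph{Hermiticity.} Since $X=X^\dagger$, we have $(KXL^\dagger)^\dagger = LXK^\dagger$, so the first two terms are swapped by the adjoint. The operator $A\coloneqq L^\dagger K+K^\dagger L$ is Hermitian, hence $\{A,X\}$ is Hermitian.

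\emph{Vanishing partial trace.} The key tool is a partial-trace cyclicity for operators supported on $S$. If $B,C$ act only on $S$, then $\tr_S(BYC)=\tr_S(CBY)=\tr_S(YCB)$ for every $Y$. To see this, pair both sides with an arbitrary $M$ on $S^c$ as in \Cref{def:partial-trace} (with the roles of $S$ and $S^c$ exchanged), and use that $M\otimes I_S$ commutes with $B$ and $C$. Applying this gives
\begin{equation*}
\tr_S(KXL^\dagger)=\tr_S(L^\dagger K X), \qquad \tr_S(LXK^\dagger)=\tr_S(K^\dagger L X),
\end{equation*}
and also $\tfrac12\tr_S(\{A,X\})=\tr_S(AX)$. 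Therefore $\tr_S(\Psi(X))=\tr_S(AX)-\tr_S(AX)=0$. Taking the full trace then shows $\Psi(X)$ is traceless as well. The degenerate case $S=\emptyset$ needs one check: there $K=aI$ and $L=bI$, and a direct computation gives $\Psi(X)=(a\bar b+b\bar a)X-(\bar b a+\bar a b)X=0$, so the claim is trivial.

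\emph{Trace-norm bound.} By Hölder's inequality, $\norm{KXL^\dagger}_1\le \norm{K}\norm{L}\norm{X}_1$, and the same bound holds for $LXK^\dagger$. Since $\norm{A}\le 2\norm{K}\norm{L}$, we get $\tfrac12\norm{\{A,X\}}_1\le \norm{A}\norm{X}_1\le 2\norm{K}\norm{L}\norm{X}_1$. Summing the three contributions gives $\norm{\Psi(X)}_1\le 4\norm{K}\norm{L}\norm{X}_1$.

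Combining this with \Cref{lem:wass_norm_commutator} yields a transport plan with cost vector entry-wise at most $4\norm{K}\norm{L}\norm{X}_1\, e_{\supp(K)\cup\supp(L)}$, which is the claim. The only step that is not routine bookkeeping is the partial-trace cyclicity identity, and it is a direct consequence of the defining property of $\tr_S$.
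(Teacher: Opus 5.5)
Your proof is correct and follows the same route the paper points to (inherited from \cite{blmt2025dobrushin}). You set $S=\supp(K)\cup\supp(L)$, show $\tr_S(\Psi(X))=0$ by partial-trace cyclicity for operators supported on $S$, bound $\norm{\Psi(X)}_1\le 4\norm{K}\norm{L}\norm{X}_1$ via H\"older, and then invoke \Cref{lem:wass_norm_commutator}.
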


We will also frequently use the following two properties of partial trace, which we abstract out below.
\begin{proposition}[Partial trace identities]\label{prop:partial_trace_identities}
    Given $S\subseteq [n]$. Suppose $Y_S$ is an operator on $S$ and $Z_{S^c}$ is an operator on $S^c = [n]\setminus S$. Then for any operator $X$ on $[n]$, 
    \begin{align*}
        \tr_S( (Z_{S^c} \otimes I_S) \cdot X) = Z_{S^c} \cdot \tr_S(X) & \tag{Module property of partial trace}
    \end{align*}
    and
    \begin{align*}
        \tr_S((I_{S^c} \otimes Y_S) \cdot X) = \tr_S(X \cdot (I_{S^c} \otimes Y_S)) &\tag{Cyclicity of partial trace}. 
    \end{align*}
\end{proposition}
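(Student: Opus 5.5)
Both identities are linear in $X$, so we reduce to product operators and then use the explicit action of $\tr_S$ on products. Operators of the form $X = A_{S^c}\otimes B_S$ span $\mathcal L(\mathcal H)$, where $A_{S^c}$ acts on $S^c$ and $B_S$ acts on $S$. Both sides of each identity are linear in $X$: the partial trace is linear, and so are left multiplication by $Z_{S^c}\otimes I_S$ and left or right multiplication by $I_{S^c}\otimes Y_S$. It therefore suffices to verify both identities for such product operators and then extend by linearity.

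The first step is the formula $\tr_S(A_{S^c}\otimes B_S) = \tr(B_S)\,A_{S^c}$. It follows directly from the characterization in \Cref{def:partial-trace}, with the roles of $S$ and $S^c$ swapped. For every test operator $M$ on $S^c$,
\begin{equation*}
\Tr\big((M\otimes I_S)(A_{S^c}\otimes B_S)\big) = \Tr(MA_{S^c})\,\Tr(B_S) = \Tr\big(M\cdot \Tr(B_S)A_{S^c}\big).
\end{equation*}
By the uniqueness clause of the definition, $\tr_S(A_{S^c}\otimes B_S) = \Tr(B_S)A_{S^c}$.

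With this formula both identities are immediate. For the module property, $(Z_{S^c}\otimes I_S)(A_{S^c}\otimes B_S) = (Z_{S^c}A_{S^c})\otimes B_S$. Its partial trace is $\Tr(B_S)\,Z_{S^c}A_{S^c}$, which equals $Z_{S^c}\,\tr_S(A_{S^c}\otimes B_S)$. For cyclicity, $(I_{S^c}\otimes Y_S)(A_{S^c}\otimes B_S) = A_{S^c}\otimes Y_SB_S$ and $(A_{S^c}\otimes B_S)(I_{S^c}\otimes Y_S) = A_{S^c}\otimes B_SY_S$. Their partial traces are $\Tr(Y_SB_S)A_{S^c}$ and $\Tr(B_SY_S)A_{S^c}$, and these agree by cyclicity of the ordinary trace.

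There is no real obstacle. The only point needing care is that \Cref{def:partial-trace} is stated for $\Tr_{S^c}$, so one has to apply it with $S$ and $S^c$ interchanged and invoke its uniqueness to pin down the action on product operators. Everything after that is bookkeeping.
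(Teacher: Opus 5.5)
Your proof is correct. The paper gives no proof of this proposition; it records both identities as standard facts. Your argument is the standard verification: reduce to product operators $A_{S^c}\otimes B_S$ by linearity, identify $\tr_S(A_{S^c}\otimes B_S)=\Tr(B_S)A_{S^c}$ from the defining duality in \Cref{def:partial-trace}, and finish with two one-line computations.

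One small precision. The uniqueness of the map is not quite what pins down $\tr_S(A_{S^c}\otimes B_S)$ from a check on that single operator. What does is the nondegeneracy of the trace pairing: if $\Tr(MY)=0$ for all $M$, then $Y=0$ (take $M=Y^\dagger$). This lets the defining identity determine $\tr_S(X)$ for each $X$ separately.

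Once you use that fact, you can skip the product decomposition entirely. For every test operator $M$ on $S^c$, the computation $\Tr\big((M\otimes I_S)(Z_{S^c}\otimes I_S)X\big)=\Tr\big(MZ_{S^c}\,\tr_S(X)\big)$ gives the module property. For cyclicity, note that $M\otimes I_S$ commutes with $I_{S^c}\otimes Y_S$, so cyclicity of the full trace gives $\Tr\big((M\otimes I_S)(I_{S^c}\otimes Y_S)X\big)=\Tr\big((M\otimes I_S)X(I_{S^c}\otimes Y_S)\big)$, which is the second identity. Your route is more concrete and makes the formula on product operators explicit. The duality route is shorter and never needs the spanning argument.
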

The module property captures the intuition that after we trace out the subsystem $S$, anything acting on the complementary subsystem remains unchanged. The cyclicity property allows us to cyclically permute operators that only act on the $S$ subsystem.\footnote{We remind the readers to be careful when using the properties of partial trace. For example, it might be tempting to believe that if $\tr_S(X) = 0$, then we will always have $\tr_S((I_{S^c} \otimes Y_S) \cdot X) = 0$ because $Y_S$ is supported on $S$. This is not true. A simple counterexample is to take $Y_S = \sigma_Z$ and $X = I_{S^c} \otimes \sigma_Z$. }

We also require the following bound on the trace norm of the map $\Phi$, where $\Phi$ maps $X$ to a one-step Lindbladian-type update:
\begin{lemma}[{\cite[Lemma 3.6 and Corollary 3.8]{blmt2025dobrushin}}]\label{lem:one_norm_lindbladian}
    Consider the map $\Phi(X) = X - \ii [G, X] + KX K^\dagger - \frac12 \{K^\dagger K, X\}$. Then 
    \begin{align*}
        \norm{\Phi(X)}_1\leq \left(1+ 2\norm{G}^2 + \frac12 \norm{K}^4 + 2 \norm{G} \cdot \norm{K}^2\right)\cdot \norm{X}_1 .
    \end{align*}
\end{lemma}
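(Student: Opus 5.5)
The plan is to avoid the naive triangle-inequality bound, which would give first-order terms $2\norm{G}+2\norm{K}^2$. Instead I will show that the first-order part of $\Phi$ is trace-norm non-expanding, so only second-order terms survive. The natural tool is complete positivity. A completely positive map $X\mapsto \sum_k A_k X A_k^\dagger$ has $1\to 1$ norm at most $\Norm{\sum_k A_k^\dagger A_k}$ on all matrices, by duality with its unital-type adjoint $Y\mapsto \sum_k A_k^\dagger Y A_k$ in operator norm. So I will rewrite $\Phi$ as a Kraus-form map plus a small correction.

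Set $B \coloneqq -\ii G - \tfrac12 K^\dagger K$ and $A \coloneqq I + B$. Expanding $AXA^\dagger = X + BX + XB^\dagger + BXB^\dagger$ gives $BX + XB^\dagger = -\ii[G,X] - \tfrac12\{K^\dagger K, X\}$. Hence
\begin{equation*}
    \Phi(X) = AXA^\dagger + KXK^\dagger - BXB^\dagger .
\end{equation*}
The map $\Psi(X) = AXA^\dagger + KXK^\dagger$ is completely positive with
\begin{equation*}
    A^\dagger A + K^\dagger K = I + B + B^\dagger + B^\dagger B + K^\dagger K = I + B^\dagger B,
\end{equation*}
since $B + B^\dagger = -K^\dagger K$. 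Therefore $\norm{\Psi(X)}_1 \le \Norm{I + B^\dagger B}\,\norm{X}_1 \le (1+\norm{B}^2)\norm{X}_1$. For the correction, Hölder gives $\norm{BXB^\dagger}_1 \le \norm{B}^2\norm{X}_1$.

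Combining the two estimates,
\begin{equation*}
    \norm{\Phi(X)}_1 \le (1+2\norm{B}^2)\norm{X}_1 .
\end{equation*}
Using $\norm{B} \le \norm{G} + \tfrac12\norm{K}^2$, we get $2\norm{B}^2 \le 2\norm{G}^2 + 2\norm{G}\norm{K}^2 + \tfrac12\norm{K}^4$, which is exactly the claimed constant.

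The only step needing care is the trace-norm bound for completely positive maps with a non-unit Kraus sum. I will justify it by duality: for Hermitian $X$, $\norm{\Psi(X)}_1 = \sup_{\norm{Y}\le 1}\abs{\tr(Y\Psi(X))} = \sup_{\norm{Y}\le 1}\abs{\tr(\Psi^\dagger(Y)X)}$. Then I use $\norm{\Psi^\dagger(Y)} \le \norm{\Psi^\dagger(I)}\,\norm{Y}$, which is the Russo–Dye inequality for positive maps, together with $\Psi^\dagger(I) = A^\dagger A + K^\dagger K$.
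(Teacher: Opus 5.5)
Your proof is correct. The paper gives no proof of its own and imports the lemma from BLMT. Your decomposition $\Phi(X)=(I+B)X(I+B)^\dagger+KXK^\dagger-BXB^\dagger$ with $B=-\ii G-\tfrac12K^\dagger K$, together with the completely positive bound $\|I+B^\dagger B\|=1+\|B\|^2$, gives exactly $1+2(\|G\|+\tfrac12\|K\|^2)^2$, which is the stated constant and is consistent with the cited result.

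You should state explicitly that you use $G=G^\dagger$. This is implicit in the statement and holds in every application in the paper, and both $BX+XB^\dagger=-\ii[G,X]-\tfrac12\{K^\dagger K,X\}$ and $B+B^\dagger=-K^\dagger K$ rely on it.
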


\subsection{Proof idea}

As the Lindbladian evolution $\calL^*_i$ at site $i$ can be split into a coherent component and dissipative component, the map $\Phi_i$ admits a corresponding decomposition using $\Phi_{i, \mathrm{cohe}}$ and $\Phi_{i, \mathrm{diss}}$. 
We then prove the update matrices of $\Phi_{i, \mathrm{cohe}}$ and $\Phi_{i, \mathrm{diss}}$ admit a weighted decomposition, where the weights decay as a function of the radius $r$ around site $i$  (\Cref{prop:wass_cohe} and \Cref{prop:wass_diss}). 
Then the update matrix of $\Phi_i$ follows from the linearity of transport plans (\Cref{fact:linear_transport_plan})
the fact that both $\Phi_{i, \mathrm{cohe}}$ and $\Phi_{i, \mathrm{diss}}$ are linear maps. 

\begin{proof}[Proof of \Cref{thm:wass_growth_per_site}]
    For any traceless Hermitian matrix $X$, let us split 
    \begin{align*}
        \Phi_i(X) &= X + \delta \sum_{P\in \{\sigma_X^{(i)}, \sigma_Y^{(i)}, \sigma_Z^{(i)}\}} \calL^P(X) \\
        &= \frac12 \left(X - 2\delta \ii \sum_{P\in \{\sigma_X^{(i)}, \sigma_Y^{(i)}, \sigma_Z^{(i)}\}} [C^P, X] \right) \\
        & \quad + \frac12 \left(X + 2\delta \sum_{P\in \{\sigma_X^{(i)}, \sigma_Y^{(i)}, \sigma_Z^{(i)}\}} \int_{-\infty}^{\infty} \gamma(\omega) \left( A^P(\omega) X A^P(\omega)^\dagger - \frac12 \left\{A^P(\omega)^\dagger A^P, X\right\} \right) \diff \omega\right). 
    \end{align*}
    Using the linearity of transport plans (\Cref{fact:linear_transport_plan}), we conclude the proof by combining \Cref{prop:wass_cohe} and \Cref{prop:wass_diss}. 
\end{proof}

The first step in proving both \Cref{prop:wass_cohe} and \Cref{prop:wass_diss} is to reduce to the case of ``normalized transport plans.''
Suppose that $Q$ is a matrix that satisfies the following. 
For any $j\in [n]$ and any traceless Hermitian matrix $X$ where $\tr_j(X) = 0$ and $\norm{X}_1 = 2$, i.e. $X$ has a transport plan with cost vector $e_j$ by \Cref{lem:cost_vec_for_partial_trace_zero}, the matrix $\Phi(X)$ has a transport plan with cost vector entry-wise upper bounded by $Qe_j$. 
Then by the linearity of transport plans in \Cref{fact:linear_transport_plan} and the fact that $\Phi$ is a linear map, $Q$ must be an update matrix of $\Phi$. 

So now we can fix some $j\in [n]$. 
To bound the cost vector of some transport plan of $\Phi(X)$, we frequently adopt the following strategy. 
\begin{itemize}
    \item Split $\Phi(X)$ into a sum of few terms and use
    the partial trace identities in \Cref{prop:partial_trace_identities} to show that for each term, taking the partial trace over an appropriate subsystem $S\subseteq [n]$ yields $0$. 
    \item Then by either \Cref{lem:cost_vec_for_partial_trace_zero} or \Cref{lem:wass_norm_commutator}, each term has a transport plan with cost vector entry-wise upper bounded by $\norm{ \cdot }_1 \cdot e_{S}$. 
    \item Bound the corresponding $1$-norms and conclude the cost vector of some transport plan of $\Phi(X)$ using the linearity of transport plans (\Cref{fact:linear_transport_plan}). 
\end{itemize}

The proof of the dissipative evolution in \Cref{prop:diss_growth} is more complicated than the proof of the coherent evolution in \Cref{prop:wass_cohe}. This is because for the dissipative evolution at site $i$, we have to further consider the cases where $j=i$ and $j\neq i$. Crucially, the rapid mixing property comes from the         ``contraction'' which happens in the case of $j=i$ during the dissipative evolution.  

\subsection{Update matrix of the coherent evolution}

We begin by analyzing the update matrix for the coherent evolution. 

\begin{lemma}[Update matrix of the coherent evolution]\label{prop:wass_cohe}
Assume $0<\beta \leq 1/(12\zeta)$. 
For each $i\in [n]$, define
\begin{equation*}\label{eq:cohe_map}
    \Phi_{i, \mathrm{cohe}}(\rho) \coloneqq \rho - \ii \delta \sum_{P\in \{\sigma_X^{(i)}, \sigma_Y^{(i)}, \sigma_Z^{(i)}\}} [C^P, \rho],
\end{equation*}
where $C^P$ is the coherent term as defined in \Cref{def:field_lindblad}. 
Then $\Phi_{i, \mathrm{cohe}}$ has an update matrix of the form
\begin{equation*}
    \left(1+ 162 (\log(2\sqrt{\Delta_{i} \beta }) + 4)^2 \cdot \delta^2 \right)I + \delta Q_{\mathrm{cohe}}^{(i)}, \quad \text{where} \quad Q_{\mathrm{cohe}}^{(i)} = 72 \sum_{r\geq 1} (6\beta \zeta)^r \cdot E_{\ball(\{i\}, r)} \ .
\end{equation*} 
\end{lemma}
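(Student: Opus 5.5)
The plan is to reduce to normalized transport plans and then treat the strictly on-site part of the coherent term separately from its quasi-local tail. As explained after the proof of \Cref{thm:wass_growth_per_site}, by linearity (\Cref{fact:linear_transport_plan}) it suffices to fix $j\in[n]$ and a traceless Hermitian $X$ with $\tr_j(X)=0$ and $\norm{X}_1=2$. It then suffices to show that $\Phi_{i,\mathrm{cohe}}(X)$ has a transport plan with cost vector at most
\[
\left(1+162(\log(2\sqrt{\Delta_i\beta})+4)^2\delta^2\right)e_j+\delta Q^{(i)}_{\mathrm{cohe}}e_j .
\]
We expand each $C^P=\sum_{r\ge0}K_r^P$ using \Cref{prop:coherent-quasi-local}, where $K_r^P$ is Hermitian, supported on $\ball(\{i\},r)$, with $\norm{K_0^P}\le 3\log(2\sqrt{\Delta_i\beta})$ and $\norm{K_r^P}\le 6(6\zeta\beta)^r$ for $r\ge1$. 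Write $A\coloneqq\delta\sum_P K_0^P$, a Hermitian operator supported on site $i$ with $\norm{A}\le 9\delta\log(2\sqrt{\Delta_i\beta})$. This gives the split
\[
\Phi_{i,\mathrm{cohe}}(X)=\underbrace{X-\ii[A,X]}_{\text{on-site part}}-\ii\delta\sum_P\sum_{r\ge1}[K_r^P,X].
\]

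\textbf{Tail terms ($r\ge1$).} For $B=\ball(\{i\},r)$, cyclicity of the partial trace (\Cref{prop:partial_trace_identities}) gives $\tr_B([K_r^P,X])=0$, because $K_r^P$ is supported on $B$. Hence \Cref{lem:wass_norm_commutator} charges this term to $e_B$ with weight
\[
\norm{[K_r^P,X]}_1\le 2\norm{K_r^P}\norm{X}_1\le 24(6\zeta\beta)^r .
\]
Summing over the three Paulis and multiplying by $\delta$ gives $72\delta(6\beta\zeta)^r e_{\ball(\{i\},r)}$, which is exactly $\delta Q^{(i)}_{\mathrm{cohe}}e_j$, since $E_B e_j\ge_v e_B$ whenever $j\in B$. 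The case $j\notin B$ is the main subtlety, because then $E_B e_j=0$ and this bound is not of the right form. I would handle it by charging to $B\cup\{j\}$ instead, or more simply by noting that the column sums are what matter downstream. Failing that, I would bound the term by $e_{B\cup\{j\}}$ and check whether the stated form absorbs it. This is the point to verify most carefully.

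\textbf{On-site term ($r=0$).} This term cannot be charged linearly in $\delta$: doing so would put a $\delta\log(\Delta_i\beta)$ entry on $e_i$, and no such entry appears in the claimed matrix. Instead I compare $X-\ii[A,X]$ with the unitary conjugation $U X U^\dagger$, where $U=e^{-\ii A}$ acts on site $i$ only.
\begin{itemize}
\item If $j\neq i$, the module property gives $\tr_j(UXU^\dagger)=U\,\tr_j(X)\,U^\dagger=0$.
\item If $j=i$, cyclicity gives $\tr_i(UXU^\dagger)=\tr_i(X)=0$.
\end{itemize}
In both cases $\norm{UXU^\dagger}_1=2$, so by \Cref{lem:cost_vec_for_partial_trace_zero} the conjugated term has cost exactly $e_j$.

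The remainder $Y\coloneqq X-\ii[A,X]-UXU^\dagger$ also satisfies $\tr_j(Y)=0$, by the same two arguments applied to $[A,X]$. A second-order Taylor/Duhamel estimate gives
\[
\norm{Y}_1\le\norm{A}^2\norm{X}_1\le 2\cdot 81\,\delta^2\log^2(2\sqrt{\Delta_i\beta}).
\]
So $Y$ is charged to $e_j$ with weight at most $81\delta^2\log^2(\cdot)\le 162(\log(\cdot)+4)^2\delta^2$, leaving slack for the constants.

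\textbf{Conclusion.} Adding the on-site part and the tail terms via \Cref{fact:linear_transport_plan} gives the claimed update matrix. The only delicate points are the second-order remainder estimate and the $j\notin\ball(\{i\},r)$ bookkeeping for the tail terms.
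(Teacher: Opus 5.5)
There is a genuine gap, at exactly the point you flagged: the tail commutators $[K_r^P,X]$ with $r\ge 1$ and $j\notin\ball(\{i\},r)$. Charging them to $e_{\ball(\{i\},r)}$ puts a first-order cost $72\delta(6\beta\zeta)^r$ on the sites of $\ball(\{i\},r)$. This is not dominated by $(1+c\delta^2)e_j+\delta Q^{(i)}_{\mathrm{cohe}}e_j$. If $\dist(i,j)=d$, that vector carries only first-order weight $72\delta\sum_{r\ge d}(6\beta\zeta)^r$ on those sites, and none at all if $j$ is not connected to $i$.

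Your proposed fixes do not repair this. Enlarging the charge set to $\ball(\{i\},r)\cup\{j\}$ only adds entries, so it cannot help. The appeal to column sums also fails. In \Cref{lem:dobrushin_influence_bound} the column-$j$ sum averages $\norm{\Phi_i(X)}_{W_1}$ over all $n$ sites $i$. A first-order leak of order $\delta\beta\zeta$ from every $i$, however far from $j$, would give a total of order $\delta\beta\zeta$. That swamps the $\Theta(\delta/n)$ diagonal contraction. The whole point of the lemma is that first-order influence comes only from sites $i$ near $j$.

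The missing observation is that these far terms behave like your on-site term, not like the near tail. Since $j\notin\supp(K_r^P)$ and $\tr_j(X)=0$, the module property gives $\tr_j([K_r^P,X])=0$. So group them with $X$: let $G\coloneqq\delta\sum_P\left(K_0^P+\sum_{r\ge1}\Iver{j\notin\ball(\{i\},r)}K_r^P\right)$.

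\begin{itemize}
\item $G$ is Hermitian. When $j\ne i$ it is not supported on $j$; when $j=i$ it equals $\delta\sum_P K_0^P$. In both cases $\tr_j([G,X])=0$.
\item Using $6\zeta\beta\le 1/2$, $\norm{G}\le 9\delta(\log(2\sqrt{\Delta_i\beta})+2)$.
\item Hence $X-\ii[G,X]$ is charged to $e_j$ with weight $\frac12\norm{X-\ii[G,X]}_1\le 1+2\norm{G}^2\le 1+162\delta^2(\log(2\sqrt{\Delta_i\beta})+2)^2$.
\end{itemize}

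You can get this bound either from \Cref{lem:one_norm_lindbladian} with $K=0$, which is what the paper does, or from your own comparison with $e^{-\ii G}Xe^{\ii G}$. The latter goes through unchanged, because the module/cyclicity argument only needs $G$ to avoid $j$ or to sit on $j$ alone. Only the terms with $j\in\ball(\{i\},r)$ are then charged to $e_{\ball(\{i\},r)}$, exactly as you did, which gives $\delta Q^{(i)}_{\mathrm{cohe}}e_j$.

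A minor point: in the unitary comparison the second-order remainder is bounded by $\frac12\norm{[G,[G,X]]}_1\le 2\norm{G}^2\norm{X}_1$, not $\norm{G}^2\norm{X}_1$. The stated constant still absorbs this.
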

\begin{proof}
    Since $\Phi_{i, \mathrm{cohe}}$ is a linear map, by the linearity of transport plans (\Cref{fact:linear_transport_plan}), 
    it suffices to show that for any $j\in [n]$ and any traceless Hermitian matrix $X$ with $\tr_j(X) = 0$ and $\norm{X}_1 = 2$,
    $\Phi_{i, \mathrm{cohe}}(X)$ has a transport plan with cost vector entry-wise upper bounded by 
    \begin{align*}
        \Paren{\left(1+ 162 (\log(2\sqrt{\Delta_i \beta }) + 4)^2 \cdot \delta^2 \right)I + \delta Q_{\mathrm{cohe}}^{(i)}} \cdot e_j \,.
    \end{align*}
    Let us fix $j\in [n]$. It follows from the expansion in \Cref{prop:coherent-quasi-local} that
    \begin{align*}
        [C^P, X] &= \sum_{r\geq 0} [K_r^P, X] \\
        &=  \underbrace{\sum_{r\geq 1} \Iver{j\notin\ball(\{i\}, r)} \cdot [K_r^P, X]}_{  S_1^P} +  \underbrace{[K_0^P, X]}_{  S_2^P} + \underbrace{\sum_{r\geq 1} \Iver{j \in\ball(\{i\}, r)} \cdot [K_r^P, X]}_{  S_3^P}\,, 
    \end{align*}
    where we split the summation into three cases, one where $r=0$, and when $r > 0$, whether site $j$ lies in a ball of radius $r$ or not. 
    We first note that $\tr_j(S_1^P) = 0$. This follows from the module property of partial trace in \Cref{prop:partial_trace_identities}: since $j\notin \supp(K_r^P)$ and $\tr_j(X) = 0$, then 
    \begin{equation*}
        \tr_j(K_r^P X) = 0 = \tr_j(X K_r^P). 
    \end{equation*}
    We also have $\tr_j(S_2^P) = 0$. This is because $K_0^{P}$ is entirely supported on site $i$. 
    \begin{itemize}
        \item If $i\neq j$, meaning that $j$ is not in the support of $K_0^P$, then similarly using the module property of partial trace in \Cref{prop:partial_trace_identities}, we have $\tr_j(K_0^{P} X) = 0 = \tr_j(X K_0^{P} )$. 
        \item If $i=j$, by the cyclicity of partial trace in \Cref{prop:partial_trace_identities}, we have $\tr_j(K_0^P X) = \tr_j (X K_0^P)$ and thus $\tr_j([K_0^{P}, X]) = 0$.\footnote{Note that it is not necessarily true that $\tr_j(K_0^P X) = \tr_j (X K_0^P) = 0$ despite $\tr_j(X) = 0$. }
    \end{itemize}

    Hence in
    \begin{equation}\label{eq:wass_cohe_split}
        \Phi_{i, \mathrm{cohe}}(X) = \underbrace{X - \ii \delta \sum_{P\in \{\sigma_X^{(i)}, \sigma_Y^{(i)}, \sigma_Z^{(i)}\}} \Paren{ S_1^P + S_2^P} }_{  \eqref{eq:wass_cohe_split}.(1)}  \; \; \; \underbrace{- \ii \delta \sum_{P\in \{\sigma_X^{(i)}, \sigma_Y^{(i)}, \sigma_Z^{(i)}\}} S_3^P}_{  \eqref{eq:wass_cohe_split}.(2)}, 
    \end{equation}
    we have $\tr_j(\eqref{eq:wass_cohe_split}.(1)) = 0$. 
    Let us focus on \eqref{eq:wass_cohe_split}.(1) first. 
    By \Cref{lem:cost_vec_for_partial_trace_zero}, 
    the transport plan of \eqref{eq:wass_cohe_split}.(1) has cost vector entry-wise upper bounded by $\frac12 \norm{\eqref{eq:wass_cohe_split}.(1)}_1 \cdot e_j$. By the linearity of the commutator, 
    \begin{align*}
        \eqref{eq:wass_cohe_split}.(1) = X - \ii \delta \sum_{P\in \{\sigma_X^{(i)}, \sigma_Y^{(i)}, \sigma_Z^{(i)}\}} \left[\left(\sum_{r\geq 1} \Iver{j\notin\ball(\{i\}, r)} \cdot K_r^P\right) + K_0^P, X\right].
    \end{align*}
    Then using \Cref{lem:one_norm_lindbladian} and the fact that $\norm{X}_1 = 2$, we have
    \begin{align*}
        \frac12\norm{\eqref{eq:wass_cohe_split}.(1)}_1 &\leq 1+ 2\delta^2 \norm*{\sum_{P\in \{\sigma_X^{(i)}, \sigma_Y^{(i)}, \sigma_Z^{(i)}\}} \Paren{ \left(\sum_{r\geq 1} \Iver{j\notin\ball(\{i\}, r)} \cdot K_r^P\right) + K_0^P }}^2 \\
        &\leq 1+ 2\delta^2\left( \sum_{P\in \{\sigma_X^{(i)}, \sigma_Y^{(i)}, \sigma_Z^{(i)}\}}\sum_{r\geq 0} \norm{K_r^P} \right)^2 \tag{triangle inequality} \\
        &\leq 1+ 2\delta^2\cdot 9\left( 3\log(2\sqrt{\Delta_i \beta }) + \sum_{r\geq 1} 6(6\zeta\beta)^r \right)^2 \tag{\Cref{prop:coherent-quasi-local}} \\
        &\leq 1+ 162 \delta^2 \cdot \Paren{\log(2\sqrt{\Delta_i \beta }) + 4}^2 \tag{$6\zeta\beta\leq 1/2$} \,.
    \end{align*}
    So the transport plan of \eqref{eq:wass_cohe_split}.(1) has cost vector at most $(1+ 162 (\log(2\sqrt{\Delta_i \beta }) + 4)^2 \cdot \delta^2) e_j$. 
    
    We now analyze \eqref{eq:wass_cohe_split}.(2):
    \begin{align*}
        \eqref{eq:wass_cohe_split}.(2) = \sum_{P\in \{\sigma_X^{(i)}, \sigma_Y^{(i)}, \sigma_Z^{(i)}\}} \sum_{r\geq 1} \ii \left[-\delta  \Iver{j \in\ball(\{i\}, r)} \cdot K_r^P, X\right]. 
    \end{align*}
    Since each $K_r^P$ is supported on $\ball(\{i\}, r)$, by the cyclicity of partial trace in \Cref{prop:partial_trace_identities}, we have
    \begin{align*}
        \tr_{\ball(\{i\}, r)}\left(K_r^P X\right) = \tr_{\ball(\{i\}, r)}\left(X K_r^P\right),
    \end{align*}
    which implies that
    \begin{align*}
        \tr_{\ball(\{i\}, r)}\left( \ii \left[-\delta  \Iver{j \in\ball(\{i\}, r)} \cdot K_r^P, X\right] \right) = 0. 
    \end{align*}
    Then by \Cref{lem:wass_norm_commutator}, the traceless Hermitian matrix $\ii \left[-\delta  \Iver{j \in\ball(\{i\}, r)} \cdot K_r^P, X\right]$ 
    has a transport plan with cost vector at most
    \begin{align*}
        \norm*{\ii \left[-\delta  \Iver{j \in\ball(\{i\}, r)} \cdot K_r^P, X\right]}_1 \cdot e_{\ball(\{i\}, r)} \leq_v 2\delta \norm{X}_1\cdot \norm{K_r^P} \cdot E_{\ball(\{i\}, r)}\cdot e_j,
    \end{align*}
    where we use $\Iver{j \in\ball(\{i\}, r)}  \cdot e_{\ball(\{i\}, r)} = E_{\ball(\{i\}, r)}\cdot e_j$ and
    \begin{equation*}
        \norm{[A,X]}_1 \leq \norm{AX}_1 + \norm{XA}_1 \leq 2\norm{X}_1 \cdot \norm{A} .\tag{triangle and Hölder's inequalities}
    \end{equation*}
    Therefore, by the linearity of transport plans (\Cref{fact:linear_transport_plan}), 
    \eqref{eq:wass_cohe_split}.(2) has a transport plan with cost vector entry-wise upper bounded by 
    \begin{align*}
        \sum_{P\in \{\sigma_X^{(i)}, \sigma_Y^{(i)}, \sigma_Z^{(i)}\}} \sum_{r\geq 1} 2\delta \norm{X}_1\cdot \norm{K_r^P} \cdot E_{\ball(\{i\}, r)}\cdot e_j &\leq_v 24\delta \sum_{P\in \{\sigma_X^{(i)}, \sigma_Y^{(i)}, \sigma_Z^{(i)}\}} \sum_{r\geq 1} (6\zeta\beta)^r \cdot E_{\ball(\{i\}, r)} \cdot e_j \\
        &= 72\delta \sum_{r\geq 1} (6\beta \zeta)^r \cdot E_{\ball(\{i\}, r)} \cdot e_j.
    \end{align*}
    This completes the proof. 
\end{proof}

\subsection{Update matrix of the dissipative evolution}
Next, we provide an expression for the update matrix of the dissipative evolution:
\begin{lemma}[Update matrix of the dissipative evolution]\label{prop:wass_diss}
    \label{prop:diss_growth}
    Assume $0<\beta \leq 1/(4\zeta)$. 
    For each $i\in [n]$ and $\delta \in [0, 1/(3\sqrt{2})]$, define
    \begin{equation*}\label{eq:diss_map}
        \Phi_{i, \mathrm{diss}}(\rho) \coloneqq \rho + \delta \sum_{P\in \{\sigma_X^{(i)}, \sigma_Y^{(i)}, \sigma_Z^{(i)}\}} \int_{-\infty}^{\infty} \gamma(\omega) \left( A^P(\omega) \rho A^P(\omega)^\dagger - \frac12 \left\{A^P(\omega)^\dagger A^P, \rho\right\} \right) \diff \omega,
    \end{equation*}
    where $\gamma(\omega)$ and the jump operators $A^P(\omega)$ are defined in \Cref{def:field_lindblad}. 
    Then $\Phi_{i, \mathrm{diss}}$ has an update matrix of form
    \begin{equation*}
        (1+72\delta^2)I + \delta Q_{\mathrm{diss}}^{(i)} , \quad \text{where} \quad
        Q_{\mathrm{diss}}^{(i)} = - \frac{e^{-1/4}}{\sqrt{2}} E_{\{i\}} + 48 \sum_{r\geq 1} (2\beta \zeta)^r \cdot E_{\ball(\{i\}, r)} \ .
    \end{equation*}
\end{lemma}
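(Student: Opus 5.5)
By linearity (\Cref{fact:linear_transport_plan}), it suffices to fix $j\in[n]$ and a traceless Hermitian $X$ with $\tr_j(X)=0$ and $\norm{X}_1=2$, and to exhibit a transport plan for $\Phi_{i,\mathrm{diss}}(X)$ with cost at most $((1+72\delta^2)I+\delta Q^{(i)}_{\mathrm{diss}})e_j$. The plan is to expand each jump operator as $A^P(\omega)=\sum_{r\ge0}G_r^{P,\omega}$ using \Cref{prop:jump-operators-quasi-local}. We then split the dissipator $\calD_A(X)=AXA^\dagger-\frac12\{A^\dagger A,X\}$ into a strictly local piece and remainders. With $G_{\le s}$ denoting partial sums, we telescope
\[
\calD_{A}=\calD_{G_0}+\sum_{s\ge1}\Paren{\calD_{G_{\le s}}-\calD_{G_{\le s-1}}},
\]
and each increment has the symmetric form of \Cref{lem:wass_norm_symmetric_map}, with $K=G_s$ and $L=G_{\le s-1}+\frac12 G_s$ (up to rewriting). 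Each increment is therefore chargeable to $\ball(\{i\},s)$ with cost at most $4\norm{G_s}\norm{L}\norm{X}_1$.

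Integrating against $\gamma_i$ and using $\int\gamma_i=\sqrt{2\pi}\eta_i$, $\eta_i=\sigma_i$, and $\norm{G_s}\le(2\pi)^{-1/4}\sigma_i^{-1/2}(2\zeta\beta)^s$, the $\sigma_i$ factors cancel exactly as in the proof of \Cref{lem:truncation}. Each shell $s\ge1$ then costs $O((2\zeta\beta)^s)$, and summing over the three Paulis should give the tail $48\sum_r(2\beta\zeta)^rE_{\ball(\{i\},r)}e_j$. One caveat: for a shell increment to be charged only when $j\in\ball(\{i\},s)$, I would split as in the coherent case. When $j\notin\ball(\{i\},s)$, the increment satisfies $\tr_j(\cdot)=0$ by the module property, and I would absorb it into the ``diagonal'' term charged to $e_j$. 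That diagonal term, namely $X$ plus $\delta$ times all pieces vanishing under $\tr_j$, is then bounded in trace norm by \Cref{lem:one_norm_lindbladian}, producing the $1+72\delta^2$ factor using $\int\gamma\norm{A}^2\lesssim$ const. For $j\neq i$, $\calD_{G_0}(X)$ also has $\tr_j=0$ and goes into this diagonal term.

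The key case is $j=i$, where the strictly local part $X+\delta\sum_P\int\gamma_i\,\calD_{G_0^{P,\omega}}(X)$ must show contraction. I would diagonalize $V_i=\frac h2\sigma_Z$ (with $h$ the spectral gap) and write $X=\sigma_X\otimes B_X+\sigma_Y\otimes B_Y+\sigma_Z\otimes B_Z$, which is possible because $\tr_i X=0$. Since $G_0^{P,\omega}=\frac{1}{\sqrt{2\pi}}\int e^{\ii tV_i}Pe^{-\ii tV_i}e^{-\ii\omega t}f_i\,\diff t$ is explicit, it decomposes into $\sigma_\pm$ components weighted by $\widehat f_i(\omega\mp h)$ and a $\sigma_Z$ component weighted by $\widehat f_i(\omega)$. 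Computing $\sum_P\calD_{G_0^P}$ on the Pauli basis, I expect the local generator to act diagonally: it sends $\sigma_{X/Y}\otimes B$ to $-\kappa_1\sigma_{X/Y}\otimes B$ up to a mixing between $X$ and $Y$ that preserves the trace norm, and $\sigma_Z\otimes B$ to $-\kappa_2\sigma_Z\otimes B+\sigma_0$ pieces. Here $\kappa_1\propto\int\gamma(a^2+b^2+2c^2)$ and $\kappa_2\propto\int\gamma(2a^2+2b^2)$.

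The main obstacle is turning this into a trace-norm contraction $\norm{\cdot}_1\le(1-\delta c_{\mathrm{diag}})\norm{X}_1+O(\delta^2)$. The $\sigma_Z$ sector also has an $I\otimes(\ldots)$ term that is not traceless on $i$, so the output is no longer locally traceless and a naive bound fails. I would try to write the local channel $\calI+\delta\calL_0$ as a convex combination $(1-\delta\kappa)\calI+\delta\kappa\,\calE$, where $\calE$ is a channel with $\tr_i\calE(Y)=0$ contribution bounded, or more precisely where the output's part not killed by $\tr_i$ is controlled. The minimum overlap $\int\gamma_i b^2\ge e^{-1/4}/\sqrt2$, computed from the two cases $h\le1/\beta$ and $h>1/\beta$ as in the overview via the Gaussian integral with $\Delta_i,\sigma_i,\eta_i$, then supplies $c_{\mathrm{diag}}$. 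Failing a clean convex decomposition, I would bound the cost through $\tr_i$-vanishing after subtracting the $I\otimes$ part, using trace preservation to cancel it.
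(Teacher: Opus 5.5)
Your treatment of the off-diagonal tail and of the case $j\neq i$ is sound; it only reorganizes the paper's bookkeeping. Telescoping over shells with $K=G_s$, $L=G_{\le s-1}+\frac12 G_s$ reproduces the constant $48$. Because the balls are nested, the $\tr_j$-vanishing increments collapse to the single dissipator $X+\delta\sum_P\int\gamma_i\,\mathcal{D}_{G_{\le d-1}}(X)$ with $d=\dist(i,j)$, the analogue of the paper's $G_{\mathrm{out}}$ term. Keeping it in that form is what lets \Cref{lem:one_norm_lindbladian}, after averaging over $P$ and $\omega$, give $1+72\delta^2$ rather than $1+O(\delta)$.

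The gap is the case $j=i$, which carries the whole $-\frac{e^{-1/4}}{\sqrt2}E_{\{i\}}$ entry. First, the obstacle you describe is not there. The local map is a Schr\"odinger-picture dissipator supported on site $i$, so $\tr_i$ of its output vanishes by cyclicity of the partial trace; non-unitality sends $I$ into $\sigma_Z$, not $\sigma_Z$ into $I$. Also, the cross terms $ab(\sigma_+X\sigma_+ + \sigma_-X\sigma_-)$ from $P=\sigma_X$ and $P=\sigma_Y$ cancel, so there is no $X$--$Y$ mixing. On locally traceless inputs the local step is therefore exactly Pauli-diagonal. It multiplies $\sigma_X\otimes B_X+\sigma_Y\otimes B_Y$ by $\lambda_{xy}=1-\delta\kappa_1$ and $\sigma_Z\otimes B_Z$ by $\lambda_z=1-\delta\kappa_2$, where $\kappa_1=\int\gamma_i(a^2+b^2+2c^2)$ and $\kappa_2=\int\gamma_i(2a^2+2b^2)$.

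What is missing is the step from this to $\norm{S_i(X)}_1\le\max\{\lambda_{xy},\lambda_z\}\norm{X}_1$ for arbitrary noncommuting $B_P$, and your convex-combination-of-channels idea cannot supply it. Suppose that, on the locally traceless sector, you write the local map as $\alpha\mathcal{E}$ with $\mathcal{E}$ a channel, plus a part annihilating locally traceless operators (such as $Y\mapsto\frac{I}{2}\otimes\tr_i Y$). Then you only get $\norm{S_i(X)}_1\le\alpha\norm{X}_1$. A qubit channel with traceless block $\mathrm{diag}(t,t,s)$ must satisfy $2\abs{t}\le 1+s$ (twirl it into a Pauli channel and use $p_I,p_Z\ge0$). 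This forces $\alpha\ge 2\lambda_{xy}-\lambda_z=1-4\delta\int\gamma_i c^2$. In the strong-field regime $\Delta_i=h>1/\beta$ one has $\int\gamma_i c^2=\frac{1}{\sqrt2}e^{-\beta h/4}$. So this route yields a contraction exponentially small in $\beta h$, which is exactly the $h$-dependence the field-resonant choice is designed to remove.

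The paper's ingredient, \Cref{lem:pauli_channel_norm}, uses a \emph{signed} combination of Pauli conjugations instead. Since $\sum_{Q\in\{I,\sigma_X,\sigma_Y,\sigma_Z\}}QXQ=0$ whenever $\tr_i X=0$, a Pauli-diagonal map with coefficients $\lambda_x\ge\lambda_y\ge\lambda_z\ge0$ (after relabeling) agrees on such $X$ with $\sum_Q p_Q\,Q(\cdot)Q$. Here $p_I=\frac12(\lambda_x+\lambda_z)$, $p_X=\frac12(\lambda_x-\lambda_y)$, $p_Y=0$, $p_Z=\frac12(\lambda_z-\lambda_y)$, so $\sum_Q\abs{p_Q}=\lambda_x$, and unitary invariance plus the triangle inequality finish. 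The hypothesis $\delta\le 1/(3\sqrt2)$ makes the coefficients nonnegative, via $\kappa_1,\kappa_2\le 3\sqrt2$. Finally, $\min\{\kappa_1,\kappa_2\}\ge\int\gamma_i b^2\ge e^{-1/4}/\sqrt2$ gives the diagonal entry.
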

\begin{proof}
    Since $\Phi_{i, \mathrm{diss}}$ is a linear map, by the linearity of transport plans (\Cref{fact:linear_transport_plan}), 
    it suffices to show that for any $j\in [n]$ and any traceless Hermitian matrix $X$ with $\tr_j(X) = 0$ and $\norm{X}_1 = 2$,
    $\Phi_{i, \mathrm{diss}}(X)$ has a transport plan with cost vector entry-wise upper bounded by $((1+72 \delta^2)I + \delta Q_{\mathrm{diss}}^{(i)})e_j$. 

    Let us fix $j\in [n]$. 
    Using the expansion in \Cref{prop:jump-operators-quasi-local}, we have
    \begin{align}\label{eq:diss_i_equal_j}
        \Phi_{i, \mathrm{diss}}(X) = X + \delta \sum_{r_1, r_2\geq 0}\sum_{P\in \{\sigma_X^{(i)}, \sigma_Y^{(i)}, \sigma_Z^{(i)}\}} \int_{-\infty}^{\infty} \gamma(\omega) \left( G_{r_1}^{P,\omega} X (G_{r_2}^{P,\omega})^\dagger - \frac12 \left\{(G_{r_2}^{P,\omega})^\dagger G_{r_1}^{P,\omega}, X\right\} \right) \diff \omega. 
    \end{align}

    \paragraph{Suppose $j=i$.} 
    Let us split $\Phi_{i, \mathrm{diss}}(X) = \eqref{eq:diss_i_equal_j}.(1) + \eqref{eq:diss_i_equal_j}.(2)$ where
    \begin{align*}
        \eqref{eq:diss_i_equal_j}.(1) &= X + \delta \sum_{P\in \{\sigma_X^{(i)}, \sigma_Y^{(i)}, \sigma_Z^{(i)}\}} \int_{-\infty}^{\infty} \gamma(\omega) \left( G_{0}^{P,\omega} X (G_{0}^{P,\omega})^\dagger - \frac12 \left\{(G_{0}^{P,\omega})^\dagger G_{0}^{P,\omega}, X\right\} \right) \diff \omega, \\
        \eqref{eq:diss_i_equal_j}.(2) &= \delta \sum_{r_1+ r_2\geq 1}\sum_{P\in \{\sigma_X^{(i)}, \sigma_Y^{(i)}, \sigma_Z^{(i)}\}} \int_{-\infty}^{\infty} \gamma(\omega) \left( G_{r_1}^{P,\omega} X (G_{r_2}^{P,\omega})^\dagger - \frac12 \left\{(G_{r_2}^{P,\omega})^\dagger G_{r_1}^{P,\omega}, X \right\} \right) \diff \omega. 
    \end{align*}
    We first analyze \eqref{eq:diss_i_equal_j}.(1). 
    Since $G_0^{P,\omega}$ is entirely supported on site $i$, using the cyclicity of partial trace in \Cref{prop:partial_trace_identities}, we have that
    \begin{align*}
        \tr_i\left( G_{0}^{P,\omega} X (G_{0}^{P,\omega})^\dagger - \frac12 \left\{(G_{0}^{P,\omega})^\dagger G_{0}^{P,\omega}, X\right\} \right) = 0. 
    \end{align*}
    Since $\tr_i(X) = 0$, we have that $\tr_i(\eqref{eq:diss_i_equal_j}.(1))=0$. 
    By \Cref{lem:cost_vec_for_partial_trace_zero}, we know that $\eqref{eq:diss_i_equal_j}.(1)$ has a transport plan with cost vector entry-wise bounded by $\frac12 \norm{\eqref{eq:diss_i_equal_j}.(1)}_1 e_i$.
    We show in \Cref{lem:diss_contraction} that $\norm{\eqref{eq:diss_i_equal_j}.(1)}_1 \leq 2(1 - \frac{e^{-1/4}}{\sqrt{2}}\delta)$. 

    We now analyze the cost vector for the transport plan of \eqref{eq:diss_i_equal_j}.(2). 
    We first symmetrize each element in the sum and write
    \begin{align}\label{eq:wass_diss_symmetrization}
        S^{P, \omega}_{r_1, r_2} &\coloneqq G_{r_1}^{P,\omega} X (G_{r_2}^{P,\omega})^\dagger + G_{r_2}^{P,\omega} X (G_{r_1}^{P,\omega})^\dagger - \frac12 \left\{(G_{r_2}^{P,\omega})^\dagger G_{r_1}^{P,\omega} + (G_{r_1}^{P,\omega})^\dagger G_{r_2}^{P,\omega}, X \right\},
    \end{align}
    and then 
    \begin{equation*}
        \eqref{eq:diss_i_equal_j}.(2) = \frac{\delta}{2} \sum_{r_1+ r_2\geq 1}\sum_{P\in \{\sigma_X^{(i)}, \sigma_Y^{(i)}, \sigma_Z^{(i)}\}} \int_{-\infty}^{\infty} \gamma(\omega) S_{r_1,r_2}^{P, \omega}  \diff \omega. 
    \end{equation*}
    By \Cref{lem:wass_norm_symmetric_map}, each $S^{P, \omega}_{r_1, r_2}$ has a transport plan with cost vector entry-wise upper bounded by 
    \begin{align}\label{eq:wass_diss_symmetrization_cost}
        4 \norm{G_{r_1}^{P,\omega}} \cdot \norm{G_{r_2}^{P,\omega}} \cdot \norm{X}_1 \cdot e_{\ball(\{i\}, \max\{r_1, r_2\})}. 
    \end{align}
    Hence, \eqref{eq:diss_i_equal_j}.(2) has a transport plan with cost vector at most
    \begin{equation*}
        4\delta \sum_{P\in \{\sigma_X^{(i)}, \sigma_Y^{(i)}, \sigma_Z^{(i)}\}} \int_{-\infty}^{\infty} \gamma(\omega) \sum_{r_1 + r_2\geq 1} \norm{G_{r_1}^{P, \omega}}\cdot \norm{G_{r_2}^{P, \omega}} \cdot \diff \omega \cdot E_{\ball(\{i\}, \max\{r_1, r_2\})} \cdot e_i
    \end{equation*}

    \paragraph{Suppose $j\neq i$.}
    Let us define
    \begin{align*}
        G^{P, \omega}_{\text{in}} \coloneqq \sum_{r\in r_{\mathrm{in}}} G_{r}^{P, \omega}, \quad & \text{where} \quad r_{\mathrm{in}} = \{r\geq 0: j\in \supp(G_{r}^{P, \omega})\}, \quad \text{and}\\
        G^{P, \omega}_{\text{out}} \coloneqq \sum_{r\in r_{\mathrm{out}}} G_{r}^{P, \omega}, \quad & \text{where} \quad r_{\mathrm{out}} = \{r\geq 0: j\notin \supp(G_{r}^{P, \omega})\}. 
    \end{align*}
    Then $j\notin \supp(G^{P, \omega}_{\text{out}})$. 
    We now split $\Phi_{i, \mathrm{diss}}(X) = \eqref{eq:diss_i_equal_j}.(3) + \eqref{eq:diss_i_equal_j}.(4)$ as
    \begin{align*}
        \eqref{eq:diss_i_equal_j}.(3) &= X + \delta \sum_{P\in \{\sigma_X^{(i)}, \sigma_Y^{(i)}, \sigma_Z^{(i)}\}} \int_{-\infty}^{\infty} \gamma(\omega) \left( G_{\text{out}}^{P,\omega} X (G_{\text{out}}^{P,\omega})^\dagger - \frac12 \left\{(G_{\text{out}}^{P,\omega})^\dagger G_{\text{out}}^{P,\omega}, X\right\} \right) \diff \omega, \\
        \eqref{eq:diss_i_equal_j}.(4) &= \delta \sum_{P\in \{\sigma_X^{(i)}, \sigma_Y^{(i)}, \sigma_Z^{(i)}\}} \int_{-\infty}^{\infty} \gamma(\omega) \Big(G_{\text{in}}^{P,\omega} X (G_{\text{in}}^{P,\omega})^\dagger + G_{\text{in}}^{P,\omega} X (G_{\text{out}}^{P,\omega})^\dagger + G_{\text{out}}^{P,\omega} X (G_{\text{in}}^{P,\omega})^\dagger \\
        & \hspace{1.5in} - \frac12 \left\{(G_{\text{in}}^{P,\omega})^\dagger G_{\text{in}}^{P,\omega} +  (G_{\text{out}}^{P,\omega})^\dagger G_{\text{in}}^{P,\omega} +   (G_{\text{in}}^{P,\omega})^\dagger G_{\text{out}}^{P,\omega}, X \right\} \Big) \diff \omega. 
    \end{align*}
    Let us first analyze \eqref{eq:diss_i_equal_j}.(4) as it is similar to how we handle \eqref{eq:diss_i_equal_j}.(2) in the case of $j=i$ previously. 
    Recall the symmetrization in \Cref{eq:wass_diss_symmetrization} and the cost vector of its transport plan is bounded in \Cref{eq:wass_diss_symmetrization_cost}. Then
    \begin{align*}
        \eqref{eq:diss_i_equal_j}.(4) &= \frac{\delta}{2} \sum_{P\in \{\sigma_X^{(i)}, \sigma_Y^{(i)}, \sigma_Z^{(i)}\}} \int_{-\infty}^{\infty} \gamma(\omega) \left(\sum_{r_1\in r_{\mathrm{in}} \text{ or } r_2\in r_{\mathrm{in}}} S_{r_1, r_2}^{P, \omega} \right) \diff \omega,
    \end{align*}
    and its transport plan has a cost vector at most 
    \begin{align}
        &4\delta \sum_{P\in \{\sigma_X^{(i)}, \sigma_Y^{(i)}, \sigma_Z^{(i)}\}} \int_{-\infty}^{\infty} \gamma(\omega) \sum_{r_1\in r_{\mathrm{in}} \text{ or } r_2\in r_{\mathrm{in}}} \norm{G_{r_1}^{P, \omega}}\cdot \norm{G_{r_2}^{P, \omega}} \cdot \diff \omega \cdot E_{\ball(\{i\}, \max\{r_1, r_2\})} \cdot e_j \nonumber\\
        &\leq_v 4\delta \sum_{P\in \{\sigma_X^{(i)}, \sigma_Y^{(i)}, \sigma_Z^{(i)}\}} \int_{-\infty}^{\infty} \gamma(\omega) \sum_{r_1 + r_2 \geq 1 } \norm{G_{r_1}^{P, \omega}}\cdot \norm{G_{r_2}^{P, \omega}} \cdot \diff \omega \cdot E_{\ball(\{i\}, \max\{r_1, r_2\})} \cdot e_j\,.\label{eq:wass_diss_term_4}
    \end{align}
    We now analyze \eqref{eq:diss_i_equal_j}.(3). 
    By the module property of partial trace in \Cref{prop:partial_trace_identities}, 
    since $j \notin \supp(G_{\text{out}}^{P,\omega})$ and $\tr_j(X) = 0$, we have
    \begin{align*}
        \tr_j\left( G_{\text{out}}^{P,\omega} X (G_{\text{out}}^{P,\omega})^\dagger\right) = 0 \,, \quad \tr_j\Paren{(G_{\text{out}}^{P,\omega})^\dagger G_{\text{out}}^{P,\omega} X} = 0 \,, \quad \text{and} \quad 
        \tr_j\Paren{X (G_{\text{out}}^{P,\omega})^\dagger G_{\text{out}}^{P,\omega}}=0 \,.
    \end{align*}
    Hence $\tr_j(\eqref{eq:diss_i_equal_j}.(3)) = 0$. 
    Again using \Cref{lem:cost_vec_for_partial_trace_zero}, we know that \eqref{eq:diss_i_equal_j}.(3) has a transport plan with cost vector at most $\frac12 \norm{\eqref{eq:diss_i_equal_j}.(3)}_1 e_j$. 
    With $\norm{\gamma}_1 = \int_{-\infty}^{\infty} \gamma(\omega) d\omega = \sqrt{2\pi} \cdot \eta_i$, we can write 
    \begin{align*}
        \eqref{eq:diss_i_equal_j}.(3) = \frac{1}{3\norm{\gamma}_1}\sum_{P\in \{\sigma_X^{(i)}, \sigma_Y^{(i)}, \sigma_Z^{(i)}\}} \int_{-\infty}^{\infty} \gamma(\omega) \left( X + \widehat{G}_{\text{out}}^{P,\omega} X (\widehat{G}_{\text{out}}^{P,\omega})^\dagger - \frac12 \left\{(\widehat{G}_{\text{out}}^{P,\omega})^\dagger \widehat{G}_{\text{out}}^{P,\omega}, X\right\}\right) \diff \omega
    \end{align*}
    where $\widehat{G}_{\text{out}}^{P,\omega} = \sqrt{3\norm{\gamma}_1 \delta } \cdot G_{\text{out}}^{P,\omega}$. By the triangle inequality,
    \begin{align*}
        \norm{\eqref{eq:diss_i_equal_j}.(3)}_1
        &\leq \frac{1}{3\norm{\gamma}_1} \sum_{P\in \{\sigma_X^{(i)}, \sigma_Y^{(i)}, \sigma_Z^{(i)}\}} \int_{-\infty}^{\infty} \gamma(\omega) \norm*{ X + \widehat{G}_{\text{out}}^{P,\omega} X (\widehat{G}_{\text{out}}^{P,\omega})^\dagger - \frac12 \left\{(\widehat{G}_{\text{out}}^{P,\omega})^\dagger \widehat{G}_{\text{out}}^{P,\omega}, X\right\} }_1 \diff \omega \\
        &\leq \frac{1}{3\norm{\gamma}_1}\sum_{P\in \{\sigma_X^{(i)}, \sigma_Y^{(i)}, \sigma_Z^{(i)}\}} \int_{-\infty}^{\infty} \gamma(\omega) \left( 1 + \frac12 \norm{\widehat{G}_{\text{out}}^{P,\omega}}^4 \right)\norm{X}_1 \diff \omega \tag{\Cref{lem:one_norm_lindbladian}} \\ 
        &= 2 + \frac{1}{3\norm{\gamma}_1}\sum_{P\in \{\sigma_X^{(i)}, \sigma_Y^{(i)}, \sigma_Z^{(i)}\}} \int_{-\infty}^{\infty} \gamma(\omega) \cdot \left(\delta \cdot 3\norm{\gamma}_1\right)^2 \norm*{G_{\text{out}}^{P,\omega}}^4 \cdot \diff \omega. 
    \end{align*}
    Using the triangle inequality and the expansion in \Cref{prop:jump-operators-quasi-local}, we can bound
    \begin{align*}
        \norm{G_{\text{out}}^{P,\omega}} \leq \sum_{r\geq 0} \norm{G_{r}^{P,\omega}} \leq (2\pi)^{-1/4} \sigma_i^{-1/2} \sum_{r\geq 0} (2\zeta \sigma_i^{-1})^r \leq (2\pi)^{-1/4} \sigma_i^{-1/2} \cdot 2,
    \end{align*}
    where the last bound is due to the assumption $2\zeta \sigma_i^{-1} = 2\zeta \sqrt{\beta/\Delta} \leq 2\zeta\beta \leq 1/2$. 
    Continuing,
    \begin{align*}
        \frac12\norm{\eqref{eq:diss_i_equal_j}.(3)}_1 &\leq 1 + \frac{1}{2\norm{\gamma}_1} \int_{-\infty}^{\infty} \gamma(\omega) \cdot \left(\delta \cdot3\norm{\gamma}_1\right)^2 \frac{16}{2\pi \sigma_i^2} \cdot \diff \omega \\
        &= 1 + \frac{36 \norm{\gamma}_1^2}{\pi \sigma_i^2}\delta^2 \tag{$\eta_i = \sigma_i$ and $\norm{\gamma}_1 = \sqrt{2\pi} \cdot \eta_i$} \\
        &= 1 + 72 \delta^2 \,.
    \end{align*}
    Therefore \eqref{eq:diss_i_equal_j}.(3) has a transport plan with cost vector at most $(1 + 72\delta^2) e_j$. 

    \paragraph{Put two cases together.}
    Now combining the cost vectors of the transport plans of \eqref{eq:diss_i_equal_j}.(1--4), we have that the cost vector of the transport plan of \eqref{eq:diss_i_equal_j} is at most
    \begin{align*}
        \bigg(-\frac{e^{-1/4}}{\sqrt{2}} \delta E_{\{i\}} &+ (1 + 72\delta^2)I \\
        &+ 4\delta \sum_{P\in \{\sigma_X^{(i)}, \sigma_Y^{(i)}, \sigma_Z^{(i)}\}} \int_{-\infty}^{\infty} \gamma(\omega) \sum_{r_1 + r_2\geq 1} \norm{G_{r_1}^{P, \omega}}\cdot \norm{G_{r_2}^{P, \omega}} \cdot \diff \omega \cdot E_{\ball(\{i\}, \max\{r_1, r_2\})}\bigg)e_j
    \end{align*}
    The last term can be bounded as 
    \begin{align*}
        &4\delta \sum_{P\in \{\sigma_X^{(i)}, \sigma_Y^{(i)}, \sigma_Z^{(i)}\}} \int_{-\infty}^{\infty} \gamma(\omega) \sum_{r_1 + r_2\geq 1} \norm{G_{r_1}^{P, \omega}}\cdot \norm{G_{r_2}^{P, \omega}} \cdot \diff \omega \cdot E_{\ball(\{i\}, \max\{r_1, r_2\})} \cdot e_j \\
        &\leq_v 12\delta \int_{-\infty}^{\infty} \gamma(\omega) \sum_{r_1 + r_2\geq 1} (2\pi)^{-1/2} \sigma^{-1} (2\zeta \sigma^{-1})^{r_1+r_2} \cdot \diff \omega \cdot E_{\ball(\{i\}, \max\{r_1, r_2\})} \cdot e_j\\
        &= 12\delta \sum_{r_1+r_2\geq 1} (2\zeta \beta)^{r_1+r_2} \cdot E_{\ball(\{i\}, \max\{r_1, r_2\})} \cdot e_j \tag{$\int_{-\infty}^{\infty} \gamma(\omega) d\omega = \sqrt{2\pi} \cdot \eta_i $ and $\sigma_i = \eta_i\geq 1/\beta$}\\
        &\leq_v 24\delta \sum_{r_1\geq 1} (2\zeta \beta)^{r_1}\sum_{r_2=0}^{r_1} (2\zeta \beta)^{r_2} \cdot E_{\ball(\{i\}, r_1)} \cdot e_j \\
        &\leq_v 48\delta \sum_{r\geq 1} (2\zeta \beta)^{r} \cdot E_{\ball(\{i\}, r)} \cdot e_j. \tag{$2\zeta \beta \leq 1/2$}
    \end{align*}
    This completes the proof. 
\end{proof}

%% file: contraction.tex
\subsection{Contraction in the dissipative evolution}

\begin{lemma}[Contraction in the dissipative evolution]
\label{lem:diss_contraction}
    Given $i\in [n]$, let $X$ be any Hermitian traceless matrix such that $\tr_i(X) = 0$. 
    Let $\eta_i, \Delta_i, \sigma_i>0$ be the parameters for a field-resonant Lindbladian. 
    Let 
    \begin{align*}
        G_0^{P}(\omega) = \frac{1}{\sqrt{2\pi}} \int_{-\infty}^{\infty}  e^{\ii t V_i} P e^{-\ii tV_i} \cdot e^{-\ii \omega t} f_i(t) \diff t, \quad \text{where} \quad
        f_i(t) = \sqrt{\sigma_i} (2/\pi)^{1/4} \cdot e^{-\sigma_i^2 t^2}.
    \end{align*}
    Let $\gamma_i(\omega)= \exp\left(-\frac{(\omega + \Delta_i)^2}{2\eta_i^2}\right)$. 
    For any $\delta \in [0, \frac{1}{3\sqrt{2}}]$, define
    \begin{equation*}
        S_i(X) = X + \delta \sum_{P\in \{\sigma_X^{(i)}, \sigma_Y^{(i)}, \sigma_Z^{(i)}\}} \int_{-\infty}^{\infty} \gamma_i(\omega) \left( G_{0}^{P}(\omega) X G_{0}^{P}(\omega)^\dagger - \frac12 \left\{G_{0}^{P}(\omega)^\dagger G_{0}^{P}(\omega), X\right\} \right) \diff \omega. 
    \end{equation*}
    Then $\norm{S_i(X)}_1 \leq \left(1 -  \frac{e^{-1/4}}{\sqrt{2} }\delta\right) \cdot \norm{X}_1$. 
\end{lemma}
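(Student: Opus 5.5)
The plan is to reduce everything to an explicit single-qubit computation on site $i$ and then invoke a "convex combination with a replacement channel" argument. Since $V_i$ is a Hermitian operator on one qubit, I would first rotate the local basis at site $i$ so that $V_i = c\,I + \frac{h}{2}\sigma_Z^{(i)}$, where $h=\lambda_{\max}(V_i)-\lambda_{\min}(V_i)$. The identity part drops out of every conjugation. The set $\{\sigma_X^{(i)},\sigma_Y^{(i)},\sigma_Z^{(i)}\}$ is mapped by this rotation to a set of the form $\{O P O^\dagger\}$ with $O$ orthogonal. The sum over $P$ of the dissipator is quadratic in $P$ and hence invariant under this orthogonal mixing, so I may assume $V_i=\frac h2\sigma_Z$ outright.

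Then $e^{\ii tV_i}\sigma_X e^{-\ii tV_i}=e^{\ii ht}\sigma_+ + e^{-\ii ht}\sigma_-$, similarly for $\sigma_Y$ (with factors $\mp\ii$), and $\sigma_Z$ is invariant. Plugging into the definition of $G_0^P(\omega)$ and using that $f_i$ is a Gaussian gives
\[
G_0^{\sigma_X}(\omega)=a\,\sigma_+ + b\,\sigma_-,\qquad G_0^{\sigma_Y}(\omega)=-\ii a\,\sigma_+ + \ii b\,\sigma_-,\qquad G_0^{\sigma_Z}(\omega)=c\,\sigma_Z,
\]
with $a=\widehat f_i(\omega-h)$, $b=\widehat f_i(\omega+h)$, $c=\widehat f_i(\omega)$, and $\widehat f_i$ a positive Gaussian.

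The second step is to sum the three dissipators. The $\sigma_+\cdots\sigma_+$ and $\sigma_-\cdots\sigma_-$ cross terms cancel between $P=\sigma_X$ and $P=\sigma_Y$; this cancellation is the reason the full Pauli sum is needed. What remains is the generator
\[
\mathcal L_{\mathrm{loc}} = 2A\,\mathcal D_{\sigma_+} + 2B\,\mathcal D_{\sigma_-} + C\,\mathcal D_{\sigma_Z},
\]
where $A=\int\gamma_i a^2$, $B=\int\gamma_i b^2$, $C=\int\gamma_i c^2$, and $\mathcal D_K(X)=KXK^\dagger-\frac12\{K^\dagger K,X\}$. Let $\lambda=\min(A,B)$. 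I would then write $\mathcal L_{\mathrm{loc}}=2\lambda(\mathcal D_{\sigma_+}+\mathcal D_{\sigma_-})+\mathcal L'$, where $\mathcal L'$ is again a Lindbladian with nonnegative rates. A direct Pauli check gives $\mathcal D_{\sigma_+}+\mathcal D_{\sigma_-} = \mathcal R-\mathrm{id}$ on the $\sigma_X,\sigma_Y$ components, where $\mathcal R(X)=\frac I2\otimes\tr_i X$. On the $\sigma_Z$ component it acts as $-\mathrm{id}$ up to a term in $I\otimes\tr_i(\cdot)$. Combining with $\mathcal D_{\sigma_Z}$ if needed, I expect to show $\mathcal L_{\mathrm{loc}} = \kappa(\mathcal R-\mathrm{id})+\mathcal L''$, with $\kappa\geq \lambda$ and $\mathcal L''$ a valid Lindbladian. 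Since $\tr_i X=0$, we have $\mathcal R(X)=0$.

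The third step is the norm bound. We have
\[
S_i = \mathrm{id}+\delta\mathcal L_{\mathrm{loc}} = (1-\delta\kappa)\Big(\mathrm{id}+\tfrac{\delta}{1-\delta\kappa}\mathcal L''\Big)+\delta\kappa\,\mathcal R .
\]
Applied to $X$, the last term vanishes. For the first factor, I would show that $\mathrm{id}+\delta'\mathcal L''$ is trace-norm nonexpansive on Hermitian matrices for small $\delta'$. This follows because it equals a CPTP map up to a second-order correction; I would handle that correction by the restriction $\delta\le 1/(3\sqrt2)$ and the bounds on the jump norms, in the spirit of \Cref{lem:one_norm_lindbladian}. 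Alternatively, I would check directly on a qubit that $\mathrm{id}+\delta'\mathcal L''$ is a unital/phase-covariant map with contraction factors in $[0,1]$. This yields $\norm{S_i(X)}_1\le(1-\kappa\delta)\norm{X}_1$.

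It remains to lower-bound $\kappa\ge\min(A,B)$ by $e^{-1/4}/\sqrt2$. This is a Gaussian integral: $\gamma_i$ is centered at $-\Delta_i$ with width $\eta_i$, and $b^2$ is centered at $-h$ with width comparable to $\sigma_i=\eta_i$. Computing the overlap gives $\frac1{\sqrt2}\exp(-(\Delta_i-h)^2\beta/(4\Delta_i))$. With $\Delta_i=\max\{1/\beta,h\}$, this equals $\frac1{\sqrt2}$ when $h\ge 1/\beta$ and $\frac1{\sqrt2}e^{-(1-\beta h)^2/4}\ge \frac{e^{-1/4}}{\sqrt2}$ otherwise. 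The term $A$ is centered at $+h$, farther from $-\Delta_i$, so it is actually smaller. Hence I must check which rate the replacement decomposition truly needs.

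I expect the main obstacle to be exactly this point: making the decomposition $\kappa(\mathcal R-\mathrm{id})+\mathcal L''$ rigorous when $A\ll B$, i.e.\ amplitude damping toward one pole rather than depolarizing. In that case I would instead verify contraction directly sector by sector. The $\sigma_X/\sigma_Y$ sector is scaled by $1-\delta(A+B+2C)\le 1-\delta B$. The $\sigma_Z$ sector is scaled by $1-2\delta(A+B)$, plus a leaked term $\propto (B-A)\,I\otimes B_Z$. I would then bound the trace norm of the resulting $2\times2$ block operator using positivity and trace preservation of the one-step map, rather than through a pure replacement decomposition.
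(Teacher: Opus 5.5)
Your first two steps are correct and match the paper: the reduction to $V_i=\frac h2\sigma_Z$, the formula $\mathcal{L}_{\mathrm{loc}}=2A\,\mathcal{D}_{\sigma_+}+2B\,\mathcal{D}_{\sigma_-}+C\,\mathcal{D}_{\sigma_Z}$, and the overlap bound $B=\int\gamma_i b^2\ge e^{-1/4}/\sqrt2$. Your $SO(3)$-invariance argument for the reduction is actually cleaner than the paper's ``without loss of generality''. The gap is the final trace-norm step.

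The replacement-channel route cannot give a field-independent rate. Write $\kappa(\mathcal{R}-\mathrm{id})=\frac{\kappa}{2}\mathcal{D}_{\sigma_+}+\frac{\kappa}{2}\mathcal{D}_{\sigma_-}+\frac{\kappa}{4}\mathcal{D}_{\sigma_Z}$ and compare GKS coefficients in the traceless basis $\{\sigma_+,\sigma_-,\sigma_Z\}$. Then $\mathcal{L}''$ is a Lindbladian only if $\kappa\le 4\min\{A,B,C\}$. For $h>1/\beta$ one has $A=\frac1{\sqrt2}e^{-\beta h}$ and $C=\frac1{\sqrt2}e^{-\beta h/4}$, so this route yields an exponentially small contraction. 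That is exactly the failure the field-resonant parameters are designed to avoid.

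Your fallback rests on two incorrect points.
\begin{itemize}
\item There is no leaked term $\propto(B-A)\,I\otimes B_Z$. For $K$ supported on site $i$, cyclicity of the partial trace gives $\tr_i\mathcal{D}_K(Y)=0$, and indeed $\mathcal{D}_{\sigma_\pm}(\sigma_Z)=-\sigma_Z$. The non-unital part only sends $I\mapsto\pm\sigma_Z$, which is irrelevant because $X$ has no identity component at site $i$. So exactly $S_i(X)=c_{xy}(\sigma_X\otimes B_X+\sigma_Y\otimes B_Y)+c_z\,\sigma_Z\otimes B_Z$, with $c_{xy}=1-\delta(A+B+2C)$ and $c_z=1-2\delta(A+B)$.
\item ``Positivity and trace preservation of the one-step map'' cannot bound $\norm{S_i(X)}_1$ for an $n$-qubit $X$. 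The qubit map $\mathrm{id}+\delta\mathcal{L}_{\mathrm{loc}}$ is completely positive only if $\delta\big((A-B)^2+4C^2\big)\le 4C\big(1-\delta(A+B)\big)$; this comes from the $\{|00\rangle,|11\rangle\}$ block of its Choi matrix. That inequality fails for constant $\delta$ once $C\sim e^{-\beta h/4}$. So positivity of the qubit map says nothing about its extension to the other $n-1$ qubits. Even a CPTP map would only give nonexpansiveness, not the factor $1-\kappa\delta$.
\end{itemize}

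The missing idea is that on operators with $\tr_i X=0$ you never need the map to be (completely) positive, because you are free to choose its action on the identity. Concretely,
\[
S_i(X)=\frac{c_{xy}+c_z}{2}\,X+\frac{c_z-c_{xy}}{2}\,\sigma_Z^{(i)}X\sigma_Z^{(i)} .
\]
Conjugation by $\sigma_Z^{(i)}$ is a trace-norm isometry on the full $n$-qubit space. The triangle inequality therefore gives $\norm{S_i(X)}_1\le\max\{c_{xy},c_z\}\,\norm{X}_1$ whenever $c_{xy},c_z\ge 0$. Nonnegativity is exactly where $\delta\le 1/(3\sqrt2)$ enters, via $\int\gamma_i(a^2+b^2+c^2)\le 3/\sqrt2$. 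This is what the paper's \Cref{lem:pauli_channel_norm} does, in general form: a signed combination of Pauli conjugations with total weight $\max c$. Combined with $\max\{c_{xy},c_z\}\le 1-\delta B$ (since $A+B+2C\ge B$ and $2(A+B)\ge B$) and your overlap computation, this closes the proof.
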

\begin{proof}
    Without loss of generality, we can assume $i=1$. 
    For simplicity, we will sometimes drop the index $i=1$ when the context is clear. 
    For example, we will write $\Delta_i$ as $\Delta$.
    Since each $G_0^{P}(\omega)$ where $P\in \{\sigma_X^{(i)}, \sigma_Y^{(i)}, \sigma_Z^{(i)}\}$ is effectively a single-qubit operator, we can just focus on its restriction to the first qubit, for which we denote by $G^{\sigma_X}(\omega), G^{\sigma_Y}(\omega), G^{\sigma_Z}(\omega)$. 

    Without loss of generality, let us assume that $V_1 =  \frac{h}{2} \sigma_Z \otimes I_{2^{n-1}}$ where $h\geq 0$ (since any shift in the spectrum can be considered as a global phase after exponentiation, which will be canceled due to the unitary conjugation). 
    This is because at the start of the full Gibbs state prepartion algorithm, we can always just apply single-qubit unitary to each site $i$ to rotate $V_i$ to the Pauli $Z$-basis. 
    
    Recall by the definition of a field-resonant Lindbladian, we have $\eta_i = \sigma_i = \sqrt{\Delta_i/ \beta}$ and $\Delta_i = \max\{h, 1/\beta\}$.

    We first analyze the single-qubit operator $G^{P}(\omega)$ using the Fourier transform of $f$:
    \begin{equation*}
        \widehat{f}(\nu) = \frac{1}{\sqrt{2\pi}}\int_{-\infty}^{\infty} f(t) e^{-\ii \nu t} \diff t = \frac{(2\pi)^{-1/4}}{\sqrt{\sigma}} e^{-\frac{\nu^2}{4\sigma^2}} \,. 
    \end{equation*}
    Then
    \begin{equation*}
        G^{\sigma_Z}(\omega) = \frac{1}{\sqrt{2\pi}} \int_{-\infty}^{\infty}  e^{\ii \frac{t h}{2} \sigma_Z} \cdot \sigma_Z\cdot e^{-\ii \frac{t h}{2} \sigma_Z} \cdot e^{-\ii \omega t} f(t) \diff t = \widehat{f}(\omega) \sigma_Z \,.
    \end{equation*}
    Write $\sigma_{+} = \frac12 (\sigma_X + \ii \sigma_Y) = \ketbra{0}{1}$ and $\sigma_{-} = \frac12 (\sigma_X - \ii \sigma_Y) = \ketbra{1}{0}$. Then
    \begin{align*}
        e^{\ii \frac{t h}{2} \sigma_Z}\cdot \sigma_{+}\cdot e^{-\ii \frac{t h}{2} \sigma_Z} &= \left(e^{\ii \frac{t h}{2}} \ketbra{0}{0} + e^{-\ii \frac{t h}{2}} \ketbra{1}{1}\right) \ketbra{0}{1} \left(e^{-\ii \frac{t h}{2}} \ketbra{0}{0} + e^{\ii \frac{t h}{2}} \ketbra{1}{1}\right) = e^{\ii t h} \sigma_+ \,, \\
        e^{\ii \frac{t h}{2} \sigma_Z}\cdot \sigma_{-}\cdot e^{-\ii \frac{t h}{2} \sigma_Z} &= \left(e^{\ii \frac{t h}{2}} \ketbra{0}{0} + e^{-\ii \frac{t h}{2}} \ketbra{1}{1}\right) \ketbra{1}{0} \left(e^{-\ii \frac{t h}{2}} \ketbra{0}{0} + e^{\ii \frac{t h}{2}} \ketbra{1}{1}\right) = e^{-\ii t h} \sigma_- \,.
    \end{align*}
    Therefore,
    \begin{align*}
        G^{\sigma_X}(\omega) &= \frac{1}{\sqrt{2\pi}} \int_{-\infty}^{\infty}  e^{\ii \frac{t h}{2} \sigma_Z} \cdot (\sigma_+ + \sigma_-) \cdot e^{-\ii \frac{t h}{2} \sigma_Z}  e^{-\ii \omega t} f(t) \diff t = \widehat{f}(\omega - h) \sigma_+ + \widehat{f}(\omega + h) \sigma_- \,, \\
        G^{\sigma_Y}(\omega) &= \frac{1}{\sqrt{2\pi}} \int_{-\infty}^{\infty}  e^{\ii \frac{t h}{2} \sigma_Z} \cdot \ii(-\sigma_+ + \sigma_-)\cdot e^{-\ii \frac{t h}{2} \sigma_Z} e^{-\ii \omega t} f(t) \diff t = -\ii \widehat{f}(\omega - h) \sigma_+ + \ii \widehat{f}(\omega + h) \sigma_- \,.
    \end{align*}
    We now analyze $\sum_{P} G^P(\omega)^\dagger G^P(\omega)$. 
    Write $a = \widehat{f}(\omega - h)$, $b = \widehat{f}(\omega + h)$ and $c = \widehat{f}(\omega)$ which are all real numbers.
    Then
    \begin{align*}
        G^{\sigma_X}(\omega)^\dagger G^{\sigma_X}(\omega) &= (a \sigma_- + b\sigma_+)(a \sigma_+ + b\sigma_-) = a^2 \sigma_-\sigma_+ + b^2 \sigma_+\sigma_- \,, \\
        G^{\sigma_Y}(\omega)^\dagger G^{\sigma_Y}(\omega) &= \ii (a \sigma_- - b\sigma_+)(-\ii)(a \sigma_+ - b\sigma_-) = a^2 \sigma_-\sigma_+ + b^2  \sigma_+\sigma_- \,, \\
        G^{\sigma_Z}(\omega)^\dagger G^{\sigma_Z}(\omega) &= c^2 \sigma_Z^2 \,. 
    \end{align*}
    Therefore,
    \begin{align*}
        \sum_{P\in \{\sigma_X, \sigma_Y, \sigma_Z\}} G^P(\omega)^\dagger  G^P(\omega) = 2a^2 \sigma_-\sigma_+ + 2b^2  \sigma_+\sigma_- + c^2 \sigma_Z^2 \,. 
    \end{align*}
    For any $n$-qubit operator $X$, for simplicity, we will write, for example, $\sigma_{+} X \sigma_{-}$ to mean
    $(\sigma_{+} \otimes I_{2^{n-1}})\cdot  X \cdot (\sigma_{-} \otimes I_{2^{n-1}})$. 
    We now analyze $\sum_{P} G^P(\omega) X G^P(\omega)^\dagger$. The easy case is
    \begin{equation*}
        G^{\sigma_Z}(\omega) X G^{\sigma_Z}(\omega)^\dagger = c^2 \sigma_Z X \sigma_Z \,.
    \end{equation*}
     The two other cases are
    \begin{align*}
        G^{\sigma_X}(\omega) X G^{\sigma_X}(\omega)^\dagger &= (a \sigma_+ + b\sigma_-)X(a \sigma_- + b\sigma_+) \\
        &= a^2 \sigma_+ X \sigma_- + b^2 \sigma_- X \sigma_++ ab (\sigma_+ X \sigma_+ + \sigma_- X \sigma_- )
    \end{align*}
    and
    \begin{align*}
        G^{\sigma_Y}(\omega) X G^{\sigma_Y}(\omega)^\dagger &= -\ii (a \sigma_+ - b\sigma_-)X \cdot \ii (a \sigma_- - b\sigma_+) \\
        &= a^2 \sigma_+ X \sigma_- + b^2 \sigma_- X \sigma_+ - ab (\sigma_+ X \sigma_+ + \sigma_- X \sigma_- ) \,.
    \end{align*}
    Therefore,
    \begin{align*}
        \sum_{P\in \{\sigma_X, \sigma_Y, \sigma_Z\}} G^P(\omega) X G^P(\omega)^\dagger = 2a^2 \sigma_+ X \sigma_- + 2b^2 \sigma_- X \sigma_+ + c^2 \sigma_Z X \sigma_Z \,. 
    \end{align*}
    Let us write $D_{A}(X) \coloneqq A X A^\dagger - \frac12 \{A^\dagger A, X\}$. Then
    \begin{align}\label{eq:contraction_simplify}
        \sum_{P\in \{\sigma_X, \sigma_Y, \sigma_Z\}} G^P(\omega) X G^P(\omega)^\dagger - \frac12 \left\{ G^P(\omega)^\dagger  G^P(\omega), X \right\} = 2a^2 D_{\sigma_+}(X) + 2b^2 D_{\sigma_-}(X) + c^2 D_{\sigma_Z}(X) \,. 
    \end{align}
    The condition of $\tr_1(X) = 0$ implies that $X$, when writing as an expansion in Pauli operators, does not have an identity component in the first qubit, i.e.\ we can write
    \begin{equation*}
        X = \sigma_X \otimes B_X + \sigma_Y \otimes B_Y + \sigma_Z \otimes B_Z \,. 
    \end{equation*}
    Then the easy case is
    \begin{align*}
        D_{\sigma_Z}(X) = \sigma_Z X \sigma_Z - X &= - \sigma_X \otimes B_X - \sigma_Y \otimes B_Y + \sigma_Z \otimes B_Z - X \\
        &= - 2\sigma_X \otimes B_X - 2\sigma_Y \otimes B_Y \,. 
    \end{align*}
    Since
    \begin{align*}
        D_{\sigma_+}(\sigma_X) &= \ketbra{0}{1} \sigma_X \ketbra{1}{0} -\frac12 (\ketbra{1}{1}\sigma_X + \sigma_X \ketbra{1}{1}) = -\frac12 \sigma_X, \\
        D_{\sigma_+}(\sigma_Y) &= \ketbra{0}{1} \sigma_Y \ketbra{1}{0} -\frac12 (\ketbra{1}{1}\sigma_Y + \sigma_Y \ketbra{1}{1}) = -\frac12 \sigma_Y,\\
        D_{\sigma_+}(\sigma_Z) &= \ketbra{0}{1} \sigma_Z \ketbra{1}{0} -\frac12 (\ketbra{1}{1}\sigma_Z + \sigma_Z \ketbra{1}{1}) = - \sigma_Z \,,
    \end{align*}
    then 
    \begin{align*}
        D_{\sigma_+}(X) = -\frac12 \sigma_X \otimes B_X -\frac12 \sigma_Y \otimes B_Y - \sigma_Z \otimes B_Z \,. 
    \end{align*}
    Similarly, 
    \begin{align*}
        D_{\sigma_-}(X) = -\frac12 \sigma_X \otimes B_X -\frac12 \sigma_Y \otimes B_Y - \sigma_Z \otimes B_Z \,. 
    \end{align*}
    Therefore we can further simplify
    \begin{align*}
        \eqref{eq:contraction_simplify} &= -(a^2 + b^2) (\sigma_X \otimes B_X + \sigma_Y \otimes B_Y + 2\sigma_Z \otimes B_Z) - 2c^2 (\sigma_X \otimes B_X + \sigma_Y \otimes B_Y) \\
        &= -(a^2 + b^2 + 2c^2) (\sigma_X \otimes B_X + \sigma_Y \otimes B_Y) - 2(a^2 + b^2) \sigma_Z \otimes B_Z \,. 
    \end{align*}
    Integrating over $\omega$ weighted by $\gamma(\omega)$ gives
    \begin{align*}
        S(X) &= X + \delta\int_{-\infty}^{\infty} \gamma(\omega) \cdot \eqref{eq:contraction_simplify} \cdot \diff \omega \\
        &= \left(1 - \delta\int_{-\infty}^\infty \gamma(\omega)(a^2+b^2+2c^2) \diff \omega\right)(\sigma_X \otimes B_X + \sigma_Y \otimes B_Y) \\
        &\qquad + \left(1 - \delta\int_{-\infty}^{\infty} \gamma(\omega)(2a^2+2b^2) \diff \omega\right) \sigma_Z \otimes B_Z \,. 
    \end{align*}
    Here is the key step for the contraction. 
    \begin{align*}
        \int_{-\infty}^{\infty} \gamma(\omega) \cdot b^2 \cdot \diff \omega &= \int_{-\infty}^{\infty} \gamma(\omega) \cdot \left(\widehat{f}(\omega + h)\right)^2 \cdot \diff \omega \\
        &=\int_{-\infty}^{\infty} \exp\left(-\frac{(\omega + \Delta)^2}{2\eta^2}\right) \cdot \left(\frac{(2\pi)^{-1/4}}{\sqrt{\sigma}} e^{-\frac{(\omega + h)^2}{4\sigma^2}}\right)^2 \cdot \diff \omega \\
        &= \frac{1}{\sigma \sqrt{2\pi}} \int_{-\infty}^{\infty} \exp\left(-\frac{(\omega + \Delta)^2}{2\eta^2} - \frac{(\omega + h)^2}{2\sigma^2} \right) \diff \omega \\
        &= \frac{1}{\sigma \sqrt{2\pi}} \sqrt{2\pi} \frac{\eta \sigma}{\sqrt{\eta^2 + \sigma^2}}\exp\Paren{-\frac{(\Delta - h)^2}{2(\eta^2 + \sigma^2)}} \\
        &= \frac{\eta}{\sqrt{\eta^2 + \sigma^2}}\exp\Paren{-\frac{(\Delta - h)^2}{2(\eta^2 + \sigma^2)}} \,,
    \end{align*}
    where the last second equality is due to the formula for the integral of the product of two Gaussian functions, i.e. 
    \begin{equation}\label{eq:int_prod_gaussians}
        \int_{-\infty}^{\infty} e^{-a(x-\mu_1)^2}e^{-b(x-\mu_2)^2} \diff x = \sqrt{\frac{\pi}{a+b}} \exp \Paren{-\frac{ab}{a+b}(\mu_1 - \mu_2)^2} \,.
    \end{equation}
    If $h>1/\beta$, then $\Delta = h$ and thus $\int_{-\infty}^{\infty} \gamma(\omega) \cdot b^2 \cdot \diff \omega = \frac{1}{\sqrt{2}}$.  
    If $h\leq 1/\beta$, then $\Delta = 1/\beta$ and $\eta = \sigma = \sqrt{\Delta/\beta} = 1/\beta$. Since $h\geq 0$, then 
    \begin{align*}
        \int_{-\infty}^{\infty} \gamma(\omega) \cdot b^2 \cdot \diff \omega = \frac{1}{\sqrt{2}}\exp\Paren{-\frac{(1 - \beta h)^2}{4}} \geq \frac{e^{-1/4}}{\sqrt{2}} \,. 
    \end{align*}
    So combining this two cases, since $e^{-1/4}\leq 1$, we always have
    \begin{align*}
        \min \left\{ \int_{-\infty}^\infty \gamma(\omega)(a^2+b^2+2c^2) \diff \omega,\ \int_{-\infty}^\infty \gamma(\omega)(2a^2+2b^2) \diff \omega\right\} \geq \int_{-\infty}^\infty \gamma(\omega)b^2 \diff \omega \geq \frac{e^{-1/4}}{\sqrt{2}} \,. 
    \end{align*}
    On the other hand, to apply \Cref{lem:pauli_channel_norm}, we need the conditions that $1 - \delta \int_{-\infty}^\infty \gamma(\omega)(a^2+b^2+2c^2) \diff \omega \geq 0$ and $1 - \delta \int_{-\infty}^\infty \gamma(\omega)(2a^2+2b^2) \diff \omega\geq 0$. Since
    \begin{align*}
        &\max \left\{ \int_{-\infty}^\infty \gamma(\omega)(a^2+b^2+2c^2) \diff \omega,\ \int_{-\infty}^\infty \gamma(\omega)(2a^2+2b^2) \diff \omega\right\} \\
        &\leq 2\int_{-\infty}^\infty \gamma(\omega)(a^2+b^2+c^2) \diff \omega \\
        &= \frac{2\eta}{\sqrt{\eta^2 + \sigma^2}} \Paren{\exp\Paren{-\frac{(\Delta - h)^2}{2(\eta^2 + \sigma^2)}} + \exp\Paren{-\frac{(\Delta + h)^2}{2(\eta^2 + \sigma^2)}} + \exp\Paren{-\frac{\Delta^2}{2(\eta^2 + \sigma^2)}}} \\
        &\leq \frac{6\eta}{\sqrt{\eta^2 + \sigma^2}} = 3\sqrt{2} \,,
    \end{align*}
    we can apply \Cref{lem:pauli_channel_norm} to any $\delta$ such that $0\leq \delta \leq \frac{1}{3\sqrt{2}}$ and conclude that
    \begin{align*}
        \norm{S(X)}_1 \leq \left(1 -  \frac{e^{-1/4}}{\sqrt{2}}\delta\right) \cdot \norm{X}_1 \,. & \qedhere
    \end{align*}
\end{proof}

\begin{lemma}\label{lem:pauli_channel_norm}
    Let $B_X, B_Y, B_Z$ be Hermitian operators on $n-1$ qubits and set $B = \sigma_X \otimes B_X + \sigma_Y \otimes B_Y + \sigma_Z \otimes B_Z$. Then for any $c_x, c_y, c_z\geq 0$, we have that
    \begin{equation*}
        \norm{c_x \sigma_X \otimes B_X + c_y \sigma_Y \otimes B_Y + c_z\sigma_Z \otimes B_Z}_1 \leq \max\{c_x, c_y, c_z\}\cdot \norm{B}_1 .
    \end{equation*}
\end{lemma}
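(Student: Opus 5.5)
The idea is to diagonalize the map $B\mapsto c_x\sigma_X\otimes B_X+c_y\sigma_Y\otimes B_Y+c_z\sigma_Z\otimes B_Z$ on the first qubit and write it as a nonnegative combination of unitary conjugations, scaled by at most $\max\{c_x,c_y,c_z\}$. Since $B$ has no identity component on the first qubit, we may add an arbitrary coefficient $c_0$ on the (absent) identity part. We then view the target operator as $\Psi(B)$, where $\Psi$ is a Pauli-diagonal map on qubit one with eigenvalues $(c_0,c_x,c_y,c_z)$ on $(I,\sigma_X,\sigma_Y,\sigma_Z)$. Conjugation by $\sigma_P$ acts diagonally with signs: $\sigma_X(\cdot)\sigma_X$ has eigenvalues $(1,1,-1,-1)$, $\sigma_Y(\cdot)\sigma_Y$ has $(1,-1,1,-1)$, and $\sigma_Z(\cdot)\sigma_Z$ has $(1,-1,-1,1)$. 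Hence $\Psi(\cdot)=\sum_{P\in\{I,X,Y,Z\}} p_P\,\sigma_P(\cdot)\sigma_P$ with $p_P$ obtained from the Hadamard-type inversion, for example
\[
p_I=\tfrac14(c_0+c_x+c_y+c_z).
\]

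We choose $c_0$ so that all $p_P\ge 0$, and then bound $\sum_P p_P = c_0$. By the triangle inequality and unitary invariance of the trace norm, $\|\Psi(B)\|_1\le c_0\|B\|_1$. Nonnegativity of the four weights is equivalent to
\[
c_0\ \ge\ \max\{c_x+c_y-c_z,\ c_x-c_y+c_z,\ -c_x+c_y+c_z,\ -c_x-c_y-c_z\},
\]
which is the condition for a Pauli channel to be completely positive. Unfortunately the best value, $c_0=\max_{\text{pairs}}(c_a+c_b-c_{\text{third}})$, can exceed $\max\{c_x,c_y,c_z\}$. For instance, $c_x=c_y=1$, $c_z=0$ gives $c_0=2$. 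So this naive route loses a factor of up to about $2$, and this is the main obstacle.

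To remove that loss, I would use a second decomposition that exploits the freedom in how $B$ is split rather than only mixtures of Pauli conjugations. Write $m=\max\{c_x,c_y,c_z\}$ and note that the map is a convex combination, with weights determined by the $c$'s, of ``coordinate projections'' $\Pi_S$ onto subsets $S\subseteq\{X,Y,Z\}$ of the nonidentity Pauli components. This uses a layer-cake decomposition: sort the $c$'s and write $c=\sum_k (c_{(k)}-c_{(k-1)})\mathbf 1_{S_k}$, whose total weight is exactly $m$. It then suffices to show $\|\Pi_S(B)\|_1\le\|B\|_1$ for every $S$ whenever $B$ has no identity component on qubit one. Three cases cover this:
\begin{itemize}
\item $|S|=3$: $\Pi_S(B)=B$, so there is nothing to prove.
\item $|S|=1$, say $S=\{Z\}$: $\Pi_S(B)=\tfrac12(B+\sigma_Z B\sigma_Z)$ because the identity component vanishes, so the bound follows by convexity.
\item $|S|=2$, say $S=\{X,Y\}$: $\Pi_S(B)=\tfrac12(B-\sigma_Z B\sigma_Z)$, and again the triangle inequality gives $\|\Pi_S(B)\|_1\le\|B\|_1$.
\end{itemize}
Combining these yields $\|\cdot\|_1\le m\|B\|_1$. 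This is exactly where the hypothesis $\tr_1(X)=0$ (no identity part) is essential, and the $|S|=2$ case, using the sign-flipped average, is the key step.
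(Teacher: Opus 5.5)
Your final layer-cake argument is correct and is essentially the paper's proof. For $c_x\ge c_y\ge c_z$, expanding $c_z B+(c_y-c_z)\frac{1}{2}(B-\sigma_Z B\sigma_Z)+(c_x-c_y)\frac{1}{2}(B+\sigma_X B\sigma_X)$ gives exactly the paper's signed combination $p_0=\frac{1}{2}(c_x+c_z)$, $p_1=\frac{1}{2}(c_x-c_y)$, $p_2=0$, $p_3=\frac{1}{2}(c_z-c_y)$ of $B$ and its Pauli conjugates, with absolute weights summing to $c_x$; the negative coefficient is exactly what your abandoned nonnegative-mixture attempt lacked.

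One small correction to your commentary: the no-identity hypothesis is used in your $|S|=1$ and $|S|=3$ cases, whereas $\frac{1}{2}(B-\sigma_Z B\sigma_Z)$ equals the $X,Y$-part for any $B$.
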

\begin{proof}
    For each $P\in \{\sigma_X, \sigma_Y, \sigma_Z\}$, define the unitary conjugate $B^{(P)} = (P\otimes I_{2^{n-1}}) B (P\otimes I_{2^{n-1}})$:
    \begin{align*}
        B^{(\sigma_X)} &= \sigma_X \otimes B_X - \sigma_Y \otimes B_Y - \sigma_Z \otimes B_Z , \\
        B^{(\sigma_Y)} &= -\sigma_X \otimes B_X + \sigma_Y \otimes B_Y - \sigma_Z \otimes B_Z , \\
        B^{(\sigma_Z)} &= -\sigma_X \otimes B_X - \sigma_Y \otimes B_Y + \sigma_Z \otimes B_Z .
    \end{align*}
    Due to the unitary invariance, $\norm{B^{(P)}}_1 = \norm{B}_1$. 
    For any real numbers $p_0, p_1, p_2, p_3$ such that
    \begin{align*}
        c_x &= p_0 + p_1 - p_2 - p_3 ,\\
        c_y &= p_0 - p_1 + p_2 - p_3 ,\\
        c_z &= p_0 - p_1 - p_2 + p_3 , 
    \end{align*}
    we have
    \begin{align*}
        \norm{c_x \sigma_X \otimes B_X + c_y \sigma_Y \otimes B_Y + c_z\sigma_Z \otimes B_Z}_1 &=
        \norm{p_0 B + p_1 B^{(\sigma_X)} + p_2 B^{(\sigma_Y)}+ p_3 B^{(\sigma_Z)}} \\
        &\leq (\abs{p_0} + \abs{p_1} + \abs{p_2} + \abs{p_3})\cdot \norm{B}_1 \tag{triangle inequality}
    \end{align*}
    Without loss of generality, let us assume that $c_x \geq c_y\geq c_z\geq 0$. Then it suffices to show that there exist $p_0,p_1,p_2,p_3$ such that $\abs{p_0} + \abs{p_1} + \abs{p_2} + \abs{p_3} = c_x$. 

    One can verify that a valid choice is to 
    set $p_0 = \frac12 (c_x + c_z) \geq 0$, $p_1 = \frac12(c_x - c_y) \geq 0$, $p_2 = 0$, $p_3 = \frac12(c_z - c_y) \leq 0$. This completes the proof. 
\end{proof}

%% file: dobrushin.tex
\section{Rapid mixing from quantum Dobrushin condition}

Recall the definitions of a transport plan and its associated cost vector in \Cref{def:transport_plan_cost_vec}. 
\begin{definition}[Quantum Wasserstein norm]
    For a traceless Hermitian matrix $X$, its \emph{quantum Wasserstein norm} of order $1$ is given by
    \begin{equation*}
        \norm{X}_{W_1} = \min \left\{ \norm{x}_1: x \text{ is the cost vector of a transport plan } (X_i)_{i\in [n]} \text{ of }X\right\}. 
    \end{equation*}
\end{definition}

\begin{proposition}\label{prop:trace_wasserstein_relation}
    For any traceless Hermitian matrix $X$, we have $\frac12 \norm{X}_1 \leq \norm{X}_{W_1}\leq \frac{n}{2} \norm{X}_1$. 
\end{proposition}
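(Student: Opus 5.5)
For the lower bound, fix any transport plan $(X_i)_{i\in[n]}$ of $X$ with cost vector $x$. Since $X=\sum_i X_i$, the triangle inequality for the Schatten-$1$ norm gives
\[
\tfrac12\norm{X}_1\le \sum_i \tfrac12\norm{X_i}_1=\norm{x}_1 .
\]
Minimizing over all transport plans then yields $\tfrac12\norm{X}_1\le\norm{X}_{W_1}$.

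For the upper bound, I will build an explicit transport plan by telescoping through partial traces, in the spirit of \cite{dmtl21}. Write $\rho_k:=\tfrac{1}{2^k}I_{[k]}\otimes\tr_{[k]}(X)$ for $k=0,\dots,n$. Then $\rho_0=X$, and $\rho_n=\tfrac{\tr(X)}{2^n}I=0$ because $X$ is traceless. Set $X_k:=\rho_{k-1}-\rho_k$ for $k\in[n]$, so that $\sum_k X_k=X$ and each $X_k$ is Hermitian.

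Next I check that each $X_k$ is locally traceless at site $k$. The map $Y\mapsto \tfrac12 I_k\otimes\tr_k(Y)$ preserves the partial trace over site $k$, and $\rho_k$ is exactly this map applied to $\rho_{k-1}$. Hence $\tr_k(\rho_{k-1})=\tr_k(\rho_k)$, which gives $\tr_k(X_k)=0$, and in particular $X_k$ is traceless. Thus $(X_k)_k$ is a valid transport plan in the sense of \Cref{def:transport_plan_cost_vec}.

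It remains to bound the costs. The map $Y\mapsto \tfrac{1}{2^k}I_{[k]}\otimes\tr_{[k]}(Y)$ is a composition of a partial trace with tensoring a normalized maximally mixed state, so it is trace-norm contractive. Therefore $\norm{\rho_k}_1\le\norm{X}_1$ for every $k$, and
\[
\tfrac12\norm{X_k}_1\le\tfrac12\bigl(\norm{\rho_{k-1}}_1+\norm{\rho_k}_1\bigr)\le\norm{X}_1 .
\]
Summing over $k$ gives $\norm{X}_{W_1}\le n\norm{X}_1$, which misses the target by a factor of $2$.

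The main obstacle is recovering this factor $2$. I will use the depolarizing identity $\tfrac12 I_k\otimes\tr_k(Y)=\tfrac14\sum_{P\in\{I,X,Y,Z\}}P_kYP_k$. It gives $X_k=\tfrac14\sum_{P\ne I}\bigl(\rho_{k-1}-P_k\rho_{k-1}P_k\bigr)$, so $\tfrac12\norm{X_k}_1\le\tfrac34\norm{\rho_{k-1}}_1$. That is still not $\tfrac12$. The tool I would reach for here is \cite[Proposition~2]{dmtl21}, already quoted as \Cref{lem:cost_vec_for_partial_trace_zero}: it shows that a single-site locally traceless operator costs only $\tfrac12\norm{\cdot}_1$, and the matching upper bound $\norm{X}_{W_1}\le\tfrac n2\norm{X}_1$ is proved in \cite{dmtl21} by a more careful coupling argument. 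I would therefore cite that argument for the sharp constant if the direct telescoping cannot be tightened.
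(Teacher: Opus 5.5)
Your lower bound is correct, and your depolarizing chain is a valid transport plan. However, the proposal does not prove the stated upper bound. The chain only gives $\norm{X}_{W_1}\le n\norm{X}_1$, or $\tfrac{3n}{4}\norm{X}_1$ with the Pauli-twirl refinement, and you then defer the sharp constant to a citation. (The paper also states this proposition without proof, importing it from \cite{dmtl21}, so there is no in-paper argument to match. As a proof, though, yours stops short.)

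Appealing to \Cref{lem:cost_vec_for_partial_trace_zero} cannot recover the factor. Charging a site-$k$ locally traceless piece at $\tfrac12\norm{\cdot}_1$ is exactly what your cost computation already does. The loss happens in bounding $\norm{X_k}_1$ itself, because your intermediate operators $\rho_k$ are signed. Writing $X=c(\rho-\sigma)$ for states $\rho,\sigma$, your chain in effect sends both $\rho$ and $\sigma$ to the maximally mixed state. That is $2n$ state-to-state steps, each costing up to $\tfrac34 c$, which is exactly your $\tfrac{3n}{4}\norm{X}_1$.

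The missing idea is to pass to states first and then interpolate between them along a single path of length $n$.

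\emph{Jordan decomposition.} Write $X=X_+-X_-$. Tracelessness gives $\tr X_+=\tr X_-=c:=\tfrac12\norm{X}_1$, so $X=c(\rho-\sigma)$ with density matrices $\rho=X_+/c$ and $\sigma=X_-/c$ (the case $c=0$ is trivial).

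\emph{Hybrid chain.} Set $\omega_k:=\tr_{\{k+1,\dots,n\}}(\sigma)\otimes\tr_{[k]}(\rho)$, the $[k]$-marginal of $\sigma$ tensored with the complementary marginal of $\rho$. Then $\omega_0=\rho$, $\omega_n=\sigma$, and
\[
\tr_k(\omega_{k-1})=\tr_{\{k,\dots,n\}}(\sigma)\otimes\tr_{[k]}(\rho)=\tr_k(\omega_k).
\]
Hence $X_k:=c(\omega_{k-1}-\omega_k)$ is a transport plan of $X$.

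\emph{Cost bound.} Each $X_k$ is $c$ times a difference of two density matrices, so $\tfrac12\norm{X_k}_1\le c$. Summing over the $n$ sites gives $\norm{X}_{W_1}\le nc=\tfrac n2\norm{X}_1$.
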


\begin{definition}[Quantum Dobrushin influence matrix]\label{def:dobrushin_influence}
    Let $\Phi$ be a trace-preserving linear map which admits a decomposition into trace-preserving linear maps $\Phi = \frac1n(\Phi_1 + \cdots + \Phi_n)$. The associated \emph{Dobrushin influence matrix} $D^{(\Phi)}$ is defined to have entries
    \begin{align*}
        D_{i,j}^{(\Phi)} = \frac1n \max_{\substack{X: \tr_j(X) = 0 \\ \norm{X}_{W_1}=1}} \norm{\Phi_i(X)}_{W_1}.
    \end{align*}
\end{definition}

\begin{lemma}[Contraction from Dobrushin influence, {\cite[Lemma 5.4]{blmt2025dobrushin}}]\label{lem:contraction_dobrushin}
    Let $\Phi$ be a trace-preserving linear map $\Phi$ which admits a decomposition into trace-preserving linear maps $\Phi = \frac{1}{n} (\Phi_1 + \cdots + \Phi_n)$. Denote by $D^{(\Phi)}$ the associated Dobrushin influence matrix.
    Then for any traceless Hermitian matrix $X$,
    \begin{equation*}
        \norm{\Phi(X)}_{W_1} \leq \norm{D^{(\Phi)}}_{1\to 1} \cdot \norm{X}_{W_1}. 
    \end{equation*}
\end{lemma}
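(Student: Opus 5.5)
The plan is to reduce to single-site transport plans and use the triangle inequality for transport plans (\Cref{fact:linear_transport_plan}), together with the definition of the influence matrix. Fix a traceless Hermitian $X$ and pick an optimal transport plan $(X_j)_{j\in[n]}$ with cost vector $x$, so that $\norm{x}_1=\norm{X}_{W_1}$. The minimum in the definition of $\norm{\cdot}_{W_1}$ is attained: the set of transport plans is closed, the cost is continuous, and we can restrict to a bounded sublevel set. If attainment were an issue, we would take plans with cost within $\epsilon$ of optimal and let $\epsilon\to 0$. Each $X_j$ is traceless Hermitian with $\tr_j(X_j)=0$.

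By \Cref{lem:cost_vec_for_partial_trace_zero}, $X_j$ has a transport plan with cost vector $\tfrac12\norm{X_j}_1 e_j=x_j e_j$. Combining this with \Cref{prop:trace_wasserstein_relation} gives $\norm{X_j}_{W_1}= x_j$. Here the lower bound $\tfrac12\norm{X_j}_1\le\norm{X_j}_{W_1}$ gives the equality, though only the upper bound is needed.

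Next, for each $i,j$ with $x_j>0$, apply the definition of $D^{(\Phi)}$ to the normalized matrix $X_j/x_j$. It satisfies $\tr_j=0$ and has $W_1$-norm $1$, so
\[
\norm{\Phi_i(X_j)}_{W_1}\le n\,D^{(\Phi)}_{i,j}\,x_j .
\]
Since $\Phi_i$ is linear and $\Phi=\frac1n\sum_i\Phi_i$, we can write $\Phi(X)=\frac1n\sum_{i,j}\Phi_i(X_j)$. The map $\Phi_i$ is trace-preserving, so each $\Phi_i(X_j)$ is traceless. Hermiticity is preserved because the maps are Hermiticity-preserving by the convention in the preliminaries. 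Since $\norm{\cdot}_{W_1}$ is a norm, in particular subadditive and homogeneous by \Cref{fact:linear_transport_plan}, we get
\[
\norm{\Phi(X)}_{W_1}\le \frac1n\sum_{i,j} n D^{(\Phi)}_{i,j}x_j=\sum_j\Big(\sum_i D^{(\Phi)}_{i,j}\Big)x_j\le \norm{D^{(\Phi)}}_{1\to1}\norm{x}_1 .
\]
The right-hand side equals $\norm{D^{(\Phi)}}_{1\to1}\norm{X}_{W_1}$.

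The only delicate points are bookkeeping. First, the influence-matrix entries must be applied to $X_j$, whose $W_1$-norm is exactly $x_j$, not merely bounded by it. This is why we use the single-site plan from \Cref{lem:cost_vec_for_partial_trace_zero} to get $\norm{X_j}_{W_1}\le x_j$; this inequality direction is the one needed. Second, terms with $x_j=0$ have $X_j=0$ and contribute nothing. I expect no real obstacle beyond checking that the maximum defining $D^{(\Phi)}$ is finite and that the column sum $\sum_i D^{(\Phi)}_{i,j}$ is dominated by the $1\to1$ norm, which holds because the entries are nonnegative.
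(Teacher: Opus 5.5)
Your proof is correct and follows the paper's argument essentially line by line: take an optimal transport plan, expand $\Phi(X)=\frac1n\sum_{i,j}\Phi_i(X_j)$, apply the triangle inequality for $\norm{\cdot}_{W_1}$, bound each term by $n D^{(\Phi)}_{i,j}\norm{X_j}_{W_1}\le n D^{(\Phi)}_{i,j}x_j$, and sum over columns. Your use of the single-site plan from \Cref{lem:cost_vec_for_partial_trace_zero} to get $\norm{X_j}_{W_1}\le x_j$ is the correct justification for the step the paper writes as $\norm{X_j}_{W_1}\le \frac12\norm{X_j}_1$.
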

\begin{proof}
    Let $X_1, \ldots, X_n$ be the optimal transport plan of $X$ with cost vector $x$, i.e.\ $X = \sum_{j=1}^n X_j$, $\tr_j(X_j) = 0$, $x_j = \frac12 \norm{X_j}_1$ and $\norm{X}_{W_1} = \norm{x}_1$. Then
    \begin{align*}
        \norm{\Phi(X)}_{W_1} = \norm*{\frac{1}{n}\sum_{i=1}^n \sum_{j=1}^n \Phi_i(X_j)}_{W_1} &\leq \frac{1}{n}\sum_{i=1}^n \sum_{j=1}^n \norm*{ \Phi_i(X_j)}_{W_1} \tag{triangle inequality} \\
        &\leq \sum_{j=1}^n \sum_{i=1}^n  D_{i,j}^{(\Phi)} \cdot \norm{X_j}_{W_1} \tag{\Cref{def:dobrushin_influence}}\\
        &\leq \sum_{j=1}^n \sum_{i=1}^n  D_{i,j}^{(\Phi)} \cdot x_j \tag{$\norm{X_j}_{W_1} \leq \frac12 \norm{X_j}_{1} = x_j$} \\
        &= \norm{D^{(\Phi)}x}_1 \tag{$D_{i,j}^{(\Phi)}, x_j\geq 0$} \\
        &\leq \norm{D^{(\Phi)}}_{1\to 1} \cdot \norm{X}_{W_1} \tag{$\norm{X}_{W_1} = \norm{x}_1$}
    \end{align*}
    This completes the proof. 
\end{proof}

\begin{lemma}[Bounding the Dobrushin influence matrix]
\label{lem:dobrushin_influence_bound}
    Let $H = V + W$, where $V = \sum_{i\in [n]} V_i$ is an external field and $W$ is a bounded $L$-local degree-$D$ Hamiltonian. 
    For any inverse temperature $\beta \in (0, (DL)^{-3}/28800 ]$ and any $\delta \in [0, 1/(6\sqrt{2})]$, define $\Phi = \calI + \frac{\delta}{n} \calL^*$ where $\calL^*$ is the associated field-resonant Lindbladian defined in \Cref{def:field_lindblad}.
    Denote by $D^{(\Phi)}$ the Dobrushin influence matrix of $\Phi$. Then
    \begin{equation*}
        \norm{D^{(\Phi)}}_{1\to 1}\leq 1 - \frac{\delta}{2n} + c\delta^2 \,.
    \end{equation*}
    where $c=144+ 324 (\log(2\sqrt{\Delta_{\max} \beta }) + 4)^2$ and $\Delta_{\max} \coloneqq \max_{i\in [n]}\Delta_i$.  
\end{lemma}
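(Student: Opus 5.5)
The plan is to decompose $\Phi = \calI + \frac{\delta}{n}\calL^* = \frac1n\sum_{i=1}^n \Phi_i$ with $\Phi_i = \calI + \delta\calL_i^*$, and to bound each column of $D^{(\Phi)}$ using the update matrices from \Cref{thm:wass_growth_per_site}. Since $\beta \le (DL)^{-3}/28800 \le 1/(12\zeta)$ (using $\zeta\le DL$) and $\delta \le 1/(6\sqrt2)$, the theorem applies. It says each $\Phi_i$ has update matrix $M_i := (1+c_i\delta^2)I + \delta Q^{(i)}$, with $c_i \le c$ because $\Delta_i\le\Delta_{\max}$.

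\textbf{Bounding the entries.} Fix $j$ and $X$ with $\tr_j(X)=0$ and $\norm{X}_{W_1}=1$. I will not bound each $D_{i,j}$ through $\frac12\norm{X}_1$; instead I take an optimal transport plan $(X_k)_k$ of $X$ with cost vector $x$, $\norm{x}_1=1$. By the definition of update matrix, $\Phi_i(X)$ has a transport plan with cost vector $\le_v M_i x$. Hence $\norm{\Phi_i(X)}_{W_1} \le \norm{M_i x}_1 \le \sum_k x_k\,\norm{M_i e_k}_1$, where I use that $M_i x$ is entrywise nonnegative once the diagonal coefficient $1+c_i\delta^2 - \delta e^{-1/4}/\sqrt2$ is positive, which holds for $\delta\le 1/(6\sqrt2)$.

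The real concern is that the definition of $D_{i,j}$ only constrains $\tr_j(X)=0$, not that the plan is concentrated on $j$. I expect the cleanest fix is to reorganize the argument as in \Cref{lem:contraction_dobrushin}: replace the use of $D^{(\Phi)}$ by the averaged matrix $\frac1n\sum_i M_i$. Concretely, I would bound $D^{(\Phi)}_{i,j}$ by the $j$-th column of $\frac1n M_i$ restricted appropriately, decomposing $X$ into pieces charged to $j$ via \Cref{lem:cost_vec_for_partial_trace_zero} with $\norm{X}_1\le 2\norm{X}_{W_1}$. Reconciling these normalizations is the step I expect to be the main obstacle. If needed, I would note that the bound actually used downstream is the column-sum bound for $\frac1n\sum_i M_i$, and prove the lemma in that form, taking $D^{(\Phi)}_{i,j} \le \frac1n \norm{M_i e_j}_1$.

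\textbf{Column sums.} For column $j$,
\begin{equation*}
\sum_i \frac1n\norm{M_i e_j}_1 \le 1 + c\delta^2 - \frac{\delta}{n}\frac{e^{-1/4}}{\sqrt2} + \frac{\delta}{n}\, 120\sum_i\sum_{r\ge1}(6\beta\zeta)^r\,\Iver{j\in\ball(\{i\},r)}\,\abs{\ball(\{i\},r)}.
\end{equation*}
Both the number of $i$ with $\dist(i,j)\le r$ and $\abs{\ball(\{i\},r)}$ are at most $(DL)^r$ up to a constant factor, say $\le (DL)^r$ when $D,L\ge1$, since each step reaches at most $D(L-1)$ new sites. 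The tail is therefore at most $120\sum_{r\ge1}(6\beta\zeta (DL)^2)^r \le 120\sum_{r\ge1}(6\beta(DL)^3)^r$. With $\beta(DL)^3\le 1/28800$, the ratio is at most $1/4800$, so the tail is at most $120\cdot\frac{1/4800}{1-1/4800}\approx 0.025$.

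Since $e^{-1/4}/\sqrt2 \approx 0.5507$, the net diagonal coefficient is at most $-(0.5507-0.025)\le -\frac12$. This gives $\norm{D^{(\Phi)}}_{1\to1}\le 1-\frac{\delta}{2n}+c\delta^2$. The one remaining routine check is the ball-counting estimate $\abs{\ball(\{i\},r)}\le (DL)^r$.
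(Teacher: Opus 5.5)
Your proposal is correct and is essentially the paper's proof. You bound each entry by $nD^{(\Phi)}_{i,j}\le\norm{M_ie_j}_1$ using \Cref{thm:wass_growth_per_site}, and then sum the column using $\abs{\ball(\{i\},r)}\le(DL)^r$. The paper bounds the tail more crudely by $0.05$ instead of your $0.025$; either suffices against $e^{-1/4}/\sqrt2>0.55$.

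The step you flag as the main obstacle is immediate, and it is exactly how the paper argues. Since $\tr_j(X)=0$, \Cref{lem:cost_vec_for_partial_trace_zero} gives $X$ a transport plan with cost vector $\frac12\norm{X}_1 e_j\leq_v \norm{X}_{W_1}e_j=e_j$, using \Cref{prop:trace_wasserstein_relation} for the inequality. The update-matrix property holds for \emph{every} transport plan, so you may apply it to this particular plan. The optimal-plan detour and the fallback of restating the lemma are therefore unnecessary.
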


\begin{proof}
    We first split our Lindbladian as follows:
    \begin{equation*}
        \Phi = \calI + \frac{\delta}{n} \calL^* = \frac{1}{n} \sum_{i=1}^n (\underbrace{\calI + \delta \calL^*_i}_{= \Phi_i}). 
    \end{equation*}
    We will analyze the Dobrushin influence matrix $D^{(\Phi)}$ of $\Phi$. 
    Since each entry of $D^{(\Phi)}$ is nonnegative, then $\norm{D^{(\Phi)}}_{1\to 1} = \max_j \sum_{i=1}^n D_{i,j}^{(\Phi)}$, i.e.\ the maximum column sum. 

    We first upper bound each entry of $D^{(\Phi)}$. 
    Let $X$ be a Hermitian traceless matrix with $\tr_j(X) = 0$ and $\norm{X}_{W_1} = 1$, which means that $X$ has a transport plan with cost vector at most $e_j$. Then by \Cref{thm:wass_growth_per_site}, $\Phi_i(X)$ has a transport plan with cost vector entry-wise upper bounded by
    \begin{equation*}
        \Paren{(1+c_i\delta^2)\cdot I + \delta \cdot Q^{(i)}}e_j, \quad \text{where} \quad Q^{(i)} = - \frac{e^{-1/4}}{\sqrt{2}} E_{\{i\}} + 120 \sum_{r\geq 1} (6\beta \zeta)^r \cdot E_{\ball(\{i\}, r)} \,,
    \end{equation*}
    and $c_i = 144+ 324 (\log(2\sqrt{\Delta_i \beta }) + 4)^2 \leq c$. 
    Hence,
    \begin{align*}
        n D_{i,j}^{(\Phi)} &= \max_{\substack{X: \tr_j(X) = 0 \\ \norm{X}_{W_1}=1}} \norm{\Phi_i(X)}_{W_1} \\
        &\leq \norm*{\left((1+c\delta^2)\cdot I + \delta \cdot \left(-\frac{e^{-1/4}}{\sqrt{2}} E_{\{i\}} + 120 \sum_{r\geq 1} (6\beta \zeta)^r \cdot E_{\ball(\{i\}, r)}\right)\right) e_j}_1 \\
        &\leq \Paren{1 - \Iver{i=j} \cdot \frac{e^{-1/4}}{\sqrt{2}}\delta } + 120\delta \cdot \norm*{\sum_{r\geq 1} (6\beta \zeta)^r \cdot E_{\ball(\{i\}, r)}\cdot e_j}_1 + c\delta^2. 
    \end{align*}
    Now, we sum the entries over a column. Using the triangle inequality again, we have
    \begin{align*}
        \sum_{i=1}^n D_{i,j}^{(\Phi)} &\leq 1 - \frac{e^{-1/4}}{\sqrt{2}} \cdot \frac{\delta}{n} + c\delta^2 + \frac{120\delta}{n} \sum_{i=1}^n \norm*{\sum_{r\geq 1} (6\beta \zeta)^r \cdot E_{\ball(\{i\}, r)}\cdot e_j}_1,
    \end{align*}
    where
    \begin{align*}
        \sum_{i=1}^n \norm*{\sum_{r\geq 1} (6\beta \zeta)^r \cdot E_{\ball(\{i\}, r)}\cdot e_j}_1 &\leq \sum_{i=1}^n \sum_{r\geq 1} (6\beta \zeta)^r \cdot \norm*{E_{\ball(\{i\}, r)}\cdot e_j}_1 \\
        &= \sum_{r\geq 1} (6\beta \zeta)^r \sum_{i\in [n]: \dist(i,j)\leq r} \abs{\ball(\{i\}, r)} \,.
    \end{align*}
    Since $W$ is a bounded $L$-local degree-$D$ Hamiltonian, we have $\zeta \leq DL$, $\abs{\ball(\{j\}, r)}\leq (DL)^r$, and $\abs{\ball(\{i\}, r)}\leq (DL)^r$. So
    \begin{equation*}
        \sum_{i\in [n]: \dist(i,j)\leq r} \abs{\ball(\{i\}, r)} \leq \abs{\ball(\{j\}, r)} \cdot \max_{i\in [n]} \abs{\ball(\{i\}, r)} \leq (DL)^{2r} \,.
    \end{equation*}
    Continuing, with the assumption that $6 \beta (DL)^3\leq 1/2$, we have
    \begin{equation*}
        \sum_{i=1}^n \norm*{\sum_{r\geq 1} (6\beta \zeta)^r \cdot E_{\ball(\{i\}, r)}\cdot e_j}_1 \leq \sum_{r\geq 1} (6\beta DL)^r (DL)^{2r} \leq 12 \beta (DL)^3 \,.
    \end{equation*}
    Since $e^{-1/4}/\sqrt{2}>0.55$ and $\beta \leq 0.05/(12\cdot 120 \cdot (DL)^3)$, we have
     \begin{align*}
        \sum_{i=1}^n D_{i,j}^{(\Phi)} &\leq 1 - \frac{e^{-1/4}}{\sqrt{2}}\cdot\frac{\delta}{n} + c\delta^2 + \frac{120\delta}{n} \cdot 12\beta (DL)^3 \leq 1 - \frac{\delta}{2n} + c\delta^2, 
    \end{align*}
    which completes the proof. 
\end{proof}

We are now ready to prove our main theorem, restated below for convenience. 
\main*
\begin{proof}
    Since the Gibbs state $\sigma$ is a fixed point of $\calL^*$ (by \Cref{prop:kms}), we have $\Phi(\sigma) = \sigma$. 
    Let $X = \rho - \sigma$ which is traceless and Hermitian. 
    Then by \Cref{lem:contraction_dobrushin}, 
    \begin{align*}
        \norm{\Phi(\rho - \sigma)}_{W_1} \leq \norm{D^{(\Phi)}}_{1\to 1} \cdot \norm{\rho - \sigma}_{W_1}. 
    \end{align*}
    This means that for any interger $\tau \geq 1$, we have
    \begin{align*}
        \norm{\Phi^{\tau}(\rho - \sigma)}_{W_1} \leq \norm{D^{(\Phi)}}^{\tau}_{1\to 1} \cdot \norm{\rho - \sigma}_{W_1}. 
    \end{align*}    
    Using \Cref{prop:trace_wasserstein_relation}, we know $\norm{\Phi^\tau(\rho - \sigma)}_{W_1} \geq \frac12 \norm{\Phi^\tau(\rho - \sigma)}_1$ and $\norm{\rho - \sigma}_{W_1} \leq \frac{n}{2} \norm{\rho - \sigma}_1$. Therefore
    \begin{align*}
        \norm{\Phi^\tau(\rho) - \sigma}_{1} \leq n \cdot \norm{D^{(\Phi)}}^\tau_{1\to 1} \cdot \norm{\rho - \sigma}_{1}. 
    \end{align*}
    Plugging in \Cref{lem:dobrushin_influence_bound}, we have
    \begin{align*}
        \norm*{\left(\calI + \frac{\delta}{n} \calL^*\right)^\tau(\rho) - \sigma}_{1} \leq n \cdot \left(1 - \frac{\delta}{2n} + c\delta^2\right)^\tau \cdot \norm{\rho - \sigma}_{1}. 
    \end{align*}
    Due to the technicality that $\tau$ needs to be an integer, we consider a particular sequence $\{\delta_k\}_{k=1}^{\infty}$ where each $\delta_k = \frac{1}{k}$. 
    Let $t_0 = \lceil 2 \log(2n/\epsilon) \rceil$ which is an integer. 
    Then $\tau_k = t_0 n/\delta_k = t_0 nk$ is also an integer. 
    Now for any $t\geq t_0$, we have
    \begin{align*}
        \norm*{e^{t\calL^*}(\rho) - \sigma}_1 \leq \norm*{e^{t_0\calL^*}(\rho) - \sigma}_1 &= \lim_{k \to \infty} \ \norm*{\left(\calI + \frac{\delta_k}{n}\calL^*\right)^{t_0 \frac{n}{\delta_k}}(\rho) -\sigma}_1  \\
        &= \lim_{k \to \infty} \ \norm*{\left(\calI + \frac{1}{nk}\calL^*\right)^{t_0 n k}(\rho) -\sigma}_1  \\
        &\leq n \cdot\norm{\rho - \sigma}_1 \cdot \lim_{k \to \infty} \ \left(1 - \frac{1}{2kn} + \frac{c}{k^2}\right)^{t_0 nk}  \\
        &= n \cdot\norm{\rho - \sigma}_1 \cdot e^{-t_0/2}\leq \epsilon,
    \end{align*}
    which completes the proof. In particular, this implies that $\sigma$ is the unique stationary state of the semigroup $(e^{t\calL^*})_{t\geq 0}$. 
\end{proof}

%% file: separable.tex
\section{Separability with external field}
\label{sec:separability}
In this section, we show that high-temperature Gibbs states of local Hamiltonians remain separable as long as the magnitude of the external field is bounded by $\Theta(\log(1/\beta)/\beta)$, assuming the degree and locality are a fixed constant. 

The main theorem we prove in this section is as follows:

\sep*

The following definition of clusters will be used frequently in this section. 
\begin{definition}[Clusters]
\label{def:cluster}
Given an integer $k\geq 1$ and a bounded $L$-local degree-$D$ Hamiltonian $W = \sum_{a\in [m]} W_a$, we say that $\ba \in [m]^k$ is a cluster of length $k$ from site $i \in [n]$ if $i\in \supp(W_{a_1})$ and
\begin{equation*}
    \supp(W_{a_\ell}) \cap \bigg(\bigcup_{j\in [\ell - 1]} \supp(W_{a_j})\bigg) \neq \varnothing, \quad \text{ for each } \ell = 2, \ldots, k \,.
\end{equation*}
\end{definition}

The backbone of the proof of \Cref{thm:separability-with-external-field} is the following statement on the behavior of the Araki expansional in the presence of an external field, whose proof is deferred to \Cref{sec:quasi-locality-expansion}: 

\begin{corollary}[Araki expansionals are quasi-local]
\label{cor:araki-expansionals-quasi-local}
    Let $W^{(1)} = \sum_{a\in [m]: \ 1 \in \supp(W_a)} W_a$. 
    For any inverse temperature $0<\beta \leq \frac{1}{4DL 56^L}$ and $0\leq h \leq \frac{1}{8\beta L}\log\Paren{\frac{1}{4DL\beta}}$, we have
    \begin{equation*}
        e^{-\beta H} e^{\beta (H - W^{(1)})} = I + \sum_{k=1}^{\infty} \sum_{\substack{\ba \in [m]^k: \\ \text{cluster from }1}} \; R_{\ba} \,,
    \end{equation*}
    where for all $\ba$,
    $\abs{\supp(R_{\ba})}\leq kL$ and $\sum_{\ba} \Norm{ R_{\ba}} \leq (1/56)^{kL}$.
\end{corollary}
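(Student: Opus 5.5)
The plan is to run the Araki expansional through a Duhamel/Dyson series in imaginary time and then control it cluster by cluster. Write $H' = H - W^{(1)}$. Then $E(\beta) \coloneqq e^{-\beta H} e^{\beta H'}$ satisfies $E(0)=I$ and
\[
\partial_s E(s) = -E(s)\, e^{-s H'} W^{(1)} e^{s H'} .
\]
Iterating this equation gives the ordered expansion
\[
E(\beta) = I + \sum_{k\ge1} (-1)^k \int_{0\le s_1\le\cdots\le s_k\le \beta} W^{(1)}(s_1)\cdots W^{(1)}(s_k)\,\diff s ,
\]
with $W^{(1)}(s) = e^{-sH'} W^{(1)} e^{sH'}$. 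This form is awkward, since $W^{(1)}(s)$ is not local. I would therefore use the equivalent "BLMT-style" route instead. Expand $e^{-\beta H}e^{\beta H'}$ by writing $H = H' + W^{(1)}$ and Taylor-expanding both exponentials in the interaction terms, while keeping the on-site field $V$ inside. The natural way to organize this is via nested commutators $\mathrm{ad}_{H}^{\ell}$, with $H$ split into interaction terms $W_a$ and on-site terms $V_i$. A term of the form $[V_i,\cdot\,]$ leaves the support unchanged, while a term $[W_a,\cdot\,]$ grows it by at most $L$ and requires overlap with the current support. Grouping all words by their ordered subsequence of interaction labels $\ba$ shows that only clusters from site $1$ survive (everything else commutes and cancels against $e^{\beta H'}$). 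This gives $R_{\ba}$ supported on $\bigcup_\ell \supp(W_{a_\ell})$, hence $\abs{\supp R_{\ba}}\le kL$.

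The core estimate is the weight of each cluster. For a fixed $\ba$ of length $k$, sum over all ways to interleave on-site commutators between consecutive interaction insertions. When $j$ interaction terms have been placed, the current support has size at most $jL$, so only $\le jL$ on-site terms act nontrivially, each with $\norm{[V_i,\cdot]}\le 2h$. Summing the interleavings with the $\beta^{\ell}/\ell!$ Taylor weights is the step I expect to be the main obstacle. The target is the bound $\norm{R_{\ba}}\le \frac{1}{k!}\Paren{\frac{e^{4\beta hL}-1}{hL}}^k$ announced in the introduction. I would first attempt this via a generating-function argument: inserting on-site commutators between the $(j-1)$-th and $j$-th interaction term contributes a factor like $\int_0^{\beta}e^{4hLs}\diff s$ per step. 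Bounding $jL\le kL$ uniformly and symmetrizing over the ordering of the $k$ interaction times should give the product form with the $1/k!$. If that fails to close, I would fall back to a cruder bound by $(e^{4\beta hL})^k\beta^k$ times a combinatorial factor, which still suffices under the stated $h$ range.

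Finally, I would count and sum. The number of clusters of length $k$ from site $1$ is at most $k!(DL)^k$, since at step $\ell$ we choose a site among $\le (\ell-1)L+1\le \ell L$ support sites and then one of $\le D$ terms. Hence
\[
\sum_{\ba}\norm{R_{\ba}} \le \Paren{DL\cdot \tfrac{e^{4\beta hL}-1}{hL}}^k \le (4DL\beta\, e^{4\beta hL})^k ,
\]
using $(e^x-1)/x\le e^x$ with $x=4\beta hL$. The hypothesis $h\le \frac{1}{8\beta L}\log\frac{1}{4DL\beta}$ gives $e^{4\beta hL}\le (4DL\beta)^{-1/2}$, so each factor is at most $(4DL\beta)^{1/2}\le 56^{-L}$ by the assumption on $\beta$. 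This yields $(1/56)^{kL}$, and the same estimate justifies absolute convergence of the rearranged series.
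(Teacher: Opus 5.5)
Your plan is essentially the paper's proof. The paper makes your ``BLMT-style'' expansion precise as $e^{-\beta H}e^{\beta(H-W^{(1)})}=\sum_t\frac{\beta^t}{t!}f_t$ with $f_t=[f_{t-1},H]-f_{t-1}W^{(1)}$. The words therefore also contain right-multiplications by $W_a$ with $1\in\supp(W_a)$, besides the commutators you list. Since $[I,\cdot\,]=0$, these multiplications are what start a nonzero word, and that is the precise content of your ``everything else cancels.'' The paper then groups words by their interaction subsequence, counts on-site insertions with the stage-dependent bound $jL$, and evaluates the resulting sum exactly by an ODE induction (\Cref{lem:combo_identity}).

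Your time-ordered picture gives the same identity if you keep the rate $jL$ instead of bounding it by $kL$. The exponent $\sum_j jL(s_{j+1}-s_j)$, with $s_{k+1}=\alpha$, telescopes to $L\sum_j(\alpha-s_j)$, which is symmetric in the $s_j$. In any case, the step you call the main obstacle is not needed. The uniform count $\binom{t}{k}(kL)^{t-k}$ gives $\sum_{t\ge k}\frac{\beta^t}{t!}4^t h^{t-k}\binom{t}{k}(kL)^{t-k}=\frac{(4\beta)^k}{k!}e^{4\beta hLk}$. Because you immediately weaken $(e^x-1)/x\le e^x$, this crude bound already yields your $(4DL\beta\, e^{4\beta hL})^k$.

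The one real error is the last inequality. From $e^{4\beta hL}\le(4DL\beta)^{-1/2}$ you get a factor $(4DL\beta)^{1/2}$. The bound $(4DL\beta)^{1/2}\le 56^{-L}$ requires $\beta\le\frac{1}{4DL\,56^{2L}}$. The hypothesis as stated, $\beta\le\frac{1}{4DL\,56^{L}}$, only gives $56^{-L/2}$. Keeping $(e^x-1)/x$ instead of $e^x$ does not fix this: at the extreme admissible $\beta$ and $h$ that factor is still of order $56^{-L/2}/L$.

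The paper's own proof silently uses $\beta\le\frac{1}{4DL\,56^{2L}}$. That is what the separability theorem supplies after $\beta\mapsto\beta/2$, so the exponent $L$ in the corollary's hypothesis is a typo. You should flag this rather than assert that the bound follows from the stated assumption on $\beta$.
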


We also require the following lemma from Bakshi, Choi, and Pilatowski-Cameo~\cite{BCP26}:

\begin{lemma}[Local perturbations, {\cite[Lemma 6.2]{BCP26}}]
\label{lem:local-perturbations}
Let $X =I + Y$ such that $\abs{\supp(Y)}\leq \ell$ and $\norm{Y}\leq 4^{-\ell}$. Then, $X$ can be decomposed into a convex combination of stabilizer product states.
\end{lemma}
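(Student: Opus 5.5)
The plan is to expand $Y$ in the Pauli basis on its support and absorb each Pauli component into a share of the identity. Each resulting piece is then written explicitly as a nonnegative combination of product stabilizer projectors. Throughout I take $Y$ (hence $X$) Hermitian, as in the application in \Cref{sec:separability}. I read ``convex combination'' up to the overall normalization $\tr(X)$, i.e.\ a nonnegative combination of product stabilizer states. Let $S=\supp(Y)$ with $\abs{S}\le \ell$, and write
\begin{equation*}
Y=\sum_{P\in\{I,\sigma_X,\sigma_Y,\sigma_Z\}^{\otimes S}} y_P\, P\otimes I_{S^c},\qquad y_P=2^{-\abs{S}}\tr(PY)\in\R .
\end{equation*}
Since $\abs{\tr(PY)}\le \norm{P}_1\norm{Y}=2^{\abs{S}}\norm{Y}$, every coefficient satisfies $\abs{y_P}\le \norm{Y}\le 4^{-\ell}$.

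\emph{Step 1 (redistribute the identity).} Let $m=4^{\abs{S}}-1$ be the number of non-identity Pauli strings on $S$. I rewrite
\begin{equation*}
X= y_I\, I+\sum_{P\neq I}\Paren{\tfrac1m I+y_P P}.
\end{equation*}
Here $y_I\ge -4^{-\ell}>-1$, so the coefficient $1+y_I$ of the identity is positive. For each $P\neq I$ we have $\abs{y_P}\le 4^{-\ell}\le 4^{-\abs{S}}< 1/m$. Hence
\begin{equation*}
\tfrac1m I+y_P P=\alpha_P(I+P)+\beta_P(I-P)
\end{equation*}
with $\alpha_P=\tfrac12(\tfrac1m+y_P)\ge0$ and $\beta_P=\tfrac12(\tfrac1m-y_P)\ge0$. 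This is exactly where the $4^{-\ell}$ threshold is used: it beats the count $4^{\ell}-1$ of Pauli strings.

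\emph{Step 2 (the key identity, expected main obstacle).} I need to show that $I\pm P$ is a nonnegative combination of product stabilizer states for any Pauli string $P=\bigotimes_k P_k$. Let $T=\{k:P_k\neq I\}$, with $\abs{T}\ge 1$. Expanding the product, only the empty subset and the full set $T$ survive the sign sum, which gives
\begin{equation*}
\sum_{\substack{\epsilon\in\{\pm1\}^T\\ \prod_k\epsilon_k=\pm1}}\ \bigotimes_{k\in T}\Paren{I+\epsilon_kP_k}=2^{\abs{T}-1}\Paren{I\pm \textstyle\bigotimes_{k\in T}P_k}.
\end{equation*}
Each factor $\tfrac12(I+\epsilon_kP_k)$ is a single-qubit stabilizer state, namely an eigenprojector of $\sigma_X$, $\sigma_Y$ or $\sigma_Z$. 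On the sites outside $T$, and on all of $S^c$, I write $I=\ketbra00+\ketbra11$ and tensor this in. So $I\pm P$ is a nonnegative combination of product stabilizer states. The same decomposition applied with $P=I$ handles the term $(1+y_I)I$.

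\emph{Step 3 (assemble).} Summing the nonnegative decompositions from Steps 1 and 2 expresses $X$ as a nonnegative combination of product stabilizer states. Normalizing by $\tr(X)=2^n(1+y_I)>0$ turns this into a convex combination, which is the claim.

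The only delicate points are the sign-sum identity in Step 2 and the coefficient bound $\abs{y_P}<1/m$. Both are elementary once set up as above, so I expect the proof to be short.
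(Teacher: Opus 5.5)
The paper does not prove this lemma; it imports it verbatim from \cite{BCP26}, so there is no in-paper argument to compare against. Your route is the natural self-contained proof: Pauli-expand $Y$ on its support, bound each coefficient by $\|Y\|$, split the identity into shares, and write $I\pm P$ as a positive combination of products of single-qubit Pauli eigenprojectors. Steps 2 and 3 are correct. The sign-sum identity holds because, for $\varnothing\neq U\subsetneq T$, the character $\prod_{k\in U}\epsilon_k$ sums to zero over $\{\epsilon:\prod_{k\in T}\epsilon_k=\pm1\}$. Also, $\tr(X)=2^n(1+y_I)$ is the correct normalization.

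Step 1, however, has a bookkeeping error. Writing $I=\sum_{P\neq I}\frac1m I$ spends the whole identity on the $m=4^{|S|}-1$ non-identity shares. So the leftover term in $X=y_I I+\sum_{P\neq I}(\frac1m I+y_PP)$ is $y_I I$, which is not positive when $y_I<0$. Your remark that ``the coefficient $1+y_I$ of the identity is positive'', and your instruction to decompose $(1+y_I)I$ in Step 2, do not match this split: summing the pieces as described yields $X+I$, not $X$.

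The repair is to give all $N=4^{|S|}$ Pauli strings on $S$, including $I$, an equal share:
\[
X=\sum_{P\in\{I,\sigma_X,\sigma_Y,\sigma_Z\}^{\otimes S}}\Bigl(\tfrac1N I+y_P P\Bigr).
\]
The $P=I$ summand is $(\frac1N+y_I)I$ with $\frac1N+y_I\ge 0$. Each $P\neq I$ summand equals $\frac12(\frac1N+y_P)(I+P)+\frac12(\frac1N-y_P)(I-P)$ with nonnegative coefficients. Both facts use exactly $|y_P|\le\|Y\|\le4^{-\ell}\le 4^{-|S|}=\frac1N$. Equivalently, each of your shares has slack $\frac1m-\frac1N$, totalling $\frac1N\ge|y_I|$; the repair also avoids dividing by $m=0$ when $S=\varnothing$. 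So the $4^{-\ell}$ hypothesis is matched against the count $4^{\ell}$ of all Pauli strings, not $4^{\ell}-1$. With this change, Steps 2 and 3 go through unchanged.

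As an aside, the threshold can be weakened. Since $\sum_P y_P^2=2^{-n}\tr(Y^2)\le\|Y\|^2$, Cauchy--Schwarz gives $\sum_P|y_P|\le 2^{|S|}\|Y\|$. Then
\[
X=\bigl(1+y_I-\textstyle\sum_{P\neq I}|y_P|\bigr)I+\sum_{P\neq I}|y_P|\bigl(I+\mathrm{sgn}(y_P)P\bigr)
\]
shows the conclusion already holds under the weaker hypothesis $\|Y\|\le 2^{-\ell}$.
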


We are now ready to prove the separability theorem for external field.

\begin{proof}[Proof of \Cref{thm:separability-with-external-field}]
For any subset $S\subseteq [n]$, consider the restricted Hamiltonian $H_S = \sum_{i \in S} V_i + \sum_{a \;:\; \supp(W_a) \subseteq S } W_a$. 
Further, for any $i \in S$, let $W^{(i)}_{S}$ be the interaction terms in $H_S$ that touch qubit $i$, i.e.\ $W^{(i)}_{S} = \sum_{a\;:\; i \in \supp(W_a)\subseteq S} W_a$. Then, let 
\begin{equation*}
     H_S^{(i)} = H_S - W_S^{(i)} = H_{S \setminus \set{i}} + V_i\,.
\end{equation*}
Observe, by construction, $H_S^{(i)}$ does not have any interaction term acting at site $i$ and therefore
\begin{equation*}
    e^{-\beta H_S^{(i)}} =  e^{-\beta V_i}  \; \otimes  \;e^{-\beta H_{S \setminus \set{i}} }  \,,
\end{equation*}
where we abuse the notation $e^{-\beta V_i}$ to denote the single-qubit operator restricted to site $i$ only and $e^{-\beta H_{S \setminus \set{i}} }$ to denote the operator restricted to action on qubits $[n]\setminus \{i\}$. 
Next, observe 
\begin{equation}
\label{eqn:nice-conjugation}
    e^{-\beta H_S } = e^{-\beta H_S^{(i)}/2} \underbrace{\Paren{ e^{\beta H_S^{(i)}/2}  \cdot e^{-\beta H_S/2} }}_{X_{S,i}^\dagger} \underbrace{ \Paren{ e^{-\beta H_S/2} \cdot e^{\beta H_S^{(i)}/2}    }}_{X_{S,i}}   e^{-\beta H_S^{(i)}/2} 
\end{equation}
Applying \Cref{cor:araki-expansionals-quasi-local} with $\beta \gets \beta/2$ to $X_{S,i}$, we have $X_{S,i} = I + \sum_{k=1}^{\infty} \sum_{\lambda \in \Lambda_{S,i,k}} \; R_{\lambda} $, where each $\lambda$ is a cluster of $k$ interacting terms in $W_S$ starting at qubit $i$ as defined in \Cref{def:cluster}. 
Let us write the support of the cluster $\lambda$ as $U_\lambda = \cup_{a \in \lambda} \supp(W_a)$. Then, we know that $i \in U_\lambda$, $\abs{U_{\lambda}}\leq kL$, $\supp(R_{\lambda}) \subseteq U_{\lambda}$, and    $\sum_{\lambda \in \Lambda_{S,i,k}} \Norm{ R_{\lambda}} \leq (1/56)^{kL}$. 

\paragraph{Single-site pining.} Given a Pauli string $A$ supported on some $U\subseteq S$ such that $i\in U$, $A \neq I$, and some $c$ such that $\abs{c}\leq 7^{-\abs{U}}$, we claim that we can write
\begin{equation*}
    X_{S,i}^{\dagger} \Paren{I + cA} X_{S,i} = \sum_{j } w_j \ketbra{s_j}{s_j}_i \otimes \Paren{I + c_j' B_j}\,,
\end{equation*}
where $w_j\geq 0$, $\ket{s_j}$ is a stabilizer state at qubit $i$, $\abs{c_j'} \leq 7^{-\abs{U'}}$ and $U_j' = \supp(B)$. 

To see why, first write $R_\lambda$ in the Pauli basis as follows: $R_\lambda = \sum_P \alpha_{\lambda,P} \cdot P$. Since $\abs{U_\lambda}\leq kL$, there are at most $4^{kL}$ such Pauli terms and thus  
\begin{equation*}
    \calZ_k = \sum_{\lambda \in \Lambda_{S, i,k}} \sum_{P} \;\; \abs{ \alpha_{\lambda,P}} \leq 4^{kL} \sum_{\lambda \in \Lambda_{S, i,k}} \Norm{R_{\lambda}} \leq 14^{-kL}\,. 
\end{equation*}
Next, consider a random operator $Z$ sampled based on the following process: 
\begin{itemize}
    \item Sample an index $k$, where $k \geq 1$ is sampled with probabiltiy $2^{-kL}$ and $k=0$ is sampled with probabiltiy $1 - \sum_{k\geq 1} 2^{-kL}$. 
    \item If $k =0$, set $Z = I$.
    \item If $k\geq 1$, sample $(\lambda, P)$ with probability $\abs{\alpha_{\lambda,P}}/ \calZ_k$ and then set $Z = I + zP$ where $z = 2^{kL} \cdot \calZ_k \cdot \alpha_{\lambda, P}/\abs{\alpha_{\lambda, P}}$. 
\end{itemize}
Then, $\expecf{}{Z} = X_{S,i}$ and the support of the Pauli string $P$ is contained within $U_{\lambda}$, $i \in U_{\lambda}$, and $\abs{z}\leq 7^{-\abs{U_{\lambda}}}$. We refer to $U_{\lambda}$ as the "associated support" for the sampled Pauli string $P$.  

Let $Z= I + zP$ and $Z'=I+z'P'$ be two independent copies drawn as described above, and let $T$, $T'$ be the associated support.  Then, $\supp(P)\subseteq T, \supp(P')\subseteq T'$, $\abs{z}\leq 7^{-\abs{T}}$ and $\abs{z'} \leq 7^{-\abs{T'}}$. 
\begin{equation*}
     \frac{1}{2} \expecf{}{ Z^\dagger \Paren{I + cA} Z' + (Z')^\dagger \Paren{I + cA} Z  } =  X_{S,i}^{\dagger} \Paren{I + cA} X_{S,i} \,.
\end{equation*}
Using the definition of $Z,Z'$ and multiplying things out, we have 
\begin{equation*}
    \begin{split}
          Z^\dagger \Paren{I + cA} Z' = I + cA + z'P' + \bar{z}P + cz'AP' + c\bar{z} PA + z' \bar{z} P P' + cz' \bar{z} PAP'\,,
    \end{split}
\end{equation*}
and similarly
\begin{equation*}
    \begin{split}
          (Z')^\dagger \Paren{I + cA} Z = I + cA + zP + \bar{z}'P' + czAP + c\bar{z}' P'A + z \bar{z}' P'P + cz \bar{z}' P'AP\,.
    \end{split}
\end{equation*}
Averaging them together, we have
\begin{equation*}
    \begin{split}
        \frac{1}{2} \Paren{ Z^\dagger\Paren{I + cA} Z' + (Z')^\dagger \Paren{I + cA} Z  } & = I + cA + \mathfrak{R}(z) P + \mathfrak{R}(z') P' \\
        & \hspace{0.2in} +  \frac{c}{2}\Paren{zAP + \bar{z}PA  }   + \frac{c}{2}\Paren{ z'AP' + \bar{z}' P'A}  \\
        & \hspace{0.2in} +  \frac{1}{2}\Paren{z \bar{z}' P'P + z'\bar{z} PP'} + \frac{c}{2}\Paren{ z\bar{z}'P'AP + z' \bar{z} PAP' } \,.
    \end{split}
\end{equation*}
In particular, we can write the $7$ terms in the RHS above as $\frac{1}{7} \sum_{\nu=1}^7 \Paren{I + 7c_\nu Q_\nu}$. Now, consider the following associated support for each term: $U'_1 = U$, $U'_2 = T$, $U'_3 = T'$, $U'_4 = U \cup T$, $U'_5 = U \cup T'$, $U'_6 = T \cup T'$ and $U'_7 = U \cup T \cup T'$. Then, for each $\nu \in [7]$, $\supp(Q_\nu) \subseteq U'_{\nu}$. The coefficients can be bounded as follows: $7 \abs{c_1} \leq 7^{-\abs{U} +1}$, $7 \abs{c_2} \leq 7^{-\abs{T} +1}$, $7 \abs{c_3} \leq 7^{-\abs{T'} +1}$, $7 \abs{c_4} \leq 7^{-(\abs{U} + \abs{T}) +1}$, $7 \abs{c_5} \leq 7^{-(\abs{U} + \abs{T'}) +1}$, $7 \abs{c_6} \leq 7^{-(\abs{T} +\abs{T'}) +1}$ and $7 \abs{c_7} \leq 7^{-(\abs{U}+\abs{T}+\abs{T'}) +1}$. 
Since $i \in U, T, T'$, after pinning site $i$, the residual support is 
$U_{\nu}'' \coloneqq U_\nu' \setminus \{1\}$ for each $\nu \in [7]$. 
Then $\supp(B_\nu) \subseteq U_{\nu}''$ and $7\abs{c_\nu} \leq 7^{-\abs{U''_{\nu}}}$, maintaining the desired invariant. 
We can then invoke \cref{lem:local-perturbations} to conclude that each $I+7c_\nu Q_\nu$ is a convex combination of stabilizer product states, yielding the overall claim.

\paragraph{Induction.}Recall \cref{eqn:nice-conjugation} with $S=[n]$:
\begin{equation*}
\begin{split}
    e^{-\beta H_{[n]} } & = e^{-\beta H_{[n]}^{(1)}/2}  \Paren{ e^{\beta H_{[n]}^{(1)}/2}  \cdot e^{-\beta H_{[n]}/2} }  \Paren{ e^{-\beta H_{[n]}/2} \cdot e^{\beta H_{[n]}^{(1)}/2}    }   e^{-\beta H_{[n]}^{(1)}/2} \\
    & = e^{-\beta H_{[n]}^{(1)}/2} \Paren{\sum_{j } w_j \ketbra{s_j}{s_j}_1 \otimes \Paren{I + c'_j B_j} } e^{-\beta H_{[n]}^{(1)}/2}  \\
    & = \Paren{ \sum_j w_j e^{-\beta V_1/2} \ketbra{s}{s}_1 e^{-\beta V_1/2}}  \otimes \Paren{ e^{-\beta H_{[n]\setminus \{1\}}/2} \Paren{I + c_1' B_1} e^{-\beta H_{[n]\setminus \{1\}}/2} } \,,
\end{split}
\end{equation*}
where we applied the single site pinning with $cA =0$. We now recurse and iteratively pin each site, completing the proof.
\end{proof}

\subsection{Quality-local perturbation of the identity}
\label{sec:quasi-locality-expansion}

The proof of \Cref{cor:araki-expansionals-quasi-local} follows from \Cref{lem:separability_combo} and counting the number of clusters.

\begin{proof}[Proof of \Cref{cor:araki-expansionals-quasi-local}]
    For any integer $k\geq 1$, let us count the number of clusters $\ba\in [m]^k$ from site $1$. Since each site is acted upon by at most $D$ terms, there are $D$ ways to pick $a_1$. 
    For $a_2$, since $W_{a_2}$ must overlap with $W_{a_1}$ and $\abs{\supp(W_{a_1})}\leq L$, then there are at most $DL$ ways to pick $a_2$. 
    Continuing, suppose we have picked $a_1, \ldots, a_i$, then there are at most $iLD$ ways to pick $a_{i+1}$. Overall, taking the product, we conclude that the number of clusters $\ba\in [m]^k$ from site $1$ is at most $D^k L^{k-1} (k-1)!$. 
    So $\abs{\supp(R_{\ba})} \leq kL$ and
    \begin{equation*}
        \sum_{\substack{\ba \in [m]^k: \\ \text{cluster from }1}} \norm{R_{\ba}} \leq D^k L^{k-1} (k-1)! \cdot \frac{1}{k!} \Paren{\frac{e^{4\beta h L}-1}{hL}}^k \leq \Paren{\frac{D(e^{4\beta h L}-1)}{h}}^k \,.
    \end{equation*}
    It remains to bound the right-hand side by $(1/56)^{kL}$, which follows from the assumptions $0<\beta \leq \frac{1}{4DL 56^{2L}}$ and $0\leq h \leq \frac{1}{8\beta L}\log\Paren{\frac{1}{4DL\beta}}$. 
\end{proof}

The rest of the section proves the following lemma. 
\begin{lemma}[Quasi-local perturbation of the identity]\label{lem:separability_combo}
    Let $W^{(1)} = \sum_{a\in [m]: \ 1 \in \supp(W_a)} W_a$. 
    For any inverse temperature $\beta >0$, 
    \begin{equation*}
        e^{-\beta H} e^{\beta (H - W^{(1)})} = I + \sum_{k=1}^{\infty} \sum_{\substack{\ba \in [m]^k: \\ \text{cluster from }1}}  R_{\ba} \,,
    \end{equation*}
    where $R_{\ba}$ is supported on $S_{\ba} = \supp(W_{a_1}) \cup \cdots \cup \supp(W_{a_k})$ and $\norm{R_{\ba}} \leq \frac{1}{k!} \Paren{\frac{e^{4\beta h L}-1}{hL}}^k$.  
\end{lemma}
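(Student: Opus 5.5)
We plan to derive the expansion from the Dyson-type (Araki) series for $e^{-\beta H}e^{\beta(H-W^{(1)})}$ and then organize it by clusters. Write $H^{(1)} = H - W^{(1)}$ and set $F(s) = e^{-sH}e^{sH^{(1)}}$ for $s\in[0,\beta]$. Differentiating gives
\begin{equation*}
F'(s) = -e^{-sH}W^{(1)}e^{sH^{(1)}} = -F(s)\cdot \Gamma_s(W^{(1)}), \qquad \Gamma_s(O) \coloneqq e^{-sH^{(1)}}Oe^{sH^{(1)}} .
\end{equation*}
Iterating produces a time-ordered series
\begin{equation*}
F(\beta) = I + \sum_{k\geq 1}(-1)^k\int_{0\leq s_1\leq\cdots\leq s_k\leq\beta}\Gamma_{s_1}(W^{(1)})\cdots\Gamma_{s_k}(W^{(1)})\,\diff s .
\end{equation*}
This series alone does not yield clusters, because $\Gamma_s(W^{(1)})$ is an imaginary-time evolution under all of $H^{(1)}$ and is not local. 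We therefore expand each $\Gamma_s(W_a)$ in nested commutators, $\Gamma_s(W_a) = \sum_{p\geq 0}\frac{(-s)^p}{p!}\mathrm{ad}_{H^{(1)}}^p(W_a)$, and then expand $H^{(1)} = V + W - W^{(1)}$ term by term. Only terms whose support meets the current support contribute nonzero commutators.

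The key reorganization separates the on-site insertions from the genuine interaction insertions. The on-site terms $V_i$ acting on sites inside the current support never enlarge the support. Each commutator with $V_i$ has norm at most $2h$ times the current norm, and there are at most $L\cdot(\text{current number of interaction terms})$ such sites. We sum the commutator strings by the genuine interaction terms they contain. Together with the $W^{(1)}$ factors, these terms define the ordered tuple $\ba$. By construction $a_1$ touches site $1$, and each later term overlaps the union of the earlier supports, so $\ba$ is a cluster from site $1$ and $R_{\ba}$ is supported on $S_{\ba}$. To make the clusters unique we order insertions by their first appearance and group the leftover products under that label.

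For the norm bound we resum the $V$-insertions exactly. If the current cluster has $\ell$ interaction terms, its support has at most $\ell L$ sites. The on-site insertions on that support then contribute at most $\sum_p \frac{(2 s \ell L h)^p}{p!}$, i.e.\ a factor $e^{2s\ell Lh}$. Doubling this to absorb both-sided conjugation gives a factor $e^{4s hL}$ per interaction term present. Each new interaction term is introduced at some time $s_\ell$ in an ordered simplex and carries weight $\norm{W_a}\leq 1$. The resulting integral is therefore bounded by
\begin{equation*}
\int_{0\le s_1\le\cdots\le s_k\le\beta}\prod_{\ell=1}^k e^{4hL s_\ell}\,\diff s = \frac{1}{k!}\Paren{\frac{e^{4\beta hL}-1}{4hL}}^k ,
\end{equation*}
which is at most the claimed $\frac{1}{k!}\Paren{\frac{e^{4\beta hL}-1}{hL}}^k$. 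One way to realize this cleanly is through an interaction picture in the on-site part only. We write $e^{-sH}$ via $V$-conjugated interaction terms $e^{-sV}W_ae^{sV}$, whose support is unchanged and whose norm is at most $e^{2s h|\supp W_a|}\leq e^{2shL}$. A generalized Duhamel expansion of $e^{-\beta(V+W)}e^{\beta(V+W-W^{(1)})}$ in these terms then gives ordered products of such conjugated operators. Each carries a norm factor $e^{4 s hL}$ from conjugation on both sides, and a standard linked-cluster argument (terms not connected to site $1$ cancel against $e^{\beta H^{(1)}}$) leaves only clusters.

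We expect the main obstacle to be this cancellation and bookkeeping step. It must show that disconnected insertions cancel exactly, so that only clusters from site $1$ survive and each cluster of length $k$ appears with a single simplex-ordered integral. Plain Dyson expansions overcount, and the $1/k!$ has to come from the time ordering rather than from counting. Our first attempt will follow the Araki/cluster-expansion route of \cite{blmt24b}: define $R_{\ba}$ recursively by adding the interaction term $a_\ell$ at imaginary time $s_\ell$, conjugated by the $V$-only evolution on the already-grown region. We then verify by induction on $k$ that the telescoping identity holds and that the norm factor is $e^{4hLs_\ell}$ per step.
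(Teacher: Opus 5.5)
\paragraph{Where the proposal breaks.} The gap is the step that assigns each cluster $\ba$ a single ordered-simplex integral $\int_{0\le s_1\le\cdots\le s_k\le\beta}\prod_\ell e^{4hLs_\ell}\,\diff s$. The expansion you actually set up does not produce this.

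In the Dyson series for $F'(s)=-F(s)\Gamma_s(W^{(1)})$, the two kinds of interaction terms enter differently:
\begin{itemize}
\item terms touching site $1$ appear only as Dyson factors, with ordered times $s_1\le\cdots\le s_{k'}$;
\item every other interaction term appears only as a commutator insertion inside one factor $\Gamma_{s_j}(W_{a_j})$, weighted by that factor's own $s_j^p/p!$.
\end{itemize}
So a label carries a nested-simplex volume, not $\beta^k/k!$. Concatenating the factor clusters into one tuple also lumps together many differently ordered contributions.

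This breaks the per-cluster bound itself, not just your derivation of it. Take $V=0$, so every $h>0$ is admissible and the target tends to $(4\beta)^k/k!$ as $h\to0^+$. Let $W_a=X_1X_2$, $W_b=Z_2Z_3$, and $\ba=(a,b,a,b,\dots,a,b)$ of length $k=2k'$. Your expansion gives exactly $R_{\ba}=\frac{\beta^{k}}{2^{k'}k'!}[W_b,W_a]^{k'}$. Since $[W_b,W_a]=2\ii X_1Y_2Z_3$, this has norm $\beta^k/(k/2)!$, which exceeds $(4\beta)^k/k!$ for large $k$ (already at $k=30$).

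Two further problems:
\begin{itemize}
\item In the two-series interaction-picture variant, the claim that disconnected terms ``cancel against $e^{\beta H^{(1)}}$'' is the whole difficulty, and it is only asserted.
\item The weights are miscounted: a commutator insertion costs $2\norm{W_a}$, not $\norm{W_a}$. Your intermediate bound $\frac{1}{k!}\Paren{\frac{e^{4\beta hL}-1}{4hL}}^k$ already fails for $R_{(a,b)}=\frac{\beta^2}{2}[W_b,W_a]$, whose norm is $\beta^2$.
\end{itemize}

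\paragraph{The missing idea.} What is missing is the generator form of the Araki expansional. The function $F(s)=e^{-sH}e^{sH^{(1)}}$ also satisfies $F'(s)=\mathcal{D}(F(s))$, where $\mathcal{D}(X)=[X,H]-XW^{(1)}=[X,V]+\sum_{a:\,1\notin\supp(W_a)}[X,W_a]-\sum_{a:\,1\in\supp(W_a)}W_aX$. This is a sum of local superoperators, and $F(\beta)=e^{\beta\mathcal{D}}(I)$.

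The paper Taylor-expands $\sum_t\frac{\beta^t}{t!}\mathcal{D}^t(I)$ into words of local steps and labels each word by its subsequence of $W$-indices.
\begin{itemize}
\item A word whose label is not a cluster from $1$ vanishes identically. The first nonzero step must multiply $I$ by $-W_{a_1}$ with $1\in\supp(W_{a_1})$, and commutators with disjoint terms vanish, so no cancellation argument is needed.
\item The $V$-steps are counted by $A_{k,t}$. The exact identity $\sum_t\frac{\alpha^t}{t!}A_{k,t}=\frac{1}{k!}\Paren{\frac{e^{\alpha L}-1}{L}}^k$ with $\alpha=4\beta h$ gives the stated bound.
\end{itemize}

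Your closing sentence is a continuous-time version of the same idea, and it works once made precise. Take the interaction picture of $e^{\beta\mathcal{D}}$ with respect to $X\mapsto[X,V]$, not of $e^{-\beta H}$ and $e^{\beta H^{(1)}}$ separately. Then $e^{-uV}(\cdot)e^{uV}$ acts on the accumulated product, whose support after $\ell$ insertions has at most $\ell L$ sites. The weights telescope to
\[
\int_{0\le s_1\le\cdots\le s_k\le\beta}\prod_{\ell=1}^k 2e^{2hL(\beta-s_\ell)}\,\diff s=\frac{1}{k!}\Paren{\frac{e^{2\beta hL}-1}{hL}}^k,
\]
which is within the claimed bound. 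In both versions, every insertion, including the multiplications by $W^{(1)}$, sits in one global ordering. Your Dyson series in $\Gamma_s(W^{(1)})$ destroys that ordering.
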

\begin{proof}
    We begin with
    \begin{equation*}
        e^{-\beta H} e^{\beta (H - W^{(1)})} = \sum_{t=0}^{\infty} \frac{\beta^t}{t!} f_t(H, W^{(1)}) \,,
    \end{equation*}
    where $f_0(H, W^{(1)}) = I$ and for any $t\geq 1$, 
    \begin{equation*}
        f_t(H, W^{(1)}) = [f_{t-1}(H, W^{(1)}), H] - f_{t-1}(H, W^{(1)}) W^{(1)} \,.
    \end{equation*}
    Each recursion step is therefore a sum of two contributions, so $f_t(H, W^{(1)})$ can be written as a sum of $2^t$ terms.
    Concretely, an index string $\bb\in \{0,1\}^t$ specifies one particular sequence of choices in this expansion. Starting with $g_0(\bb) = I$, define $g_i(\bb)$ iteratively for $i=1, \ldots, t$ by
    \begin{align*}
        g_i(\bb) = \begin{cases}
            [g_{i-1}(\bb), H]\,, & \text{if } b_i=1\,,\\
            -g_{i-1}(\bb)W^{(1)}\,, & \text{if } b_i=0\,.
        \end{cases}
    \end{align*}
    Summing over all sequences gives $f_t(H, W^{(1)}) = \sum_{\bb\in \{0,1\}^t} g_t(\bb)$. 

    Because both $H$ and $W^{(1)}$ are sums of local terms, we can further define a more detailed sequence of operations that describes which particular local terms are taken into the commutator or being multiplied. 
    Concretely, let us write the restricted index sets from $W^{(1)}$ as $M_1 \coloneqq \{a\in [m]: 1\in \supp(W_a)\}$. 
    For each step $i$, define the allowed index set
    \begin{align*}
        \calI(b_i, c_i) \coloneqq \begin{cases}
            [m], & \text{if }(b_i, c_i) = (1,1) \ (\text{take the commutator with some }W_a) \\
            [n], & \text{if }(b_i, c_i) = (1,0) \ (\text{take the commutator with some }V_j) \\
            M_1, & \text{if }(b_i, c_i) = (0,1) \ (\text{multiply by some }W_a)
        \end{cases}
    \end{align*}
    So $c_i=1$ means the step uses a $W$-term and $c_i=0$ means the step uses a $V$-term.  
    We define a set of $\bb$-admissible $(\bc, \ba)$ as
    \begin{equation*}
        \calA_{\bb} \coloneqq \{(\bc, \ba): \text{for each }i\in [t], \ b_i=0 \Rightarrow c_i=1,  \text{ and } a_i \in \calI(b_i, c_i) \} \,.
    \end{equation*}
    Then for any $\bb\in \{0,1\}^t$ and $(\bc, \ba)\in \calA_{\bb}$, starting with $F_0(\bb, \bc, \ba) = I$, define $F_i(\bb, \bc, \ba)$ iteratively for $i=1, \ldots, t$ by
    \begin{align*}
        F_i(\bb, \bc, \ba) = \begin{cases}
            [F_{i-1}(\bb, \bc, \ba), W_{a_i}], & \text{if }(b_i, c_i)=(1,1) \\
            [F_{i-1}(\bb, \bc, \ba), V_{a_i}], & \text{if }(b_i, c_i)=(1,0) \\
            -F_{i-1}(\bb, \bc, \ba) W_{a_i}, & \text{if }(b_i, c_i)=(0,1)
        \end{cases}
    \end{align*}
    Summing over such $\bb$ and $(\bc, \ba)\in \calA_{\bb}$ gives
    \begin{equation}\label{eq:separable_f_expand}
        f_t(H, W^{(1)}) = \sum_{\bb\in \{0,1\}^t} \sum_{(\bc, \ba)\in \calA_{\bb}} F_t(\bb, \bc, \ba) \,.
    \end{equation}
    We now define a projection map that extracts the $W$-indices. Let $I_W(\bc) = \{i\in [t]: c_i=1\}$. 
    Suppose $\abs{I_W} = k$ with elements $i_1 < i_2 < \cdots < i_k$. Define
    \begin{equation*}
        \pi(\bc, \ba) = \widetilde{\ba} = (a_{i_1}, \ldots, a_{i_k}) \in [m]^k \,.
    \end{equation*}
    Then $\calA_{\bb}$ is the disjoint union of 
    \begin{equation*}
        \calA_{\bb} = \bigsqcup_{k=0}^t \bigsqcup_{\widetilde{\ba}\in [m]^k} \calA_{\bb}(\widetilde{\ba}), \quad \text{where }\calA_{\bb}(\widetilde{\ba}) \coloneqq \{(\bc, \ba) \in \calA_{\bb}: \pi(\bc, \ba) = \widetilde{\ba}\} \,.
    \end{equation*}
    Plugging this back to \Cref{eq:separable_f_expand} yields
    \begin{equation*}
        f_t(H, W^{(1)}) = \sum_{k=0}^t \sum_{\widetilde{\ba}\in [m]^k} \sum_{\bb\in \{0,1\}^t} \sum_{(\bc, \ba) \in \calA_{\bb}(\widetilde{\ba})} F_t(\bb, \bc, \ba)  \,,
    \end{equation*}
    We first isolate the contribution of $\widetilde{\ba} = \varnothing$. 
    In this case, $c_i=0$ for all $i$. By admissibility this forces $b_i=1$ for all $i$. 
    So every step is of the form 
    $F_i = [F_{i-1}, V_{a_i}]$.
    Starting from $F_0 = I$, the first step gives $F_1 = [I, V_{a_i}] = 0$. So all $t\geq 1$ contributions vanish. 
    Then
    \begin{equation*}
        f_t(H, W^{(1)}) = I \cdot \Iver{t=0} + \sum_{k=1}^t \sum_{\widetilde{\ba}\in [m]^k} \sum_{\bb\in \{0,1\}^t} \sum_{(\bc, \ba) \in \calA_{\bb}(\widetilde{\ba})} F_t(\bb, \bc, \ba)  \,,
    \end{equation*}
    Here are the two key observations:
    \begin{itemize}
        \item If $\widetilde{\ba}$ is not a cluster from $1$, then $F_t(\bb, \bc, \ba) = 0$.
        \item If there exists $i\in [t]$ such that $c_i = 0$ and $a_i \notin \bigcup_{j<i: c_j=1} \supp(W_{a_j})$, then $F_t(\bb, \bc, \ba) = 0$. 
    \end{itemize}
    Observation $2$ motivates us to define
    \begin{equation*}
        \calA(\widetilde{\ba}) \coloneqq \Big\{(\bc, \ba): \pi(\bc, \ba) = \widetilde{\ba} \text{ and for each }i\in [t] \text{ with }c_i=0, \ a_i \in \bigcup_{j<i: \ c_j=1} \supp(W_{a_j})\Big\} \,.
    \end{equation*}
    Then 
    \begin{equation*}
        f_t(H, W^{(1)}) = I \cdot \Iver{t=0} + \sum_{k=1}^t \sum_{\substack{\widetilde{\ba} \in [m]^k: \\ \text{cluster from }1}} \underbrace{\sum_{\bb\in \{0,1\}^t} \sum_{(\bc, \ba) \in \calA_{\bb}(\widetilde{\ba}) \cap \calA(\widetilde{\ba})} F_t(\bb, \bc, \ba)}_{\coloneqq G_{t, \widetilde{\ba}}} \,.
    \end{equation*} 
    where $\supp(G_{t, \widetilde{\ba}}) \subseteq S_{\widetilde{\ba}}$. 
    We now bound $\norm{G_{t, \widetilde{\ba}}}$. Since $\norm{[X, Y]}\leq 2\norm{X}\cdot \norm{Y}$ and $\norm{V_i}\leq h$, one can prove by induction that 
    \begin{equation*}
        \norm{F_t(\bb, \bc, \ba)} \leq 2^{b_1+\cdots + b_t}\cdot  h^{t-(c_1 + \cdots + c_t)} \leq 2^t \cdot  h^{t-(c_1 + \cdots + c_t)} \,.
    \end{equation*}
    So given $\widetilde{\ba} \in [m]^k$, we have
    \begin{equation*}
        \norm{G_{t, \widetilde{\ba}}} \leq \sum_{\bb\in \{0,1\}^t} \sum_{(\bc, \ba) \in \calA_{\bb}(\widetilde{\ba})\cap \calA(\widetilde{\ba})} \norm{F_t(\bb, \bc, \ba)} \leq 2^t \cdot h^{t-k} \cdot 2^t \cdot \abs{\calA(\widetilde{\ba})} \,.
    \end{equation*}
    We now give a clean bound for $\abs{\calA(\widetilde{\ba})}$. Recall that $\abs{\supp(W_a)}\leq L$. For a fixed $\widetilde{\ba} \in [m]^k$:
    \begin{itemize}
        \item Choose $\bc\in \{0,1\}^t$ with exactly $k$ ones (these are the positions where $W$ occurs).
        \item Once $\bc$ is fixed, the $W$-indices $a_i$ for which $c_i=1$ are forced by $\widetilde{\ba}$.
        \item For each $V$-position (where $c_i = 0$), nonzero contributions require the site index $a_i$ to be in the support of the union of prior $W$-terms, i.e.\
        \begin{align*}
            a_i \in \bigcup_{j<i: \ c_j=1} \supp(W_{a_j}) \,.
        \end{align*}
        Note that $\abs{\bigcup_{j<i: \ c_j=1} \supp(W_{a_j})} \leq (c_1 + \cdots + c_{j-1})L $ . 
    \end{itemize}
    This implies that
    \begin{align*}
        \abs{\calA(\widetilde{\ba})} \leq A_{k, t} \coloneqq \sum_{\substack{\bc\in \{0,1\}^t:\\ c_1 + \cdots + c_t=k}} \prod_{j\in [t]} \Paren{ \Iver{c_j=1} + \Iver{c_j=0} \cdot \Paren{c_1 + \cdots + c_{j-1} }L}. 
    \end{align*}
    Overall, we have
    \begin{align*}
        e^{-\beta H} e^{\beta (H - W^{(1)})}
        &= \sum_{t=0}^{\infty} \frac{\beta^t}{t!} I \cdot \Iver{t=0}
         + \sum_{t=0}^{\infty} \frac{\beta^t}{t!}
            \sum_{k=1}^t
            \sum_{\substack{\widetilde{\ba} \in [m]^k:\\ \text{cluster from }1}}
            G_{t,\widetilde{\ba}} \\
        &= I
         + \sum_{k=1}^{\infty}
            \sum_{\substack{\widetilde{\ba} \in [m]^k:\\ \text{cluster from }1}}
            \underbrace{\sum_{t\ge k} \frac{\beta^t}{t!} G_{t,\widetilde{\ba}}}_{\coloneqq R_{\widetilde{\ba}}}\,,
    \end{align*}
    where $R_{\widetilde{\ba}}$ is supported on $S_{\widetilde{\ba}}$ and
    \begin{align*}
        \norm{R_{\widetilde{\ba}}} \leq \sum_{t\geq k} \frac{\beta^t}{t!} \norm{G_{t, \widetilde{\ba}}} \leq \sum_{t\geq k} \frac{\beta^t}{t!} 4^t \cdot h^{t-k} \cdot A_{k,t} = \frac{1}{k!} \Paren{\frac{e^{4\beta h L}-1}{hL}}^k \,,
    \end{align*}
    where the last equality follows from \Cref{lem:combo_identity} with $\alpha = 4\beta h$, which is purely combinatorial. 
    We conclude the proof by replacing the notation $\widetilde{\ba}$ with $\ba$. 
\end{proof}

\begin{lemma}\label{lem:combo_identity}
    For any $\alpha, L>0$ and any integers $k,t$ such that $0\leq k\leq t$, define
    \begin{align*}
        A_{k,t} = \sum_{\substack{\bc\in \{0,1\}^t:\\ c_1 + \cdots + c_t=k}} \prod_{j\in [t]} \Paren{ \Iver{c_j=1} + \Iver{c_j=0} \cdot \Paren{c_1 + \cdots + c_{j-1} }L} \,.
    \end{align*}
    Then for any integer $k\geq 0$, 
    \begin{align*}
        \sum_{t\geq k}\frac{\alpha^t}{t!} \cdot A_{k,t} = \frac{1}{k!}\cdot \Paren{\frac{e^{\alpha L} - 1}{L}}^k \,.
    \end{align*}
\end{lemma}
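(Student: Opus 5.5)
We prove the identity combinatorially, by reading $A_{k,t}$ as a weighted count of words and then computing an exponential generating function. A string $\bc\in\{0,1\}^t$ with exactly $k$ ones splits into $k+1$ blocks of zeros: a block of length $m_0$ before the first one, and blocks of lengths $m_1,\dots,m_k$ after the $1$st, \dots, $k$th one. A zero in block $s$ has exactly $s$ ones before it, so it contributes the factor $sL$, while each one contributes the factor $1$. The zeros in block $0$ carry the factor $0\cdot L=0$, so only strings with $m_0=0$, i.e.\ $c_1=1$, survive (for $k\ge 1$). This gives
\begin{equation*}
A_{k,t}=\sum_{\substack{m_1,\dots,m_k\ge 0\\ m_1+\cdots+m_k=t-k}} \prod_{s=1}^k (sL)^{m_s}
\end{equation*}
for $k\geq1$. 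For $k=0$ we have $A_{0,0}=1$ and $A_{0,t}=0$ for $t\ge1$, so the case $k=0$ is immediate.

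The sum $\sum_{t}\frac{\alpha^t}{t!}A_{k,t}$ is then an exponential generating function. The factor $1/t!$ does not factor over blocks, so we need a different form of the sum. The cleanest route is to show that $\sum_t \frac{x^t}{t!}A_{k,t}=:\Phi_k(x)$ satisfies a recursion in $k$. Conditioning on the last block (the last one followed by $m_k$ zeros) gives the polynomial recursion $A_{k,t}=A_{k-1,t-1}+kL\,A_{k,t-1}$, since appending a zero with $k$ ones before it gives factor $kL$, and appending the $k$th one gives factor $1$. Multiplying by $x^{t-1}/(t-1)!$ and summing over $t$ turns this into the ODE
\begin{equation*}
\Phi_k'(x)=\Phi_{k-1}(x)+kL\,\Phi_k(x),\qquad \Phi_k(0)=0\ (k\ge1),\quad \Phi_0\equiv 1.
\end{equation*}

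We then verify that $\Phi_k(x)=\frac{1}{k!}\bigl(\frac{e^{xL}-1}{L}\bigr)^k$ satisfies this ODE, and conclude by uniqueness and induction on $k$. Setting $u=(e^{xL}-1)/L$, we have $u'=e^{xL}=1+Lu$, so
\begin{equation*}
\Phi_k'=\frac{u^{k-1}}{(k-1)!}(1+Lu)=\Phi_{k-1}+kL\,\Phi_k,
\end{equation*}
as required, and $\Phi_k(0)=0$ for $k\ge1$. Evaluating at $x=\alpha$ gives the claim. (These are the Stirling-number-of-the-second-kind generating functions scaled by $L$.)

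The main care point is the boundary conventions in the recursion, specifically the $t=k$ term and the vanishing of strings with a leading zero. Convergence is not an issue: $A_{k,t}\le\binom{t}{k}(kL)^{t-k}$, so the series converges for all $\alpha$, which justifies termwise differentiation.
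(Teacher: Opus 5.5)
Your proof is correct and follows the paper's argument essentially step for step. Both use the recurrence $A_{k,t}=A_{k-1,t-1}+kL\,A_{k,t-1}$ obtained by conditioning on $c_t$, turn it into the linear ODE $\Phi_k'=\Phi_{k-1}+kL\,\Phi_k$ with $\Phi_k(0)=0$, and check the closed form before inducting on $k$ via ODE uniqueness. Your block decomposition is not needed for this argument, but your bound $A_{k,t}\le\binom{t}{k}(kL)^{t-k}$ usefully justifies the termwise differentiation, which the paper leaves implicit.
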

\begin{proof}
    We first derive a recurrence relation for $A_{k,t}$. For the bitstrings $\bc\in \{0,1\}^t$ with $c_1 + \cdots + c_t = k$, if $c_t=1$, then their contribution to $A_{k,t}$ is $A_{k-1, t-1}$. If $c_t = 0$, then their contribution is $Lk\cdot A_{k, t-1}$. 
    So for any $1\leq k\leq t$, we have
    \begin{equation*}
        A_{k,t} = Lk\cdot A_{k,t-1} + A_{k-1,t-1} \,.
    \end{equation*}
    To make this recurrence well-defined, we set $A_{a,b} = 0$ for any $a>b$ (so $A_{k,t-1}$ is always well-defined). 

    Next we identify the case $k=0$. When $t=0$, the only bitstring is the empty string, so $A_{0,0}=1$. 
    When $t\geq 1$, the only bitstring with $c_1+\cdots+c_t=0$ is the all-zero string, and its product contains the factor $(c_1+\cdots+c_0)L = 0$ at $j=1$. Hence $A_{0,t} = 0$ for all $t\geq 1$. 
    
    For any integer $k\geq 0$, define $G_k(\alpha) \coloneqq \sum_{t\geq k}\frac{\alpha^t}{t!} \cdot A_{k,t}$ and $F_k(\alpha) \coloneqq \frac{1}{k!} (\frac{e^{\alpha L} - 1}{L})^k$. 
    The goal is to show that $G_k(\alpha) = F_k(\alpha)$. 

    When $k=0$, we have $G_0(\alpha) = F_0(\alpha)=1$ because $A_{0,0} = 1$ and $A_{0,t} = 0$ for $t\geq 1$. 

    Now fix $k\geq 1$. Taking the derivative of $G_k(\alpha)$ with respect to $\alpha$ gives
    \begin{align*}
        G_k'(\alpha)
        &= \sum_{t\geq k}\frac{t\alpha^{t-1}}{t!}\,A_{k,t} \\
        &= \sum_{s\geq k-1}\frac{\alpha^s}{s!}\,A_{k,s+1}
         \tag{\text{Set } $s=t-1$} \\
        &= \sum_{s\geq k-1}\frac{\alpha^s}{s!}\,\Paren{Lk\,A_{k,s}+A_{k-1,s}} \\
        &= Lk \sum_{s\geq k-1}\frac{\alpha^s}{s!}\,A_{k,s}
           + \sum_{s\geq k-1}\frac{\alpha^s}{s!}\,A_{k-1,s}.
    \end{align*}
    Since $A_{k,k-1}=0$, the first sum is just $G_k(\alpha)$, and the second is $G_{k-1}(\alpha)$. Thus
    \begin{equation*}
        G_k'(\alpha)=Lk\,G_k(\alpha)+G_{k-1}(\alpha),
        \qquad
        G_k(0)=0 \qquad (k\geq 1).
    \end{equation*}
    On the other hand, for $k\geq 1$ we have $F_k(0)=0$, and
    \begin{align*}
        F_k'(\alpha)
        &= \frac{1}{k!}\cdot
           k\Paren{\frac{e^{\alpha L}-1}{L}}^{k-1}\cdot \frac{Le^{\alpha L}}{L} \\
        &= \frac{e^{\alpha L}}{(k-1)!}\Paren{\frac{e^{\alpha L}-1}{L}}^{k-1} \\
        &= \frac{1}{(k-1)!}\Paren{\frac{e^{\alpha L}-1}{L}}^{k-1}
           + \frac{Lk}{k!}\Paren{\frac{e^{\alpha L}-1}{L}}^k \\
        &= F_{k-1}(\alpha)+Lk\,F_k(\alpha).
    \end{align*}
    So for each $k\geq 1$, under the induction hypothesis $G_{k-1} = F_{k-1}$, 
    both $G_k$ and $F_k$ satisfy the same linear ODE and initial condition at $\alpha=0$.
    Since the induction hypothesis is true for the case of $k=0$, we complete the induction and conclude that $G_k(\alpha)=F_k(\alpha)$. 
\end{proof}

%% file: classical-hardness.tex
\section{Classical hardness at high temperature}
\label{sec:classical-hardness}

We restate below the main theorem of this section for convenience. 

\begin{restatable}[Classical hardness with external field]{theorem}{hardness}
\label{thm:classical-hardness}
Given any integer $n\geq 1$ and any $\beta \in (0, 1)$, 
there exists a family of Hamiltonians $H = W_t + V$ paramterized by some integer $t\geq 1$ on $n$ data qubits (system $\calA$) and $M = \bigO{nt}$ ancilla qubits (system $\calR$) such that 
\begin{itemize}
    \item $V$ is an on-site external field with potential $0$ on sites in $\calA$ and potential $h$ on sites in $\calR$. 
    \item $W_t$ is a $6$-local degree-$5t$ Hamiltonian whose local terms are orthogonal projectors. 
\end{itemize}
Moreover, in either of the following two parameter regimes
\begin{itemize}
    \item $t = \lceil c / \beta \rceil + 1$ and any $h\geq \log(4c/\beta)/\beta$, or
    \item $t = \lceil 2c / \beta \rceil$ and any $h\geq 1/2$,
\end{itemize}
where $c = 1.87$, 
no classical randomized polynomial time algorithm can output samples from a distribution $Q_n$ such that 
\begin{equation*}
    \Norm{ Q_n - P_{H} }_1 \leq 2^{-7 n}\, \quad \textrm{ where } \quad P_H(x) = \frac{\bra{x} e^{-\beta H} \ket{x} }{\tr\Paren{e^{-\beta H}}} \textrm{ for } x\in\{0,1\}^{n+M}\,,
\end{equation*}
unless the polynomial hierarchy collapses to the third level.   
\end{restatable}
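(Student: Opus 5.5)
We prove the theorem by a reduction from the classically hard commuting-projector instance of Rajakumar and Watson~\cite{rajakumar2026gibbs}. Start from their Hamiltonian $H_C=\sum_{a\in[m]}P_a$ on the data system $\calA$. Its terms are commuting orthogonal projectors of locality at most $5$, and each data qubit lies in at most $5$ terms. Sampling the computational-basis distribution of $e^{-\beta_{\eff}H_C}/\tr(e^{-\beta_{\eff}H_C})$ to $\ell_1$ error $2^{-7n}$ is hard unless $\mathsf{PH}$ collapses, for every $\beta_{\eff}\ge c=1.87$. We apply the field-refrigeration gadget from the overview. For each $a$ and $\ell\in[t]$, introduce an ancilla $r_{a,\ell}$ and set
\begin{equation*}
W_t=\sum_{a,\ell}P_a\otimes(I-n_{a,\ell}), \qquad V=h\sum_{a,\ell}n_{a,\ell}.
\end{equation*}
Each term $P_a\otimes(I-n_{a,\ell})$ is a product of commuting projectors, so it is an orthogonal projector on at most $6$ qubits. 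A data qubit lies in at most $5t$ terms and an ancilla in exactly one, so $W_t$ is $6$-local of degree $5t$. The number of ancillas is $M=mt=\bigO{nt}$. The field is $0$ on $\calA$, and on $\calR$ it is $h\,n_{a,\ell}$, which equals $\tfrac h2(I-\sigma_Z)$ and is therefore a potential of strength $h$.

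\textbf{Step 1: marginal identity.} All terms of $H=W_t+V$ commute, and $H$ is diagonal in the ancilla computational basis. We therefore work in a simultaneous eigenbasis $\{\ket{\psi}\}$ of the $P_a$ on $\calA$, writing $p_a(\psi)\in\{0,1\}$ for the eigenvalue of $P_a$. For a fixed $\ket{\psi}$, an ancilla set to $0$ contributes $p_a$ to the energy and an ancilla set to $1$ contributes $h$. Summing each ancilla independently, the Boltzmann weight of $\ket\psi$ factorizes as
\begin{equation*}
\prod_{a}\bigl(e^{-\beta p_a}+e^{-\beta h}\bigr)^t.
\end{equation*}
Since $p_a\in\{0,1\}$, this is proportional to $e^{-\beta_{\eff}\sum_a p_a}$ with
\begin{equation*}
\beta_{\eff}=t\log\frac{1+e^{-\beta h}}{e^{-\beta}+e^{-\beta h}}.
\end{equation*}
Hence $\tr_{\calR}e^{-\beta H}\propto e^{-\beta_{\eff}H_C}$. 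The computational-basis distribution $P_H$ on $\calA\cup\calR$ then has $\calA$-marginal equal to $P_{H_C,\beta_{\eff}}$, because $e^{-\beta H}$ is block-diagonal in the ancilla basis.

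\textbf{Step 2: reduction.} Suppose a classical sampler outputs $Q_n$ with $\norm{Q_n-P_H}_1\le 2^{-7n}$. Discarding the ancilla bits cannot increase $\ell_1$ distance, so the result samples $P_{H_C,\beta_{\eff}}$ within $2^{-7n}$. If $\beta_{\eff}\ge c$, this contradicts the Rajakumar--Watson hardness unless $\mathsf{PH}$ collapses. We must take care that their hardness holds uniformly for every $\beta_{\eff}\ge c$, or else at exactly the value we produce. If it is stated only at a fixed $\beta_{\eff}$, we instead choose $h$ so that $\beta_{\eff}$ equals that value exactly, which is possible by continuity and monotonicity in $h$.

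\textbf{Step 3: parameters.} This is the main obstacle, though it is elementary. We must show $g(h):=\log\frac{1+e^{-\beta h}}{e^{-\beta}+e^{-\beta h}}\ge c/t$ in each regime. Write $x=e^{-\beta h}$. Then
\begin{equation*}
g=\beta-\log\frac{1+xe^{\beta}}{1+x}\ge \beta-\log\bigl(1+x(e^\beta-1)\bigr)\ge\beta-x(e^\beta-1).
\end{equation*}
\emph{First regime.} Here $t=\lceil c/\beta\rceil+1\ge c/\beta+1$, so it suffices that $g\ge c\beta/(c+\beta)$, i.e.\ $x(e^\beta-1)\le \beta^2/(c+\beta)$. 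For $\beta<1$ we have $e^\beta-1\le (e-1)\beta$, so it suffices that $x\le \beta/((e-1)(c+1))$. This holds when $h\ge\beta^{-1}\log(4c/\beta)$, since $4c\ge(e-1)(c+1)$ with $c=1.87$.
\emph{Second regime.} Here $t=\lceil 2c/\beta\rceil$, so it suffices that $g\ge\beta/2$. Since $g$ is increasing in $h$, it is enough to check $h=1/2$, i.e.\ $\log\frac{1+e^{-\beta/2}}{e^{-\beta}+e^{-\beta/2}}\ge\beta/2$. Setting $y=e^{-\beta/2}$, this reads $1+y\ge y^{-1}(y^2+y)=y+1$, which holds with equality. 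So the second regime is tight but valid, and checking this boundary case is the one place where care is needed.
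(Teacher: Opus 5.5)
Your proof is correct and follows the paper's argument: the same ancilla gadget, the same marginal identity $\tr_{\mathcal{R}} e^{-\beta H}\propto e^{-\beta_{\mathrm{eff}} H_C}$, the same reduction by discarding ancilla bits, and the same monotonicity-plus-exact-equality check at $h=1/2$ for the second regime. Your hedge in Step 2 is unnecessary, because the cited Rajakumar--Watson result holds for every inverse temperature at least $1.87$. The only real difference is in the first regime, where you lower bound $g \ge \beta - e^{-\beta h}(e^{\beta}-1)$ directly, whereas the paper argues by contradiction through the inverse map $F(u)=(e^u-1)/(1-e^{u-\beta})$; both are elementary, and yours shows the constant $4c$ has slack, since $(e-1)(c+1)$ already suffices.
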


We use the following base hardness statement from Rajakumar and Watson~\cite{rajakumar2026gibbs}:

\begin{lemma}[Classical hardness for constant temperature, Theorem 6~\cite{rajakumar2026gibbs}]
\label{thm:RW-classical-hardness}
There exists a family of commuting projector Hamiltonians $H_C$ over $n$ qubits with locality $5$ and degree $5$, such that for any $\beta\geq  1.87$, there is no randomized classical polynomial time algorithm that outputs samples from any distribution $Q_n$ that is $2^{-7 n}$-close to $$P_{H_C}(x) = \frac{\bra{x} e^{-\beta H_C} \ket{x} }{\tr\Paren{e^{-\beta H_C}}} \, \quad x \in \{0,1 \}^n\,, $$ 
in total variation distance, unless the polynomial hierarchy collapses to the third level. 
\end{lemma}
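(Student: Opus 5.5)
The lemma is imported from~\cite{rajakumar2026gibbs}, so the plan below reconstructs their argument rather than deriving it from anything earlier in this paper. The strategy is to encode a measurement-based, IQP-type computation into the \emph{thermal} state of a commuting stabilizer Hamiltonian. Thermal excitations then become independent local Pauli errors, and postselection at a fixed error rate still decides $\mathsf{PP}$. A sampler that is $2^{-7n}$-close would therefore put $\mathsf{PostBQP}=\mathsf{PP}$ inside $\mathsf{PostBPP}$. By Toda's theorem this collapses the hierarchy to the third level, the standard Bremner--Jozsa--Shepherd argument.

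\textbf{Step 1 (Hamiltonian).} I would take a graph/cluster state on a degree-$4$ lattice with stabilizers $K_v = X_v\prod_{u\sim v}Z_u$. Conjugating every qubit by a single-qubit rotation $U_v$, chosen so that a computational-basis measurement implements the desired MBQC/IQP measurement angles, I would set $H_C=\sum_v U^\dagger\tfrac{1}{2}(I-K_v)U$. Each term is a projector on $5$ qubits, each qubit lies in at most $5$ terms, and all terms commute. This gives locality $5$ and degree $5$ as claimed.

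\textbf{Step 2 (Gibbs state as a noisy state).} Since the $K_v$ commute and are independent, the Gibbs state factorizes over the stabilizer eigenvalues. Each stabilizer is violated independently with probability $p(\beta)=1/(1+e^{\beta})$, and a violation of $K_v$ is produced by $Z_v$ acting on the ideal graph state. Hence $P_{H_C}$ equals the output distribution of the ideal MBQC circuit with i.i.d.\ rotated-$Z$ errors of rate $p$ on every qubit, applied before measurement. Such errors propagate to at most local Pauli byproducts on the logical computation.

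\textbf{Step 3 (postselected fault tolerance; the main obstacle).} I need to show that noisy postselected MBQC at rate $p\le p(1.87)\approx 0.134$ still achieves $\mathsf{PostBQP}$. My first attempt would run the IQP/MBQC computation encoded in an error-detecting code, following Knill-style postselected computation or Fujii's noisy-IQP hardness. All syndromes are postselected to be trivial, which projects away the detectable errors. The remaining undetected logical error rate must be shown to decrease under concatenation whenever $p$ lies below a threshold. The constant $1.87$ should come out of computing that threshold for the specific code and gadget used, and this computation is where I expect the real work.

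\textbf{Step 4 (precision and collapse).} I would arrange for the number of postselected bits to be $O(n)$ and for the postselected event to have probability at least $2^{-cn}$ with $c<7$. An additive $2^{-7n}$ error then perturbs the conditional probabilities only by a negligible relative amount. Consequently, a classical sampler would simulate the postselected circuit, giving $\mathsf{PP}\subseteq\mathsf{PostBPP}\subseteq\mathsf{BPP}^{\mathsf{NP}}$. Toda's theorem then yields $\mathsf{PH}\subseteq\Delta_3^p$.

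Finally, for $\beta>1.87$ the error rate $p(\beta)$ only decreases, so the same argument covers all $\beta\ge1.87$.
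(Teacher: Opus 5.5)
Your Steps 1, 2 and 4 are sound. The rotated cluster Hamiltonian $\sum_v U^\dagger\tfrac12(I-K_v)U$ on a degree-$4$ lattice is a commuting projector Hamiltonian of locality $5$ and degree $5$. Because $Z_v$ anticommutes with $K_v$ and commutes with every other $K_u$, its Gibbs state is exactly the graph state subjected to independent $Z$ errors of rate $p(\beta)=1/(1+e^{\beta})$. The Bremner--Jozsa--Shepherd collapse tolerates additive error $2^{-7n}$ once the postselected event has probability much larger than $2^{-7n}$, and padding can arrange that.

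The genuine gap is Step~3, which carries the whole content of the statement. The paper does not prove this lemma; it imports it as Theorem~6 of \cite{rajakumar2026gibbs}. Your proposal neither invokes that result nor supplies its key ingredient: that this particular fixed Hamiltonian supports PostBQP-universal postselected computation under i.i.d.\ phase flips of rate $p\approx 0.134$. Indeed $1.87\approx\ln\bigl((1-p_c)/p_c\bigr)$ with $p_c\approx 0.134$, so the constant is just that threshold rewritten. As written, you have not shown hardness at $\beta=1.87$, nor at any fixed constant $\beta$ for the Hamiltonian of Step~1.

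The route you propose for Step~3 is also unlikely to work. Knill-style concatenated error detection does not fit Step~1, because the interaction graph is a fixed bounded-degree lattice. Every cluster qubit suffers its own independent $13\%$ phase flip, including the many that only serve as wires or Pauli measurements inside a logical gadget. The effective error per logical location is therefore of order one, far above the few-percent thresholds known even for postselected concatenated schemes. So there is no reason the logical error should decrease under concatenation.

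At this noise level, what is needed is topological protection on a three-dimensional cluster state of Raussendorf--Harrington--Goyal type, which is consistent with the $5$-local, degree-$5$, 3D-lattice parameters of the cited result. There, conditioning on all parity checks being trivial turns the posterior over errors into a loop gas with weight $(p/(1-p))^{|C|}$. You must then show that homologically nontrivial loops are exponentially suppressed in the defect separation below threshold.

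A complete argument must also handle the non-Clifford resource, which enters through magic states injected at unprotected sites. You need that, at this error rate, the injected states stay outside the stabilizer polytope, so that postselected distillation succeeds. Your remark that errors become ``local Pauli byproducts'' addresses neither point. These threshold estimates are exactly what the cited theorem provides.
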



The key lemma we prove is that a low, but constant temperature Gibbs state of any local Hamiltonian with projective terms can be encoded into a high-temperature Gibbs state of a local Hamiltonian with slightly larger degree and locality.

\begin{lemma}[Field refrigeration reduction]
\label{lem:field-refrigeration}
Let $H_C = \sum_{a =1}^m P_a$ be an $L$-local, degree-$D$ Hamiltonian on $n$ qubits (system $\mathcal{A}$), where $P_a$'s are orthogonal projectors and pairwise commuting.
For any $\beta>0$, $h\geq 0$, and integer $t \geq 1$, there exists a Hamiltonian $H = V + W$ on system $\mathcal{A}$ and $tm$ ancilla qubits (system $\mathcal{R}$) such that
\begin{itemize}
    \item $V = \sum_{i \in [tm]} V_i$ is the external field, where each $V_i$ acts on the $i$-th qubit in $\mathcal{R}$ with $\Norm{V_i}= h$;
    \item $W = \sum_{a \in [m], \ell \in [t]} W_{a,\ell}$ is a $(L+1)$-local, degree-$(tD)$ interaction term which
    couples $\mathcal{A}$ and $\mathcal{R}$ with $\Norm{W_{a,\ell}}= 1$.
\end{itemize}
Moreover, the Gibbs state $\rho_H = e^{-\beta H}/\tr\Paren{e^{-\beta H}}$ satisfies
\begin{equation*}
    \tr_{\mathcal{R}}\Paren{\rho_H} = \frac{e^{-\beta_{\eff}\; H_C }}{ \tr\Paren{e^{-\beta_{\eff}\; H_C }}} \,, \quad \textrm{where} \quad \beta_{\eff} = t \log\Paren{\frac{1+e^{-\beta h}}{e^{-\beta}+e^{-\beta h}}} \,.
\end{equation*}
That is, tracing out the ancilla system $\mathcal{R}$ recovers the Gibbs state of $H_C$ at effective inverse temperature $\beta_{\eff}$.
\end{lemma}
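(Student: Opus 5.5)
The plan is to use the gadget from the technical overview and compute everything in a joint eigenbasis. For each $a\in[m]$ and $\ell\in[t]$ we introduce an ancilla $r_{a,\ell}\in\mathcal R$ and set $n_{a,\ell}=\ketbra{1}{1}_{r_{a,\ell}}$. We then define the field and interaction terms by
\begin{equation*}
V_{a,\ell}=h\, n_{a,\ell},\qquad W_{a,\ell}=P_a\otimes (I-n_{a,\ell}),
\end{equation*}
so that $H=\sum_{a,\ell}\Paren{P_a\otimes(I-n_{a,\ell})+h\,n_{a,\ell}}$.

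The structural claims are checked directly.
\begin{itemize}
\item Each $V_{a,\ell}$ is on-site on $r_{a,\ell}$ with $\norm{V_{a,\ell}}=h$.
\item $W_{a,\ell}$ is a product of two projectors on disjoint supports, hence a projector. It is nonzero whenever $P_a\neq 0$, so $\norm{W_{a,\ell}}=1$.
\item $W_{a,\ell}$ is supported on $\supp(P_a)\cup\{r_{a,\ell}\}$, which has at most $L+1$ sites.
\item A data qubit lies in at most $D$ supports $\supp(P_a)$, so it meets at most $tD$ terms $W_{a,\ell}$. Each ancilla meets exactly one term.
\end{itemize}
If one wants $V$ to have exactly the spectral form $\frac h2\sigma_Z$, a constant shift suffices, and it only rescales the partition function.

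For the marginal, note that all terms commute. The $P_a$ pairwise commute by assumption. Each $n_{a,\ell}$ acts on its own ancilla, and each $P_a$ acts only on $\mathcal A$. Therefore
\begin{equation*}
e^{-\beta H}=\prod_{a,\ell} e^{-\beta (P_a\otimes(I-n_{a,\ell})+h n_{a,\ell})}.
\end{equation*}
Because the ancillas are distinct, the partial trace over $\mathcal R$ factorizes term by term. For a single pair $(a,\ell)$ we get
\begin{equation*}
\tr_{r_{a,\ell}}\Paren{e^{-\beta(P_a\otimes\ketbra00+h\ketbra11)}}=e^{-\beta P_a}+e^{-\beta h}I.
\end{equation*}
Since $P_a$ is a projector, this equals $(1+e^{-\beta h})(I-P_a)+(e^{-\beta}+e^{-\beta h})P_a$. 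In other words it is $(1+e^{-\beta h})\exp(-\kappa P_a)$ with $\kappa=\log\frac{1+e^{-\beta h}}{e^{-\beta}+e^{-\beta h}}$.

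Taking the product over $\ell\in[t]$ and $a\in[m]$, and using the commutativity of the $P_a$ again, gives
\begin{equation*}
\tr_{\mathcal R}e^{-\beta H}=(1+e^{-\beta h})^{tm}\,e^{-t\kappa H_C}.
\end{equation*}
Normalizing then yields exactly the Gibbs state of $H_C$ at $\beta_{\eff}=t\kappa$.

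The only real subtlety is justifying that the partial trace commutes with the product. This step is where the commutativity and the disjointness of the ancilla supports are used, and we would spell it out carefully. We also check that $\kappa>0$, which holds because $e^{-\beta}<1$.
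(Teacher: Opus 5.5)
Your proposal is correct and matches the paper's proof step for step. It uses the same gadget $H_{a,\ell}=h\,n_{a,\ell}+P_a\otimes(I-n_{a,\ell})$, factorizes $e^{-\beta H}$ by commutativity, computes the single-ancilla partial trace $e^{-\beta P_a}+e^{-\beta h}I$, applies the projector identity, takes the product over $(a,\ell)$, and normalizes. (A small refinement: the partial trace factorizes over the product using only that the factors act on disjoint ancillas, while commutativity is needed to split the exponential and to recombine $\prod_a e^{-t\kappa P_a}$ into $e^{-t\kappa H_C}$.)
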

\begin{proof}
Denote by $\{r_{a,\ell}: a\in [m], \ell \in [t]\}$ the $mt$ ancilla qubits in $\mathcal{R}$.  
For each $a \in [m]$ and $\ell \in [t]$, define the following operator on $\mathcal{R}$ which projects onto the state $\ket{1}$ on qubit $r_{a,\ell}$: 
\begin{equation*}
    n_{a,\ell} = \frac{I - \sigma_Z^{(r_{a,\ell})}}{2} = \ketbra{1}{1}_{r_{a, \ell}} \,,
\end{equation*}
where $\ketbra{1}{1}_{r_{a, \ell}}$ is the shorthand notation for $\ketbra{1}{1}_{r_{a, \ell}} \otimes I_{\mathcal{R} \setminus \{r_{a,\ell}\}}$. 
Then the overall Hamiltonian given by
\begin{equation*}
    H = \sum_{a \in [m]} \sum_{\ell \in [t]} H_{a,\ell} \,, \quad \textrm{ where } \quad H_{a,\ell} = \underbrace{h \cdot I_{\mathcal{A}} \otimes n_{a,\ell}}_{\coloneqq V_{a,\ell}} + \underbrace{P_a \otimes \Paren{I_{\mathcal{R}} - n_{a,\ell}}}_{\coloneqq W_{a,\ell}}\,, 
\end{equation*}
admits the form of $H = V+W$ as described in the statement with an external field term $V = \sum_{a, \ell} V_{a,\ell}$ and a local interaction term $W = \sum_{a, \ell} W_{a,\ell}$. 
Observe that the interaction term $W$ has locality at most $L+1$ and degree at most $tD$. 
Since
\begin{equation*}
    H_{a,\ell}\Paren{ I_{\calA} \otimes \ket{0}_{r_{a,\ell}} } =  P_a \otimes \ket{0}_{r_{a,\ell}}   \, \quad \textrm{ and }\quad H_{a,\ell}\Paren{ I_{\calA} \otimes \ket{1}_{r_{a,\ell}} } = h I_{\calA} \otimes \ket{1}_{r_{a,\ell}} \,,
\end{equation*}
then
\begin{align*}
    \tr_{r_{a,\ell}}\Paren{ e^{-\beta H_{a,\ell}} } &= (I_{\calA} \otimes \bra{0}_{r_{a, \ell}}) \cdot e^{-\beta H_{a,\ell}} \cdot (I_{\calA} \otimes \ket{0}_{r_{a,\ell}}) + (I_{\calA} \otimes \bra{1}_{r_{a, \ell}}) \cdot e^{-\beta H_{a,\ell}} \cdot (I_{\calA} \otimes \ket{1}_{r_{a,\ell}})\\
    &= \Paren{e^{-\beta P_a} + e^{-\beta h} I_{\calA}} \otimes I_{\calR\setminus \{r_{a,\ell}\}}
\end{align*}
Since the $P_a$'s mutually commute and each $H_{a,\ell}$ acts on a distinct ancilla qubit $r_{a,\ell}$, we have $[H_{a,\ell}, H_{a',\ell'}] = 0$ for all $a,a', \ell,\ell'$. 
Therefore the matrix exponential factorizes:
\begin{equation*}
    e^{-\beta H} = \prod_{a \in [m], \ell \in [t]} e^{-\beta H_{a,\ell}} \,.
\end{equation*}
Now, each factor $e^{-\beta H_{a,\ell}}$ acts nontrivially only on $\mathcal{A}$ and the single ancilla qubit $r_{a,\ell}$, and acts as identity on all other ancillae.
Therefore, let us take the partial trace sequentially over all $a\in [m]$ and $\ell\in [t]$. Then by the module property of the partial trace in \Cref{prop:partial_trace_identities}, we have
\begin{align*}
    \tr_{\calR}\Paren{e^{-\beta H}} = \prod_{a \in [m], \ell \in [t]} \Paren{e^{-\beta P_a} + e^{-\beta h} I_{\calA}} \,.
\end{align*}
Since $P_a$ is an orthogonal projector, we have $e^{-\beta P_a} = I-P_a + e^{-\beta} \cdot P_a$ and thus
\begin{align*}
    e^{-\beta P_a} + e^{-\beta h} I &= (1 + e^{-\beta h}) (I - P_a) + (e^{-\beta} + e^{-\beta h}) P_a \\
    &= (1 + e^{-\beta h}) \Paren{I - P_a + \frac{e^{-\beta} + e^{-\beta h}}{1 + e^{-\beta h}} P_a}\,.
\end{align*}
Therefore, 
\begin{align*}
    \tr_{\calR}\Paren{e^{-\beta H}} &= (1 + e^{-\beta h})^{mt} \prod_{a \in [m]}  \Paren{I - P_a + \Paren{\frac{e^{-\beta} + e^{-\beta h}}{1 + e^{-\beta h}}}^t P_a} \\
    &=  (1 + e^{-\beta h})^{mt} \prod_{a \in [m]}  e^{-\beta_{\eff} P_a} \tag{Set $\beta_{\eff} =t \log\Paren{\frac{1+e^{-\beta h}}{e^{-\beta}+e^{-\beta h}}}$} \\
    &= (1 + e^{-\beta h})^{mt}\cdot e^{-\beta_{\eff}H_C}\,.
\end{align*}
Taking the full trace yields $\tr\Paren{e^{-\beta H}} = (1 + e^{-\beta h})^{mt}\cdot \tr\Paren{e^{-\beta_{\eff}H_C}}$
which implies that
\begin{align*}
    \tr_{\calR}\Paren{ \frac{e^{-\beta H } }{\tr\Paren{e^{-\beta H }}} } = \frac{e^{-\beta_{\eff}\; H_C }}{ \tr\Paren{e^{-\beta_{\eff}\; H_C }}}\, . & \qedhere 
\end{align*}
\end{proof}

We can now complete the proof of \cref{thm:classical-hardness} by applying \cref{lem:field-refrigeration} to the hard instance from \cref{thm:RW-classical-hardness}.

\begin{proof}[Proof of~\cref{thm:classical-hardness}]
Let $H_C$ be as defined in \Cref{thm:RW-classical-hardness}. Then with \cref{lem:field-refrigeration}, 
we have a Hamiltonian $H = V + W$ where $V$ is an external field, $W$ is an interaction term with locality $6$ and degree $5t$, and tracing out the ancillary system $\calR$ yields the state $e^{-\beta_{\eff} H_C}/ \tr\Paren{e^{-\beta_{\eff} H_C}}$. We inherit the computational hardness whenever $\beta_{\eff} \geq 1.87$.
This is because if there were a polynomial-time classical algorithm that sampled a distribution $Q$ with $\norm{Q-P_H}_1\leq 2^{-7n}$, then by discarding the ancilla bits we would obtain a polynomial-time sampler for the marginal $Q_{\mathcal A}$ satisfying $\norm{Q_{\mathcal A}-(P_H)_{\mathcal A}}_1\leq 2^{-7n}$ where
\begin{align*}
    (P_H)_{\mathcal A}(x)=\sum_{y\in\{0,1\}^{tm}} P_H(x,y)=\bra{x}\tr_{\mathcal R}(\rho_H)\ket{x}=\frac{\bra{x}e^{-\beta_{\eff}H_C}\ket{x}}{\tr(e^{-\beta_{\eff}H_C})} \,.
\end{align*}
This would contradict \Cref{thm:RW-classical-hardness}.

First notice that $\beta_{\eff}(h) = t \log\Paren{\frac{1+e^{-\beta h}}{e^{-\beta}+e^{-\beta h}}}$ is increasing in $h$ because
\begin{equation*}
    \frac{\diff}{\diff h} \beta_{\eff}(h) = t \beta e^{-\beta h} \Paren{ \frac{1}{e^{-\beta} + e^{-\beta h}} - \frac{1}{1 + e^{-\beta h}}} > 0 \,.
\end{equation*}
Write $c = 1.87$. 

\paragraph{Case 1. }
When $h=1/2$, we have
\begin{equation*}
    \beta_{\eff}\Paren{\frac12} = t \log\Paren{\frac{1+e^{-\beta /2}}{e^{-\beta}+e^{-\beta /2}}} = t \log\Paren{e^{\beta /2}} = t\beta / 2 \,.
\end{equation*}
Setting $t = \lceil \frac{2 c}{\beta} \rceil$ implies that $\beta_{\eff}\Paren{\frac12} \geq c$. Since $\beta_{\eff}(h)$ is increasing in $h$, we can conclude that with $t = \lceil \frac{2c}{\beta} \rceil$, if $h\geq 1/2$, then $\beta_{\eff} \geq c$. 

\paragraph{Case 2. }
When $h=\frac1\beta \log\!\left(\frac{4c}{\beta}\right)$, let us set $t=\lceil \frac{c}{\beta}\rceil+1$, $x=\frac{\beta_{\eff}}{t}$, and $F(u)=\frac{e^u-1}{1-e^{u-\beta}}$. 
One can check directly that
\begin{equation*}
    F(x)=e^{\beta h}=\frac{4c}{\beta}.
\end{equation*}
Suppose for contradiction that \(\beta_{\eff}<c\). Since \(t\ge c/\beta+1\), we have
\begin{equation*}
    x=\frac{\beta_{\eff}}{t}<\frac{c}{c/\beta+1}=\frac{c\beta}{c+\beta}=:s \,.
\end{equation*}
Since $F$ is increasing on $(0, \beta)$ and $0<x<s<\beta < 1$, then
\begin{equation*}
    F(x)<F(s) = \frac{e^s-1}{1-e^{s-\beta}} \leq \frac{2s}{(\beta - s)/2} = \frac{4c}{\beta} \,,
\end{equation*}
where the inequality is because $e^s - 1 < 2s$ for any $s\in (0,1)$ and $1-e^{-\nu} \geq \nu/2$ for any $\nu \in (0, 1)$. 
This contradicts $F(x) = 4c/\beta$. Therefore $\beta_{\eff}\geq c=1.87$.

In either case, the locality of the Hamiltonian is $6$ and the degree is $5t$. 
\end{proof}

%% file: appendix.tex
\section{The Lieb-Robinson bound for time-dependent local Hamiltonians}\label{sec:LR_time_dependent_proof}

In this section, we will give a self-contained proof of \Cref{lem:LR_HHKL_time_dep}. As previously remarked, our proof closely follows from the proof of \cite[Lemma 5]{hhkl21} except for their Eq. (42). 
So we will start with proving a time-dependent version of Eq. (42).

\begin{lemma}[Generalized Lieb-Robinson differential inequality]\label{lem:generalize_eq_42}
    Let $H(t) = \sum_{Z\subseteq \Lambda} h_Z(t)$ be a time-dependent Hamiltonian. Let $U(t, s)$ denote the unitary evolution from time $s$ to time $t$. 
    For any region $X\subseteq \Lambda$, operator $B$, and times $t \geq s$, define
    \begin{equation*}
        C_B(X, t, s) \coloneqq \sup_{A\in \mathcal{A}_X, \ \norm{A}\leq 1} \norm{[A(t), B]}, 
    \end{equation*}
    where $\mathcal{A}_X$ is the algebra of all operators supported on $X$ and $A(t) = U(t, s)^\dagger A U(t, s)$ is the operator $A$ evolved from $s$ to $t$. Then,
    \begin{equation*}
        C_B(X, t, s) \leq C_B(X, s, s) + 2\sum_{Z: Z\sim X}\int_{s}^{t} \norm{h_Z(\tau)}\cdot C_B(Z, \tau, s)\diff \tau,
    \end{equation*}
    where $Z\sim X$ denotes $Z\cap X\neq \varnothing$. Note that $C_B(X, s, s) = 0$ if $\supp(B) \cap X = \varnothing$.
\end{lemma}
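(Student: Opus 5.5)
The plan is to follow the standard Lieb--Robinson differential argument, but to pass to an interaction picture with respect to the terms touching $X$ rather than differentiating $A(t)$ directly. This keeps the time-ordered structure of $U(t,s)$ under control. Fix $s$ and $B$. Split
\[
H(\tau)=H_X(\tau)+H_{\bar X}(\tau),\qquad H_X(\tau)=\sum_{Z\sim X}h_Z(\tau),
\]
where $H_{\bar X}(\tau)$ collects the terms with $Z\cap X=\varnothing$. Let $\widetilde U(\tau,s)$ be the evolution generated by $H_{\bar X}$ alone. Since every term of $H_{\bar X}(\tau)$ commutes with every $A\in\mathcal A_X$, we have $\widetilde U(\tau,s)^\dagger A\,\widetilde U(\tau,s)=A$ for all $\tau$; this is the time-dependent replacement for the fact used in the static case.

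Next, write $U(\tau,s)=\widetilde U(\tau,s)\,W(\tau,s)$ and define
\[
f(\tau)\coloneqq[\,U(\tau,s)^\dagger\widetilde U(\tau,s)\,A\,\widetilde U(\tau,s)^\dagger U(\tau,s),\;B\,].
\]
By the previous observation, $f(\tau)=[A(\tau),B]$. To differentiate cleanly, work in the Heisenberg convention $\partial_\tau U(\tau,s)=-\ii H(\tau)U(\tau,s)$, and similarly for $\widetilde U$ with $H_{\bar X}$. A direct computation gives
\[
\partial_\tau\Bigl(U^\dagger\widetilde U A\widetilde U^\dagger U\Bigr)=\ii\,U^\dagger\bigl[H_X(\tau),\widetilde UA\widetilde U^\dagger\bigr]U=\ii\,[(H_X(\tau))(\tau),A(\tau)],
\]
where $(h_Z(\tau))(\tau)\coloneqq U(\tau,s)^\dagger h_Z(\tau)U(\tau,s)$. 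The $H_{\bar X}$ contributions cancel between the $U$ and $\widetilde U$ factors. I expect the main obstacle to be making this cancellation precise: $\widetilde UA\widetilde U^\dagger=A$ holds only as an operator identity, so the derivative of the middle factor must be taken before simplifying. I would verify the cancellation by expanding all four derivative terms explicitly.

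Then apply the Jacobi identity:
\[
\partial_\tau f=\ii\sum_{Z\sim X}\Bigl([\,(h_Z)(\tau),[A(\tau),B]\,]+[\,A(\tau),[(h_Z)(\tau),B]\,]\Bigr).
\]
The first summand is a commutator with the Hermitian operator $H_X(\tau)(\tau)$. It therefore generates a norm-preserving (unitary) flow, so it can be removed by a further interaction-picture step. Concretely, let $V(\tau)$ solve $\partial_\tau V=\ii\,(H_X(\tau))(\tau)V$ with $V(s)=I$, and set $g(\tau)=V(\tau)^\dagger f(\tau)V(\tau)$. Then
\[
\partial_\tau g=\ii\,V^\dagger\sum_{Z\sim X}[A(\tau),[(h_Z)(\tau),B]]\,V,
\]
so $\|f(t)\|=\|g(t)\|\le\|f(s)\|+\int_s^t\sum_{Z\sim X}\|[A(\tau),[(h_Z)(\tau),B]]\|\,\diff\tau$.

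Finally, each integrand is bounded by $2\|A\|\cdot\|[(h_Z(\tau))(\tau),B]\|$. The operator $h_Z(\tau)$ is supported on $Z$, so writing it as $\|h_Z(\tau)\|$ times a norm-one element of $\mathcal A_Z$ evolved from $s$ to $\tau$ gives
\[
\|[(h_Z(\tau))(\tau),B]\|\le\|h_Z(\tau)\|\,C_B(Z,\tau,s).
\]
Note that $h_Z(\tau)$ is a fixed operator at time $\tau$ that is then Heisenberg-evolved, which matches the definition of $C_B$. Taking the supremum over $A\in\mathcal A_X$ with $\|A\|\le1$ yields the claimed inequality. Lastly, $f(s)=[A,B]$, which vanishes when $\supp(B)\cap X=\varnothing$.
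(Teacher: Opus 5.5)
Your proposal is correct and follows essentially the same route as the paper. Both arguments keep only the terms $Z\sim X$ in the derivative of $A(\tau)$, apply the Jacobi identity, remove the $[K,f]$ term by a unitary interaction-picture conjugation, and bound $\|[U(\tau,s)^\dagger h_Z(\tau)U(\tau,s),B]\|\le\|h_Z(\tau)\|\,C_B(Z,\tau,s)$. Two cosmetic points: the auxiliary evolution $\widetilde U$ is unnecessary, since $\partial_\tau A(\tau)=\ii\,U^\dagger[H(\tau),A]U$ together with $[H_{\bar X}(\tau),A]=0$ already gives the same derivative. Also, the Jacobi expansion should read $[[K,A],B]=[K,[A,B]]-[A,[K,B]]$; the sign is irrelevant for the norm bound.
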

\begin{proof}
    Fix the initial time $s$. We aim to bound the growth of $f(t) = [A(t), B]$. 
    The time derivative of the Heisenberg operator is 
    \begin{align*}
        \partial_t A(t) &= (\partial_t U(t,s))^\dagger A U(t, s) +  U(t,s)^\dagger A \partial_t U(t, s) \\
        &= (-\ii H(t) U(t, s))^\dagger A U(t, s) +  U(t,s)^\dagger A (-\ii H(t) U(t, s)) \\
        &= \ii U(t,s)^\dagger [H(t), A] U(t,s). 
    \end{align*}
    We decompose the time-dependent Hamiltonian into terms overlapping with $X$ and those that do not:
    \begin{equation*}
        H(t) = I_X(t) + J_X(t), \quad \text{where}\quad  I_X(t) = \sum_{Z: Z\sim X} h_Z(t).
    \end{equation*}
    Since $A$ is supported on $X$, we have that $[J_X(t), A] = 0$ for all $t$. Write $K(t) = U(t,s)^\dagger I_X(t) U(t,s)$. 
    Then
    \begin{align*}
        \partial_t f(t) = [\partial_t A(t), B] &= \ii [U(t,s)^\dagger [I_X(t) , A]U(t,s), B] = \ii [[K(t), A(t)], B] . 
    \end{align*}
    Using the Jacobi identity $[[X, Y], Z] = [X, [Y, Z]] - [Y, [X, Z]]$ with $X=K(t)$, $Y=A(t)$, and $Z=B$
    yields
    \begin{equation*}
        \partial_t f(t) = \ii [K(t), f(t)] - \ii [A(t), [K(t), B]].
    \end{equation*}
    We now work to remove the term $i [K(t), f(t)]$. 
    Let $V(t)$ be the solution of $\partial_t V(t) = -\ii V(t) K(t)$ and $V(s) = I$.
    Define $\widetilde{f}(t) = V(t)f(t)V(t)^\dagger$. Then
    \begin{align*}
        \partial_t \widetilde{f}(t) &= (\partial_t V(t))\cdot f(t)\cdot V(t)^\dagger + V(t) \cdot f(t)\cdot (\partial_t V(t)^\dagger) + V(t) \cdot (\partial_t f(t))\cdot V(t)^\dagger \\
        &= -\ii V(t) K(t) f(t) V(t)^\dagger + \ii V(t) f(t) K(t) V(t)^\dagger + V(t) \cdot (\partial_t f(t))\cdot V(t)^\dagger \\
        &= V(t) (-i [K(t), f(t)] ) V(t)^\dagger + V(t) (\ii [K(t), f(t)] - \ii [A(t), [K(t), B]]) V(t)^\dagger\\
        &= -\ii V(t) \cdot [A(t), [K(t), B]]\cdot V(t)^\dagger. 
    \end{align*}
    Hence,
    \begin{align*}
        \norm{f(t)} = \norm{\widetilde{f}(t)} &\leq \norm{\widetilde{f}(s)} + \int_{s}^t \norm{V(\tau) \cdot [A(\tau), [K(\tau), B]]\cdot V(\tau)^\dagger} \diff \tau \\
        &\leq \norm{\widetilde{f}(s)} + \int_{s}^t \norm{ [A(\tau), [K(\tau), B]]} \diff \tau. 
    \end{align*}
    Using the inequality $\norm{[P, Q]} \leq 2\norm{P} \cdot \norm{Q}$ and $\norm{A(\tau)} \leq 1$:
    \begin{align*}
       \norm{ [A(\tau), [K(\tau), B]]} &\leq 2 \norm{ [K(\tau), B] } \\
        &\leq 2 \sum_{Z: Z\sim X} \norm{ [U(\tau, s)^\dagger h_Z(\tau) U(\tau, s), B] } \\
        &\leq 2 \sum_{Z: Z\sim X} \norm{h_Z(\tau)} \cdot C_B(Z, \tau, s),
    \end{align*}
    where the last inequality follows from the definition of $C_B(Z, \tau, s)$. Substituting this back and then taking supremum on both sides complete the proof.
\end{proof}

We are now ready to prove \Cref{lem:LR_HHKL_time_dep}, restated below for convenience. 
\lrhhkltimedependent*
\begin{proof}
Without loss of generality, we assume $T\geq 0$. The case $T<0$ follows from an identical argument by time reversal, replacing $T$ with $\abs{T}$ in the integration domains.

Set $R(t) = U_{\Omega}(t)^\dagger\cdot U(t)$ and $G = U_\Omega(T)^\dagger \cdot O_X \cdot U_{\Omega}(T)$. Then the target expression is
\begin{equation}\label{eq:hhkl_target}
    U_\Omega(T)^\dagger \cdot O_X \cdot U_{\Omega}(T) - U(T)^\dagger \cdot O_X \cdot U(T) = G - R(T)^\dagger \cdot G \cdot R(T).
\end{equation}
Applying \cite[Lemma 4(i)]{hhkl21} or just directly taking the derivative of $R(t)$, we have that $R(t)$ is the unique solution of 
\begin{equation*}
    R(0) = I\quad 
    \text{and}\quad \ii \partial_t R(t) = K(t) \cdot R(t), 
\end{equation*}
where $K(t) = (U_\Omega(t))^\dagger \cdot (H(t) - H_\Omega(t))\cdot U_\Omega(t)$ is Hermitian. 
Now let $F(t) = R(t)^\dagger \cdot G \cdot R(t)$ be the Heisenberg-evolved operator of $G$. 
Using standard derivative techniques, we have that
\begin{align*}
    \partial_t F(t) 
    &= \left(\partial_t R(t)^\dagger\right) \cdot G \cdot R(t) + R(t)^\dagger \cdot G \cdot \partial_t R(t) \\
    &= \left(\partial_t R(t)\right)^\dagger\cdot G \cdot R(t) + R(t)^\dagger \cdot G \cdot \partial_t R(t) \\
    &= \ii R(t)^\dagger \cdot  K(t)^\dagger G \cdot R(t) + (-\ii)R(t)^\dagger \cdot G  K(t)\cdot R(t) \\
    &= \ii R(t)^\dagger \cdot [K(t), G]\cdot R(t). 
\end{align*}
Overall, 
\begin{align*}
    \norm{\eqref{eq:hhkl_target}} = \norm{F(0) - F(T)} = \norm*{\int_{0}^{T} \partial_t F(t) \diff t} &\leq \int_{0}^{T} \norm*{[K(t), G]} \diff t \\
    &\leq \int_{0}^{T} \norm*{[H(t) - H_{\Omega}(t), G(t)]} \diff t \\
    &\leq \int_{0}^{T}\sum_{Z: Z\sim \Omega^c} \norm*{[h_Z(t), G(t)]} \diff t,
\end{align*}
where $Z\sim \Omega^c$ means $Z\cap \Omega^c \neq \varnothing$ and
\begin{equation*}
    G(t) = U_\Omega(t)\cdot G \cdot (U_\Omega(t))^\dagger = \left(U_\Omega(T)\cdot U_\Omega(t)^\dagger\right)^\dagger \cdot O_X\cdot \left(U_\Omega(T)\cdot U_\Omega(t)^\dagger\right).
\end{equation*}
Let us write $U_\Omega(t_2, t_1)$ as the unitary evolution of $H_\Omega(t)$ from time $t_1$ to $t_2$. Then $U_\Omega(T)\cdot U_\Omega(t)^\dagger = U_\Omega(T, t)$. 
For any operator $B$, define
\begin{equation*}
    C_B(X, T, t) \coloneqq \sup_{A\in \mathcal{A}_X, \ \norm{A}\leq 1} \norm{[U_{\Omega}(T, t)^\dagger A U_\Omega(T, t), B]}. 
\end{equation*}
Then 
\begin{equation*}
    \norm{\eqref{eq:hhkl_target}} \leq \norm{O_X}\cdot\int_{0}^{T}\sum_{Z: Z\sim \Omega^c} C_{h_Z(t)}(X, T, t) \diff t.
\end{equation*}
Applying \Cref{lem:generalize_eq_42} repeatedly for a total of $\ell - 1$ times gives
\begin{align*}
    C_B(X, T, t) &\leq C_B(X, t, t) + 2\sum_{Z: Z\sim X}\int_{t}^{T} \norm{h_Z(\tau)}\cdot C_B(Z, \tau, t)\diff \tau \\
    &\leq C_B(X, t, t) + \sum_{k=1}^{\ell -2} 2^k \sum_{\substack{Z_1, \ldots, Z_k:\\ Z_k\sim \cdots \sim Z_1 \sim X}} \int_{\Delta^k_{t, T}} \prod_{j=1}^k \norm{h_{Z_j}(u_j)} \cdot C_B(Z_k, t, t) \cdot \diff \bu \\
    & \qquad + 2^{\ell -1 } \sum_{\substack{Z_1, \ldots, Z_{\ell -1}: \\ Z_{\ell-1} \sim \cdots \sim Z_1 \sim X} } \int_{\Delta^{\ell -1}_{t, T}} \prod_{j=1}^{\ell -1} \norm{h_{Z_j}(u_j)} \cdot  C_B(Z_{\ell -1}, u_{\ell -1}, t) \cdot \diff \bu ,
\end{align*}
where $\Delta^k_{t,T} = \{(u_1, \ldots, u_k): t\leq u_k \leq \cdots \leq u_1 \leq T\}$ and $\diff \bu = \diff u_k \cdots \diff u_1$. 
So the integral $\int_{\Delta_{t,T}^k} (\cdot) \diff \bu$ is shorthand for $\int_t^T \int_{t}^{u_1} \cdots \int_{t}^{u_{k-1}} (\cdot) \diff u_k \cdots \diff u_2 \diff u_1$. 

Set $B = h_{Z_{\ell}}(t)$ for some $Z_\ell \sim \Omega^c$. 
Since $\dist(X, \Lambda\setminus \Omega) = \ell$, then for any $k\leq \ell-2$, we must have $Z_{\ell} \cap Z_k = \varnothing$ and thus $C_B(Z_k,t,t) = 0$. 
Hence,
\begin{align*}
    C_{h_{Z_{\ell}}(t)}(X, T, t) &\leq 2^{\ell -1 } \sum_{\substack{Z_1, \ldots, Z_{\ell -1}: \\ Z_{\ell}\sim Z_{\ell-1} \sim \cdots \sim Z_1 \sim X} } \int_{\Delta^{\ell -1}_{t, T}} \prod_{j=1}^{\ell -1} \norm{h_{Z_j}(u_j)} \cdot  C_{h_{Z_{\ell}}(t)}(Z_{\ell -1}, u_{\ell -1}, t) \cdot \diff \bu \\
    & \leq \norm{h_{Z_\ell}(t)}\cdot 2^\ell \sum_{\substack{Z_1, \ldots, Z_{\ell -1}: \\ Z_{\ell}\sim Z_{\ell-1} \sim \cdots \sim Z_1 \sim X} } \int_{\Delta^{\ell -1}_{t, T}} \prod_{j=1}^{\ell -1} \norm{h_{Z_j}(u_j)} \cdot \diff \bu .
\end{align*}
Overall, 
\begin{align*}
    \norm{\eqref{eq:hhkl_target}} &\leq \norm{O_X}\cdot\int_{0}^{T} \sum_{Z_\ell: Z_\ell \sim \Omega^c} C_{h_{Z_\ell}(t)}(X, T, t) \diff t \\
    &\leq \norm{O_X}\cdot\int_{0}^{T}\sum_{Z_\ell: Z_\ell \sim \Omega^c} \norm{h_{Z_\ell}(t)}\cdot 2^\ell \cdot  \int_{\Delta^{\ell-1}_{t, T}} \sum_{\substack{Z_1, \ldots, Z_{\ell -1}: \\ Z_{\ell}\sim Z_{\ell-1} \sim \cdots \sim Z_1 \sim X} } \prod_{j=1}^{\ell -1} \norm{h_{Z_j}(u_j)}\diff\bu \diff t \\
    &\leq \norm{O_X}\cdot 2^\ell  \int_{\Delta^{\ell}_{0,T}} \sum_{\substack{Z_1, \ldots, Z_{\ell}: \\ \Omega^c \sim Z_{\ell} \sim \cdots \sim Z_1 \sim X} } \prod_{j=1}^{\ell} \norm{h_{Z_j}(u_j)} \cdot \diff \bu \tag{Set $u_{\ell} = t$}\\
    &\leq \norm{O_X}\cdot 2^\ell  \int_{\Delta^{\ell}_{0,T}} \zeta^{\ell}  \abs{X} \diff \bu \tag{$\zeta = \sup_t \max_{i\in \Lambda} \sum_{Z\ni i} \abs{Z} \cdot \norm{h_Z(t)}$} \\
    &\leq \norm{O_X}\cdot  \frac{(2 \zeta T)^{\ell }}{\ell !} \cdot \abs{X}. \tag{$\int_{\Delta^{\ell}_{0,T}} \diff \bu = \frac{T^\ell}{\ell !}$}
\end{align*}
Replacing $T$ with $\abs{T}$ completes the proof. 
\end{proof}

\section{KMS Detailed Balance}
\label{sec:kms-detailed-balance}

Recall the Kubo-Martin-Schwiger (KMS) detailed-balance condition. 
\begin{definition}[KMS detailed balance]
\label{def:kms-db}
Let $\rho$ be a density matrix, and let $\mathcal{L}$ be a linear super-operator.  
Let $\mathcal{L}^\dagger$ denote the adjoint of $\mathcal{L}$ with respect to the
Hilbert-Schmidt inner product, 
i.e.\
$\tr(X^\dagger \mathcal{L}[Y])=\tr((\mathcal{L}^\dagger[X])^\dagger Y)$ for all $X,Y$.
We say that $\mathcal{L}$ satisfies \emph{KMS detailed balance} if
\begin{equation*}
\mathcal{L}^\dagger[\cdot]
\;=\;
\rho^{-1/2}\,\mathcal{L}\!\big[\rho^{1/2}(\cdot)\rho^{1/2}\big]\,\rho^{-1/2}.
\end{equation*}
\end{definition}

In this section, we will prove \Cref{prop:kms} that the general Lindbladian defined in \Cref{def:general_lindblad} satisfies KMS detailed balance.

\begin{definition}[Bohr frequencies]
    Given a Hamiltonian $H$, write $\spec(H)$ as the set of eigenvalues of $H$. 
    We call the set of energy changes $B(H)\coloneqq \{E_i-E_j: E_i, E_j\in \spec(H)\}$ the Bohr frequencies of $H$. 
    For any operator $A$ and $\nu \in B(H)$, define the $\nu$-frequency component in $A$ as
    \begin{equation*}
        A_{\nu} \coloneqq \sum_{\substack{E_1, E_2\in \spec(H):\\ E_2-E_1 = \nu}} \Pi_{E_2} A \Pi_{E_1} \,,
    \end{equation*}
    where $\Pi_E$ is the orthogonal projector onto the eigensubspace of $H$ with exact energy $E$. Hence we can rewrite $A$ in the energy basis and regroup in terms of the Bohr frequencies, i.e.
    \begin{equation*}
        A = \sum_{E_1, E_2\in \spec(H)} \Pi_{E_2} A \Pi_{E_1} = \sum_{\nu\in B(H)} A_{\nu} \,.
    \end{equation*}
\end{definition}

\begin{theorem}[A detailed-balanced Lindbladian in the Bohr frequency domain {\cite[Corollary II.2]{ckg23}}]\label{thm:ckg_detailed_balance}
    Let $\calL$ be a Lindbladian given by 
    \begin{equation*}
        \calL[\cdot] \coloneqq -\ii \sum_{\nu \in B(H)} [C_{\nu}, \cdot] + \sum_{\nu_1, \nu_2\in B(H)} \alpha_{\nu_1, \nu_2} \Paren{A_{\nu_1} (\cdot) A_{\nu_2}^\dagger - \frac12 \{A_{\nu_2}^\dagger A_{\nu_1}, \cdot \}} \,,
    \end{equation*}
    where each $\alpha_{\nu_1, \nu_2}\in \C$ satisfies $\alpha_{\nu_1, \nu_2}^* = \alpha_{\nu_2, \nu_1}$ and $A_{\nu}$ are the associated Bohr frequency components of an operator $A$. 
    Suppose that
    \begin{equation*}
        \alpha_{\nu_1, \nu_2} e^{\frac{\beta}{2}(\nu_1 + \nu_2)} = \alpha_{-\nu_2, -\nu_1} \,,
    \end{equation*}
    and further
    \begin{equation*}
        C^P_\nu = \sum_{\nu_1, \nu_2\in B(H): \nu_1-\nu_2 = \nu} \underbrace{\frac{\ii}{2} \tanh\Paren{\frac{\beta}{4}(\nu_1 - \nu_2)} \alpha_{\nu_1, \nu_2}}_{\coloneqq \widehat{g}(\nu_1, \nu_2)} \cdot (A_{\nu_2}^P)^\dagger A_{\nu_1}^P \,.
    \end{equation*}
    Then $\calL$ satisfies KMS detailed balance and the Gibbs state $e^{-\beta H}/ \tr(e^{-\beta H})$ is a stationary state. 
\end{theorem}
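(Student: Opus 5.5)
The proof is a verification in the Bohr-frequency domain, following \cite{ckg23}. Since $\rho^{1/2} = e^{-\beta H/2}/\sqrt{Z}$, conjugation acts on frequency components by $e^{-\beta H/2} A_\nu e^{\beta H/2} = e^{-\beta\nu/2} A_\nu$, so everything reduces to scalar identities on the coefficients. I would split $\calL = \calL_{\mathrm{coh}} + \calL_{\mathrm{diss}}$. I would show directly that the dissipative part is KMS-symmetric. The coherent part is not KMS-symmetric by itself; instead I would show that the antisymmetric part of $\calL_{\mathrm{diss}}$ under the KMS inner product is exactly cancelled by $-\ii[C,\cdot]$.

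\textbf{Step 1: the dissipative part.} I would compute $\calL_{\mathrm{diss}}^\dagger[X] = \sum \alpha_{\nu_1,\nu_2}^* \big(A_{\nu_1}^\dagger X A_{\nu_2} - \tfrac12\{A_{\nu_1}^\dagger A_{\nu_2}, X\}\big)$. Next I would compute $\rho^{-1/2}\calL_{\mathrm{diss}}[\rho^{1/2}X\rho^{1/2}]\rho^{-1/2}$.
\begin{itemize}
\item For the jump term $A_{\nu_1}\rho^{1/2}X\rho^{1/2}A_{\nu_2}^\dagger$, conjugation produces the factor $e^{\beta(\nu_1+\nu_2)/2}$. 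Using $(A_\nu)^\dagger = (A^\dagger)_{-\nu}$ and relabeling, the condition $\alpha_{\nu_1,\nu_2}e^{\beta(\nu_1+\nu_2)/2} = \alpha_{-\nu_2,-\nu_1}$ makes the jump terms match exactly. For Pauli $P$ we have $A^\dagger = A$, and summing over the Paulis handles the general case.
\item The anticommutator terms do not match. After conjugation they become $-\tfrac12\big(\rho^{-1/2}A_{\nu_2}^\dagger A_{\nu_1}\rho^{1/2}X + X\rho^{1/2}A_{\nu_2}^\dagger A_{\nu_1}\rho^{-1/2}\big)$, with factors $e^{\pm\beta(\nu_1-\nu_2)/2}$. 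The discrepancy is a combination of $\cosh$/$\sinh$ of $\beta(\nu_1-\nu_2)/2$ times $A_{\nu_2}^\dagger A_{\nu_1}X$ and $XA_{\nu_2}^\dagger A_{\nu_1}$.
\end{itemize}

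\textbf{Step 2: the coherent part.} For $-\ii[C_\nu,\cdot]$, the KMS-transformed and adjoint versions differ by the factors $e^{\pm\beta\nu/2}$ with $\nu = \nu_1-\nu_2$. I would show that the anticommutator discrepancy from Step 1 equals the negative of the coherent discrepancy. This forces, frequency by frequency, an equation of the form $\widehat g(\nu_1,\nu_2)\big(1+e^{\beta\nu/2}\big) = \tfrac{\ii}{2}\alpha_{\nu_1,\nu_2}\big(1-e^{\beta\nu/2}\big)$ up to sign conventions. Its solution is $\widehat g = \tfrac{\ii}{2}\tanh(\beta\nu/4)\,\alpha_{\nu_1,\nu_2}$, which is exactly the prescribed coefficient. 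Hermiticity of $C$ follows from $\alpha^*_{\nu_1,\nu_2} = \alpha_{\nu_2,\nu_1}$ together with the oddness of $\tanh$.

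\textbf{Step 3: stationarity.} KMS detailed balance with $X = I$ gives $\calL[\rho] = \rho^{1/2}\calL^\dagger[I]\rho^{1/2}$. Trace preservation of $\calL$, which holds in Lindblad form, gives $\calL^\dagger[I] = 0$. Hence $\calL[\rho] = 0$, so the Gibbs state is stationary.

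The main obstacle is Step 2: tracking the signs and the $\pm$ exponent conventions in the anticommutator bookkeeping so that the $\tanh$ comes out with the right sign. I would first check the identity on a single pair $(\nu_1,\nu_2)$ and then sum.
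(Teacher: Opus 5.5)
The paper gives no proof of this statement. It is imported from \cite{ckg23} and used only as a black box in the proof of \Cref{prop:kms}. Your direct Bohr-frequency verification of $\calL^\dagger=\rho^{-1/2}\calL[\rho^{1/2}(\cdot)\rho^{1/2}]\rho^{-1/2}$ is a correct, self-contained proof:
\begin{itemize}
\item The transition terms match by the condition on $\alpha$ after relabeling $(\nu_1,\nu_2)\mapsto(-\nu_2,-\nu_1)$. Your use of $A^\dagger=A$ there is genuinely needed: for non-Hermitian $A$ the relabeling produces frequency components of $A^\dagger$ rather than of $A$.
\item The anticommutator and coherent discrepancies cancel term by term once $C$ is known to be Hermitian.
\item Step~3 is right.
\end{itemize}
A more economical organization conjugates by $\rho^{1/4}$ instead of $\rho^{1/2}$. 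Write the non-transition part as $X\mapsto KX+XK^\dagger$ with $K=-\ii C-\frac12 R$ and $R=\sum\alpha_{\nu_1,\nu_2}A_{\nu_2}^\dagger A_{\nu_1}$. KMS detailed balance is equivalent to Hilbert--Schmidt self-adjointness of $\rho^{-1/4}\calL[\rho^{1/4}(\cdot)\rho^{1/4}]\rho^{-1/4}$. For the $K$-part this holds as soon as $\rho^{-1/4}K\rho^{1/4}$ is Hermitian, i.e.\ $e^{\beta\nu/4}K_\nu=e^{-\beta\nu/4}(K_{-\nu})^\dagger$. That condition reads $-2\ii\cosh(\beta\nu/4)\,C_\nu=\sinh(\beta\nu/4)\,R_\nu$, which gives the $\tanh$ in one line from a single condition instead of your two coefficient matchings. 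The two routes are equivalent; yours just carries more bookkeeping.

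There are two slips to fix.

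First, your opening sentence says the dissipative part is KMS-symmetric. Only its transition part is, as your own Step~1 shows.

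Second, and more importantly, the relation displayed in Step~2 has the wrong sign. With the paper's conventions, $\calL^\dagger$ is the Hilbert--Schmidt adjoint, so $(-\ii[C,\cdot])^\dagger=\ii[C,\cdot]$. Also $M=A_{\nu_2}^\dagger A_{\nu_1}$ has frequency $\nu=\nu_1-\nu_2$, so $\rho^{-1/2}M\rho^{1/2}=e^{\beta\nu/2}M$. Matching the coefficients of $MY$ and $YM$ then gives $\ii\,\widehat g\,(1+e^{\beta\nu/2})=\frac12\alpha_{\nu_1,\nu_2}(1-e^{\beta\nu/2})$ and $-\ii\,\widehat g\,(1+e^{-\beta\nu/2})=\frac12\alpha_{\nu_1,\nu_2}(1-e^{-\beta\nu/2})$. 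Both are solved by $\widehat g=\frac{\ii}{2}\tanh(\beta\nu/4)\,\alpha_{\nu_1,\nu_2}$. The equation you wrote would instead give $-\frac{\ii}{2}\tanh(\beta\nu/4)\,\alpha_{\nu_1,\nu_2}$. Since the statement fixes this sign, write the correct relation and check both slots. Also note that termwise vanishing is sufficient, which is all you need. The operators $A_{\nu_2}^\dagger A_{\nu_1}$ need not be linearly independent, so ``forces'' is too strong.
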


\subsection{Proof of \texorpdfstring{\Cref{prop:kms}}{Proposition~\ref{prop:kms}}}
\kms*
\begin{proof}
It suffices to show that for each Pauli operator $P$ at any site $j$, the associated Lindbladian $\calL^P = -\ii [C^P, \cdot] + \calL_{\mathrm{diss}}^P(\cdot)$ as defined in \Cref{def:general_lindblad} satisfies detailed balance. 
Thereore, for simplicity, we will drop the site index $j$ and write, for example, $f_j(t)$ as $f(t)$ and $\sigma_j$ as $\sigma$.

Since $e^{\ii Ht} = \sum_{E\in \spec(H)} e^{\ii E t}P_E$, by the definition of $A_\nu$, we have
\begin{equation*}
    e^{\ii Ht} {A} e^{-\ii H t} = \sum_{\nu \in B(H)} e^{\ii Ht} A_{\nu} e^{-\ii H t} = \sum_{\nu \in B(H)} A_{\nu} e^{\ii \nu t} \,.
\end{equation*}
Therefore, the jump operator can be written as
\begin{align*}
    A^P(\omega) &= \frac{1}{\sqrt{2\pi}} \int_{-\infty}^{\infty} e^{\ii Ht} {P} e^{-\ii H t} e^{-\ii \omega t}  f(t) \diff t \\
    &= \frac{1}{\sqrt{2\pi}}
    \sum_{\nu \in B(H)} P_{\nu} \int_{-\infty}^{\infty}  e^{-\ii (\omega - \nu) t}  f(t) \diff t \\
    &= \frac{1}{\sqrt{2\pi}} \sum_{\nu \in B(H)} P_{\nu} \widehat{f}(\omega - \nu)  \,,
\end{align*}
where $\widehat{f}(\omega) = \int_{-\infty}^{\infty} f(t) e^{-\ii \omega t} \diff t$. 
We can rewrite the associated dissipative evolution $\calL_{\mathrm{diss}}^P(\rho)$ in the energy basis as
\begin{align*}
    \calL_{\mathrm{diss}}^P(\rho) &= \int_{-\infty}^{\infty} \gamma(\omega) \cdot \Paren{  A^{P}(\omega) \rho A^{P}(\omega)^\dagger - \frac12\braces{A^{P}(\omega)^\dagger A^{P}(\omega), \rho} } \diff \omega \\
    &= \sum_{\nu_1, \nu_2\in B(H)} \underbrace{\int_{-\infty}^{\infty} \gamma(\omega) \cdot\frac{\widehat{f}(\omega - \nu_1) \widehat{f}(\omega - \nu_2)}{2\pi}\diff \omega}_{\coloneqq \alpha_{\nu_1, \nu_2}} \Paren{  P_{\nu_1} \rho P_{\nu_2}^\dagger - \frac12\braces{ P_{\nu_2}^\dagger P_{\nu_1}, \rho} }  \,.
\end{align*}
By the symmetry in $\nu_1, \nu_2$ and the fact that $\alpha_{\nu_1, \nu_2}\in \R$, we have $\alpha_{\nu_1, \nu_2}^* = \alpha_{\nu_2, \nu_1}$. 
We now show that $\alpha_{\nu_1, \nu_2} e^{\frac{\beta}{2}(\nu_1 + \nu_2)} = \alpha_{-\nu_2, -\nu_1}$. 
Since 
\begin{equation*}
    \gamma(\omega) \coloneqq \exp\left( - \frac{(\omega + \Delta)^2}{ 2\eta^2} \right) \,, \quad \text{and} \quad \widehat{f}(\omega) = \frac{(2\pi)^{1/4}}{\sqrt{\sigma}} \exp\Paren{-\frac{\omega^2}{4\sigma^2}} \,,
\end{equation*}
we have
\begin{align*}
    \alpha_{\nu_1, \nu_2} &= \frac{1}{\sqrt{2\pi} \sigma}  \int_{-\infty}^{\infty} \exp\Paren{- \frac{(\omega + \Delta)^2}{ 2\eta^2} - \frac{(\omega - \nu_1)^2 + (\omega - \nu_2)^2}{4\sigma^2}}\diff \omega \,.
\end{align*}
Set $s_+ \coloneqq \nu_1 + \nu_2$ and $s_- \coloneqq \nu_1 - \nu_2$. Observe that
\begin{equation*}
    (\omega - \nu_1)^2 + (\omega - \nu_2)^2 = 2\Paren{\omega - \frac{s_+}{2}}^2 + \frac{s_-^2}{2} \,.
\end{equation*}
Hence
\begin{align*}
    \alpha_{\nu_1, \nu_2} &= \frac{\exp(-\frac{s_-^2}{8\sigma^2})}{\sqrt{2\pi} \sigma}  \int_{-\infty}^{\infty} \exp\Paren{- \frac{(\omega + \Delta)^2}{ 2\eta^2} - \frac{(\omega - \frac{s_+}{2})^2}{2\sigma^2}}\diff \omega \\
    &= \frac{\exp(-\frac{s_-^2}{8\sigma^2})}{\sqrt{2\pi} \sigma} \sqrt{2\pi} \frac{\eta \sigma}{\sqrt{\eta^2 + \sigma^2}} \exp\Paren{ - \frac{(\Delta + \frac{s_+}{2})^2}{2(\eta^2 + \sigma^2)}} \\
    &= \exp\Paren{-\frac{s_-^2}{8\sigma^2}} \frac{\eta}{\sqrt{\eta^2 + \sigma^2}} \exp\Paren{ - \frac{(\Delta + \frac{s_+}{2})^2}{2(\eta^2 + \sigma^2)}} \,,
\end{align*}
where the second equality is due to the formula for the integral of the product of two Gaussian functions in \Cref{eq:int_prod_gaussians}. 
Since $\beta(\eta^2 + \sigma^2) = 2\Delta$, we have
\begin{align*}
    \alpha_{\nu_1, \nu_2} e^{\frac{\beta}{2}(\nu_1 + \nu_2)} &= \exp\Paren{-\frac{s_-^2}{8\sigma^2}} \frac{\eta}{\sqrt{\eta^2 + \sigma^2}} \exp\Paren{ - \frac{(\Delta + \frac{s_+}{2})^2}{2(\eta^2 + \sigma^2)}}\exp\Paren{\frac{\Delta s_+}{\eta^2 + \sigma^2}} \\
    &= \exp\Paren{-\frac{s_-^2}{8\sigma^2}} \frac{\eta}{\sqrt{\eta^2 + \sigma^2}} \exp\Paren{ - \frac{(\Delta - \frac{s_+}{2})^2}{2(\eta^2 + \sigma^2)}} = \alpha_{-\nu_1, -\nu_2} \,.
\end{align*}
It remains to check the coherent term. 
Let us first derive the Fourier transform of $b_2(t)$:
\begin{align*}
    \widehat{b}_2(\omega) = \int_{-\infty}^{\infty} b_2(t) e^{-\ii \omega t} \diff t &= 2\eta \int_{-\infty}^{\infty} \exp \Paren{-\frac{4\Delta}{\beta}t^2 - \ii (\omega + 2\Delta) t} \diff t \\
    &= \eta \sqrt{\frac{\pi \beta}{\Delta}} \exp \Paren{-\frac{\beta(\omega + 2\Delta)^2}{16\Delta}} \,.
\end{align*}
Recall the Fourier transform of $b_1(t)$ in \Cref{lem:b_1_fourier}. Then
\begin{align*}
    \widehat{g}(\nu_1, \nu_2) &\coloneqq \frac{\ii}{2} \tanh\Paren{\frac{\beta}{4}(\nu_1 - \nu_2)} \alpha_{\nu_1, \nu_2} \\
    &= \underbrace{\frac{\ii}{2} \tanh\Paren{\frac{\beta s_-}{4}} \exp\Paren{-\frac{s_-^2}{8\sigma^2}}}_{= (2\pi)^{1/2}\cdot \widehat{b}_1(s_-)} \cdot \underbrace{\eta
    \sqrt{\frac{\beta}{2\Delta}} \exp\Paren{ - \frac{(\Delta + \frac{s_+}{2})^2}{2(\eta^2 + \sigma^2)}}}_{= (2\pi)^{-1/2} \cdot \widehat{b}_2(s_+)} \\
    &= \widehat{b}_1(\nu_1 - \nu_2) \cdot \widehat{b}_2(\nu_1 + \nu_2) \,.
\end{align*}
Therefore, 
\begin{align*}
    \sum_{\nu_1, \nu_2\in B(H)} \widehat{g}(\nu_1, \nu_2) P_{\nu_2}^\dagger P_{\nu_1} &= \sum_{\nu_1, \nu_2\in B(H)} \widehat{b}_1(\nu_1 - \nu_2) \cdot \widehat{b}_2(\nu_1 + \nu_2)\cdot P_{\nu_2}^\dagger P_{\nu_1} \\
    &= \sum_{\nu_1, \nu_2\in B(H)} \int_{-\infty}^{\infty} b_1(t) e^{-\ii (\nu_1 - \nu_2) t} \diff t \cdot \int_{-\infty}^{\infty} b_2(t') e^{-\ii (\nu_1 + \nu_2) t'} \diff t' \cdot P_{\nu_2}^\dagger P_{\nu_1} \\
    &= \iint_{-\infty}^{\infty} b_1(t) b_2(t') \sum_{\nu_1, \nu_2\in B(H)}   e^{-\ii \nu_2  (t'-t)} P_{\nu_2}^\dagger \cdot e^{-\ii \nu_1(t+t')} P_{\nu_1} \diff t \diff t'  \\
    &= \iint_{-\infty}^{\infty} b_1(t) b_2(t') \cdot e^{-\ii H(t-t')} P^\dagger e^{\ii H(t-t')} \cdot e^{-\ii H(t+t')} P e^{\ii H(t+t')}\diff t \diff t'  \\
    &=\int_{-\infty}^\infty b_1(t) e^{-\ii H t} \Paren{\int_{-\infty}^\infty b_2(t') e^{\ii Ht'} P^\dagger e^{-\ii Ht'} e^{-\ii H t'} P e^{\ii Ht'} \diff t' } e^{\ii H t} \diff t \\
    &= C^P \,,
\end{align*}
which means that the condition for the coherent term in \Cref{thm:ckg_detailed_balance} is also satisfied. We then conclude that each $\calL^P$ satisfies KMS detailed balance and the Gibbs state is a stationary state. 
\end{proof}

%% file: refs.bib
@article{rajakumar2026gibbs,
  title={Gibbs sampling gives quantum advantage at constant temperatures with O (1)-local Hamiltonians},
  author={Rajakumar, Joel and Watson, James D},
  journal={Quantum},
  volume={10},
  pages={1981},
  year={2026},
  publisher={Verein zur F{\"o}rderung des Open Access Publizierens in den Quantenwissenschaften}
}

@inproceedings{blmt24b,
  title={High-temperature Gibbs states are unentangled and efficiently preparable},
  author={Bakshi, Ainesh and Liu, Allen and Moitra, Ankur and Tang, Ewin},
  booktitle={2024 IEEE 65th Annual Symposium on Foundations of Computer Science (FOCS)},
  pages={1027--1036},
  year={2024},
  organization={IEEE}
}

@InProceedings{bd97,
  author     = {Bubley, R. and Dyer, M.},
  booktitle  = {Proceedings 38\textsuperscript{th} Annual Symposium on Foundations of Computer Science},
  title      = {Path coupling: {A} technique for proving rapid mixing in {M}arkov chains},
  doi        = {10.1109/sfcs.1997.646111},
  publisher  = {IEEE Comput. Soc},
  series     = {SFCS-97},
  collection = {SFCS-97},
  year       = {1997},
}

@Unpublished{ckg23,
  author      = {Chen, Chi-Fang and Kastoryano, Michael J. and Gilyén, András},
  date        = {2023-11-15},
  title       = {An efficient and exact noncommutative quantum {G}ibbs sampler},
  eprint      = {2311.09207},
  eprintclass = {quant-ph},
  eprinttype  = {arXiv},
  year        = {2023},
}

@Article{dmtl21,
  author       = {De Palma, Giacomo and Marvian, Milad and Trevisan, Dario and Lloyd, Seth},
  date         = {2021-10},
  journaltitle = {IEEE Transactions on Information Theory},
  title        = {The quantum {W}asserstein distance of order 1},
  doi          = {10.1109/tit.2021.3076442},
  eprint       = {2009.04469},
  eprintclass  = {quant-ph},
  eprinttype   = {arXiv},
  issn         = {1557-9654},
  number       = {10},
  pages        = {6627--6643},
  volume       = {67},
  publisher    = {Institute of Electrical and Electronics Engineers (IEEE)},
}

@Article{hhkl21,
  author       = {Haah, Jeongwan and Hastings, Matthew B. and Kothari, Robin and Low, Guang Hao},
  date         = {2021-01},
  journaltitle = {SIAM Journal on Computing},
  title        = {Quantum algorithm for simulating real time evolution of lattice {H}amiltonians},
  doi          = {10.1137/18m1231511},
  eprint       = {1801.03922},
  eprintclass  = {quant-ph},
  eprinttype   = {arXiv},
  issn         = {1095-7111},
  number       = {6},
  volume       = {52},
  publisher    = {Society for Industrial & Applied Mathematics (SIAM)},
}

@Article{kh11,
  author       = {Kuwahara, Tomotaka and Hatano, Naomichi},
  date         = {2011-06},
  journaltitle = {Physical Review A},
  title        = {Maximization of thermal entanglement of arbitrarily interacting two qubits},
  doi          = {10.1103/physreva.83.062311},
  eprint       = {1104.0485},
  eprintclass  = {quant-ph},
  eprinttype   = {arXiv},
  issn         = {1094-1622},
  number       = {6},
  pages        = {062311},
  volume       = {83},
  publisher    = {American Physical Society (APS)},
}

@Unpublished{rfa24a,
  author      = {Rouzé, Cambyse and França, Daniel Stilck and Alhambra, Álvaro M.},
  date        = {2024-11},
  title       = {Optimal quantum algorithm for {G}ibbs state preparation},
  eprint      = {2411.04885},
  eprintclass = {quant-ph},
  eprinttype  = {arXiv},
}

@InProceedings{rfa24,
  author      = {Rouzé, Cambyse and França, Daniel Stilck and Alhambra, Álvaro M.},
  booktitle   = {Proceedings of the 57\textsuperscript{th} Annual ACM Symposium on Theory of Computing},
  date        = {2025-06},
  title       = {Efficient thermalization and universal quantum computing with quantum {G}ibbs samplers},
  doi         = {10.1145/3717823.3718268},
  eprint      = {2403.12691},
  eprintclass = {quant-ph},
  eprinttype  = {arXiv},
  pages       = {1488--1495},
  publisher   = {ACM},
  series      = {STOC ’25},
  collection  = {STOC ’25},
}

@Book{watrous18,
  author    = {Watrous, John},
  date      = {2018-04},
  title     = {The theory of quantum information},
  doi       = {10.1017/9781316848142},
  isbn      = {9781107180567},
  publisher = {Cambridge University Press},
}

@misc{BGSS+25,
      title={On quantum to classical comparison for Davies generators}, 
      author={Joao Basso and Shirshendu Ganguly and Alistair Sinclair and Nikhil Srivastava and Zachary Stier and Thuy-Duong Vuong},
      year={2025},
      eprint={2510.07267},
      archivePrefix={arXiv},
      primaryClass={quant-ph},
      url={https://arxiv.org/abs/2510.07267}, 
}

@article{vE24,
   title={Almost All Quantum Channels Are Diagonalizable},
   volume={31},
   ISSN={1793-7191},
   url={http://dx.doi.org/10.1142/S1230161224500124},
   DOI={10.1142/s1230161224500124},
   number={03},
   journal={Open Systems \& Information Dynamics},
   publisher={World Scientific Pub Co Pte Ltd},
   author={vom Ende, Frederik},
   year={2024},
   month={8}
}

@misc{LW18,
      title={Hamiltonian Simulation in the Interaction Picture}, 
      author={Guang Hao Low and Nathan Wiebe},
      year={2019},
      eprint={1805.00675},
      archivePrefix={arXiv},
      primaryClass={quant-ph},
      url={https://arxiv.org/abs/1805.00675}, 
}

@misc{ZK26,
      title={SYK thermal expectations are classically easy at any temperature}, 
      author={Alexander Zlokapa and Bobak T. Kiani},
      year={2026},
      eprint={2602.22619},
      archivePrefix={arXiv},
      primaryClass={quant-ph},
      url={https://arxiv.org/abs/2602.22619}, 
}

@misc{BCP26,
      title={Entanglement in quantum spin chains is strictly finite at any temperature}, 
      author={Ainesh Bakshi and Soonwon Choi and Saúl Pilatowsky-Cameo},
      year={2026},
      eprint={2602.13386},
      archivePrefix={arXiv},
      primaryClass={quant-ph},
      url={https://arxiv.org/abs/2602.13386}, 
}

@article{TZ25,
  title = {Fast Mixing of Weakly Interacting Fermionic Systems at Any Temperature},
  author = {Tong, Yu and Zhan, Yongtao},
  journal = {PRX Quantum},
  volume = {6},
  issue = {3},
  pages = {030301},
  numpages = {33},
  year = {2025},
  month = {7},
  publisher = {American Physical Society},
  doi = {10.1103/h1dx-ps5p},
  url = {https://link.aps.org/doi/10.1103/h1dx-ps5p}
}

@Article{NRSS09,
    author={Nachtergaele, Bruno
    and Raz, Hillel
    and Schlein, Benjamin
    and Sims, Robert},
    title={Lieb-Robinson Bounds for Harmonic and Anharmonic Lattice Systems},
    journal={Communications in Mathematical Physics},
    year={2009},
    month={3},
    day={01},
    volume={286},
    number={3},
    pages={1073-1098},
    issn={1432-0916},
    doi={10.1007/s00220-008-0630-2},
    url={https://doi.org/10.1007/s00220-008-0630-2}
}

@InProceedings{blmt2025dobrushin,
  author      = {Bakshi, Ainesh and Liu, Allen and Moitra, Ankur and Tang, Ewin},
  booktitle   = {Proceedings of the 56\textsuperscript{th} Annual ACM Symposium on Theory of Computing},
  date        = {2026-06},
  title       = {A Dobrushin condition for quantum Markov chains: Rapid mixing and conditional mutual information at high temperature},
  eprint      = {2510.08542},
  eprintclass = {quant-ph},
  eprinttype  = {arXiv},
  publisher   = {ACM},
  series      = {STOC ’26},
  collection  = {STOC ’26},
}

@article{zwolak2004mixed,
  author  = {Michael Zwolak and Guifr{\'e} Vidal},
  title   = {Mixed-state dynamics in one-dimensional quantum lattice systems: a time-dependent superoperator renormalization algorithm},
  journal = {Physical Review Letters},
  volume  = {93},
  pages   = {207205},
  year    = {2004},
  doi     = {10.1103/PhysRevLett.93.207205},
  eprint  = {cond-mat/0406440},
  archivePrefix = {arXiv},
  primaryClass  = {cond-mat.str-el}
}

@misc{gilyen2024glauber,
  author  = {Andr{\'a}s Gily{\'e}n and Chi-Fang Chen and Joao F. Doriguello and Michael J. Kastoryano},
  title   = {Quantum Generalizations of Glauber and Metropolis Dynamics},
  year    = {2024},
  eprint  = {2405.20322},
  archivePrefix = {arXiv},
  primaryClass  = {quant-ph}
}

@article{tong2025fast,
  title={Fast mixing of weakly interacting fermionic systems at any temperature},
  author={Tong, Yu and Zhan, Yongtao},
  journal={PRX Quantum},
  volume={6},
  number={3},
  pages={030301},
  year={2025},
  publisher={APS}
}

@article{feynman1982,
  author  = {Richard P. Feynman},
  title   = {Simulating physics with computers},
  journal = {International Journal of Theoretical Physics},
  volume  = {21},
  number  = {6--7},
  pages   = {467--488},
  year    = {1982},
  doi     = {10.1007/BF02650179}
}

@article{riera2012thermalization,
  author  = {Arnau Riera and Christian Gogolin and Jens Eisert},
  title   = {Thermalization in Nature and on a Quantum Computer},
  journal = {Physical Review Letters},
  volume  = {108},
  pages   = {080402},
  year    = {2012},
  doi     = {10.1103/PhysRevLett.108.080402},
  eprint  = {1102.2389},
  archivePrefix = {arXiv},
  primaryClass  = {quant-ph}
}

@article{temme2011quantum,
  author  = {Kristan Temme and Tobias J. Osborne and Karl G. Vollbrecht and David Poulin and Frank Verstraete},
  title   = {Quantum Metropolis Sampling},
  journal = {Nature},
  volume  = {471},
  number  = {7336},
  pages   = {87--90},
  year    = {2011},
  doi     = {10.1038/nature09770},
  eprint  = {0911.3635},
  archivePrefix = {arXiv},
  primaryClass  = {quant-ph}
}

@article{chowdhury2017gibbs,
  author  = {Anirban N. Chowdhury and Rolando D. Somma},
  title   = {Quantum algorithms for Gibbs sampling and hitting-time estimation},
  journal = {Quantum Information and Computation},
  volume  = {17},
  number  = {1--2},
  pages   = {41--64},
  year    = {2017},
  doi     = {10.26421/QIC17.1-2-3},
  eprint  = {1603.02940},
  archivePrefix = {arXiv},
  primaryClass  = {quant-ph}
}

@article{verstraete2004mpdo,
  author  = {Frank Verstraete and J. J. Garc{\'i}a-Ripoll and J. Ignacio Cirac},
  title   = {Matrix Product Density Operators: Simulation of finite-{T} and dissipative systems},
  journal = {Physical Review Letters},
  volume  = {93},
  pages   = {207204},
  year    = {2004},
  doi     = {10.1103/PhysRevLett.93.207204},
  eprint  = {cond-mat/0406426},
  archivePrefix = {arXiv},
  primaryClass  = {cond-mat.other}
}

@article{molnar2015peps,
  author  = {Andr{\'a}s Moln{\'a}r and Norbert Schuch and Frank Verstraete and J. Ignacio Cirac},
  title   = {Approximating Gibbs states of local Hamiltonians efficiently with {PEPS}},
  journal = {Physical Review B},
  volume  = {91},
  pages   = {045138},
  year    = {2015},
  doi     = {10.1103/PhysRevB.91.045138},
  eprint  = {1406.2973},
  archivePrefix = {arXiv},
  primaryClass  = {quant-ph}
}

@misc{chen2023thermal,
  author  = {Chi-Fang Chen and Michael J. Kastoryano and Fernando G. S. L. Brand{\~a}o and Andr{\'a}s Gily{\'e}n},
  title   = {Quantum Thermal State Preparation},
  year    = {2023},
  eprint  = {2303.18224},
  archivePrefix = {arXiv},
  primaryClass  = {quant-ph}
}

@misc{ding2024kms,
  author  = {Zhiyan Ding and Bowen Li and Lin Lin},
  title   = {Efficient quantum Gibbs samplers with {Kubo--Martin--Schwinger} detailed balance condition},
  year    = {2024},
  eprint  = {2404.05998},
  archivePrefix = {arXiv},
  primaryClass  = {quant-ph}
}

@article{smid2025fermi,
  author  = {{\v{S}}t{\v{e}}p{\'a}n {\v{S}}m{\'i}d and Richard Meister and Mario Berta and Roberto Bondesan},
  title   = {Polynomial Time Quantum Gibbs Sampling for {Fermi--Hubbard} Model at any Temperature},
  journal = {Nature Communications},
  volume  = {16},
  pages   = {10736},
  year    = {2025},
  eprint  = {2501.01412},
  archivePrefix = {arXiv},
  primaryClass  = {quant-ph}
}

@misc{smid2025weakly,
  author  = {{\v{S}}t{\v{e}}p{\'a}n {\v{S}}m{\'i}d and Richard Meister and Mario Berta and Roberto Bondesan},
  title   = {Rapid Mixing of Quantum Gibbs Samplers for Weakly-Interacting Quantum Systems},
  year    = {2025},
  eprint  = {2510.04954},
  archivePrefix = {arXiv},
  primaryClass  = {quant-ph}
}

@article{davies1979generators,
  title={Generators of dynamical semigroups},
  author={Davies, E Brian},
  journal={Journal of Functional Analysis},
  volume={34},
  number={3},
  pages={421--432},
  year={1979},
  publisher={Academic Press}
}

@article{bardet2024entropy,
  title={Entropy decay for Davies semigroups of a one dimensional quantum lattice},
  author={Bardet, Ivan and Capel, {\'A}ngela and Gao, Li and Lucia, Angelo and P{\'e}rez-Garc{\'\i}a, David and Rouz{\'e}, Cambyse},
  journal={Communications in Mathematical Physics},
  volume={405},
  number={2},
  pages={42},
  year={2024},
  publisher={Springer}
}

@article{temme2013lower,
  title={Lower bounds to the spectral gap of Davies generators},
  author={Temme, Kristan},
  journal={Journal of Mathematical Physics},
  volume={54},
  number={12},
  year={2013},
  publisher={AIP Publishing}
}

@article{crosson2025classical,
  title={Classical simulation of high temperature quantum Ising models},
  author={Crosson, Elizabeth and Slezak, Samuel},
  journal={Quantum},
  volume={9},
  pages={1788},
  year={2025},
  publisher={Verein zur F{\"o}rderung des Open Access Publizierens in den Quantenwissenschaften}
}

@inproceedings{bergamaschi2024quantum,
  title={Quantum computational advantage with constant-temperature Gibbs sampling},
  author={Bergamaschi, Thiago and Chen, Chi-Fang and Liu, Yunchao},
  booktitle={2024 IEEE 65th Annual Symposium on Foundations of Computer Science (FOCS)},
  pages={1063--1085},
  year={2024},
  organization={IEEE}
}
